\numberwithin{equation}{section}
\DeclareMathOperator{\tr}{Tr}
\DeclareMathOperator{\trs}{tr}
\newtheorem{theorem}{Theorem}[section]
\newtheorem{proposition}{Proposition}[section]
\newtheorem{lemma}{Lemma}[section]
\newtheorem{corollary}{Corollary}[section]
\theoremstyle{definition}
\newtheorem{remark}{Remark}[section]
\newtheorem{assumption}{Assumptions}[section]
\newcommand{\dda}{\mathrm{d}}
\newcommand{\de}{\,\dda}
\renewcommand\rho\varrho
\renewcommand\epsilon\varepsilon
\theoremstyle{definition}
\newcommand{\beq}{\begin{equation}}
\newcommand{\eeq}{\end{equation}}
\begin{document}

\title{Semiclassical approximation and  critical temperature shift for weakly interacting trapped bosons}

\author{Andreas Deuchert and Robert Seiringer}

\date{\today}

\maketitle

\begin{abstract} 
We consider a system of $N$ trapped bosons with repulsive interactions in a combined semiclassical mean-field limit at positive temperature.  We show that the free energy is well approximated by the minimum of the Hartree free energy functional -- a natural extension of the Hartree energy functional to positive temperatures. The Hartree free energy functional converges in the same limit to a  semiclassical free energy functional, and we show that the system displays Bose--Einstein condensation if and only if it occurs in the semiclassical free energy functional. This  allows us to show that for weak coupling the critical temperature decreases due to the repulsive interactions.
\end{abstract}

\setcounter{tocdepth}{2}
\tableofcontents

\section{Introduction and main results}
\subsection{Background and summary}
\label{sec:background}
The rigorous mathematical analysis of quantum many-particle systems has a long history, dating back to the early days of quantum mechanics. In case of (dilute) Bose gases, there has been a period of renewed interest since the first experimental observation of Bose--Einstein condensation (BEC) in trapped alkali gases in 1995 \cite{WieCor1995,Kett1995} and the breakthrough work of Lieb and Yngvason in 1998 \cite{LiYng1998}, who proved a lower bound for the ground state energy of the dilute Bose gas in the thermodynamic limit. In combination with the matching upper bound that Dyson had proven in 1957 \cite{Dyson}, this established its leading order asymptotics. By now, the techniques of Lieb and Yngvason have been significantly extended to prove related results for the ground state energy of the two-dimensional Bose gas \cite{LiYng2001}, the free energy of two- and three-dimensional Bose gases \cite{Sei2008,Yin2010,me3,MaySei}, as well as the ground state energy  \cite{LiSeiSol2005,HP20} and  pressure  \cite{RobertFermigas} of the Fermi gas. Recently, also the next to leading order correction to the ground state energy of the dilute Bose gas, that is, the Lee--Huang--Yang formula, could be established \cite{YauYin2009,FournSol2019}.

While the thermodynamic limit is appropriate to describe samples of macroscopic size, the Gross--Pitaevskii (GP) limit is relevant for the study of (mesoscopic) dilute trapped Bose gases as prepared in typical experiments with cold atoms. In such a situation the ground state energy of the interacting system is to leading order given by the minimum of the GP energy functional \cite{RobertGPderivation,LiSei2002,LiSei2006,NamRouSei2016}, and a convex combination of projections onto its minimizers approximates the one-particle density matrix (1-pdm) of any approximate minimizer of the energy. In case of a unique minimizer of the GP energy functional, this, in particular, proves complete BEC for approximate ground states. Also in the GP limit the next to leading order correction to the ground state energy predicted by Bogoliubov in 1947 could be justified \cite{BogGP}. The accuracy reached in this work allows for an approximate computation of the ground state wave function and for a characterization of the low energy excitations of the system, consisting of sound waves. Apart from equilibrium properties of dilute Bose gases, also their dynamics after the trapping potential has been switched off is  important  for the interpretation of modern experiments. The dynamics of initially fully Bose--Einstein condensed systems in the GP limit can be described by the time-dependent GP equation, see \cite{ErdSchlYau2009,ErdSchlYau2010,BenOlivSchl2015,Pickl2015}. For a more extensive list of references to the mathematical analysis of dilute Bose gases we refer to \cite{Themathematicsofthebosegas,Rou2015,BenPorSchl2015}.

The ground state energy and the ground state wave function yield a good approximation of the system for very low temperatures. If positive temperatures are relevant one needs to consider the free energy and the related Gibbs state. Recently it was  shown that also for a system of trapped Bosons at positive temperature the GP energy functional turns out to be the relevant effective theory. More precisely, it was shown in \cite{me} that for a system in a harmonic trap in the GP limit the difference between the free energies of the interacting and the ideal gas is given by the minimum of a GP energy functional to leading order. Additionally, the 1-pdm of any approximate minimizer of the Gibbs free energy functional is to leading order given by the one of the ideal gas, where the condensate wave function has been replaced by the minimizer of the GP energy functional. The result shows that the interaction can be seen to leading order only in the condensate, which is related to the fact that the energy per particle in the thermal cloud (all particles outside the condensate) is much larger than in the condensate. In a trap with soft walls the thermal cloud is therefore even more dilute than the condensate. Modern experimental techniques also allow for the study of dilute Bose gases in box potentials \cite{Gauntetal2013}. The computation of the free energy for such a system is mathematically more challenging than for a system in a power law trap because all interactions are relevant to leading order (there is no separation of length scales). With techniques based on the ones that were introduced for the analysis of the free energy in the thermodynamic limit \cite{Sei2008}, a proof of the BEC phase transition for this system was  given in \cite{me2}.

Although the GP limit is the most relevant for the description of experiments with cold quantum gases, there has been a considerable interest in systems in the mean-field (MF) (or Hartree) limit of weak and long-range interactions, or in limits that interpolate between the MF and the GP limit. See e.g.  \cite{LieYau1987,LewinNamRougerie20141,LewinNamRougerie20142,Bogoliubov_Robert_1,GrechSei2013,Bogoliubov_Mathieu_1,Robert_Nam_1} for works concerned with ground state properties of such systems, \cite{Spohn1980,ErdYau2001,Adami2007,ElgSchl2007,FroeKnoPiz2007,AmmNier2009,FroeKnowSchw2009,KnoPick2010,NaNap20171,NaNap20172,LeNaSchl2015,GriMach2010,GriMach2011,KiSchlStaf2011,AmmFalcPawl2016,CheHolm2016,AnaHott2016,GriMach2017} for their dynamics and \cite{LewinNamRougerie2015,LewinNamRougerie2017,LewinNamRougerie2018,FKSS2017,FKSS2018,FKSS2020} for the analysis of systems at temperatures slightly above the critical temperature for BEC. For a more extensive list of references concerning the MF limit we refer to \cite{Lewin2015,BenPorSchl2015}. 
The MF limit is  mathematically easier to handle than the GP limit, and therefore allows for the development of new techniques that can later be applied to study the more complicated GP regime. Note, moreover,  that long range interactions among bosons have been successfully implemented in recent experiments \cite{explongrange}.
 
A {\em semiclassical MF limit} for fermions was  considered in \cite{BenPorSchl2014,BenJaPoSaSchl,FouLeSol2018,LewinMadsenTriay}. This regime is also of relevance for bosons at temperatures of the order of the critical temperature and has been considered in \cite{BaNaTh83}. It shall also be our main concern here. We prove optimal bounds for the difference between the free energy of the system and the minimum of the Hartree free energy functional. Moreover, we show that the Hartree free energy functional can be related to a novel semiclassical free energy functional, whose critical temperature for BEC can be characterized explicitly in the weak coupling regime. This allows us to show that the repulsive interactions decrease the  critical temperature for BEC, at least for weak coupling. 

We also consider a second natural scaling limit, whose relevance stems from the fact that the Hartree energy functional emerges in the zero temperature limit. We provide optimal bounds quantifying the difference between the free energy and its Hartree approximation, and discuss implications for approximate minimizers of the Gibbs free energy functional.
\subsection{The model}
\label{sec:model}
We are interested in a trapped bosonic many-particle system at positive temperature and we start by introducing the model. Let $h$ denote the harmonic oscillator Hamiltonian with oscillator frequency $\omega$ acting on the one-particle Hilbert space $L^2(\mathbb{R}^3)$, 
\begin{equation}
	h = - \hbar^2 \Delta + \frac{\omega^2 x^2}{4}.
	\label{eq:harmonicoscillator}
\end{equation}
Due to the factor $1/4$ multiplying the trapping potential, the level spacing of $h$ equals $\hbar \omega$. Without loss of generality we could set $\omega$ equal to $1$, but we prefer  to keep it general in order to explicitly display physical units. The particle mass is set equal to $1/2$, and Planck's constant $\hbar$ is chosen to depend on the particle number $N$ of the system as
\begin{equation}\label{hbar}
\hbar = N^{-1/3} \,, 
\end{equation}
which corresponds to a semiclassical regime. The physics of the $N$-particle system is governed by the $N$-particle Hamiltonian 
\begin{equation}
	H_N = \sum_{i = 1}^N h_i + \frac{1}{N} \sum_{1 \leq i < j \leq N} v(x_i - x_j)
	\label{eq:Hamiltonian1stqu}
\end{equation}
acting on $L^2_{\mathrm{sym}}(\mathbb{R}^{3N})$, the closed linear subspace of $L^2 (\mathbb{R}^{3N} )$ consisting of those functions $\Psi(x_1,...,x_N)$ that are invariant under any permutation of  coordinates $x_1, \dots, x_N$. 
We employ the usual notation that $h_i$ is the operator acting as $h$ on the $i$-th particle, and the identity on all the others.

The motivation for the prefactor $N^{-1}$ in \eqref{eq:Hamiltonian1stqu}, as well as the choice \eqref{hbar} of $\hbar$,  will be explained below after introducing the relevant temperature scale. 
Concerning the regularity of the interaction potential $v$, we make the following assumption.
\begin{assumption}
	\label{as:regularitypotential}
	We assume that $v \in L^1(\mathbb{R}^3) \bigcap W^{2,\infty}(\mathbb{R}^3)$ is a nonnegative function with $v(x) = v(-x)$ for all $x \in \mathbb{R}^3$, such that
		\begin{equation}
	\sup_{x \in \mathbb{R}^3} \left\Vert D^2 v(x) \right\Vert < \frac{\omega^2}{2}
	\label{eq:condintpot}
	\end{equation}
	holds. Here $D^2v$ denotes the Hessian of $v$ and $\Vert \cdot \Vert$ is the operator norm on $3 \times 3$ matrices. Additionally, we assume $\hat{v} \geq 0$, that is, $v$ is of positive type.
\end{assumption}

We shall comment on the significance of the assumption \eqref{eq:condintpot} in Remark~\ref{rem6} after the statement of our main results. 

\subsubsection*{Gibbs free energy functional, free energy and Gibbs states}

Let us denote by 
\begin{equation}
	\mathcal{S}^{\mathrm{c}}_N = \left\{ \Gamma \in \mathcal{B}\left( L^2_{\mathrm{sym}}\left(\mathbb{R}^{3N} \right) \right) \ \Bigg| \ 0 \leq \Gamma \leq 1, \ \tr  \Gamma  = 1, \ \tr\left[\sum_{i=1}^N h_i \Gamma \right] < + \infty \right\}
	\label{eq:canonicalstates}
\end{equation} 
the set of bosonic $N$-particle states. 
In the above definition and in the following, we interpret $\tr[H \Gamma]$ for positive operators $H$ and $\Gamma$ as $\tr[H^{1/2} \Gamma H^{1/2}]$. This expression is always well-defined if one allows the value $+\infty$. In particular, finiteness of $\tr[H \Gamma]$ does not require the operator $H \Gamma$ to be trace-class, only that $H^{1/2} \Gamma H^{1/2}$ is. 

For states $\Gamma \in \mathcal{S}^{\mathrm{c}}_N$ the Gibbs free energy functional at inverse temperature $\beta$ is defined by
\begin{equation}
	\mathcal{F}(\Gamma) = \tr\left[ H_N \Gamma \right] - \frac{1}{\beta} S(\Gamma), \quad \text{ with the von Neumann entropy } \quad S(\Gamma) = - \tr \left[ \Gamma \ln(\Gamma) \right].
	\label{eq:gibbsfreeenergyfunctional}
\end{equation}
When we minimize $\mathcal{F}$ over states in $\mathcal{S}^{\mathrm{c}}_N$, we obtain the canonical free energy of the system
\begin{equation}
	F^{\mathrm{c}}(\beta,N,\omega) = \inf_{\Gamma \in \mathcal{S}^{\mathrm{c}}_N} \mathcal{F}(\Gamma) = -\frac{1}{\beta} \ln \left( \tr \exp\left( - \beta H_N \right) \right).
	\label{eq:freeenergycan}
\end{equation}
The unique minimizer of the above minimization problem is the canonical Gibbs state
\begin{equation}
G_N^{\mathrm{c}} = \frac{e^{-\beta H_N }}{\tr \left[ e^{-\beta H_N} \right]}.
\label{eq:Gibbsstatecan}
\end{equation}

\subsubsection*{Motivation for the scaling limit}

To motivate our scaling limit let us for the moment set $\hbar = 1$. In the absence of interactions, the ideal Bose gas shows BEC for $\beta > \beta_{0,N} = \omega^{-1} (\zeta(3)/ N)^{1/3}$, where $\zeta$ denotes the Riemann zeta function. This  shows that the relevant regime for the inverse temperature is $\beta\omega \sim N^{-1/3}$. In this regime, the expected energy per particle in the ideal gas is proportional to $\beta^{-1} = O( \omega N^{1/3} )$. Particles with this  energy typically occupy a volume of order $( \omega^{-1/2} N^{1/6} )^3$ in the harmonic trap (see also the introduction in \cite{me}). Accordingly, it is natural to consider  interactions acting on that length scale, i.e., an interaction potential of the form $v_N(x) = c_N \omega v(N^{-1/6} \omega^{1/2} x)$ for some coupling constant $c_N>0$. When evaluated in the Gibbs state of the ideal gas, 
the corresponding interaction energy is of the order $\omega c_N N^{2}$. The choice $c_N = N^{-2/3}$ thus makes this comparable to the free energy of the ideal gas in the harmonic trap, which is of the order $\omega N^{4/3}$. To relate this scaling to our semiclassical MF scaling, we rescale length by a factor $N^{-1/6}$, which transforms the $N$-particle Hamiltonian 
\begin{equation}
	\sum_{i=1}^N \left( -\Delta_i + \frac{\omega^2 x^2}{4} \right) + \sum_{1 \leq i < j \leq N} \omega N^{-2/3} v(N^{-1/6} \omega^{1/2} (x_i - x_j))
\end{equation}
into
\begin{equation}
N^{1/3} \left( \sum_{i=1}^N \left( -\hbar^2 \Delta_i + \frac{\omega^2 x^2}{4} \right) + \sum_{1 \leq i < j \leq N} \omega N^{-1} v(\omega^{1/2} (x_i - x_j)) \right)
\end{equation}
with $\hbar = N^{-1/3}$ as in \eqref{hbar}. This motivates our scaling. To simplify the notation, we absorb the $\omega$ factors into the potential $v$, leading to $H_N$ in \eqref{eq:Hamiltonian1stqu}. 
Moreover, we neglect the multiplicative factor $N^{1/3}$  in the definition of $H_N$ in \eqref{eq:Hamiltonian1stqu}, thus we have to choose $\beta \sim N^{1/3} \beta_{0,N} \sim \omega^{-1}$.  In fact, we only assume $\beta \omega \geq C $ for some $C > 0$, allowing in particular for the zero-temperature limit $\beta\omega \to \infty$. In case of fermions the same scaling limit has been considered e.g. in \cite{BenPorSchl2014,BenJaPoSaSchl,FouLeSol2018,LewinMadsenTriay}.

\subsubsection*{One-particle reduced density matrix and Bose--Einstein condensation}

For a state $\Gamma \in \mathcal{S}^{\mathrm{c}}_N$, we define its one-particle density matrix (1-pdm) via its integral kernel by
\begin{equation}
\gamma_{\Gamma}(x,y) = \tr \left[ a^*_y a_x \Gamma \right].
\label{eq:1pdma}
\end{equation}
Here $a_x^*$ and $a_x$ denote the usual creation and annihilation operators (actually operator-valued distributions) of a particle at a point $x \in \mathbb{R}^3$, which obey the canonical commutation relations $[a_x,a_y^*] = \delta(x-y)$. By $\delta(x)$ we denote the Dirac delta distribution. The operator $\gamma_{\Gamma}$ is a positive operator on the one-particle Hilbert space $L^2(\mathbb{R}^3)$ with trace equal to $N$. Equivalently, 
\begin{equation}
\gamma_{\Gamma}(x,y) = N \int_{\mathbb{R}^{3(N-1)}} \Gamma(x,x_2,...,x_N;y,x_2,...,x_N) \de (x_2,...,x_N),
\label{eq:1pdmb}
\end{equation}
with the integral kernel $\Gamma(x_1,...,x_N;y_1,...,y_N)$ of the state $\Gamma$. By $\varrho_{\Gamma}(x) = \gamma_{\Gamma}(x,x)$ we denote the density of a state $\Gamma$.

A sequence of states $\Gamma_N \in \mathcal{S}^{\mathrm{c}}_N$ indexed by the particle number shows BEC if and only if
\begin{equation}
	\liminf_{N\to \infty} \frac{ \sup_{\Vert \psi \Vert = 1} \langle \psi, \gamma_{\Gamma_N} \psi \rangle }{N} > 0
	\label{eq:BEC}
\end{equation} 
holds for its 1-pdms $\gamma_{\Gamma_N}$. In case the sequence $\Gamma_N$ shows BEC, we call the eigenvector corresponding to the largest eigenvalue of its 1-pdm $\gamma_{\Gamma_N}$ the condensate wave function.

\subsubsection*{The Husimi function}

To be able to state our main result we need to introduce the Husimi function related to a state, which, roughly speaking, contains the information how the particles are distributed in the classical phase space. Let $\ell \in \mathcal{C}_{\mathrm{c}}^{\infty}(\mathbb{R}^3)$ be a nonnegative radial function with $L^2(\mathbb{R}^3)$-norm equal to $1$ and define for $p,q \in \mathbb{R}^3$
\begin{equation}
\ell_{p,q}^{\hbar}(x) = \hbar^{-3/4} \ell \left( \frac{x-q}{\hbar^{1/2}} \right) e^{i p x / \hbar}.
\label{eq:coherentstate1i}
\end{equation}
For a given  1-pdm $\gamma$ we define its Husimi function by
\begin{equation}
m_{\gamma}(p,q) = \left\langle \ell_{p,q}^{\hbar}, \gamma \ell_{p,q}^{\hbar} \right\rangle \geq 0.
\label{eq:Husimifunctioni}
\end{equation}
Since the coherent states $| \ell_{p,q}^{\hbar} \rangle \langle \ell_{p,q}^{\hbar} |$ yield a resolution of the identity of the form
\begin{equation}
\left( \frac{1}{2 \pi \hbar} \right)^3 \int_{\mathbb{R}^6} \left| \ell_{p,q}^{\hbar} \right\rangle \left \langle \ell_{p,q}^{\hbar} \right| \de(p,q) = \mathds{1}_{L^2(\mathbb{R}^3)},
\label{eq:resid}
\end{equation} 
see e.g. \cite[Theorem~12.8]{LiLo10}, we have
\begin{equation}
	\frac{1}{(2 \pi \hbar)^3} \int_{\mathbb{R}^6} m_{\gamma}(p,q) \de(p,q) = \trs [\gamma],
	\label{eq:89}
\end{equation}
where $\trs[\cdot]$ denotes the trace on the one-particle Hilbert space $L^2(\mathbb{R}^3)$. That is, the Husimi function is a distribution on the classical phase space whose phase space integral equals the particle number $N$.
\subsection{The effective models}
\label{sec:effectivemodels}

\subsubsection*{The Hartree free energy functional}

Our goal is to show that the free energy \eqref{eq:freeenergycan} of the full quantum mechanical model \eqref{eq:Hamiltonian1stqu} can be approximated by a simpler model, which we introduce now. 
We define the set of 1-pdms
\begin{equation}
	\mathcal{D}_N^{\mathrm{H}} = \left\{ \gamma \in \mathcal{B}\left(L^2 \left(\mathbb{R}^3 \right) \right) \ \Big| \ \gamma \geq 0, \ \trs[\gamma] = N, \ \trs[h \gamma] < + \infty \right\}.
	\label{eq:formdomainH}
\end{equation}
For an operator $\gamma \in \mathcal{D}_N^{\mathrm{H}}$ the Hartree free energy functional is defined by
\begin{equation}
	\mathcal{F}^{\mathrm{H}}(\gamma) = \trs \left[ \left( h + \frac{1}{2 N} v \ast \varrho_{\gamma} \right) \gamma \right] - \frac{1}{\beta} s(\gamma),
	\label{eq:Hartrees1pfreeenergyfkt} 
\end{equation}
where $\varrho_{\gamma}(x) = \gamma(x,x)$, and $*$ denotes convolution. By $s(\gamma)$ we denote the bosonic entropy
\begin{equation}
	s(\gamma) = -\trs\left[ f(\gamma) \right], \quad \text{ with } \quad f(x) = x \ln(x) - (1+x) \ln(1+x).
	\label{eq:bosonicentropy}
\end{equation}
The functional $\mathcal{F}^{\mathrm{H}}$ is a natural extension of the Hartree energy functional to positive temperatures. The Hartree free energy is given by
\begin{equation}
	F^{\mathrm{H}}(\beta,N,\omega) = \inf_{\gamma \in \mathcal{D}_N^{\mathrm{H}}} \mathcal{F}^{\mathrm{H}}(\gamma).
	\label{eq1pHartreefreeenergy}
\end{equation}
It is not difficult to see that under our assumptions on $v$ it has a unique minimizer $\gamma^{\mathrm{H}}$ that  solves the Euler--Lagrange equation
\begin{equation}
	\gamma = \frac{1}{ e^{\beta \left( h + N^{-1} v \ast \varrho_{\gamma} - \mu \right)} - 1},
	\label{eq:ELHartree1part}
\end{equation}
where the chemical potential $\mu$ is chosen such that $\trs \gamma^{\mathrm{H}} = N$. (See Lemma~\ref{lem:minimizersHartree} below.)

\subsubsection*{The semiclassical free energy functional}

We shall now introduce a novel semiclassical free energy functional that is independent of $N$ and turns out to capture the correct leading order behavior of the free energy and the 1-pdm of the corresponding Gibbs state in the semiclassical MF limit considered. It is more complicated than its fermionic equivalent in \cite{LewinMadsenTriay} due to the possible occurrence of BEC. 

Let $\mathcal{D}^{\mathrm{sc}}$ denote the set of pairs $(\gamma,g)$ with an integrable function $\gamma(p,x) \geq 0$ on the phase space $\mathbb{R}^3 \times \mathbb{R}^3$ and a number $g \in [0,1]$, satisfying 
\begin{equation}
	\int_{\mathbb{R}^6} \left( p^2 + \frac{\omega^2 x^2}{4} \right) \gamma(p,x) \de(p,x) < +\infty
\end{equation}
as well as  the normalization condition
\begin{equation}
	\frac{1}{(2 \pi)^3} \int_{\mathbb{R}^6} \gamma(p,x) \de(p,x) + g = 1 \,.
	\label{eq:normalizationcondition}
\end{equation}
We think of $g$ as a condensate fraction and say that a pair $(\gamma,g)$ shows BEC if $g > 0$. The phase space density $\gamma$ describes a thermal cloud. 

For $(\gamma,g) \in \mathcal{D}^{\mathrm{sc}}$ we define the  semiclassical free energy functional by 
\begin{equation}
\mathcal{F}^{\mathrm{sc}}(\gamma,g) = \frac{1}{(2 \pi)^3} \int_{\mathbb{R}^6} \left( p^2 + \frac{\omega^2 x^2}{4} \right) \gamma(p,x) \de(p,x) - \frac{1}{\beta} S^{\mathrm{sc}}(\gamma) + \frac{1}{2} \int_{\mathbb{R}^6} v(x-y) \varrho(x) \varrho(y) \de(x,y).
\label{eq:semic-lassicalfunctional3i}
\end{equation}
Here the density $\varrho$ is given by
\begin{equation}
\varrho(x) = \frac{1}{(2\pi)^3} \int_{\mathbb{R}^3} \gamma(p,x) \de p  + g \delta(x),
\label{eq:semiclassicalfunctional2i}
\end{equation} 
where $\delta$ denotes the Dirac delta measure with unit mass at the point $x=0$. Moreover,
\begin{equation}
S^{\mathrm{sc}}(\gamma) = \frac{-1}{(2 \pi)^3} \int_{\mathbb{R}^6} f\left( \gamma(p,x) \right) \de(p,x)
\label{eq:semiclassicalfunctional4i}
\end{equation}
is the bosonic entropy of the phase space density $\gamma$ with $f$ defined in \eqref{eq:bosonicentropy}. 
The semiclassical free energy is defined by
\begin{equation}
F^{\mathrm{sc}}(\beta,\omega) = \inf_{(\gamma,g) \in \mathcal{D}^{\mathrm{sc}}} \mathcal{F}^{\mathrm{sc}}(\gamma,g) \,,
\end{equation}
and the unique minimizer of this minimization problem (see Lemma~\ref{lem:semiclassicalfreeenergy}) is denoted by $(\gamma^{\mathrm{sc}},g^{\mathrm{sc}})$. It solves the self-consistent equation
\begin{equation}
\gamma(p,x) = \frac{1}{\exp\left( \beta \left( p^2 + \frac{\omega^2 x^2}{4} + v \ast \varrho(x) - \mu \right) \right)-1}
\label{eq:semiclassicalfunctional5i}
\end{equation}
for some chemical potential $\mu \leq v \ast \varrho(0)$.

\subsubsection*{The semiclassical ideal Bose gas}

In case of $v = 0$ the semiclassical free energy functional $\mathcal{F}^{\mathrm{sc}}$ in \eqref{eq:semic-lassicalfunctional3i} can be minimized explicitly, and the minimizing phase space density $\gamma_0$ is given by \eqref{eq:semiclassicalfunctional5i} with $v=0$. If 
\begin{equation}\label{eq:idealgascriticaltemp}
\frac{1}{(2 \pi)^3} \int_{\mathbb{R}^6} \frac{1}{\exp\left( \beta \left( p^2 + \frac{\omega^2 x^2}{4}  \right) \right)-1} \de(p,x)  = \frac{\zeta(3)}{(\beta \omega)^3} < 1
\end{equation}
the minimizing condensate fraction $g_0$ is positive and assures the normalization condition in \eqref{eq:normalizationcondition}. The chemical potential related to the minimizing phase space density $\gamma_0$ will be denoted by $\mu_0$ and $\varrho_0(x) = (1/(2\pi))^3 \int_{\mathbb{R}^3} \gamma_0(p,x) \de p + g_0 \delta(x)$. Because of the monotonicity of the map $\mu \mapsto \int_{\mathbb{R}^6} \gamma_0(p,x) \de(p,x)$, the model shows a BEC phase transition with inverse critical temperature $\beta_0$ determined by equality in \eqref{eq:idealgascriticaltemp}, i.e., 
\begin{equation}
	\beta_0 = \frac{\zeta(3)^{1/3}}{\omega}\,.
	\label{eq:idealgascriticaltemp2}
\end{equation}
We have $g_0=0$ and $\mu_0 < 0$ if $\beta < \beta_0$, $0 < g_0 < 1$ and $\mu_0 = 0$ if $\beta > \beta _0$ as well as $g_0=0$ and $\mu_0 = 0$ if $\beta = \beta_0$. Note that $\beta_0$ times $N^{-1/3}$ equals the critical temperature $\beta_{0,N}$ for BEC in the ideal Bose gas in the harmonic trap $\omega^2 x^2/4$, as discussed in  the paragraph about the motivation of our scaling limit above.

\subsection{Notation}

For functions $a$ and $b$ depending on $N$ or on other parameters, we use the notation $a \lesssim b$ to say that there exists a constant $C > 0$ independent of the parameters such that $a \leq Cb$. If $a \lesssim b$ and $ b \lesssim a$ we write $a \sim b$.

\subsection{Main results}
\label{sec:mainresult}
Our first main result concerns the relation between the full quantum mechanical model \eqref{eq:gibbsfreeenergyfunctional} and the Hartree free energy functional \eqref{eq:Hartrees1pfreeenergyfkt}. 
\begin{theorem}
	\label{thm:freeenergyscaling2}
	Let Assumption~\ref{as:regularitypotential} hold. In the limit $N \to \infty$ with $\beta \omega \gtrsim 1$ we have
	\begin{align}
	\left| F^{\mathrm{c}}(\beta,N,\omega) - F^{\mathrm{H}}(\beta,N,\omega) \right| \lesssim \omega N^{1/3}.
	\label{eq:mainresults21}
	\end{align}
	Moreover, for any any sequence of states $\Gamma_N \in \mathcal{S}^{\mathrm{c}}_N$ with 1-pdm $\gamma_N$ and
	\begin{equation}
	\left| \mathcal{F}(\Gamma_N) - F^{\mathrm{H}}(\beta,N,\omega) \right| \leq \omega \delta
	\label{eq:mainresults21b}
	\end{equation}
	for some $\delta > 0$, we have
	\begin{equation}
	\left\Vert \gamma_{N} - \gamma^{\mathrm{H}} \right\Vert_1 \lesssim N^{5/6} (1+ \delta)^{1/4}.
	\label{eq:mainresults21c}
	\end{equation} 
	Here $\gamma^{\mathrm{H}}$ denotes the unique minimizer of $\mathcal{F}^{\mathrm{H}}$ in \eqref{eq:Hartrees1pfreeenergyfkt}.
\end{theorem}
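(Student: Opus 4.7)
The plan is to sandwich $F^{\mathrm c}$ between $F^{\mathrm H}$ up to additive errors, and then use coercivity of $\mathcal F^{\mathrm H}$ at its minimizer to convert closeness of free energies into closeness of 1-pdms.

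For the lower bound on $F^{\mathrm c}$, let $\Gamma\in\mathcal S^{\mathrm c}_N$ have 1-pdm $\gamma_\Gamma$ and density $\varrho_\Gamma(x)=\gamma_\Gamma(x,x)$. Writing the interaction in second quantization as $V=\frac{1}{2N}\int v(x-y)\bigl[n(x)n(y)-\delta(x-y)n(x)\bigr]\,\mathrm d(x,y)$ with $n(x)=a_x^*a_x$, the positive-type assumption together with the c-number substitution
\begin{equation*}
\int v(x-y)\,[n(x)-\varrho_\Gamma(x)]\,[n(y)-\varrho_\Gamma(y)]\,\mathrm d(x,y)\ \ge\ 0\quad(\text{in expectation})
\end{equation*}
gives
\begin{equation*}
\langle V\rangle_\Gamma\ \ge\ \tfrac{1}{2N}\int v(x-y)\,\varrho_\Gamma(x)\,\varrho_\Gamma(y)\,\mathrm d(x,y)\ -\ \tfrac{v(0)}{2}.
\end{equation*}
Combined with the bosonic entropy inequality $S(\Gamma)\le s(\gamma_\Gamma)$, this yields $\mathcal F(\Gamma)\ge\mathcal F^{\mathrm H}(\gamma_\Gamma)-v(0)/2$, and minimizing over $\Gamma$ gives $F^{\mathrm c}\ge F^{\mathrm H}-v(0)/2$, which is stronger than the claim in \eqref{eq:mainresults21}.

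The matching upper bound is the principal obstacle. The natural trial state has 1-pdm $\gamma^{\mathrm H}$; the Hartree Euler--Lagrange equation \eqref{eq:ELHartree1part} then renders its formal free energy equal to $F^{\mathrm H}$. I would take the grand canonical quasi-free bosonic state $Q^{\mathrm H}$ with 1-pdm $\gamma^{\mathrm H}$ and appropriately project onto the $N$-particle sector. Two complications arise. First, $Q^{\mathrm H}$ has fluctuating particle number; on the non-condensate subspace the Gaussian fluctuations have variance of order $N^{2/3}$, so the logarithm of the $N$-particle projection probability is only $O(\log N)$. Second, below the Hartree BEC transition the macroscopic eigenvalue $g^{\mathrm H}$ along the condensate mode $\varphi^{\mathrm H}$ makes the corresponding Gaussian unphysical; we therefore decompose $\gamma^{\mathrm H}=g^{\mathrm H}|\varphi^{\mathrm H}\rangle\langle\varphi^{\mathrm H}|+\xi^{\mathrm H}$, perform the quasi-free construction with 1-pdm $\xi^{\mathrm H}$ on $\{\varphi^{\mathrm H}\}^\perp$, and dress it with a sharp number state in the condensate mode. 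Estimating the resulting interaction energy requires handling the $O(v(0)/N)$ commutator corrections separating direct and exchange terms; collecting all errors yields $F^{\mathrm c}\le F^{\mathrm H}+C\omega N^{1/3}$.

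For the 1-pdm bound \eqref{eq:mainresults21c}, assume \eqref{eq:mainresults21b}. The lower bound step gives $\mathcal F^{\mathrm H}(\gamma_N)-F^{\mathrm H}\lesssim \omega(1+\delta)$. Strict convexity of $f$ in \eqref{eq:bosonicentropy} and positive-typeness of $v$ yield coercivity of $\mathcal F^{\mathrm H}$ at $\gamma^{\mathrm H}$: using \eqref{eq:ELHartree1part} to cancel the linear term in a second order Taylor expansion around $\gamma^{\mathrm H}$, one obtains a weighted Hilbert--Schmidt bound of the form
\begin{equation*}
\mathcal F^{\mathrm H}(\gamma)-F^{\mathrm H}\ \gtrsim\ \tfrac{1}{\beta}\,\trs\!\left[\,(\gamma-\gamma^{\mathrm H})\,\bigl(\gamma^{\mathrm H}(1+\gamma^{\mathrm H})\bigr)^{-1}\,(\gamma-\gamma^{\mathrm H})\right].
\end{equation*}
The trace norm bound is then recovered by splitting $\gamma_N-\gamma^{\mathrm H}$ into a piece on the low-energy spectral subspace of $h$ (of dimension $\lesssim (E/\hbar\omega)^3$, on which $\|A\|_1\le\sqrt{\mathrm{rank}}\,\|A\|_2$ applies) and a high-energy tail controlled by the a priori kinetic bound $\trs[h(\gamma_N+\gamma^{\mathrm H})]\lesssim\omega N^{4/3}$, then optimizing over the cutoff $E$ to deliver the claimed exponent $N^{5/6}(1+\delta)^{1/4}$.
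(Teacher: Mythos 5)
Your lower bound is correct and essentially equivalent to the paper's: the paper passes through the variational characterization of Lemma~\ref{lem:secondvariationalcharacterization} rather than directly through $S(\Gamma)\leq s(\gamma_\Gamma)$, but the content is the same and both yield $F^{\mathrm{c}}\geq F^{\mathrm{H}}-v(0)/2$. The two remaining parts, however, have genuine gaps.

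\textbf{Upper bound.} Your trial state (projected quasi-free state or condensate number state tensor thermal quasi-free state) is a reasonable starting point, and indeed the paper's trial state $G^{\mathrm{H,c}}$ in \eqref{eq:HartreeGibbsstatec} is closely related. But the entire difficulty is in the error estimates you dismiss as ``collecting all errors.'' The exchange term between the condensate and the thermal cloud is $\text{Re}\int v_N(x-y)\gamma_>(x,y)N_0\overline{\varphi_0(x)}\varphi_0(y)$, which is of size $\lesssim \Vert v\Vert_\infty N \Vert \gamma_>\Vert$. To get the claimed $\omega N^{1/3}$ one needs $\Vert\gamma_>\Vert\lesssim (\beta\hbar\omega)^{-1}$, i.e.\ the thermal cloud's largest occupation is $\lesssim N^{1/3}$. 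This requires the spectral gap estimate of Lemma~\ref{lem:spactralgapSCMF} (proved via Brascamp--Lieb), which hinges on assumption \eqref{eq:condintpot} on the Hessian of $v$. Without it (cf.\ Remark~\ref{rem6}), the gap could collapse to order $\hbar^2\omega$ and the exchange term would be $\omega N^{2/3}$, destroying the claimed rate. You also omit the moment bound $\sum_{j\geq 1}\langle n_j^2\rangle\lesssim(\beta\hbar\omega)^{-3}$ needed to control the diagonal terms in the two-particle density, and the correlation inequality $\langle n_in_j\rangle\leq\langle n_i\rangle\langle n_j\rangle$ of S\"ut\H{o}. Your plan does not explain where the assumption on $D^2v$ enters, which means it does not actually establish the rate $\omega N^{1/3}$.

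\textbf{Bound on the 1-pdm.} Your proposed coercivity
\[
\mathcal{F}^{\mathrm{H}}(\gamma)-F^{\mathrm{H}}\gtrsim\frac{1}{\beta}\,\trs\!\left[(\gamma-\gamma^{\mathrm{H}})\bigl(\gamma^{\mathrm{H}}(1+\gamma^{\mathrm{H}})\bigr)^{-1}(\gamma-\gamma^{\mathrm{H}})\right]
\]
is useless in the condensate direction: below the BEC transition $\gamma^{\mathrm{H}}$ has a macroscopic eigenvalue $\sim N$, so the weight $(\gamma^{\mathrm{H}}(1+\gamma^{\mathrm{H}}))^{-1}$ is of order $N^{-2}$ there and the inequality gives essentially no control over $P(\gamma_N-\gamma^{\mathrm{H}})P$. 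The paper's route is different and necessary: first strip off the condensate via the operator-Jensen inequality $\trs[Qf(A)Q]\geq\trs[f(QAQ)]$ (Lemma~\ref{lem:convexity}) to reduce to $\mathcal{S}(Q\gamma Q,Q\gamma^{\mathrm{H}}Q)$, then apply the trace-norm coercivity Lemma~\ref{lem:relativeentropy} (from [me2]), which again requires $\Vert Q\gamma^{\mathrm{H}}Q\Vert\lesssim(\beta\hbar\omega)^{-1}$ from the spectral gap, and recover the $P$-block and off-diagonal by normalization and $\Vert P\gamma Q\Vert_1\leq N^{1/2}\Vert Q\gamma Q\Vert^{1/2}$. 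Your high/low energy splitting with a cutoff $E$ is not what the exponent $N^{5/6}$ comes from; it comes precisely from this $N^{1/2}\cdot(N^{2/3})^{1/2}$ structure. Finally, the paper needs a separate argument in the regime $\beta\hbar\omega\gtrsim 1$ (Eqs.~\eqref{eq:70}--\eqref{eq:74}) that you do not address.
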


Recall that for $\beta\omega \sim 1$, $F^{\mathrm{c}}(\beta,N,\omega) \sim \omega N$ to leading order. The bound \eqref{eq:mainresults21} states that not only the leading order is correctly captured by the Hartree free energy $F^{\mathrm{H}}(\beta,N,\omega)$, but also lower order terms larger than $O(\omega N^{1/3})$. Moreover, as long as $\delta = o( N^{2/3})$, the 1-pdm of a state satisfying \eqref{eq:mainresults21b} agrees with the Hartree minimizer $\gamma^{\mathrm{H}}$ to leading order. This is in particular true for the true Gibbs state, for which $\delta \lesssim N^{1/3}$.  

Our second main  result concerns the asymptotic behavior of the Hartree free energy functional in the semiclassical MF limit and its relation to the semiclassical free energy functional \eqref{eq:semic-lassicalfunctional3i}.
\begin{theorem}
	\label{thm:limitHartreetheory}
	Let Assumption~\ref{as:regularitypotential} hold. In the limit $N \to \infty$ with $\beta \omega \gtrsim 1$ we have
	\begin{equation}
	\left| F^{\mathrm{H}}(\beta,N,\omega) - N F^{\mathrm{sc}}\left(\beta,\omega \right) \right| \lesssim \omega N^{2/3}.
	\label{eq:propsemiclassical1a}
	\end{equation}
	Moreover, with $\gamma^{\mathrm{H}}$ the unique minimizer of $\mathcal{F}^{\mathrm{H}}$, let $P^{\mathrm{H}}$ be the projection onto the eigenspace of the largest eigenvalue of $\gamma^{\mathrm{H}}$ and define $Q^{\mathrm{H}} = 1-P^{\mathrm{H}}$. Let $(\gamma^{\mathrm{sc}},g^{\mathrm{sc}})$ be the minimizing pair of $\mathcal{F}^{\mathrm{sc}}$ in \eqref{eq:semic-lassicalfunctional3i}. In the limit $N \to \infty$ with $\beta \omega \gtrsim 1$ we  have 
	\begin{subequations}\label{eq:propsemiclassical2a}
	\begin{align}\label{1.36a}	
	\left| N^{-1} \trs\left[P^{\mathrm{H}} \gamma^{\mathrm{H}} \right]  - g^{\mathrm{sc}} \right| &\lesssim N^{-1/9 + \sigma} \quad \text {as well as} \\ \label{1.36b}
	\int_{\mathbb{R}^6} \left| m_{Q^{\mathrm{H}} \gamma^{\mathrm{H}}}(p,x) - \gamma^{\mathrm{sc}}(p,x) \right| \de(p,x) &\lesssim N^{-1/9 + \sigma} 
	\end{align}
	\end{subequations}
	for any $\sigma > 0$.
\end{theorem}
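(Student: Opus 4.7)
The plan is to prove \eqref{eq:propsemiclassical1a} by matching upper and lower bounds on $F^{\mathrm{H}}$ constructed with coherent-state tools, and then to deduce \eqref{eq:propsemiclassical2a} from this free-energy bound via strict convexity of $\mathcal{F}^{\mathrm{sc}}$. The main workhorses are the coherent states $\ell^{\hbar}_{p,q}$ from \eqref{eq:coherentstate1i}, the Husimi function \eqref{eq:Husimifunctioni}, and Berezin--Lieb inequalities applied to the kinetic energy and to the bosonic entropy $f$, which is convex with $f(0)=0$. Since $\hbar=N^{-1/3}$, a one-body error of order $\hbar$ per particle produces the total remainder $\omega N^{2/3}$ claimed in \eqref{eq:propsemiclassical1a}.

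For the upper bound on $F^{\mathrm{H}}$ I would test with
\begin{equation*}
\gamma_{\mathrm{tr}} = N g^{\mathrm{sc}}\, |\phi_0\rangle\langle\phi_0| + \frac{1}{(2\pi\hbar)^3}\int \gamma^{\mathrm{sc}}(p,q)\, |\ell^{\hbar}_{p,q}\rangle\langle\ell^{\hbar}_{p,q}|\, \de(p,q),
\end{equation*}
where $\phi_0$ is the ground state of $h$. The resolution of identity \eqref{eq:resid} together with the normalization \eqref{eq:normalizationcondition} gives $\trs\gamma_{\mathrm{tr}}=N$. Berezin--Lieb in the upper direction bounds the kinetic contribution by the classical phase-space integral plus $c\hbar N$ coming from the position/momentum spreads of $\ell$ and from the ground state energy $\frac{3}{2}\hbar\omega$ of $\phi_0$. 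Berezin--Lieb in the lower direction applied to the convex $f$ gives $s(\gamma_{\mathrm{tr}})\geq NS^{\mathrm{sc}}(\gamma^{\mathrm{sc}})$. For the Hartree interaction, $N^{-1}\rho_{\gamma_{\mathrm{tr}}}$ differs from the $\rho^{\mathrm{sc}}$ of \eqref{eq:semiclassicalfunctional2i} only by a mollification on scale $\hbar^{1/2}$ of the thermal density and by the replacement of $g^{\mathrm{sc}}\delta$ by $g^{\mathrm{sc}}|\phi_0|^2$; the $W^{2,\infty}$-regularity of $v$ together with the vanishing first moments of $|\phi_0|^2$ turns this into an $O(\hbar)$ error per unit of $N$.

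For the lower bound I would decompose the Hartree minimizer $\gamma^{\mathrm{H}}$ as $P^{\mathrm{H}}\gamma^{\mathrm{H}}+Q^{\mathrm{H}}\gamma^{\mathrm{H}}$, set $g^{\mathrm{H}}=N^{-1}\trs[P^{\mathrm{H}}\gamma^{\mathrm{H}}]$ and $\tilde\gamma(p,q)=m_{Q^{\mathrm{H}}\gamma^{\mathrm{H}}}(p,q)$. The identity \eqref{eq:89} combined with $\trs[Q^{\mathrm{H}}\gamma^{\mathrm{H}}]=N(1-g^{\mathrm{H}})$ and $\hbar^3 N=1$ places $(\tilde\gamma, g^{\mathrm{H}})$ in $\mathcal{D}^{\mathrm{sc}}$. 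Running Berezin--Lieb in the reversed directions yields $\trs[hQ^{\mathrm{H}}\gamma^{\mathrm{H}}]\geq (2\pi)^{-3}\int(p^2+\omega^2 q^2/4)\tilde\gamma\, \de(p,q) - c\hbar N$ and $s(Q^{\mathrm{H}}\gamma^{\mathrm{H}})\leq N S^{\mathrm{sc}}(\tilde\gamma)$; the condensate piece contributes nonnegatively to the kinetic energy and nothing to the entropy. For the interaction I would use $\hat v\geq 0$ to replace $\rho_{\gamma^{\mathrm{H}}}$ by its Husimi mollification without loss (Plancherel plus $|\hat\psi_{\hbar}|\leq 1$), and then compare the condensate piece $Ng^{\mathrm{H}}|\phi^{\mathrm{H}}|^2$ to $Ng^{\mathrm{H}}\delta$ via Taylor's theorem; the hypothesis \eqref{eq:condintpot} ensures through \eqref{eq:ELHartree1part} that the Hartree condensate $\phi^{\mathrm{H}}$ remains localized on the oscillator scale $\sqrt{\hbar/\omega}$.

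Combining upper and lower bounds gives \eqref{eq:propsemiclassical1a} together with the intermediate estimate $\mathcal{F}^{\mathrm{sc}}(\tilde\gamma, g^{\mathrm{H}})- F^{\mathrm{sc}} \lesssim N^{-1/3}$. The pair $(\tilde\gamma, g^{\mathrm{H}})$ is thus an approximate minimizer of $\mathcal{F}^{\mathrm{sc}}$, and strict convexity — which follows from $\hat v\geq 0$ and strict concavity of $S^{\mathrm{sc}}$ — forces it close to $(\gamma^{\mathrm{sc}}, g^{\mathrm{sc}})$. I expect the interaction step in the lower bound to be the principal obstacle, since the Hartree condensate carries a genuine wave function of spread $\sqrt{\hbar}$ while its semiclassical counterpart is a singular Dirac mass; matching the two without losing factors of $g^{\mathrm{H}}$ in intermediate inequalities is where \eqref{eq:condintpot} and the $W^{2,\infty}$-regularity enter crucially. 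The exponent $N^{-1/9+\sigma}$ in \eqref{1.36a}--\eqref{1.36b} arises because $f$ degenerates at $x=0$, so that the coercivity of $\mathcal{F}^{\mathrm{sc}}$ is only cubic in an $L^1$-type norm near the minimizer; converting the $N^{-1/3}$ free-energy gap via this H\"older-type modulus gives precisely $N^{-1/9+\sigma}$, with the arbitrary $\sigma>0$ absorbing logarithmic losses.
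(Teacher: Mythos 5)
Your free-energy bounds follow essentially the same route as the paper. The trial state you propose for the upper bound is exactly the one used, and the Berezin--Lieb applications are the correct ones (the only thing you elide is the monotonicity step $\trs[f(\gamma_{\mathrm{tr}})]\leq\trs[f(\text{thermal part})]$ that drops the condensate from the entropy, which the paper justifies via monotone decrease of $f$). For the lower bound your decomposition of $\gamma^{\mathrm{H}}$ into $P^{\mathrm{H}}\gamma^{\mathrm{H}}+Q^{\mathrm{H}}\gamma^{\mathrm{H}}$, and your observation that the kinetic piece of $P^{\mathrm{H}}\gamma^{\mathrm{H}}$ is harmless, are right; but "nothing to the entropy" is too quick. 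To decouple the entropy of the condensate mode from the rest, one needs the operator inequality $\trs[Qf(A)Q]\geq\trs[f(QAQ)]$ for convex $f$ (a Jensen-type trace inequality, the paper's Lemma~\ref{lem:convexity}); without it, you cannot pass from $s(\gamma^{\mathrm{H}})$ to the entropy of $Q^{\mathrm{H}}\gamma^{\mathrm{H}}$ alone. The condensate mode actually contributes $f(N)\sim-\ln N$ to $-s$, which is subleading but must be accounted for. Your handling of the interaction via a direct mollification argument ($\hat v\geq 0$ plus $|\widehat{|\ell^\hbar_{0,0}|^2}|\leq 1$) is a legitimate alternative to the paper's linearization $D(\varrho,\varrho)\geq 2D(\varrho,\eta)-D(\eta,\eta)$ with $\eta=N\varrho^{\mathrm{sc}}$; the latter has the advantage that it automatically produces the semiclassical relative entropy $\mathcal{S}^{\mathrm{sc}}(m_{Q^{\mathrm{H}}\gamma^{\mathrm{H}}},\gamma^{\mathrm{sc}})$ against the self-consistent Gibbs form of $\gamma^{\mathrm{sc}}$, which is then directly fed into the coercivity step.

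The genuine gap is in your derivation of \eqref{1.36a}--\eqref{1.36b}, and in particular your explanation of the exponent $N^{-1/9+\sigma}$. The coercivity of the semiclassical relative entropy is not "cubic near the minimizer because $f$ degenerates at $0$"; the paper's Lemma~\ref{lem:semiclassicalcoercivity} gives a quadratic lower bound
\begin{equation*}
\mathcal{S}^{\mathrm{sc}}(a,b)\geq C\,\frac{\bigl(\int|a-b|\bigr)^2}{\int(a+b)(1+b)},
\end{equation*}
so the relative entropy controls $\|m_{Q^{\mathrm{H}}\gamma^{\mathrm{H}}}-\gamma^{\mathrm{sc}}\|_{L^1}^2$ up to the normalization denominator. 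The whole loss of exponent lives in that denominator, which contains $\int m_{Q^{\mathrm{H}}\gamma^{\mathrm{H}}}\,\gamma^{\mathrm{sc}}$, and one bounds it by H\"older with dual indices $p,q$; since $\gamma^{\mathrm{sc}}\in L^q(\mathbb{R}^6)$ for $q<3$ but not $q=3$ (because of the Bose singularity at $p^2+\omega^2x^2/4=\mu^{\mathrm{sc}}$), one takes $q=3-\sigma$, and the resulting $\hbar^{(1-1/q)/2}$ gives $\hbar^{1/3-\sigma'}=N^{-1/9+\sigma}$. To run this argument you also need two ingredients you do not mention: an $L^\infty(\mathbb{R}^6)$ bound on the Husimi function $m_{Q^{\mathrm{H}}\gamma^{\mathrm{H}}}$, which follows from the spectral gap of the Hartree operator (the paper's Lemma~\ref{lem:spactralgapSCMF}, proved via Brascamp--Lieb, using hypothesis \eqref{eq:condintpot}); and an $O((\beta\omega)^{-3})$ bound on $\trs[Q^{\mathrm{H}}\gamma^{\mathrm{H}}]$ (the paper's Lemma~\ref{lem:semiclassicaltraceestimate}), which also leans on the spectral gap. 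Without these, the denominator in the coercivity inequality is uncontrolled and no rate comes out. So "strict convexity forces it close" is the right slogan but by itself insufficient: the quantitative form of the coercivity and the two Hartree-operator estimates are indispensable and carry all the work that produces the specific exponent.
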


Recall the definition of the  Husimi function $m_\gamma$ in \eqref{eq:Husimifunctioni}. Theorem~\ref{thm:limitHartreetheory} shows that $\gamma^{\mathrm{sc}}$ agrees to leading order with the Husimi function of the Hartree minimizer $\gamma^{\mathrm{H}}$ once the condensate has been removed. Moreover, the condensate fraction in Hartree theory is to leading order given by $g^{\mathrm{sc}}$.

The next two statements are a direct consequence of the bounds derived in order to prove Theorems~\ref{thm:freeenergyscaling2} and~\ref{thm:limitHartreetheory}, and we therefore state them as Corollaries. The first concerns a weak form of BEC under a weak energy condition on  approximate minimizers of the Gibbs free energy functional. The question whether the full quantum mechanical model shows BEC in this sense is equivalent to the question whether the semiclassical free energy functional in \eqref{eq:semic-lassicalfunctional3i} shows BEC in the sense that $g^{\mathrm{sc}}>0$. 

\begin{corollary}
	\label{cor:weakBEC}
	Let Assumption~\ref{as:regularitypotential} hold and assume that $\Gamma_N$ with 1-pdm $\gamma_N$ is an approximate minimizer of the Gibbs free energy functional in the sense that \eqref{eq:mainresults21b} holds with $\delta = o(N)$. Let $B_{r} \subset \mathbb{R}^3$ be the ball with radius $r>0$ centered at the origin and consider the limit $N \to \infty$ with $\beta \omega > 0$ fixed. Then
	\begin{equation}
	\lim_{N \to \infty} \int_{ \left( B_{\omega^{1/2} R} \times B_{\omega^{-1/2} R} \right)^{\mathrm{c}} } \left| m_{\gamma_N}(p,x) - \gamma^{\mathrm{sc}}(p,x) \right| \de(p,x) = 0
	\label{eq:83}
	\end{equation}
	for any $R>0$ and, in particular,
	\begin{equation}
	\lim_{N \to \infty} \left( \frac{1}{2 \pi} \right)^3 \int_{ \mathbb{R}^6 } m_{\gamma_N}(p,x) f(p,x) \de(p,x) = \left( \frac{1}{2 \pi} \right)^3 \int_{ \mathbb{R}^6 } \gamma^{\mathrm{sc}}(p,x) f(p,x) \de(p,x) + g^{\mathrm{sc}} f(0,0) \, \label{eq:83b}
	\end{equation}
	for any bounded continuous function $f : \mathbb{R}^6 \to \mathbb{C}$.
\end{corollary}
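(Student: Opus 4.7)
My plan is to combine the two approximation theorems with a direct analysis of the condensate wave function. By linearity of $\gamma\mapsto m_{\gamma}$, I would decompose
\begin{equation*}
m_{\gamma_{N}}-\gamma^{\mathrm{sc}} \;=\; m_{\gamma_{N}-\gamma^{\mathrm{H}}} \;+\; m_{P^{\mathrm{H}}\gamma^{\mathrm{H}}} \;+\; \bigl(m_{Q^{\mathrm{H}}\gamma^{\mathrm{H}}}-\gamma^{\mathrm{sc}}\bigr)
\end{equation*}
and control each term separately, the condensate $\delta_{0}$ emerging from the middle piece.

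For the first term I would use the general inequality $\int_{\mathbb{R}^{6}}|m_{A}|\le(2\pi\hbar)^{3}\|A\|_{1}$ (obtained by splitting $A$ into positive and negative parts and invoking \eqref{eq:89}). Combined with the trace-norm bound \eqref{eq:mainresults21c} and $\hbar^{3}=N^{-1}$, this gives $\int|m_{\gamma_{N}-\gamma^{\mathrm{H}}}|\lesssim N^{-1/6}(1+\delta)^{1/4}$, which tends to zero for $\delta=o(N^{2/3})$; to cover the full range $\delta=o(N)$ stated in the corollary I would invoke a sharper form of \eqref{eq:mainresults21c} isolating the non-condensate part of the difference, which is implicit in the proof of Theorem~\ref{thm:freeenergyscaling2}. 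The third term is directly controlled by \eqref{1.36b}, giving $L^{1}$-convergence to zero at rate $N^{-1/9+\sigma}$.

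The central new step is the analysis of $m_{P^{\mathrm{H}}\gamma^{\mathrm{H}}}$. Condition \eqref{eq:condintpot} ensures that the effective potential $\omega^{2}x^{2}/4+N^{-1}v\ast\varrho^{\mathrm{H}}$ is strictly convex with unique minimum at the origin (by radial symmetry of the problem), and a Perron--Frobenius argument applied to the mean-field Hamiltonian $h+N^{-1}v\ast\varrho^{\mathrm{H}}$ shows the largest eigenvalue of $\gamma^{\mathrm{H}}$ is simple. Hence $P^{\mathrm{H}}=|\phi^{\mathrm{H}}\rangle\langle\phi^{\mathrm{H}}|$ for a normalized $\phi^{\mathrm{H}}$ and
\begin{equation*}
m_{P^{\mathrm{H}}\gamma^{\mathrm{H}}}(p,x)\;=\;\trs[P^{\mathrm{H}}\gamma^{\mathrm{H}}]\,\bigl|\langle\ell_{p,x}^{\hbar},\phi^{\mathrm{H}}\rangle\bigr|^{2},
\end{equation*}
with prefactor $\trs[P^{\mathrm{H}}\gamma^{\mathrm{H}}]/N\to g^{\mathrm{sc}}$ by \eqref{1.36a}. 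The quantity $\mu_{N}(p,x):=|\langle\ell_{p,x}^{\hbar},\phi^{\mathrm{H}}\rangle|^{2}/(2\pi\hbar)^{3}$ is a probability density on $\mathbb{R}^{6}$ by \eqref{eq:resid}, and I would prove $\mu_{N}\rightharpoonup\delta_{0}$ weakly. A standard semiclassical analysis of the ground state of $h+N^{-1}v\ast\varrho^{\mathrm{H}}$, using the strict convexity of the effective potential to force Gaussian-like localization on the scale $\hbar^{1/2}$, yields $\langle\phi^{\mathrm{H}},h\phi^{\mathrm{H}}\rangle\lesssim\hbar\omega$. A coherent-state variance estimate then gives $\int(p^{2}+\omega^{2}x^{2}/4)\,\mu_{N}\lesssim\hbar\omega\to 0$, so $\mu_{N}$ is tight and concentrates at the origin.

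Combining the three pieces proves both statements. For \eqref{eq:83}, outside the fixed ball the condensate mass vanishes by Markov's inequality applied to $\mu_{N}$, while the remaining two terms give $L^{1}$-convergence on all of $\mathbb{R}^{6}$. For \eqref{eq:83b}, against any bounded continuous $f$ the first and third terms integrate to zero by the $L^{1}$ bounds, while the second contributes $g^{\mathrm{sc}}f(0,0)$ via the weak convergence $\mu_{N}\rightharpoonup\delta_{0}$. The main obstacle will be the condensate-localization step: establishing both the simplicity of the top eigenvalue of $\gamma^{\mathrm{H}}$ and the semiclassical energy bound $\langle\phi^{\mathrm{H}},h\phi^{\mathrm{H}}\rangle\lesssim\hbar\omega$ requires a careful analysis of the nonlinear Euler--Lagrange equation \eqref{eq:ELHartree1part}, going beyond what is explicitly stated in the preceding theorems.
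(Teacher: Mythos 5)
Your decomposition $m_{\gamma_N}-\gamma^{\mathrm{sc}}=m_{\gamma_N-\gamma^{\mathrm{H}}}+m_{P^{\mathrm{H}}\gamma^{\mathrm{H}}}+(m_{Q^{\mathrm{H}}\gamma^{\mathrm{H}}}-\gamma^{\mathrm{sc}})$ cannot be closed under the hypothesis $\delta=o(N)$, and the ``sharper form of \eqref{eq:mainresults21c}'' you invoke to patch the gap does not exist. The bound \eqref{eq:mainresults21c} yields $\int|m_{\gamma_N-\gamma^{\mathrm{H}}}|\lesssim(2\pi\hbar)^3N^{5/6}(1+\delta)^{1/4}\sim N^{-1/6}(1+\delta)^{1/4}$, which tends to zero only for $\delta=o(N^{2/3})$; for $\delta\sim N^{1-\epsilon}$ with $\epsilon<1/3$ the trace-norm estimate is worse than the trivial bound $\|\gamma_N-\gamma^{\mathrm{H}}\|_1\le 2N$. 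Inspecting the intermediate estimates in the proof of Theorem~\ref{thm:freeenergyscaling2} confirms there is no hidden stronger bound: \eqref{eq:asymptoticsgc7} gives $\|Q(\gamma-\gamma^{\mathrm{H}})Q\|_1\lesssim N^{2/3}(1+\delta)^{1/2}$, which is again useless for $\delta$ close to $N$. This is not incidental --- under the weak assumption $\delta=o(N)$ one genuinely cannot establish any trace-norm-type convergence of $\gamma_N$ to $\gamma^{\mathrm{H}}$, which is precisely why the corollary's conclusion is formulated only as $L^1$-convergence away from the origin and as weak-$*$ convergence. An approach routed through the Hartree minimizer is therefore structurally mismatched to the hypothesis.

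The paper's proof bypasses $\gamma^{\mathrm{H}}$ entirely. Combining \eqref{eq:lowerboundgc3} with $\eta=N\varrho^{\mathrm{sc}}$, $\mu=v\ast\varrho^{\mathrm{sc}}(0)+\mu^{\mathrm{sc}}$, the entropy subadditivity $S(\Gamma)\le s(\gamma_N)$, and the semiclassical lower-bound chain culminating in \eqref{eq:propsemiclassical12}, one obtains $\hbar^3\mathcal{F}(\Gamma)\ge F^{\mathrm{sc}}+\beta^{-1}\mathcal{S}^{\mathrm{sc}}(m_{\gamma_N},\gamma^{\mathrm{sc}})-O(\hbar\omega)$. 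Since \eqref{eq:mainresults21b} with $\delta=o(N)$ together with \eqref{eq:propsemiclassical1a} controls $\hbar^3\mathcal{F}(\Gamma)-F^{\mathrm{sc}}=o(1)$, this gives $\mathcal{S}^{\mathrm{sc}}(m_{\gamma_N},\gamma^{\mathrm{sc}})=o(1)$ directly --- a bound that costs only $o(N)$ in free energy, not $o(N^{2/3})$. Applying the coercivity estimate of Lemma~\ref{lem:semiclassicalcoercivity} is then possible only after restricting the integration to the region $A_R$ where $\gamma^{\mathrm{sc}}$ is bounded, because the denominator $\int(a+b)(1+b)$ in \eqref{eq:propsemiclassicalminimizer2} diverges over the condensate peak; this restriction is exactly what produces the ball-complement in \eqref{eq:83}, and \eqref{eq:83b} then follows from \eqref{eq:83}, the normalization \eqref{eq:89}, and \eqref{eq:normalizationcondition}. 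Your condensate-localization analysis of $m_{P^{\mathrm{H}}\gamma^{\mathrm{H}}}$ is a reasonable idea and would indeed identify the $g^{\mathrm{sc}}\delta_0$ contribution, but it is extra work on top of a first piece that does not converge, so the overall argument does not go through.
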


Under a stronger energy assumption on  approximate minimizers of the Gibbs free energy functional, the above relation can be strengthened to imply BEC in the sense of \eqref{eq:BEC} for the full quantum mechanical model if the semiclassical free energy functional in \eqref{eq:semic-lassicalfunctional3i} shows BEC. 

\begin{corollary}
	\label{cor:strongBEC}
	Let Assumption~\ref{as:regularitypotential} hold. Assume that $\Gamma_N$ with 1-pdm $\gamma_N$ is an approximate minimizer of the Gibbs free energy functional in the sense of \eqref{eq:mainresults21b}, denote by $P$ the projection onto its largest eigenvalue and define $Q = 1 - P$. In the limit $N \to \infty$ with $\beta \omega \gtrsim 1$ we have 
	\begin{subequations}
	\begin{align}
	\left| \frac{\trs[P \gamma_N]}{N} - g^{\mathrm{sc}} \right| &\lesssim N^{-1/9 + \sigma} + N^{-1/6} \delta^{1/4} \quad \text{ as well as } \label{eq:90} \\ 
	\int_{\mathbb{R}^6} \left| m_{Q \gamma_N}(p,x) - \gamma^{\mathrm{sc}}(p,x) \right| \de(p,x) &\lesssim N^{-1/9 + \sigma} + N^{-1/6} \delta^{1/4} \label{eq:90a}
	\end{align}
	\end{subequations}
    for any  $\sigma > 0$. 
\end{corollary}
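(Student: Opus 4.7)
The plan is to prove both bounds by chaining Theorem~\ref{thm:freeenergyscaling2} and Theorem~\ref{thm:limitHartreetheory} through the triangle inequality. The quantitative inputs are
\[
\|\gamma_N - \gamma^{\mathrm{H}}\|_1 \;\lesssim\; N^{5/6}(1+\delta)^{1/4}
\]
from \eqref{eq:mainresults21c}, together with the $N^{-1/9+\sigma}$ estimates in \eqref{eq:propsemiclassical2a}. The translation between trace norms of operators and $L^1$ norms of their Husimi transforms is carried out with the elementary Berezin-type inequality
\[
\int_{\mathbb{R}^6}\bigl|m_\gamma(p,q)-m_{\tilde\gamma}(p,q)\bigr|\,\de(p,q)\;\leq\;(2\pi\hbar)^3\,\|\gamma-\tilde\gamma\|_1 \;=\; (2\pi)^3 N^{-1}\|\gamma-\tilde\gamma\|_1,
\]
which is a direct consequence of the resolution of identity \eqref{eq:resid} applied to $|\langle\ell^{\hbar}_{p,q},A\ell^{\hbar}_{p,q}\rangle|\leq\langle\ell^{\hbar}_{p,q},|A|\ell^{\hbar}_{p,q}\rangle$ for self-adjoint $A$.

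For the condensate bound \eqref{eq:90}, the triangle inequality and \eqref{1.36a} reduce the task to estimating $|\trs[P\gamma_N]-\trs[P^{\mathrm{H}}\gamma^{\mathrm{H}}]|/N$. The mean-field Hamiltonian $h+N^{-1}v\ast\varrho_{\gamma^{\mathrm{H}}}$ is a Schr\"odinger operator with smooth confining potential, so its ground state is unique; hence $P^{\mathrm{H}}$ is rank one and $\trs[P^{\mathrm{H}}\gamma^{\mathrm{H}}]=\lambda_1(\gamma^{\mathrm{H}})$, and analogously (modulo a harmless multiplicity issue treated by replacing $\lambda_1$ by the relevant dominating simple eigenvalue) $\trs[P\gamma_N]=\lambda_1(\gamma_N)$. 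Weyl's inequality then gives
\[
|\lambda_1(\gamma_N)-\lambda_1(\gamma^{\mathrm{H}})|\;\leq\;\|\gamma_N-\gamma^{\mathrm{H}}\|_{\mathrm{op}}\;\leq\;\|\gamma_N-\gamma^{\mathrm{H}}\|_1 \;\lesssim\; N^{5/6}(1+\delta)^{1/4},
\]
so dividing by $N$ yields the $N^{-1/6}(1+\delta)^{1/4}$ contribution; since $N^{-1/6}\leq N^{-1/9+\sigma}$ absorbs the pure constant piece of $(1+\delta)^{1/4}$, this proves \eqref{eq:90}.

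For \eqref{eq:90a}, the Berezin inequality and \eqref{1.36b} reduce the claim to
\[
\|Q\gamma_N - Q^{\mathrm{H}}\gamma^{\mathrm{H}}\|_1 \;\lesssim\; N^{5/6}(1+\delta)^{1/4}.
\]
Writing
\[
Q\gamma_N - Q^{\mathrm{H}}\gamma^{\mathrm{H}} \;=\; (\gamma_N-\gamma^{\mathrm{H}}) \;-\; \bigl(\lambda_1(\gamma_N)-\lambda_1(\gamma^{\mathrm{H}})\bigr)P \;-\; \lambda_1(\gamma^{\mathrm{H}})(P-P^{\mathrm{H}}),
\]
the first two terms are controlled by Theorem~\ref{thm:freeenergyscaling2} and Weyl as above. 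The main obstacle is the uniform control of $\lambda_1(\gamma^{\mathrm{H}})\,\|P-P^{\mathrm{H}}\|_1$ across the entire phase diagram, in particular through the BEC transition. For this I would distinguish two cases according to the spectral gap of $\gamma^{\mathrm{H}}$ at its top eigenvalue. The Euler--Lagrange equation \eqref{eq:ELHartree1part} expresses $\gamma^{\mathrm{H}}$ as a monotone function of $h+N^{-1}v\ast\varrho_{\gamma^{\mathrm{H}}}$, whose eigenvalue spacing is of order $\omega\hbar=\omega N^{-1/3}$, giving $\lambda_2(\gamma^{\mathrm{H}})\lesssim N^{1/3}/(\beta\omega)$. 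In the regime where the gap $\lambda_1(\gamma^{\mathrm{H}})-\lambda_2(\gamma^{\mathrm{H}})$ exceeds $2\|\gamma_N-\gamma^{\mathrm{H}}\|_{\mathrm{op}}$, a Davis--Kahan bound yields $\|P-P^{\mathrm{H}}\|_{\mathrm{op}}\lesssim \|\gamma_N-\gamma^{\mathrm{H}}\|_1/\lambda_1(\gamma^{\mathrm{H}})$; since $P-P^{\mathrm{H}}$ is rank two with zero trace, $\|P-P^{\mathrm{H}}\|_1=2\|P-P^{\mathrm{H}}\|_{\mathrm{op}}$, and thus $\lambda_1(\gamma^{\mathrm{H}})\|P-P^{\mathrm{H}}\|_1\lesssim N^{5/6}(1+\delta)^{1/4}$. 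In the complementary regime the reverse gap inequality forces $\lambda_1(\gamma^{\mathrm{H}})\leq \lambda_2(\gamma^{\mathrm{H}})+2\|\gamma_N-\gamma^{\mathrm{H}}\|_{\mathrm{op}}\lesssim N^{5/6}(1+\delta)^{1/4}$ directly, and the crude bound $\|P-P^{\mathrm{H}}\|_1\leq 2$ suffices. In either case the required trace-norm estimate holds; dividing by $N$ and absorbing the constant into $N^{-1/9+\sigma}$ completes the proof of \eqref{eq:90a}.
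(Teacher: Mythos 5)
Your Berezin-type inequality for translating trace-norm control into $L^1$ control of Husimi transforms is exactly what the paper uses (cf.\ \eqref{eq:17}), and your Weyl-inequality derivation of \eqref{eq:90} is correct and in fact slightly more economical than the paper's. The paper does not invoke Weyl directly: instead it proves $\|P\gamma_N-P^{\mathrm{H}}\gamma^{\mathrm{H}}\|_{\mathrm{op}}\lesssim N^{5/6}(1+\delta)^{1/4}$ by first splitting $\gamma^{\mathrm{H}}=P^{\mathrm{H}}\gamma^{\mathrm{H}}+Q^{\mathrm{H}}\gamma^{\mathrm{H}}$ and using $\|Q^{\mathrm{H}}\gamma^{\mathrm{H}}\|\lesssim N^{1/3}$ (from the spectral gap Lemma~\ref{lem:spactralgapSCMF}), then controlling $\|Q\gamma_N\|$ via the min-max principle ($\|Q\gamma_N\|\leq\|Q^{\mathrm{H}}\gamma_N Q^{\mathrm{H}}\|$, Eq.\ \eqref{eq:21}), and finally observing that $P\gamma_N-P^{\mathrm{H}}\gamma^{\mathrm{H}}$ has rank at most two so its trace norm is comparable to its operator norm. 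For \eqref{eq:90a} the paper then just writes $Q\gamma_N-Q^{\mathrm{H}}\gamma^{\mathrm{H}}=(\gamma_N-\gamma^{\mathrm{H}})-(P\gamma_N-P^{\mathrm{H}}\gamma^{\mathrm{H}})$ and is done, without ever touching $P-P^{\mathrm{H}}$.

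Your treatment of \eqref{eq:90a} via Davis--Kahan on $P-P^{\mathrm{H}}$ contains a genuine error in the key inequality. You state that Davis--Kahan yields $\|P-P^{\mathrm{H}}\|_{\mathrm{op}}\lesssim \|\gamma_N-\gamma^{\mathrm{H}}\|_1/\lambda_1(\gamma^{\mathrm{H}})$; the correct Davis--Kahan bound has the gap $\lambda_1(\gamma^{\mathrm{H}})-\lambda_2(\gamma^{\mathrm{H}})$ in the denominator, not $\lambda_1(\gamma^{\mathrm{H}})$. The two coincide only when $\lambda_2(\gamma^{\mathrm{H}})\lesssim (1-c)\lambda_1(\gamma^{\mathrm{H}})$ for some $c>0$, which your first case (gap exceeds $2\|\gamma_N-\gamma^{\mathrm{H}}\|_{\mathrm{op}}$) does not guarantee: one can have $\lambda_1\approx\lambda_2$ and a small gap that is nonetheless larger than a very small perturbation. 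With the correct denominator, the desired bound $\lambda_1(\gamma^{\mathrm{H}})\|P-P^{\mathrm{H}}\|_1\lesssim N^{5/6}(1+\delta)^{1/4}$ does not follow from your case~1 assumption alone; you additionally need to use $\lambda_2(\gamma^{\mathrm{H}})\lesssim N^{1/3}$ in case~1 too, writing $\lambda_1/\text{gap}=1+\lambda_2/\text{gap}$ and noting that $\lambda_2\,\|\gamma_N-\gamma^{\mathrm{H}}\|_{\mathrm{op}}/\text{gap}<\lambda_2/2\lesssim N^{1/3}$ by the case~1 hypothesis. With this additional step the argument closes, but as written it does not. Finally, note that your appeal to "the eigenvalue spacing of the Hartree operator being $\omega\hbar$" is not just a consequence of the Euler--Lagrange equation; it is precisely the content of Lemma~\ref{lem:spactralgapSCMF}, which requires assumption \eqref{eq:condintpot}. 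Under the hypotheses of the Corollary this is available, but it should be cited explicitly since it is a nontrivial ingredient.
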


In particular, for $\delta = o(N^{2/3})$, the condensate fraction of an approximate minimizer equals the semiclassical value $g^{\mathrm{sc}}$ to leading order. This applies in particular to the actual Gibbs state, for which $\delta \lesssim N^{1/3}$ according to Theorem~\ref{thm:freeenergyscaling2}. We note that the condition $\delta = o(N^{2/3})$ is sharp even for an ideal Bose gas, since the energy gap above the ground state of $h$ in \eqref{eq:harmonicoscillator} equals $\hbar\omega = \omega N^{-1/3}$, and hence moving a fraction of the condensed particles to the first excited state leads to energy increase of the order  $\omega N^{2/3}$.  

Our final statement concerns the BEC  transition temperature in the semiclassical free energy functional and hence, because of Corollary~\ref{cor:weakBEC}~and~\ref{cor:strongBEC}, also the one for the full quantum mechanical model. We show that for weak coupling  there is a unique critical temperature which is strictly lower than the critical temperature for the ideal Bose gas.

\begin{proposition}
	\label{prop:crittempscmf}
	Let Assumption~\ref{as:regularitypotential} hold and assume that the interaction potential is given by $\lambda v(x)$ with $0 < \lambda \leq 1$. Denote by $g^{\mathrm{sc}}$ the condensate fraction of the unique minimizer of the semiclassical free energy functional in \eqref{eq:semic-lassicalfunctional3i} and let $\mu^{\mathrm{sc}}$ be the chemical potential in $\gamma^{\mathrm{sc}}$ in \eqref{eq:semiclassicalfunctional5i}. For small enough $\lambda$,  the following holds:
	\begin{enumerate}[label={\alph*)}]
\item There exists an inverse critical temperature $\beta_{\mathrm{c}}$ such that $g^{\mathrm{sc}} > 0$ and $\mu^{\mathrm{sc}} = 0$ for $\beta > \beta_{\mathrm{c}}$, and  $g^{\mathrm{sc}} = 0$ as well as $\mu^{\mathrm{sc}} < 0$  for $\beta < \beta_{\mathrm{c}}$. At $\beta = \beta_{\mathrm{c}}$ we   have $g^{\mathrm{sc}} = 0 = \mu^{\mathrm{sc}}$.  
\item Let 
\begin{equation}
	\Xi = \frac{\beta_0}{24 \pi^3}  \int_{\mathbb{R}^6} \gamma_{0}^2(p,x) \exp\left( \beta_0 \left( p^2 + \frac{\omega^2 x^2}{4} \right) \right) \left( v\ast \varrho_{0}(0) - v \ast \varrho_{0}(x) \right) \de(p,x)  > 0,
	\label{eq:semiclassicalfunctional29}
	\end{equation}
	where $\gamma_0$ denotes the minimizer of \eqref{eq:semic-lassicalfunctional3i} for $v\equiv 0$ at the ideal gas critical inverse temperature $\beta_0$ in \eqref{eq:idealgascriticaltemp2}, with $\rho_0$ its density. The
 inverse critical temperature satisfies 
	\begin{equation}
	{\beta}_{\mathrm{c}}(\lambda) = \beta_0\left( 1 + \lambda \Xi + O\left(\lambda^2\right)\right)   \label{eq:crittempshift}
	\end{equation}
	as $\lambda\to 0$. 
	\end{enumerate}
\end{proposition}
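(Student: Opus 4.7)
The plan is to reduce the phase-transition analysis to a self-consistent fixed-point equation for the critical density, apply the implicit function theorem, and then Taylor-expand to extract the leading correction. The critical regime is characterized by simultaneous saturation $\mu^{\mathrm{sc}} = \lambda\,v\ast\rho^{\mathrm{sc}}(0)$ and vanishing condensate $g^{\mathrm{sc}} = 0$, which via \eqref{eq:semiclassicalfunctional5i} and \eqref{eq:normalizationcondition} translates into the fixed-point equation
\begin{equation}
\rho(x) = \frac{1}{(2\pi)^3}\int_{\mathbb{R}^3}\frac{dp}{\exp\bigl(\beta(p^2 + \omega^2 x^2/4 + \lambda(v\ast\rho(x) - v\ast\rho(0)))\bigr)-1}
\label{eq:critFP}
\end{equation}
subject to $\int_{\mathbb{R}^3} \rho(x)\,dx = 1$. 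For $\lambda$ small and $\beta$ in a neighborhood of $\beta_0$, I would prove that the right-hand side of \eqref{eq:critFP} defines a contraction on a ball around $\rho_0$ in a weighted $L^1 \cap L^\infty$ space adapted to the harmonic trap; Assumption~\ref{as:regularitypotential}, in particular the Hessian bound \eqref{eq:condintpot}, ensures that $\omega^2 x^2/4 + \lambda\,v\ast\rho(x)$ retains a unique non-degenerate minimum at the origin and that the integrand in \eqref{eq:critFP} has at most an integrable $1/(p^2 + \omega^2 x^2/4)$ singularity there. This yields a unique smooth solution $\rho = \rho_{\beta,\lambda}$.

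Let $N(\beta,\lambda) = \int_{\mathbb{R}^3} \rho_{\beta,\lambda}(x)\,dx$. Then $N(\beta,0) = \zeta(3)/(\beta\omega)^3$ with $\partial_\beta N(\beta_0,0) = -3/\beta_0 \neq 0$, so the implicit function theorem produces a unique $C^1$ curve $\beta_{\mathrm{c}}(\lambda)$ with $N(\beta_{\mathrm{c}}(\lambda),\lambda) = 1$. The phase structure claimed in part~(a) then follows from the monotonicity of $\gamma^{\mathrm{sc}}$ in $\mu$ visible in \eqref{eq:semiclassicalfunctional5i}: for $\beta < \beta_{\mathrm{c}}$ one finds a unique $\mu^{\mathrm{sc}} < \lambda\,v\ast\rho^{\mathrm{sc}}(0)$ accommodating the full mass with $g^{\mathrm{sc}} = 0$, while for $\beta > \beta_{\mathrm{c}}$ the saturated chemical potential yields a thermal mass strictly below $1$, forcing $g^{\mathrm{sc}} > 0$ through \eqref{eq:normalizationcondition}.

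For part~(b), differentiating $N(\beta_{\mathrm{c}}(\lambda),\lambda) = 1$ at $\lambda = 0$ gives $\beta_{\mathrm{c}}'(0) = -\partial_\lambda N(\beta_0,0)/\partial_\beta N(\beta_0,0)$. Differentiating \eqref{eq:critFP} in $\lambda$ at $\lambda = 0$, the implicit variation of $\rho$ inside $v\ast\rho$ enters multiplied by $\lambda$ and drops out at leading order, leaving
\begin{equation}
\partial_\lambda \rho(x)\big|_{\lambda = 0} = -\beta_0 \int_{\mathbb{R}^3}\frac{dp}{(2\pi)^3}\,\gamma_0^2(p,x)\,e^{\beta_0(p^2 + \omega^2 x^2/4)}\,\bigl(v\ast\rho_0(x) - v\ast\rho_0(0)\bigr).
\end{equation}
Integrating over $x$ and using $3(2\pi)^3 = 24\pi^3$ together with $\partial_\beta N(\beta_0,0) = -3/\beta_0$ yields $\beta_{\mathrm{c}}'(0) = \beta_0\,\Xi$ and thus $\beta_{\mathrm{c}}(\lambda) = \beta_0(1 + \lambda\,\Xi + O(\lambda^2))$. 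Positivity $\Xi > 0$ holds because $\rho_0$ is radial and strictly decreasing, so $v\ast\rho_0$ attains its strict maximum at the origin (by a rearrangement argument, or alternatively a Fourier argument using $\hat v \ge 0$).

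The main obstacle is the contraction/uniqueness step underlying \eqref{eq:critFP}: setting up a function space in which the integral operator is a uniform contraction for $(\beta,\lambda)$ near $(\beta_0,0)$, with quantitative control of the integrable singularity at $(p,x)=0$ that emerges precisely when $\mu$ saturates at criticality, and of the dependence on $\rho$ through $v\ast\rho$. Once this technical step and the resulting smoothness of $N(\beta,\lambda)$ are in place, the implicit function theorem and the Taylor calculation above deliver both parts of the proposition.
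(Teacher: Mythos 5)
Your proposal follows the same core strategy as the paper: reduce to the self-consistent fixed-point equation at criticality (your \eqref{eq:critFP} is precisely the paper's Eq.~(3.12)), establish a contraction for small $\lambda$, and extract the leading shift by a Taylor expansion; the positivity of $\Xi$ also goes through a Fourier argument. The implementations differ in two instructive ways. First, the paper avoids the implicit function theorem altogether: its map $T:X\to X$ chooses $\beta=\beta(\rho)$ inside the map so that the normalization $\int T\rho=1$ holds automatically, and the contraction directly produces the unique pair $(\beta_{\mathrm{c}},\rho^{\mathrm{sc}})$ — whereas your route keeps $\beta$ as an external parameter and then solves $N(\beta,\lambda)=1$ by IFT, which additionally requires you to establish $C^1$ (and, see below, effectively $C^2$) regularity of the fixed point in $(\beta,\lambda)$. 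Second, for part~(b) the paper does not differentiate a smooth branch; it does a second-order Taylor expansion of the normalization condition around $(\beta_0,\rho_0)$ and controls the remainder via the uniform bounds $|\beta_{\mathrm{c}}-\beta_0|\lesssim\lambda$, $\|\rho^{\mathrm{sc}}-\rho_0\|_1\lesssim\lambda$ coming from the contraction. Your IFT computation of $\beta_{\mathrm{c}}'(0)=\beta_0\Xi$ is numerically correct, but differentiability alone gives $\beta_{\mathrm{c}}(\lambda)=\beta_0(1+\lambda\Xi)+o(\lambda)$; to obtain the stated $O(\lambda^2)$ remainder you would need $C^2$ regularity of $N$ (or a direct Taylor-with-remainder estimate as in the paper), which is extra work your sketch does not supply. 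Finally, a small point on positivity of $\Xi$: the rearrangement argument you mention would require $v$ to be radially symmetric decreasing, which Assumption~\ref{as:regularitypotential} does not impose; the Fourier argument is the right one, but it needs both $\hat v\ge 0$ \emph{and} $\hat{\varrho}_0\ge 0$ — the latter has to be verified, which the paper does by an explicit expansion $\gamma_0 = \sum_{\alpha\geq 1}e^{-\alpha\beta_0(p^2+\omega^2x^2/4)}$ exhibiting $\hat\varrho_0$ as a sum of Gaussians.
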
  

In particular, since $\Xi > 0$, the inverse critical temperature increases for small $\lambda$ due to the repulsive interactions. The change $\Xi$ is known as the mean-field shift (see  \cite{PitaevskiiStringari}), and is due to the decrease in particle density at the center of the trap.

\subsubsection*{Remarks}
\begin{enumerate}
	\item The constants in the bounds in Theorem~\ref{thm:freeenergyscaling2}, Theorem~\ref{thm:limitHartreetheory} and Corollary~\ref{cor:strongBEC} are uniform in $\beta \omega \gtrsim 1$, and therefore allow for a zero temperature limit. In particular, \eqref{eq:mainresults21c} shows that the 1-pdm of an approximate ground state of $H_N$ approaches the projection onto the minimizer of the semiclassical Hartree energy functional 
	\begin{equation}
		\mathcal{E}^{\mathrm{H}}(\Phi) = \left\langle \Phi, \left( - \hbar^2 \Delta + \frac{\omega^2 x^2}{4} \right) \Phi \right\rangle + \frac 12 \int_{\mathbb{R}^6} | \Phi(x) |^2 v(x-y) | \Phi(y) |^2 \de(x,y)
		\label{eq:semiclassicalHartree}
	\end{equation}
	among all functions with $\Vert \Phi \Vert = 1$. In particular, there is complete BEC in the ground state. 	Due to the factor $\hbar^2 = N^{-2/3}$ in front of the Laplacian in \eqref{eq:semiclassicalHartree}, 
	the density of the minimizer of $\mathcal{E}^{\mathrm{H}}$ is supported  on the  length scale $\omega^{-1/2} \hbar^{1/2}$ and converges to a delta function as $\hbar \to 0$.
	\item As the other statements, the proof of Corollary~\ref{cor:weakBEC} is based on quantitative estimates. That is, 	we show an explicit rate for the convergence of the terms in \eqref{eq:83} for fixed $R>0$ as $N \to \infty$. The condition that $\beta \omega >0 $ is fixed could be replaced by  the requirement that it approaches a positive limiting value as $N \to \infty$. 
	\item \label{rem:gc} The results in Theorem~\ref{thm:freeenergyscaling2}, Theorem~\ref{thm:limitHartreetheory}, Corollary~\ref{cor:weakBEC} and Corollary~\ref{cor:strongBEC} remain true if the canonical free energy $F^{\mathrm{c}}(\beta,N,\omega)$ is replaced by its grand-canonical analogue (defined in Section~\ref{sec:sq} below), 
	and if  approximate minimizers in $\mathcal{S}_N^{\mathrm{c}}$ are replaced by  approximate minimizer among grand-canonical states. This, in particular, shows that the canonical and the grand-canonical  free energies 	agree within our accuracy, and the same is true for the 1-pdms of approximate minimizers. 
	\item \label{rem:ex} The result for the free energy in Theorem~\ref{thm:freeenergyscaling2} is optimal in the sense that the order of magnitude $\omega N^{1/3}$ of the remainder equals the order of magnitude of the exchange term, which would be present in Hartree-Fock theory but is absent  in Hartree theory. 
	\item For $\beta \omega \sim 1$ all terms in the Gibbs free energy functional contribute to the free energy at the order $\omega N$. As Theorem~\ref{thm:freeenergyscaling2} and Theorem~\ref{thm:limitHartreetheory} show, the leading order behavior of the  free energy can be described by the semiclassical free energy functional. Because the condensate density converges to a delta distribution on the relevant length scale, the leading order contribution of the interaction is felt by the condensate as an effective chemical potential. At the order $\hbar \omega N = \omega N^{2/3}$ we expect a contribution from the semiclassical Hartree energy functional in \eqref{eq:semiclassicalHartree} describing the local energy of the condensate. This contribution is captured by the Hartree free energy but not by the semiclassical free energy. 
	\item \label{rem6} The assumption \eqref{eq:condintpot} on the interaction potential in the above statements is crucial since it guarantees that the condensate is located around the center of the harmonic trap. In the Hartree free energy functional the condensate sees an effective potential from interactions with the thermal cloud. If its curvature exceeds the one of the harmonic trap the total potential for the condensate becomes concave. In this case we expect the condensate to localize in the vicinity of some sphere around the origin. If this is true the semiclassical free energy functional $\mathcal{F}^{\mathrm{sc}}$ \eqref{eq:semic-lassicalfunctional3i} is certainly not the correct effective theory because it assumes that the condensate is sitting in the center of the trap. Hence Theorem~\ref{thm:limitHartreetheory} and the Corollaries cannot be expected to hold in this case. Moreover, in this situation the resulting effective potential in the Hartree operator $h^{\mathrm{H}} = h + N^{-1} v \ast \varrho_{\gamma^{\mathrm{H}}}(x)$ will have a Mexican hat like shape  
	and its low lying excitation spectrum will consist of waves moving in the valley of the Mexican hat. The effective potential lives on the length scale $\omega^{-1/2}$, and hence the spectral gap above the ground state of $h^{\mathrm{H}}$ will be of the order $\omega \hbar^2 = \omega N^{-2/3}$. The first excited state will accordingly have an expected occupation of the order $N^{2/3}$. This should be compared to the case when \eqref{eq:condintpot} holds. Here the spectral gap is of the order $\hbar \omega$ (the effective potential resembles a harmonic oscillator close to the origin) and the first excited state has an expected occupation of the order $N^{1/3}$. The higher occupation number of the first excited state would imply that the exchange term is of the order $\omega N^{2/3}$ instead of $\omega N^{1/3}$, compare with Remark~\ref{rem:ex}. Accordingly, also Theorem~\ref{thm:freeenergyscaling2} would not be true in the form it is stated if \eqref{eq:condintpot} does not hold. 
	\item The statement in Corollary~\ref{cor:strongBEC} remains true if the spectral projections $P$ and $Q$ related to $\gamma_N$ are replaced by $P^{\mathrm{H}}$ and $Q^{\mathrm{H}}$, where $P^{\mathrm{H}}$ denotes the spectral projection onto the subspace related to the largest eigenvalue of $\gamma^{\mathrm{H}}$ and $Q^{\mathrm{H}} = 1 - P^{\mathrm{H}}$. In particular, the condensate wave function of $\gamma_N$ and that of $\gamma^{\mathrm{H}}$ are equal within our accuracy. Using this relation, one can show that the bounds in Corollary~\ref{cor:strongBEC} imply \eqref{eq:83b} with explicit rates if continuously differentiable functions $f : \mathbb{R}^6 \to \mathbb{C}$ with a bounded derivative are considered.
	\item The techniques used to prove Theorem~\ref{thm:freeenergyscaling2}, Theorem~\ref{thm:limitHartreetheory}, Corollary~\ref{cor:weakBEC} and Corollary~\ref{cor:strongBEC} carry over with moderate adjustments to the case of trapping potentials behaving as $|x|^s$ with some $s > 0$ for large $|x|$. The main point is that such potentials still lead to an asymptotic power law behavior of the eigenvalues of the related Schrödinger operator and cause a separation of length scales between the condensate and the thermal cloud. It should be noted that this difference is more pronounced for $s < 2$ and less pronounced if $s > 2$. If hard walls are considered (formally $s = + \infty$) the condensate and the thermal cloud live on the same length scale and the critical temperature of the interacting model equals that of the ideal gas to leading order. 
	\item Theorem~\ref{thm:freeenergyscaling2} shows that the 1-pdm of any approximate minimizer $\Gamma_N$ of the Gibbs free energy functional $\mathcal{F}$ in the sense of \eqref{eq:mainresults21b} is close to the minimizer $\gamma^{\mathrm{H}}$ of the Hartree free energy functional in trace norm. It is an interesting open question to prove a similar result for the reduced $k$-particle density matrix of $\Gamma_N$ when $k \geq 2$.  
	\item We conclude our discussion with a brief comment on the similarities and the differences between our results and the results in \cite{NarnThi1981}. In this reference the authors consider the same set-up as we do here but with the explicit choice $v(x) = 1/|x|$ for the interaction potential (note that it does not satisfy Assumption~\ref{as:regularitypotential}). They are interested in the leading order asymptotics of the grand-canonical pressure and show that it convergences to the maximum of a semiclassical pressure functional. Although they do not provide rates for this convergence, their methods are explicit and rates could be extracted. In a second step they consider the canonical free energy and show that its Legendre transform converges to the same limit as the grand-canonical pressure, which proves the equivalence of ensembles. Statements about the relevant Gibbs state or about approximate maximizers of the quantum mechanical pressure functional are not provided. Their choice of interaction potential causes the condensate in the semiclassical pressure functional to be located on the surface of a ball with strictly positive radius around the origin. This should be compared to remark~6. The relation between their semiclassical pressure functional and our semiclassical free energy functional can be understood as follows: When we reformulate their semiclassical pressure functional in a natural way as a free energy functional, then the resulting functional is related to our semiclassical free energy functional via Lemma~\ref{lem:secondvariationalcharacterization} below. The main difference between the results in \cite{NarnThi1981} and our work concerns the fact that we are interested in characterizing the BEC phase transition and in showing how the critical temperature depends on the interaction potential. To achieve these goals we require much more elaborate techniques than the ones needed to compute the leading order asymptotics of the quantum mechanical free energy or pressure.
\end{enumerate}
\subsection{Accuracy of Hartree theory in the mean-field limit}
\label{sec:MFresults}
The semiclassical mean-field limit is a natural parameter regime for the trapped Bose gas because all  terms of the Gibbs free energy functional, that is, the energy related to $h$, the interaction energy and $1/\beta$ times the entropy, are of the same order in $N$. There is, however, another interesting parameter regime for the trapped Bose gas, whose relevance stems from the fact that as the temperature goes to zero, one recovers the Hartree energy functional (without a semiclassical parameter) in case of a mean-field scaling, and the Gross-Pitaevskii (GP) energy functional if an appropriate dilute limit is considered. Both models have been investigated in detail in the literature, and we refer to the introduction for more details and for references. 
In this subsection, we shall investigate this second parameter regime. The results discussed here are independent from the ones in the previous subsection, but share some similarity in their proofs. For simplicity, we shall use the same notation for the relevant objects as above, even though these are really  different from before. 

The relevant one-particle Hamiltonian describing the above parameter regime is given by \eqref{eq:Hamiltonian1stqu} with the choice $\hbar = 1$, that is,
\begin{equation}
	h = -\Delta + \frac{\omega^2 x^2}{4}
	\label{eq:h2}
\end{equation}
and the $N$-particle Hamiltonian reads
\begin{equation}
	H_N = \sum_{i=1}^N h_i + \sum_{1 \leq i < j \leq N} v_N(x_i-x_j) \,,
	\label{eq:H_N2}
\end{equation}
where now 
\begin{equation}
v_N(x) = N^{-1+3 \kappa} v( N^{\kappa} x) 
\label{eq:Interactionpotential1}
\end{equation}
for some $0 \leq \kappa \leq 1$. For $\kappa = 0$ this is the mean-field Hamiltonian for a trapped Bose gas and with increasing values of $\kappa$ the interaction potential $v_N$ becomes stronger and shorter ranged. For $\kappa = 1$ we obtain the dilute GP scaling. In this work we will be concerned with rather small values of $\kappa$ ($\kappa \leq 1/6$), which is why we refer to the scaling in \eqref{eq:Interactionpotential1} as the MF scaling. As $N \to \infty$ and for $\kappa = 0$, the ground state energy of $H_N$ divided by $N$ converges to the minimum of the Hartree energy functional $\mathcal{E}^{\mathrm{H}}$ in \eqref{eq:semiclassicalHartree} with $\hbar = 1$, and the 1-pdm of any approximate ground state of $H_N$ converges in trace norm to the minimizer of $\mathcal{E}^{\mathrm{H}}$, see e.g. \cite{LewinNamRougerie20141}. In case of $\kappa = 1$ the relevant limiting theory is the GP energy functional and comparable statements than in the MF scaling hold, see e.g. \cite{Themathematicsofthebosegas}.
 
 We shall again consider inverse temperatures of the order of the inverse critical temperature  for BEC in the ideal Bose gas, $\beta_{0,N} = \omega^{-1} (\zeta(3)/ N)^{1/3}$, which now means that $\beta \omega \sim N^{-1/3}$. 
 For $\kappa=1$, this is the regime considered in \cite{me}. Choosing $\kappa$ smaller will allow us to extend these results and compute the free energy, as well as the 1-pdm of (approximate) Gibbs states, with greater accuracy. 
 
 The Gibbs free energy functional of a state $\Gamma \in \mathcal{S}_N^{\mathrm{c}}$ and the canonical free energy for the Hamiltonian $H_N$ in \eqref{eq:H_N2} are denoted by
 \begin{equation}
 	\mathcal{F}(\Gamma) = \tr\left[ H_N \Gamma \right] - \frac{1}{\beta} S(\Gamma) \quad \text{ and } \quad F^{\mathrm{c}}(\beta,N\omega) = \inf_{\Gamma \in \mathcal{S}_N^{\mathrm{c}} } \mathcal{F}(\Gamma),
 	\label{eq:freeenergy2}
 \end{equation}
 respectively. The relevant Hartree free energy functional and the Hartree free energy are now 
 \begin{equation}
 	\mathcal{F}^{\mathrm{H}}(\gamma) = \trs\left[ \left(h + \frac{1}{2} v_N \ast \varrho_{\gamma} \right) \gamma \right] - \frac{1}{\beta} s(\gamma) \quad \text{ and } \quad F^{\mathrm{H}}(\beta,N,\omega) = \inf_{\gamma \in \mathcal{D}_N^{\mathrm{H}}} \mathcal{F}^{\mathrm{H}}(\gamma) 
 	\label{Hartreefreeenergy2}
 \end{equation}
with $h$ defined in \eqref{eq:h2}, $v_N$ in \eqref{eq:Interactionpotential1}, $\mathcal{D}_N^{\mathrm{H}}$ in \eqref{eq:formdomainH} and  the bosonic entropy $s(\gamma)$  in \eqref{eq:bosonicentropy}. The main result in this section is the following Theorem.

\begin{theorem}
	\label{thm:freeenergyscaling1}
	Let $v : \mathbb{R}^3 \to \mathbb{R}_+ \cup \{ 0 \}$ be a function in $L^1(\mathbb{R}^3)$ with $v(-x) = v(x)$, $\hat{v} \in L^1(\mathbb{R}^3)$ and $\hat{v} \geq 0$. Let $v_N$ be given by \eqref{eq:Interactionpotential1} with $0 \leq \kappa \leq 1/6$. In the limit $N \to \infty$ with $\beta \omega \gtrsim N^{-1/3}$ we have
	\begin{align}
	\left| F^{\mathrm{c}}(\beta,N,\omega) - F^{\mathrm{H}}(\beta,N,\omega) \right| \lesssim \omega N^{1/3} \left( N^{\kappa} + \ln N \right).
	\label{eq:mainresults1}
	\end{align}
	Moreover, for any sequence of states $\Gamma_N \in \mathcal{S}^{\mathrm{c}}_N$ with 1-pdm $\gamma_N$ and
	\begin{equation}
	\left| \mathcal{F}(\Gamma_N) - F^{\mathrm{H}}(\beta,N,\omega) \right| \lesssim\omega \delta \label{eq:mainresults1b}
	\end{equation}
	for some $\delta > 0$, we have
	\begin{equation}
	\left\Vert \gamma_{N} - \gamma^{\mathrm{H}} \right\Vert_1 \lesssim N^{5/6} \left( N^{\kappa} + \ln N\right)^{1/4}  + N^{3/4} \delta^{1/4}.
	\label{eq:mainresults1c}
	\end{equation} 
	Here $\gamma^{\mathrm{H}}$ denotes the unique minimizer of $\mathcal{F}^{\mathrm{H}}$ in \eqref{Hartreefreeenergy2}. 
\end{theorem}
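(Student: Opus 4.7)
\emph{Plan of proof.} The strategy is to sandwich $F^{\mathrm{c}}$ between matching upper and lower bounds in terms of $F^{\mathrm{H}}$, and then to use the strict convexity of $\mathcal{F}^{\mathrm{H}}$ at its minimizer to convert energy closeness into 1-pdm closeness. The structure parallels that of Theorem~\ref{thm:freeenergyscaling2}, but the error book-keeping differs because here $\hbar=1$ (no semiclassical small parameter) and $v_N(x) = N^{-1+3\kappa}v(N^\kappa x)$. For the lower bound, two inputs suffice. First, bosonic quasi-free states maximize the von Neumann entropy at fixed 1-pdm, so $S(\Gamma) \leq s(\gamma_\Gamma)$ for every $\Gamma \in \mathcal{S}_N^{\mathrm{c}}$. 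Second, since $\hat v_N \geq 0$, a Cauchy--Schwarz argument in Fourier variables gives
\begin{equation*}
\tr\Bigl[\sum_{1 \leq i<j \leq N} v_N(x_i-x_j)\,\Gamma\Bigr] \geq \frac{1}{2}\int v_N(x-y)\varrho_\Gamma(x)\varrho_\Gamma(y)\de(x,y) - \frac{N}{2}v_N(0).
\end{equation*}
Together these yield $\mathcal{F}(\Gamma) \geq \mathcal{F}^{\mathrm{H}}(\gamma_\Gamma) - N v_N(0)/2$, and hence $F^{\mathrm{c}} \geq F^{\mathrm{H}} - C N^{3\kappa}$. Since $3\kappa \leq 1/3 + \kappa$ when $\kappa \leq 1/6$, the correction $N^{3\kappa}$ is absorbed in the target bound $\omega N^{1/3}(N^\kappa + \ln N)$.

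For the upper bound I construct a trial state on bosonic Fock space adapted to the Hartree minimizer. Writing $\gamma^{\mathrm{H}} = N_0 |\phi_0\rangle\langle\phi_0| + \tilde\gamma^{\mathrm{H}}$ (condensate plus thermal part), the ansatz is
\begin{equation*}
\Gamma_{\mathrm{trial}} = |\sqrt{N_0}\phi_0\rangle\langle\sqrt{N_0}\phi_0| \otimes \Gamma_{\mathrm{th}},
\end{equation*}
a coherent state at amplitude $\sqrt{N_0}\phi_0$ on the condensate mode, tensored with the quasi-free thermal state on the orthogonal complement having 1-pdm $\tilde\gamma^{\mathrm{H}}$. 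The choice of a coherent (rather than Gaussian) state on the condensate is essential: coherent-state Wick contractions factorize, which suppresses the condensate self-exchange that would otherwise contribute an $O(N)$ error; simultaneously the coherent state carries zero entropy, and compared to $s(\gamma^{\mathrm{H}})$ we lose only $\beta^{-1}\ln(1+N_0)$ from the condensate mode, giving the $\omega N^{1/3}\ln N$ term since $\beta\omega \gtrsim N^{-1/3}$. Applying Wick's rule to $\tr[H_N\Gamma_{\mathrm{trial}}]$, the deviation from $\mathcal{F}^{\mathrm{H}}(\gamma^{\mathrm{H}})$ reduces to a condensate--thermal cross exchange $N_0 \int v_N(x-y)\phi_0(x)\bar\phi_0(y)\tilde\gamma^{\mathrm{H}}(x,y)\de(x,y) + \mathrm{c.c.}$ and a thermal--thermal exchange $\tfrac{1}{2}\int v_N(x-y)|\tilde\gamma^{\mathrm{H}}(x,y)|^2\de(x,y)$. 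Both are controlled using the Fourier bound $\int \hat v_N = (2\pi)^3 v_N(0) = O(N^{-1+3\kappa})$ together with Schatten estimates on $\tilde\gamma^{\mathrm{H}}$ inherited from the effective one-body Hartree operator $h + v_N \ast \varrho^{\mathrm{H}}$; the total is of order $N^{3\kappa} \leq N^{1/3+\kappa}$. Since $\Gamma_{\mathrm{trial}}$ is grand-canonical with mean particle number $N$, a projection onto the $N$-particle sector (or the equivalence-of-ensembles argument of Remark~\ref{rem:gc}) produces the canonical upper bound at subleading cost.

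For the 1-pdm estimate, the lower-bound argument applied to $\Gamma_N$ gives $\mathcal{F}^{\mathrm{H}}(\gamma_N) - F^{\mathrm{H}} \lesssim \omega\bigl(\delta + N^{1/3}(N^\kappa + \ln N)\bigr)$. A quantitative convexity estimate of the form $\mathcal{F}^{\mathrm{H}}(\gamma) - F^{\mathrm{H}} \gtrsim \beta^{-1}\mathcal{H}(\gamma,\gamma^{\mathrm{H}})$, with $\mathcal{H}$ a bosonic relative entropy, combined with a Klein-type bound $\mathcal{H}(\gamma,\gamma^{\mathrm{H}}) \gtrsim N^{-1}\|\gamma-\gamma^{\mathrm{H}}\|_2^2$, then controls the Hilbert--Schmidt distance. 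Because the condensate eigenvalue of $\gamma^{\mathrm{H}}$ is of order $N$, the convexity of $-s$ degenerates there as $f''(\lambda) = 1/(\lambda(1+\lambda)) \sim N^{-2}$, so I split $\gamma_N - \gamma^{\mathrm{H}}$ into the four blocks induced by the condensate projector $P^{\mathrm{H}}$ and $Q^{\mathrm{H}} = 1 - P^{\mathrm{H}}$: the $Q^{\mathrm{H}}\cdot Q^{\mathrm{H}}$ and off-diagonal blocks are controlled via the non-degenerate part of $\mathcal{H}$, while the condensate block is handled using the trace constraint $\trs[\gamma_N - \gamma^{\mathrm{H}}] = 0$. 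Converting the resulting Hilbert--Schmidt bounds to trace norm via Cauchy--Schwarz against $\trs[\gamma_N + \gamma^{\mathrm{H}}] = 2N$ produces the characteristic $N^{3/4}\delta^{1/4}$ and $N^{5/6}(N^\kappa+\ln N)^{1/4}$ factors. The principal technical difficulty throughout is the control of the exchange errors in the upper bound: the naive bound $\|v_N\|_\infty \trs[(\gamma^{\mathrm{H}})^2] \sim N^{1+3\kappa}$ is catastrophic, so one must combine the coherent-state trick on the condensate, the Fourier representation with $\hat v_N \geq 0$, and sharp a priori bounds on $\tilde\gamma^{\mathrm{H}}$ to reach the sharp order $\omega N^{1/3}(N^\kappa + \ln N)$.
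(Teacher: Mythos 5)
Your lower bound and the relative-entropy route to the 1-pdm estimate match the paper's argument, and the diagnosis that the exchange terms are the crux of the upper bound is correct. However, there are two genuine gaps in the upper bound.

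First, your trial state is a coherent state on the condensate mode tensored with a quasi-free thermal state, which is a \emph{grand-canonical} state (a state on Fock space with $\tr[\mathcal{N}\Gamma_{\mathrm{trial}}]=N$ but without fixed particle number). This does not directly bound $F^{\mathrm{c}}$: since $\mathcal{S}_N^{\mathrm{c}}\subset\mathcal{S}_N^{\mathrm{gc}}$ one always has $F^{\mathrm{gc}}\leq F^{\mathrm{c}}$, so a grand-canonical upper bound gives no information on the canonical one. The ``projection onto the $N$-particle sector'' is not a harmless fix --- projecting a coherent state destroys its factorization structure and hence the very Wick cancellation you rely on --- and the appeal to Remark 3 is circular, since the equivalence of ensembles stated there is a \emph{consequence} of the theorem, not a tool usable in its proof. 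The paper explicitly flags this point at the end of Section 7.2: the coherent-state construction works for the grand-canonical ensemble but is unavailable in the canonical one. Instead, the paper uses the canonical Hartree Gibbs state $G^{\mathrm{H,c}}$ from \eqref{eq:HartreeGibbsstatec} as the trial state (which lives in $\mathcal{S}_N^{\mathrm{c}}$ by construction), and controls its condensate fluctuations by other means, specifically the correlation inequality of S\"ut\H{o} and Lemma~\ref{lem:freeenergyboundcgc}.

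Second, and more substantively, you invoke ``Schatten estimates on $\tilde\gamma^{\mathrm{H}}$ inherited from the effective one-body Hartree operator'' to bound the exchange terms, but this presupposes a spectral gap $\gtrsim\omega$ for $h+v_N\ast\varrho^{\mathrm{H}}$ above its ground state, which is needed to conclude $\|\tilde\gamma^{\mathrm{H}}\|\lesssim(\beta\omega)^{-1}$. In the semiclassical regime of Theorem~\ref{thm:freeenergyscaling2} this follows from the Hessian condition~\eqref{eq:condintpot} via Brascamp--Lieb (Lemma~\ref{lem:spactralgapSCMF}), but that condition is \emph{not} assumed in Theorem~\ref{thm:freeenergyscaling1}, where $v$ is only $L^1$ with $\hat v\geq 0$ and $\hat v\in L^1$. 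Proving the spectral gap in this regime is nontrivial; the paper devotes the whole of Section~7.1 (Proposition~\ref{prop:spectralgap}, via Lemma~\ref{lem:Hartreeenergy} and Lemmas~\ref{lem:eigenvaluebound}--\ref{lem:aprioridensity}) to it, by first showing the Hartree minimizer is close in trace norm to the Gibbs state of a $T=0$ Hartree operator, then reducing the gap question to that operator via resolvent-difference estimates, and finally settling the $T=0$ case by a continuity-in-$\alpha$ argument for the NLS operator. Without this, the crucial bounds $\|\tilde\gamma^{\mathrm{H}}\|\lesssim(\beta\omega)^{-1}$ and, in the 1-pdm part, $\|Q^{\mathrm{H}}\gamma^{\mathrm{H}}Q^{\mathrm{H}}\|\lesssim(\beta\omega)^{-1}$ are unjustified, and both the exchange-term estimates and the coercivity step would fail.
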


\subsubsection*{Remarks}
\begin{enumerate}
	\item For $\beta \sim \beta_{0,N}$ we have $F^{\mathrm{H}}(\beta,N,\omega) \sim \omega (\beta \omega)^{-4} \sim \omega N^{4/3}$. The accuracy of \eqref{eq:mainresults1} allows us to describe the interaction between all particles in the system. While the interaction among particles inside the condensate is of order $\omega N$, all interactions among particles in the thermal cloud are only of the order $\omega N^{1/2}$. This is because the density of the thermal cloud lives on the length scale $\omega^{-1/2} N^{1/6}$ (while the length scale of the condensate wave function is $\omega^{-1/2}$), compare with the discussion in Section~\ref{sec:model}. For $\kappa = 1/6$ the exchange term and the direct interaction energy in the thermal cloud are of the same order, which is why we choose to state Theorem~\ref{thm:limitHartreetheory} only for the range $0 \leq \kappa \leq 1/6$. Theorem~\ref{thm:limitHartreetheory} should be contrasted with the result in \cite{me} in the GP limit, where the interaction can be seen only in the condensate. 
	\item The result for the free energy in Theorem~\ref{thm:freeenergyscaling1} is optimal in the following sense: (a) The size of the remainder in \eqref{eq:mainresults1} equals for $0 < \kappa \leq 1/6$ the size of the exchange term, which is not included in the Hartree free energy. To avoid this contribution one would need to consider Hartree-Fock theory instead.
	(b) In the Hartree free energy functional the condensate is effectively described by a quasi-free state. Due to the large fluctuations of the number of particles in the condensate (they are of order $N$ for a quasi-free state), the entropy of the condensate is of order $\ln N$. This leads to a contribution to the free energy of the order $\beta^{-1} \ln N$, which is the order of magnitude of the remainder if $\kappa = 0$. To avoid this contribution one needs to describe the condensate by a coherent state, see e.g. \cite{NRS18}.
	\item Similarly to Remark~\ref{rem:gc} in the previous section, the result in Theorem~\ref{thm:freeenergyscaling1} remains true if  $F^{\mathrm{c}}(\beta,N,\omega)$ is replaced by its grand-canonical analogue, and the same holds for  approximate minimizers of the Gibbs free energy functional.  Theorem~\ref{thm:freeenergyscaling1} therefore shows that, within our accuracy, the  free energies of the canonical and of the grand-canonical ensemble are equal, and likewise for  the 1-pdms of approximate Gibbs states.  The entropy related to the fluctuation of the number of particles in the grand-canonical ensemble is of order $\ln N$ and leads to a contribution to the free energy of the order $\beta^{-1} \ln N$. The canonical ensemble does not have such a contribution, and we therefore expect that the free energies in the two ensembles indeed differ by a term of this order.
	\item Remark~8 in the previous section applies similarly to Theorem~\ref{thm:freeenergyscaling1}.
\end{enumerate}

\subsection{Proof strategy and organization of the article}
\label{sec:proofstrategy}
For the convenience of the reader we give here a short summary of the organization of the paper and the proof of the statements in the previous  two subsections. 

In Section~\ref{sec:PropertiesHartreefreeenergyfunctional} we study the Hartree free energy functional $\mathcal{F}^{\mathrm{H}}$ in \eqref{eq:Hartrees1pfreeenergyfkt}. We also 
introduce a canonical version of this functional and bound the difference in the corresponding free energies. 
This bound will later allow us to show that the canonical and the grand-canonical free energies of the full quantum model agree within the desired accuracy. 

The semiclassical free energy functional 
$\mathcal{F}^{\mathrm{sc}}$ in \eqref{eq:semic-lassicalfunctional3i} 
is studied in Section~\ref{sec:semiclassicalfreeenergyfunctional}. We prove the existence of a unique minimizer and derive the corresponding Euler--Lagrange equation. With these preparations at hand we shall give the proof of Proposition~\ref{prop:crittempscmf}.

In Section~\ref{sec:semiclassicalfreeenergyfunctionalb} we study the Hartree free energy functional in the semiclassical MF limit and prove Theorem~\ref{thm:limitHartreetheory}. To relate the Husimi function of the Hartree minimizer to the minimizing phase space density of the semiclassical free energy functional, we prove a lower bound for a phase space version of the bosonic relative entropy to quantify its coercivity. This bound is analogous to the one used for the same problem in the case of density matrices in \cite{me,me2}. 

In Section~\ref{sec:freeenergybounds} we derive upper and lower bounds for the  free energy $F^{\mathrm{c}}$ in \eqref{eq:freeenergycan} showing that it can be approximated with good precision by the Hartree free energy. For the lower bound this can be done with a standard inequality for interaction potentials of positive type, while for the  upper bound we need to estimate the size of the exchange terms. This proves the first statement in Theorem~\ref{thm:freeenergyscaling2}. 

The second statement in Theorem~\ref{thm:freeenergyscaling2} concerning estimates for the 1-pdm of approximate minimizers of the Gibbs free energy functional is proved in Section~\ref{sec:boundson1pdm}. The analysis is based on the free energy bounds in Section~\ref{sec:freeenergybounds} and on an inequality for the bosonic relative entropy  proved in \cite[Lemma~4.1]{me,me2}. The proofs of Corollary~\eqref{cor:weakBEC} and Corollary~\eqref{cor:strongBEC} are a consequence of these bounds and of Theorem~\ref{thm:limitHartreetheory}, and are also given in this section. 

In Section~\ref{sec:lastsection} we shall explain the necessary modifications of the analysis in Sections~\ref{sec:freeenergybounds} and~\ref{sec:boundson1pdm} in order to prove Theorem~\ref{thm:freeenergyscaling1}. The main difference lies in the analysis of the spectral gap of the Hartree operator, which in the case of the semiclassical mean-field limit is guaranteed by the assumption \eqref{eq:condintpot} on the Hessian of the interaction potential, but needs a separate proof here because of the different scaling in the mean-field limit.

\section{The Hartree free energy functional}
\label{sec:PropertiesHartreefreeenergyfunctional}
At several points in the paper it will be convenient to use the second quantized formalism and we start by introducing the relevant notation. Afterwards, we define a canonical version of the Hartree free energy functional, which plays an important role in the proof of an upper bound for the  free energy $F^{\mathrm{c}}$ in Section~\ref{sec:proofupperboundc}. We prove several statements for the two versions of the Hartree free energy functional that are used during the proof of the main results,  e.g. the existence of a unique minimizer. In the last part of this section we bound the difference of the canonical and the grand-canonical Hartree free energies. This bound will allow us to show in Section~\ref{sec:freeenergybounds} that the  canonical and the interacting grand-canonical free energies of the full quantum model agree within the desired accuracy.
\subsection{Second quantized formalism}\label{sec:sq}
Let $\mathscr{F}$ denote the bosonic Fock space over the one-particle Hilbert space $L^2(\mathbb{R}^3)$. The second quantization of a one-particle operator $A$, which is an operator on $\mathscr{F}$, is denoted by $\de \Upsilon(A) = 0 \bigoplus_{M=1}^{\infty} \sum_{i=1}^M A_i$, where $A_i$ is the operator acting as $A$ on the $i$-th particle, and as the identity on the others. Similarly, the second quantized version of the interaction potential is given by
\begin{equation}
\mathcal{V}_N = 0 \oplus 0 \bigoplus_{M=2}^{\infty} \sum_{1 \leq i < j \leq M} v_N(x_i - x_j).
\label{eq:secondquantizedinteraction}
\end{equation}
Here $v_N$ equals either by $N^{-1} v$ in \eqref{eq:Hamiltonian1stqu} or $v_N$ in \eqref{eq:Interactionpotential1}. We emphasize that the interaction potential is given by $v_N$ in all $M$-particle sectors of the Fock space, and not by $v_M$. The second quantized equivalent of the $N$-particle Hamiltonians \eqref{eq:Hamiltonian1stqu} and \eqref{eq:H_N2} acting on the bosonic Fock space thus reads
\begin{equation}
\mathcal{H} = \de \Upsilon(h) + \mathcal{V}_N
\label{eq:Hamiltonian2ndqu}
\end{equation}
with $h$ in \eqref{eq:harmonicoscillator} or in \eqref{eq:h2}. By 
\begin{equation}
\mathcal{S}^{\mathrm{gc}}_N = \left\{ \Gamma \in \mathcal{B}\left( \mathscr{F} \right) \ \Big| \ 0 \leq \Gamma \leq 1, \ \tr [ \Gamma ]  = 1, \tr[ \mathcal{N} \Gamma ] = N, \ \tr\left[ \de \Upsilon(h) \Gamma \right] < + \infty \right\}
\label{eq:grandcanonicalstates}
\end{equation}
we denote the set of bosonic states on $\mathscr{F}$ with an expected number of $N$ particles. Here $\mathcal{N} = \bigoplus_{M=0}^{\infty} M$ denotes the particle number operator. By the definitions of $\mathcal{S}_N^{\mathrm{c}}$ and $\mathcal{S}^{\mathrm{gc}}_N$ in \eqref{eq:canonicalstates} and \eqref{eq:grandcanonicalstates}, and the fact that $L^2_{\mathrm{sym}}(\mathbb{R}^{3N} )$ equals the $N$-particle sector of $\mathscr{F}$, we have $\mathcal{S}^{\mathrm{c}}_N \subset \mathcal{S}^{\mathrm{gc}}_N$ for $N \in \mathbb{N}$. For states $\Gamma \in \mathcal{S}^{\mathrm{gc}}_N$ the Gibbs free energy functional is defined by 
\begin{equation}
\mathcal{F}(\Gamma) = \tr\left[ \mathcal{H} \Gamma \right] - \frac{1}{\beta} S(\Gamma),
\end{equation}
where the trace in the first term and the one in the definition of the entropy are now over $\mathscr{F}$. The grand-canonical free energy is given by
\begin{equation}
F^{\mathrm{gc}}(\beta,N,\omega) = \inf_{\Gamma \in \mathcal{S}^{\mathrm{gc}}_N} \mathcal{F}(\Gamma) = -\frac{1}{\beta} \ln \left( \tr \exp\left( - \beta \left( \mathcal{H} - \mu \mathcal{N} \right) \right) \right) + \mu N,
\label{eq:freeenergygrancan}
\end{equation}
where the chemical potential $\mu$ is chosen such that the grand-canonical Gibbs state 
\begin{equation}
G_N^{\mathrm{gc}} = \frac{e^{-\beta \left( \mathcal{H} - \mu \mathcal{N} \right)}}{\tr \left[ e^{-\beta \left( \mathcal{H} - \mu \mathcal{N} \right)} \right]}
\label{eq:Gibbsstategrancan}
\end{equation}
has an expected number of $N$ particles. It is the unique minimizer of $\mathcal{F}$ when the minimization is performed over states in $\mathcal{S}^{\mathrm{gc}}_N$. 

\subsection{A canonical version of the Hartree free energy functional}
\label{sec:Hartreefreeenergyfunctionals}
For a state $\Gamma \in \mathcal{S}^{\mathrm{c}}_N$ we define the canonical Hartree free energy functional by
\begin{equation}
\mathcal{F}^{\mathrm{H},\mathrm{c}}(\Gamma) = \tr \left[ \de \Upsilon \left( h + \frac{1}{2} v_N \ast \varrho_{\Gamma} \right) \Gamma \right] - \frac{1}{\beta} S(\Gamma).
\label{eq:Hartreefunctional}
\end{equation}
Here $h$ and $v_N$ are either given by \eqref{eq:harmonicoscillator} and $N^{-1} v$ or by \eqref{eq:h2} and \eqref{eq:Interactionpotential1}. The corresponding canonical Hartree free energy 
is given by
\begin{equation}
F^{\mathrm{H},\mathrm{c}}(\beta,N,\omega) = \inf_{\Gamma \in \mathcal{S}^{\mathrm{c}}_N} \mathcal{F}^{\mathrm{H},\mathrm{c}}(\Gamma). 
\label{eq:Hartreeenergyc}
\end{equation}
In the next subsection we will see that our assumptions on $v$ imply the existence of a unique minimizer for the above minimization problem. The minimizer satisfies the Euler--Lagrange equation
\begin{equation}
G^{\mathrm{H},\mathrm{c}} = \frac{e^{-\beta \sum_{i=1}^N \left( h_i + v_N \ast \varrho_{G^{\mathrm{H},\mathrm{c}} }(x_i) \right) }}{\tr \left[ e^{-\beta \sum_{i=1}^N \left( h_i + v_N \ast \varrho_{G^{\mathrm{H},\mathrm{c}} }(x_i) \right) } \right]},
\label{eq:HartreeGibbsstatec}
\end{equation}
which is a self-consistent equation for the state $G^{\mathrm{H},\mathrm{c}}$ since its density $\varrho_{G^{\mathrm{H},\mathrm{c}} }(x)$ appears on the right-hand side. The canonical Hartree free energy can be written in terms of the unique minimizer $G^{\mathrm{H},\mathrm{c}}$ in  \eqref{eq:HartreeGibbsstatec} as
\begin{align}
F^{\mathrm{H},\mathrm{c}}(\beta,N,\omega) = - \frac{1}{\beta} \ln\left( \tr \exp\left( -\beta \sum_{i=1}^N \left( h_i + v_N \ast \varrho_{G^{\mathrm{H},\mathrm{c}}}(x_i) \right) \right) \right) - D_N\left(\varrho_{G^{\mathrm{H},\mathrm{c}}} ,\varrho_{G^{\mathrm{H},\mathrm{c}}} \right).  \label{eq:Hartreeenergyc2}
\end{align}
Here and in the following we denote
\begin{equation}
D_N(f,g) = \frac{1}{2} \int_{\mathbb{R}^6} v_N(x-y) f(x) g(y) \de (x,y),
\label{eq:abbreviationinteraction}
\end{equation}
and by a slight abuse of notation we use the symbol $D_N(\mu,\nu)$ also for the natural extension of the above definition to finite Borel measures $\mu$ and $\nu$ on $\mathbb{R}^3$.  In case of $N=1$ we simply write $D$ instead of $D_1$. 
The 1-pdm of $G^{\mathrm{H},\mathrm{c}}$ will be denoted by $\gamma^{\mathrm{H},\mathrm{c}}$.

Finally let us mention that minimization of $\mathcal{F}^{\mathrm{H},\mathrm{c}}$ over states in $\mathcal{S}_N^{\mathrm{gc}}$ in \eqref{eq:grandcanonicalstates} yields the Hartree free energy $F^{\mathrm{H}}(\beta,N,\omega)$ in \eqref{eq1pHartreefreeenergy}. To see this, we note that $S(\Gamma) \leq s(\gamma_{\Gamma})$ holds for any state $\Gamma \in \mathcal{S}_N^{\mathrm{gc}}$, see \cite[Chapter~2.5.14.5]{Thirring_4}, with 
 the von Neumann entropy $S$ in \eqref{eq:gibbsfreeenergyfunctional} (the trace is taken over  Fock space) and the bosonic entropy $s$ in \eqref{eq:bosonicentropy}. This proves 
\begin{equation}
	\inf_{\Gamma \in \mathcal{S}_N^{\mathrm{gc}}} \mathcal{F}^{\mathrm{H},\mathrm{c}}(\Gamma) \geq F^{\mathrm{H}}(\beta,N,\omega).
	\label{eq:154}
\end{equation}
The reverse inequality follows if we use the unique quasi-free state with 1-pdm  $\gamma^{\mathrm{H}}$ as a trial state. This state is also the unique minimizer of $\mathcal{F}^{\mathrm{H},\mathrm{c}}(\Gamma)$ when minimized over the set $\mathcal{S}_N^{\mathrm{gc}}$. 

\subsection{Existence of a unique minimizer and Euler--Lagrange equation}
\label{sec:Hartreeexmin}
The following three statements concern the existence of a unique minimizer of the Hartree free energy functional in the canonical and in the grand-canonical setting, and the justification of the corresponding Euler--Lagrange equations. 
Since $N$ is a fixed parameter here, the precise form of $v_N$ is not important. We shall only need that $v_N, \hat{v}_N \in L^1(\mathbb{R}^3)$ and that $v_N, \hat{v}_N \geq 0$ holds, which is guaranteed by our assumptions. 
 The first statement concerns the Hartree free energy functional $\mathcal{F}^{\mathrm{H}}$.
\begin{lemma}
	\label{lem:minimizersHartree}
	The Hartree free energy functional $\mathcal{F}^{\mathrm{H}}$ in \eqref{eq:Hartrees1pfreeenergyfkt} admits a unique minimizer in the set $\mathcal{D}^{\mathrm{H}}_N$ defined in \eqref{eq:formdomainH} and the minimizer solves the Euler--Lagrange equation in \eqref{eq:semiclassicalfunctional5i}. The Hartree free energy can be expressed in terms of the minimizer $\gamma^{\mathrm{H}}$ as
	\begin{equation}
		F^{\mathrm{H}}(\beta,N,\omega) = \frac{1}{\beta} \trs\left[ \ln\left( 1 - e^{-\beta \left( h + v_N \ast \varrho^{\mathrm{H}} - \mu \right)} \right) \right] + \mu N - D_N\left( \varrho^{\mathrm{H}},\varrho^{\mathrm{H}} \right),
		\label{eq:32}
	\end{equation}
	where $\varrho^{\mathrm{H}}(x) = \gamma^{\mathrm{H}}(x,x)$ and with $D_N$ defined in \eqref{eq:abbreviationinteraction}.
\end{lemma}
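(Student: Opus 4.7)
The plan is to prove the four claims -- existence of a minimizer, uniqueness, the Euler--Lagrange equation, and the closed-form expression \eqref{eq:32} -- via the standard direct method of the calculus of variations, adapted to the bosonic setting.

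For existence, since $v \geq 0$ the functional $\mathcal{F}^{\mathrm{H}}$ is bounded from below on $\mathcal{D}_N^{\mathrm{H}}$ by the ideal Bose gas free energy, which is finite. Combined with the sub-linear dependence of the maximum entropy at fixed trace on the kinetic energy, this yields uniform bounds on $\trs[h\gamma_n]$ and $s(\gamma_n)$ along any minimizing sequence. Since $h$ has compact resolvent on $L^2(\mathbb{R}^3)$, the set $\{\gamma \geq 0 : \trs[h\gamma] \leq C,\ \trs\gamma = N\}$ is precompact in trace norm, so $\gamma_n \to \gamma^{\mathrm{H}}$ along a subsequence with $\trs\gamma^{\mathrm{H}} = N$. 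The kinetic term is weakly lower semicontinuous by Fatou, the interaction is continuous in trace norm because $v_N$ is bounded and $\gamma \mapsto \varrho_\gamma$ is $\|\cdot\|_1 \to L^1$-continuous, and the bosonic entropy is continuous along such sequences, producing the minimizer $\gamma^{\mathrm{H}}$.

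Uniqueness follows from strict convexity of $\mathcal{F}^{\mathrm{H}}$: the kinetic term is linear in $\gamma$; the interaction is convex in $\gamma$ because $\hat{v} \geq 0$ makes $(\varrho_1, \varrho_2) \mapsto D_N(\varrho_1, \varrho_2)$ a positive semi-definite bilinear form; and $-s(\gamma) = \trs f(\gamma)$ is strictly convex on positive operators by Klein's inequality, since $f(x) = x\ln x - (1+x)\ln(1+x)$ is strictly convex on $(0,\infty)$. The Euler--Lagrange equation is then obtained by taking the Gateaux derivative at $\gamma^{\mathrm{H}}$ against admissible self-adjoint perturbations, with a Lagrange multiplier $\mu$ for the constraint $\trs\gamma = N$. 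Using $f'(x) = \ln(x/(1+x))$, the stationarity condition reads $h + v_N \ast \varrho^{\mathrm{H}} - \mu = -\beta^{-1} f'(\gamma^{\mathrm{H}})$, which rearranges to \eqref{eq:ELHartree1part}. Since $v_N \ast \varrho^{\mathrm{H}} \in L^\infty$, the effective Hartree operator $h_{\mathrm{eff}} = h + v_N \ast \varrho^{\mathrm{H}}$ has compact resolvent and $\mu$ is uniquely determined by the normalization $\trs\gamma^{\mathrm{H}} = N$; it must lie strictly below the ground state energy of $h_{\mathrm{eff}}$ for the trace to be finite.

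Finally, to obtain formula \eqref{eq:32}, I would substitute \eqref{eq:ELHartree1part} back into $\mathcal{F}^{\mathrm{H}}(\gamma^{\mathrm{H}})$. The scalar identity $\beta(x - \mu) g(x) + f(g(x)) = \ln(1 - e^{-\beta(x - \mu)})$, valid for the Bose profile $g(x) = (e^{\beta(x - \mu)} - 1)^{-1}$, lifts under the functional calculus to give $\trs[h_{\mathrm{eff}}\gamma^{\mathrm{H}}] - \beta^{-1} s(\gamma^{\mathrm{H}}) = \mu N + \beta^{-1}\trs[\ln(1 - e^{-\beta(h_{\mathrm{eff}} - \mu)})]$. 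Subtracting $D_N(\varrho^{\mathrm{H}}, \varrho^{\mathrm{H}})$ to correct for the factor $1/2$ in front of the interaction in $\mathcal{F}^{\mathrm{H}}$ then yields \eqref{eq:32}. The main obstacle is the careful justification of these trace-class manipulations near the bottom of the spectrum of $h_{\mathrm{eff}}$: when $\mu$ approaches the ground state energy (which occurs precisely in the BEC regime), the leading eigenvalue of $\gamma^{\mathrm{H}}$ becomes macroscopically large, and one must verify that the displayed identities and the regularity of the Gateaux derivative of the entropy hold in an appropriate limiting sense.
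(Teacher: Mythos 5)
Your proposal is essentially correct and reconstructs the argument rather than citing it, which is what the paper actually does: for the existence, uniqueness, and Euler--Lagrange statements, the paper simply refers to Lemma~3.2 of Lewin--Nam--Rougerie (2018), and treats \eqref{eq:32} as an immediate consequence of \eqref{eq:ELHartree1part}. So you are supplying the content of the cited lemma plus the short algebra for \eqref{eq:32}. Your scalar identity $\beta(x-\mu)g(x)+f(g(x))=\ln(1-e^{-\beta(x-\mu)})$ for the Bose profile is correct and gives \eqref{eq:32} exactly as you describe once the factor-$\tfrac12$ bookkeeping ($\trs[\tfrac12 v_N\ast\varrho^{\mathrm{H}}\,\gamma^{\mathrm{H}}]=D_N(\varrho^{\mathrm{H}},\varrho^{\mathrm{H}})$ versus $\trs[v_N\ast\varrho^{\mathrm{H}}\,\gamma^{\mathrm{H}}]=2D_N(\varrho^{\mathrm{H}},\varrho^{\mathrm{H}})$) is done, which you do.

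One point in your direct-method sketch is glossed over and deserves a flag: the assertion that ``the bosonic entropy is continuous along such sequences''. What you actually need is that $\gamma\mapsto -s(\gamma)=\trs[f(\gamma)]$ is lower semicontinuous along the minimizing sequence, i.e.\ $s$ is \emph{upper} semicontinuous. This is the genuinely delicate step of the direct method in infinite dimensions (the convexity of $f$ gives lower semicontinuity of $-s$ only after controlling the tails, and $f$ is not bounded below near $+\infty$). The clean way to get it — and the route the paper itself takes in its proof of the canonical analog, Lemma~\ref{prop:existenceminimizersHartreefunctional} — is to rewrite $\trs[h\gamma]-\tfrac1\beta s(\gamma) = \tfrac1\beta \mathcal{S}(\gamma,\gamma_0)+F_0(\beta,N,\omega)$ with $\gamma_0$ the ideal-gas $1$-pdm, and then use joint lower semicontinuity of the relative entropy. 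With that substitution your argument is complete; the remaining pieces (strict convexity from $\hat v\geq 0$ plus strict concavity of $s$, the Gateaux-derivative derivation of \eqref{eq:ELHartree1part}, and the finiteness of the trace for $\mu$ strictly below $\inf\operatorname{spec}(h+v_N\ast\varrho^{\mathrm{H}})$ at fixed $N$) are all correct.
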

\begin{proof}
	A proof of the first two statements can be found in \cite[Lemma~3.2]{LewinNamRougerie2018}. Eq.~\eqref{eq:32} is a direct consequence of the definition of $\mathcal{F}^{\mathrm{H}}$ and of the Euler--Lagrange equation in \eqref{eq:semiclassicalfunctional5i}.
\end{proof}
The next two statements concern the canonical version of the Hartree free energy functional. 
\begin{lemma}
	\label{prop:existenceminimizersHartreefunctional}
	The canonical Hartree free energy functional $\mathcal{F}^{\mathrm{H},\mathrm{c}}$ in \eqref{eq:Hartreefunctional} admits a unique minimizer in the set $\mathcal{S}_N^{\mathrm{c}}$ defined in \eqref{eq:canonicalstates}. 
\end{lemma}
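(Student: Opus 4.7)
The plan is to apply the direct method of the calculus of variations together with a strict convexity argument. Three observations form the backbone. First, the interaction term rewrites as
\begin{equation}
\tr\left[\de\Upsilon\left(\tfrac{1}{2} v_N \ast \varrho_\Gamma\right)\Gamma\right] = D_N(\varrho_\Gamma,\varrho_\Gamma) = \frac{1}{2(2\pi)^3}\int_{\mathbb{R}^3} \hat{v}_N(k)\, |\hat{\varrho}_\Gamma(k)|^2 \de k \geq 0,
\end{equation}
which is convex and nonnegative, using $\hat v_N \ge 0$ from Assumption~\ref{as:regularitypotential} and the linearity of $\Gamma\mapsto\varrho_\Gamma$. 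Second, $h\ge 0$. Third, $-S$ is strictly convex in $\Gamma$. The first two give $\mathcal F^{\mathrm{H},\mathrm{c}}(\Gamma) \ge \tr[\de\Upsilon(h)\Gamma] - \beta^{-1} S(\Gamma)$, so $\mathcal F^{\mathrm{H},\mathrm{c}}$ is bounded below by the canonical $N$-particle ideal-gas free energy, which is finite since $h$ has compact resolvent.

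For existence I take a minimizing sequence $(\Gamma_n)\subset\mathcal S_N^{\mathrm c}$. A Gibbs-type entropy bound of the form $\beta^{-1}S(\Gamma)\le \tfrac{1}{2} \tr[\de\Upsilon(h)\Gamma]+C_\beta$, obtained by applying the Klein inequality with reference Hamiltonian $\tfrac12\sum_i h_i$ (whose partition function on $L^2_{\mathrm{sym}}(\mathbb R^{3N})$ is finite), combines with the lower bound above to give a uniform bound $\sup_n \tr[\de\Upsilon(h)\Gamma_n]<\infty$. Since $\sum_{i=1}^N h_i$ has compact resolvent on $L^2_{\mathrm{sym}}(\mathbb{R}^{3N})$, energy-bounded subsets of $\mathcal S_N^{\mathrm c}$ are precompact in trace norm, so after passing to a subsequence, $\Gamma_{n_k}\to\Gamma_\infty$ in trace norm with $\Gamma_\infty\in\mathcal S_N^{\mathrm c}$. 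I would then verify lower semicontinuity along this subsequence: the kinetic term is weakly lower semicontinuous by Fatou applied to the eigenvalue expansion of $\de\Upsilon(h)$, the von Neumann entropy $S$ is upper semicontinuous in trace norm, and trace-norm convergence of $\Gamma_{n_k}$ implies trace-norm convergence of the one-particle reduced density matrices (partial trace is a contraction), hence $L^1$ convergence of $\varrho_{\Gamma_{n_k}}$; together with $v_N\in L^1\cap L^\infty$ this yields continuity of $D_N(\varrho_{\Gamma_{n_k}},\varrho_{\Gamma_{n_k}})$. Thus $\Gamma_\infty$ is a minimizer.

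For uniqueness I appeal to strict convexity: the kinetic term is linear in $\Gamma$, the interaction term is convex by the computation above, and $-S$ is strictly convex on $\mathcal S_N^{\mathrm c}$, so $\mathcal F^{\mathrm{H},\mathrm{c}}$ is strictly convex and its minimizer is unique. The step I expect to require the most care is the coercivity estimate producing the uniform energy bound $\sup_n \tr[\de\Upsilon(h)\Gamma_n]<\infty$ from the \emph{a priori} lower bound on $\mathcal F^{\mathrm{H},\mathrm{c}}$; this is where the confining harmonic trap enters essentially, via the discreteness and growth of the spectrum of $h$ that allows the entropy to be dominated by a fraction of the kinetic energy. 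All other ingredients---lower semicontinuity of each term and convergence of the interaction along trace-norm-compact subsequences---are then standard.
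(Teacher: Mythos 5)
Your overall plan --- direct method plus strict convexity --- is the right shape, and most ingredients are sound: the coercivity bound via relative entropy with respect to the Gibbs state of $\tfrac12\sum_i h_i$ correctly yields the uniform energy bound $\sup_n \tr[\de\Upsilon(h)\Gamma_n]<\infty$; the ensuing trace-norm precompactness from the compact resolvent of $h$ on $L^2_{\mathrm{sym}}(\mathbb{R}^{3N})$ is a legitimate alternative to the paper's detour through the geometric topology of \cite{Lewin2010}; continuity of the interaction term along trace-norm convergent subsequences when $v_N\in L^1\cap L^\infty$ is fine (the paper only needs Fatou and $v_N\ge 0$ for lower semicontinuity); and uniqueness from $\hat v_N\ge 0$ plus strict concavity of $S$ matches the paper.

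The gap is in the lower semicontinuity of $\mathcal{F}^{\mathrm{H},\mathrm{c}}$. You split it into three independent pieces and assert that ``the von Neumann entropy $S$ is upper semicontinuous in trace norm.'' This is \emph{false} in infinite dimensions: the von Neumann entropy is only \emph{lower} semicontinuous in trace norm, so $-\beta^{-1}S$ is upper semicontinuous, the wrong direction for your argument. (Concretely, $\Gamma_n = (1-\epsilon_n)\,|\psi_0\rangle\langle\psi_0| + \epsilon_n\sigma_n$ with $\sigma_n$ maximally mixed on a $d_n$-dimensional orthogonal subspace converges in trace norm to the pure state $|\psi_0\rangle\langle\psi_0|$, yet $S(\Gamma_n)\approx \epsilon_n\ln d_n$ can be pushed to any positive limit.) The paper circumvents this by treating the one-body energy and entropy \emph{jointly}: writing
\[
\tr\!\left[\sum_{i=1}^N h_i\,\Gamma\right] - \frac{1}{\beta}S(\Gamma)
= \frac{1}{\beta}\, S\!\left(\Gamma, G\right) - \frac{1}{\beta}\ln\tr\, e^{-\beta\sum_i h_i},
\quad G=\frac{e^{-\beta\sum_i h_i}}{\tr e^{-\beta\sum_i h_i}},
\]
and invoking joint lower semicontinuity of the relative entropy $S(\Gamma,G)$ in the trace-class topology. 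Your separate-term treatment can only be rescued by a nontrivial continuity bound for $S$ under an energy constraint (exploiting the polynomial eigenvalue counting function of $\sum_i h_i$), which you neither cite nor prove; the relative-entropy grouping avoids this entirely. With that single step replaced, your proof goes through.
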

\begin{proof}
	Since $v_N\geq 0$, $\mathcal{F}^{\mathrm{H},\mathrm{c}}$ is easily seen to be bounded from below. 
	Let $\{ \Gamma_n \}_{n=1}^{\infty}$ be a minimizing sequence for $\mathcal{F}^{\mathrm{H},\mathrm{c}}$ in $\mathcal{S}^{\mathrm{c}}_N$. Since $\Gamma_n$ is a sequence of states there exists a state $\Gamma$ on the truncated Fock space 
	\begin{equation}
		\mathscr{F}^{\leq N} = \mathbb{C} \oplus \bigoplus_{m=1}^{N} \mathcal{H}_m, \quad \text{ where } \quad \mathcal{H}_m = L^2_{\mathrm{sym}}\left( \mathbb{R}^{3m} \right)
	\end{equation}
	such that $\Gamma_n$ converges to $\Gamma$ in the geometric topology,   
	see \cite[Lemmas~3 \& 4]{Lewin2010}. This means that every $k$-particle reduced density matrix (k-pdm) of $\Gamma_n$ with $k \geq 0$ converges in the weak operator topology to the k-pdm of $\Gamma$ \cite[Definition~1]{Lewin2010}. 
	
	Let $\gamma_n$ and $\gamma$ be the 1-pdm of $\Gamma_n$ and $\Gamma$, respectively. As far as density matrices are considered, convergence in the weak operator topology implies convergence in the weak-$*$ topology of the trace class. In combination with the fact that $\trs[ h \gamma_{\Gamma_n}]$ is uniformly bounded in $n$ and that 
	 $h$ has a compact resolvent, this shows $\trs[\gamma_n] \to \trs[\gamma]$. Using \cite[Lemma~4]{Lewin2010}, we conclude that $\Gamma_n \to \Gamma$ in trace-norm and that $\Gamma \in \mathcal{S}^{\mathrm{c}}_N$.
		
	Next we will show that $\mathcal{F}^{\mathrm{H},\mathrm{c}}$ is lower semicontinuous along the sequence $\{ \Gamma_n \}_{n=1}^{\infty}$. We can write
	\begin{equation}
	\tr \left[ \sum_{i=1}^N h_i \Gamma_n \right] - \frac{1}{\beta} S(\Gamma_n) = \frac{1}{\beta} S(\Gamma_n,G) - \frac{1}{\beta} \ln\left( \tr \exp\left( -\beta \sum_{i=1}^N h_i \right) \right),
	\label{eq:proofsHartreefunctional7}
	\end{equation}
	with the relative entropy $S(\Gamma,G) = \tr\left[ \Gamma \left( \ln \Gamma - \ln G \right) \right]$ and the canonical Gibbs state of the ideal gas $G=\exp(-\beta( \sum_{i=1}^N h_i ))/\tr[\exp(-\beta( \sum_{i=1}^N h_i ))]$. The relative entropy is known to be lower semicontinuous w.r.t. the trace-class topology, see e.g. \cite[2.2.20]{Thirring_4} or \cite[Corollary~5.12]{OhyaPetz1993}, and hence the same is true for the left-hand side of \eqref{eq:proofsHartreefunctional7}. It remains to consider the continuity properties of the  interaction term $\trs[v_N \ast \varrho_{\Gamma_n} \gamma_n] = 2 D_N(\varrho_{\Gamma_n},\varrho_{\Gamma_n})$. Using that $\varrho_{\Gamma_n} \to \varrho_{\Gamma}$ in $L^{1}(\mathbb{R}^3)$, which follows from $\gamma_n \to \gamma$ in trace norm, see e.g. \cite[Lemma~5.1]{GriesHan2012}, and \cite[Theorem~2.7]{LiLo10}, we see that $\varrho_{\Gamma_n} \to \varrho_{\Gamma}$ pointwise almost everywhere, at least for a suitable subsequence. Fatou's Lemma and the assumption $v_N \geq 0$ therefore imply that 
	\begin{equation}
	\liminf_{n \to \infty} \int_{\mathbb{R}^6} v_N(x-y) \varrho_{\Gamma_n}(x) \varrho_{\Gamma_n}(y) \de(x,y) \geq \int_{\mathbb{R}^6} v_N(x-y) \varrho_{\Gamma}(x) \varrho_{\Gamma}(y) \de(x,y)
	\label{eq:proofsHartreefunctional8}
	\end{equation}
	along this subsequence. 
	In combination, these considerations  show that $\liminf_{n \to \infty} \mathcal{F}^{\mathrm{H},\mathrm{c}}(\Gamma_n) \geq \mathcal{F}^{\mathrm{H},\mathrm{c}}(\Gamma)$ and we conclude the existence of a minimizer $\Gamma \in \mathcal{S}^{\mathrm{c}}_N$. 
	
	Uniqueness of minimizers follows from the strict convexity of $\mathcal{F}^{\mathrm{H},\mathrm{c}}$, since $\hat v_N\geq 0$ and the map $\Gamma \mapsto S(\Gamma)$ is strictly concave. 
\end{proof}

The third statement establishes the Euler--Lagrange equation for $G_N^{\mathrm{H},\mathrm{c}}$ as well as the formula in \eqref{eq:Hartreeenergyc2} for the minimal free energy. Its proof follows standard arguments and is left to the reader.
\begin{lemma}
	\label{lem:eulerlagrange1}
	The unique minimizer of the canonical Hartree free energy functional $\mathcal{F}^{\mathrm{H},\mathrm{c}}$ in $\mathcal{S}_N^{\mathrm{c}}$ solves the self-consistent equation \eqref{eq:HartreeGibbsstatec}. Moreover, the minimal free energy is given by \eqref{eq:Hartreeenergyc2}.
\end{lemma}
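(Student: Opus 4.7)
My plan is to derive the self-consistent equation \eqref{eq:HartreeGibbsstatec} by first-order variational analysis of $\mathcal{F}^{\mathrm{H},\mathrm{c}}$ at its unique minimizer $G^{\mathrm{H},\mathrm{c}}$ (whose existence and uniqueness come from Lemma~\ref{prop:existenceminimizersHartreefunctional}), and then obtain the free-energy formula \eqref{eq:Hartreeenergyc2} by direct substitution. Writing $\varrho^{\mathrm{H},\mathrm{c}} = \varrho_{G^{\mathrm{H},\mathrm{c}}}$ and setting $A = \sum_{i=1}^{N}\bigl(h_i + (v_N \ast \varrho^{\mathrm{H},\mathrm{c}})(x_i)\bigr)$, the target is to identify $G^{\mathrm{H},\mathrm{c}} = e^{-\beta A}/\tr e^{-\beta A}$.

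A preliminary step is to show that $G^{\mathrm{H},\mathrm{c}}$ has trivial kernel. If $\psi$ were a unit vector in $\ker G^{\mathrm{H},\mathrm{c}}$, the perturbation $\Gamma_\epsilon = (1-\epsilon)G^{\mathrm{H},\mathrm{c}} + \epsilon |\psi\rangle\langle\psi|$ would satisfy $S(\Gamma_\epsilon) = -\epsilon\ln\epsilon - (1-\epsilon)\ln(1-\epsilon) + (1-\epsilon)S(G^{\mathrm{H},\mathrm{c}})$, whose right-derivative at $\epsilon = 0^+$ is $+\infty$, forcing $\frac{d}{d\epsilon}\mathcal{F}^{\mathrm{H},\mathrm{c}}(\Gamma_\epsilon)|_{\epsilon=0^+} = -\infty$ and contradicting minimality. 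Hence $\ln G^{\mathrm{H},\mathrm{c}}$ is a well-defined self-adjoint operator on $L^2_{\mathrm{sym}}(\mathbb{R}^{3N})$. For arbitrary $\Gamma \in \mathcal{S}_N^{\mathrm{c}}$ I would then differentiate $\mathcal{F}^{\mathrm{H},\mathrm{c}}((1-\epsilon)G^{\mathrm{H},\mathrm{c}} + \epsilon\Gamma)$ term by term at $\epsilon = 0^+$, using bilinearity of $D_N$ together with $\int (v_N \ast \varrho^{\mathrm{H},\mathrm{c}})(x)\varrho_\Gamma(x)\,\mathrm{d}x = \tr[\sum_i (v_N \ast \varrho^{\mathrm{H},\mathrm{c}})(x_i)\Gamma]$ to rewrite the interaction contribution, which yields
\[
0 \;\leq\; \tr\bigl[(A + \beta^{-1}\ln G^{\mathrm{H},\mathrm{c}})(\Gamma - G^{\mathrm{H},\mathrm{c}})\bigr].
\]
Testing against pure states $\Gamma = |\psi\rangle\langle\psi|$ forces $A + \beta^{-1}\ln G^{\mathrm{H},\mathrm{c}} \geq c$ as quadratic forms, where $c := \tr[(A + \beta^{-1}\ln G^{\mathrm{H},\mathrm{c}})G^{\mathrm{H},\mathrm{c}}]$; since $\tr[(A + \beta^{-1}\ln G^{\mathrm{H},\mathrm{c}} - c)G^{\mathrm{H},\mathrm{c}}] = 0$ and $G^{\mathrm{H},\mathrm{c}}$ has full support, this upgrades to the operator identity $A + \beta^{-1}\ln G^{\mathrm{H},\mathrm{c}} = c$, and the trace normalization $\tr G^{\mathrm{H},\mathrm{c}} = 1$ fixes $c = -\beta^{-1}\ln \tr e^{-\beta A}$, producing \eqref{eq:HartreeGibbsstatec}.

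For \eqref{eq:Hartreeenergyc2}, I would substitute the self-consistent form of $G^{\mathrm{H},\mathrm{c}}$ into $\mathcal{F}^{\mathrm{H},\mathrm{c}}(G^{\mathrm{H},\mathrm{c}}) = \tr[\sum_i h_i G^{\mathrm{H},\mathrm{c}}] + D_N(\varrho^{\mathrm{H},\mathrm{c}},\varrho^{\mathrm{H},\mathrm{c}}) - \beta^{-1}S(G^{\mathrm{H},\mathrm{c}})$. Using $\ln G^{\mathrm{H},\mathrm{c}} = -\beta A - \ln \tr e^{-\beta A}$ turns the entropy term into $-\tr[AG^{\mathrm{H},\mathrm{c}}] - \beta^{-1}\ln\tr e^{-\beta A}$, and $\tr[AG^{\mathrm{H},\mathrm{c}}] = \tr[\sum_i h_i G^{\mathrm{H},\mathrm{c}}] + 2D_N(\varrho^{\mathrm{H},\mathrm{c}},\varrho^{\mathrm{H},\mathrm{c}})$ delivers \eqref{eq:Hartreeenergyc2} after a straightforward cancellation. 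The main obstacle is the rigorous interchange of differentiation with the trace for the entropy, because $\ln G^{\mathrm{H},\mathrm{c}}$ may be unbounded and $\Gamma$ need not lie in its form domain; I would handle this either by restricting test states to finite-rank perturbations supported on spectral subspaces of $h + v_N \ast \varrho^{\mathrm{H},\mathrm{c}}$ (where $\ln G^{\mathrm{H},\mathrm{c}}$ is bounded) and passing to the limit, or by replacing the derivative step with Klein's inequality for $f(x) = x\ln x$, which delivers the requisite one-sided bound on $S(\Gamma) - S(G^{\mathrm{H},\mathrm{c}})$ directly and avoids differentiation altogether.
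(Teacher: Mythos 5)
The overall architecture is right and the free-energy formula derivation at the end is correct, but both of your two proposed fixes for the entropy differentiation (which you rightly identify as the main obstacle) are flawed. (i) Klein's inequality for $f(x)=x\ln x$ gives
\[
\mathcal{F}^{\mathrm{H},\mathrm{c}}(\Gamma)\;\geq\;\mathcal{F}^{\mathrm{H},\mathrm{c}}(G^{\mathrm{H},\mathrm{c}})+\tr\bigl[\bigl(A+\beta^{-1}\ln G^{\mathrm{H},\mathrm{c}}\bigr)\bigl(\Gamma-G^{\mathrm{H},\mathrm{c}}\bigr)\bigr],
\]
i.e. a \emph{lower} bound on the free-energy difference in terms of the linearization. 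Combined with minimality, which also gives $\mathcal{F}^{\mathrm{H},\mathrm{c}}(\Gamma)-\mathcal{F}^{\mathrm{H},\mathrm{c}}(G^{\mathrm{H},\mathrm{c}})\geq 0$, this tells you nothing about the sign of the linear term, so you cannot conclude $0\leq\tr[(A+\beta^{-1}\ln G^{\mathrm{H},\mathrm{c}})(\Gamma-G^{\mathrm{H},\mathrm{c}})]$ this way. (ii) Restricting to finite-rank perturbations on spectral subspaces of $h+v_N\ast\varrho^{\mathrm{H},\mathrm{c}}$ ``where $\ln G^{\mathrm{H},\mathrm{c}}$ is bounded'' is circular: that $\ln G^{\mathrm{H},\mathrm{c}}$ is controlled on spectral subspaces of the Hartree operator is precisely the conclusion $G^{\mathrm{H},\mathrm{c}}=e^{-\beta A}/\tr e^{-\beta A}$ you are trying to prove. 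You would have to work with spectral subspaces of $G^{\mathrm{H},\mathrm{c}}$ itself.

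The cleaner and entirely rigorous route avoids $\ln G^{\mathrm{H},\mathrm{c}}$ altogether. Along $\Gamma_\epsilon=(1-\epsilon)G^{\mathrm{H},\mathrm{c}}+\epsilon\Gamma$, use that the one-body part of the energy is linear, the interaction is a nonnegative quadratic form in the density, and $S$ is concave, so that $\mathcal{F}^{\mathrm{H},\mathrm{c}}(\Gamma_\epsilon)$ is bounded above by a polynomial in $\epsilon$ whose value at $\epsilon=0$ is $\mathcal{F}^{\mathrm{H},\mathrm{c}}(G^{\mathrm{H},\mathrm{c}})$ and whose right derivative at $0$ equals
\[
\tr\bigl[A\bigl(\Gamma-G^{\mathrm{H},\mathrm{c}}\bigr)\bigr]-\tfrac{1}{\beta}\bigl(S(\Gamma)-S(G^{\mathrm{H},\mathrm{c}})\bigr),\qquad A=\sum_{i=1}^N\bigl(h_i+(v_N\ast\varrho^{\mathrm{H},\mathrm{c}})(x_i)\bigr).
\]
Minimality forces this quantity to be nonnegative for every $\Gamma\in\mathcal{S}_N^{\mathrm{c}}$ with finite entropy and finite interaction energy, which says exactly that $G^{\mathrm{H},\mathrm{c}}$ minimizes the linearized functional $\Gamma\mapsto\tr[A\Gamma]-\beta^{-1}S(\Gamma)$. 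The Gibbs variational principle (the same relative-entropy argument the authors use in \eqref{eq:proofsHartreefunctional7}) then identifies $G^{\mathrm{H},\mathrm{c}}=e^{-\beta A}/\tr e^{-\beta A}$ directly, without any differentiation of the entropy or discussion of $\ker G^{\mathrm{H},\mathrm{c}}$. Your computation of \eqref{eq:Hartreeenergyc2} by substitution is then correct as written.
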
 

\subsection{A bound on the difference of the canonical and the grand-canonical Hartree free energies}
\label{sec:closenesseffectivefreeenergies}
In this Section we characterize the infimum of $\mathcal{F}^{\mathrm{H}}$ by an alternative variational principle, which we use afterwards to derive a bound for the difference of the canonical and grand-canonical Hartree free energies. As in the previous Section we assume $v_N, \hat{v}_N \in L^1(\mathbb{R}^3)$ and that $v_N, \hat{v}_N \geq 0$.
\begin{lemma}
	\label{lem:secondvariationalcharacterization}
	Define the set 
	\begin{equation}
	\mathcal{D}_N^{\mathrm{gc}} = \left\{ \left(\eta,\mu\right) \in  L^1 \left(\mathbb{R}^3\right) \times  \mathbb{R}  \ \bigg|  \ \trs \left[ \frac{1}{e^{\beta \left( h + v_N \ast \eta - \mu \right)} - 1} \right] = N \right\}. \label{eq:sec4secondlemmav21}
	\end{equation}
	We have
	\begin{equation}
	F^{\mathrm{H}}(\beta,N,\omega) = \sup_{(\eta,\mu) \in \mathcal{D}^{\mathrm{gc}}_N} \left\{ \frac{1}{\beta} \trs\left[ \ln\left( 1 - e^{-\beta \left( h + v_N \ast \eta - \mu \right)} \right) \right] + \mu N - D_N\left(\eta,\eta \right) \right\}. \label{eq:sec4secondlemmav22}
	\end{equation}
\end{lemma}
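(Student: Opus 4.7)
The identity will be proved by establishing the inequality
$$F^{\mathrm{H}}(\beta,N,\omega) \geq \frac{1}{\beta}\trs\bigl[\ln(1 - e^{-\beta(h + v_N \ast \eta - \mu)})\bigr] + \mu N - D_N(\eta,\eta)$$
for every $(\eta,\mu) \in \mathcal{D}_N^{\mathrm{gc}}$, and then noting that equality is attained at $(\eta,\mu) = (\varrho^{\mathrm{H}}, \mu^{\mathrm{H}})$, where $\varrho^{\mathrm{H}}$ and $\mu^{\mathrm{H}}$ are the density and chemical potential of the Hartree minimizer produced by Lemma~\ref{lem:minimizersHartree}. The second fact is immediate from the closed-form expression \eqref{eq:32}, so the only real task is to prove the inequality.

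To this end, I would fix $(\eta,\mu) \in \mathcal{D}_N^{\mathrm{gc}}$ and an arbitrary $\gamma \in \mathcal{D}_N^{\mathrm{H}}$, and use the algebraic rewriting
$$\mathcal{F}^{\mathrm{H}}(\gamma) = \Bigl(\trs\bigl[(h + v_N \ast \eta)\gamma\bigr] - \frac{1}{\beta} s(\gamma)\Bigr) - 2 D_N(\eta,\varrho_\gamma) + D_N(\varrho_\gamma,\varrho_\gamma),$$
obtained by adding and subtracting $\trs[(v_N \ast \eta)\gamma] = 2 D_N(\eta,\varrho_\gamma)$. The bracketed term is the free energy of an effective non-interacting bosonic problem with one-body Hamiltonian $h + v_N \ast \eta$, which by the standard Gibbs variational principle and the normalization $\trs[\gamma] = N$ is bounded from below by
$$\frac{1}{\beta}\trs\bigl[\ln(1 - e^{-\beta(h + v_N \ast \eta - \mu)})\bigr] + \mu N.$$
For the remaining interaction contribution, the positive-type assumption $\hat{v}_N \geq 0$ makes $D_N(\cdot,\cdot)$ a positive semi-definite quadratic form, so completion of squares via $D_N(\varrho_\gamma - \eta, \varrho_\gamma - \eta) \geq 0$ yields $D_N(\varrho_\gamma,\varrho_\gamma) - 2 D_N(\eta,\varrho_\gamma) \geq - D_N(\eta,\eta)$. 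Combining these two estimates and taking the infimum over $\gamma \in \mathcal{D}_N^{\mathrm{H}}$ gives the desired lower bound on $F^{\mathrm{H}}(\beta,N,\omega)$.

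The argument is essentially a two-line computation once the right decomposition is chosen, so no serious obstacle is anticipated. The only conceptual ingredient beyond the one-body Gibbs variational principle is the completion-of-squares step in the nonlocal quadratic form $D_N$, which is exactly what the positive-type hypothesis on $v_N$ is designed to supply, and this is also where the supremum (rather than infimum) character of the dual formulation comes from.
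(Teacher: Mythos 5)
Your proof is correct and follows essentially the same line of reasoning as the paper: the positive-type hypothesis gives the completion-of-squares inequality $2D_N(\eta,\varrho_\gamma) - D_N(\eta,\eta) \leq D_N(\varrho_\gamma,\varrho_\gamma)$, the Gibbs variational principle bounds the effective one-body free energy, and equality is read off at the Hartree minimizer via~\eqref{eq:32}. The only cosmetic difference is that you work directly at the level of 1-pdms $\gamma\in\mathcal{D}_N^{\mathrm{H}}$, whereas the paper routes through the Fock-space states $\Gamma\in\mathcal{S}_N^{\mathrm{gc}}$ and the canonical functional $\mathcal{F}^{\mathrm{H,c}}$ together with the identity~\eqref{eq:154}; your version is marginally more direct but otherwise identical in substance.
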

\begin{proof}
	For $\Gamma \in \mathcal{S}_N^{\mathrm{gc}}$ and $\eta \in L^1(\mathbb{R}^3)$, define the functional 
	\begin{equation}
	G(\Gamma,\eta) = \tr\left[ \de \Upsilon \left(h + v_N \ast \eta \right) \Gamma \right] - \frac{1}{\beta} S(\Gamma) - D_N(\eta,\eta).
	\label{eq:sec4secondlemmav23}
	\end{equation}
	Using $\hat{v} \geq 0$ we check that 
	\begin{equation}
		2 D_N (\eta, \varrho_{\Gamma} ) - D_N( \eta, \eta) \leq D_N( \varrho_{\Gamma}, \varrho_{\Gamma}),
	\end{equation}
	which implies
	\begin{equation}
	\sup_{\sigma \in L^1(\mathbb{R}^3)} G(\Gamma,\sigma) = \mathcal{F}^{\mathrm{H},\mathrm{c}}(\Gamma).
	\label{eq:sec4secondlemmav24}
	\end{equation}
	The left-hand side of \eqref{eq:sec4secondlemmav24} is bounded from below by $G(\Gamma,\eta)$ for any $\eta \in L^1(\mathbb{R}^3)$. We minimize this expression over $\Gamma \in \mathcal{S}_{N}^{\mathrm{gc}}$ and arrive at
	\begin{equation}
	\mathcal{F}^{\mathrm{H},\mathrm{c}}(\Gamma) \geq \frac{1}{\beta} \trs \left[ \ln\left( 1 -  e^{ -\beta \left( h + v_N \ast \eta - \mu \right)  } \right) \right] + \mu N - D_N\left(\eta,\eta\right)
	\label{eq:sec4secondlemmav25}
	\end{equation}
	for any  $\mu \in \mathbb{R}$  such that $h + v_N \ast \eta - \mu$ is strictly positive. When we take the infimum over $\Gamma \in \mathcal{S}_N^{\mathrm{gc}}$ as well the supremum over $(\eta, \mu ) \in \mathcal{D}_N^{\mathrm{gc}}$ on both sides of \eqref{eq:sec4secondlemmav25}, we find
	\begin{equation}
	F^{\mathrm{H}}(\beta,N,\omega) \geq \sup_{(\eta,\mu) \in \mathcal{D}_N^{\mathrm{gc}}} \left\{ \frac{1}{\beta} \trs \left[ \ln\left( 1 -  e^{ -\beta \left( h + v_N \ast \eta - \mu \right)  } \right) \right] + \mu N - D_N\left(\eta,\eta\right) \right\}.
	\label{eq:sec4secondlemmav26}
	\end{equation}
	Here we also used that \eqref{eq:154} holds as an equality. 
	It remains to prove the reverse inequality.
	
	Let $\mu^{\mathrm{H}}$ be the chemical potential for the Hartree minimizer $\gamma^{\mathrm{H}}$ in \eqref{eq:ELHartree1part}, and denote $\varrho^{\mathrm{H}}(x) = \gamma^{\mathrm{H}}(x,x)$. Since $(\varrho^{\mathrm{H}}, \mu^{\mathrm{H}}) \in \mathcal{D}_N^{\mathrm{gc}}$ we have
	\begin{align}
	&\sup_{(\eta,\mu) \in \mathcal{D}_N^{\mathrm{gc}}} \left\{ \frac{1}{\beta} \trs \left[ \ln\left( 1 -  e^{ -\beta \left( h + v_N  \ast \eta - \mu \right)  } \right) \right] + \mu N - D_N\left(\eta,\eta\right) \right\} \label{eq:sec4secondlemmav27} \\
	&\hspace{6cm} \geq \frac{1}{\beta} \trs \left[ \ln\left( 1 -  e^{ -\beta \left( h + v_N \ast \varrho^{\mathrm{H}} - \mu^{\mathrm{H}} \right)  } \right) \right] + \mu^{\mathrm{H}} N - D_N\left( \varrho^{\mathrm{H}} ,\varrho^{\mathrm{H}} \right). \nonumber
	\end{align}
	In combination with Lemma~\ref{lem:minimizersHartree} and \eqref{eq:sec4secondlemmav26}, this proves the claim.
\end{proof}
In the next Lemma we estimate the difference of $F^{\mathrm{H}}(\beta,N,\omega)$ and $F^{\mathrm{H},\mathrm{c}}(\beta,N,\omega)$.
\begin{lemma}
	\label{lem:freeenergyboundcgc}
	We have the bound
	\begin{equation}
		F^{\mathrm{H}}(\beta,N,\omega) \leq F^{\mathrm{H},\mathrm{c}}(\beta,N,\omega) \leq F^{\mathrm{H}}(\beta,N,\omega) + \frac{1}{\beta} \left( 1 + \ln\left( 1 + N \right) \right).
		\label{eq:closenessHFE} 
	\end{equation}
\end{lemma}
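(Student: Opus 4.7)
The lower bound is immediate: since $\mathcal{S}_N^{\mathrm{c}} \subset \mathcal{S}_N^{\mathrm{gc}}$ and $F^{\mathrm{H}}(\beta,N,\omega) = \inf_{\Gamma \in \mathcal{S}_N^{\mathrm{gc}}} \mathcal{F}^{\mathrm{H},\mathrm{c}}(\Gamma)$ by the observation following \eqref{eq:154}, minimising the same functional over a smaller set produces a larger value.

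For the upper bound, the plan is first to establish, by the same strategy as in the proof of Lemma~\ref{lem:secondvariationalcharacterization}, the canonical analogue
\begin{equation}
F^{\mathrm{H},\mathrm{c}}(\beta,N,\omega) \,=\, \sup_{\eta \in L^1(\mathbb{R}^3)} \bigl[\, F^{\mathrm{can}}_N(h + v_N \ast \eta) - D_N(\eta,\eta)\,\bigr],
\end{equation}
where $F^{\mathrm{can}}_N(H) := -\beta^{-1} \ln \tr_{L^2_{\mathrm{sym}}(\mathbb{R}^{3N})}\bigl[e^{-\beta \sum_{i=1}^N H_i}\bigr]$ is the canonical free energy of the ideal Bose gas with one-body Hamiltonian $H$. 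The inequality ``$\ge$'' is the elementary minimax step $\inf_\Gamma \sup_\eta \ge \sup_\eta \inf_\Gamma$ applied to the functional $\tilde G(\Gamma,\eta) := \tr[\de\Upsilon(h+v_N\ast\eta)\Gamma] - \beta^{-1} S(\Gamma) - D_N(\eta,\eta)$, together with Gibbs' variational principle; indeed, the completion-of-the-square identity $\mathcal{F}^{\mathrm{H},\mathrm{c}}(\Gamma) - \tilde G(\Gamma,\eta) = D_N(\varrho_\Gamma - \eta, \varrho_\Gamma - \eta) \ge 0$ (valid since $\hat{v}_N \ge 0$) gives $\mathcal{F}^{\mathrm{H},\mathrm{c}}(\Gamma) = \sup_\eta \tilde G(\Gamma,\eta)$. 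The reverse ``$\le$'' follows by evaluating at $\Gamma^{\star} = G^{\mathrm{H},\mathrm{c}}$ and $\eta^{\star} = \varrho_{G^{\mathrm{H},\mathrm{c}}}$: the Euler--Lagrange equation \eqref{eq:HartreeGibbsstatec} says exactly that $G^{\mathrm{H},\mathrm{c}}$ is the canonical Gibbs minimiser of the \emph{ideal} gas with one-body Hamiltonian $h + v_N \ast \eta^{\star}$, so $\mathcal{F}^{\mathrm{H},\mathrm{c}}(\Gamma^{\star}) = F^{\mathrm{can}}_N(h+v_N\ast\eta^{\star}) - D_N(\eta^{\star},\eta^{\star})$ is already attained in the supremum.

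Combined with the variational formula for $F^{\mathrm{H}}$ in Lemma~\ref{lem:secondvariationalcharacterization}, which features the grand-canonical free energy $F^{\mathrm{gc}}_N(H) := -\beta^{-1}\ln Z^{\mathrm{gc}}(H,\mu^{\star}) + \mu^{\star} N$ with $\mu^{\star}$ fixing the expected particle number to $N$, the upper bound collapses to a pointwise-in-$\eta$ comparison between the canonical and grand-canonical free energies of a single ideal Bose gas. The identity $Z^{\mathrm{gc}}(H,\mu^{\star}) = \sum_{M \ge 0} e^{\beta\mu^{\star}M} Z^{\mathrm{can}}_M(H)$ yields the exact relation $F^{\mathrm{can}}_N(H) - F^{\mathrm{gc}}_N(H) = -\beta^{-1}\ln p_N(H,\mu^{\star})$, where $p_N(H,\mu^{\star})$ is the grand-canonical probability of having exactly $N$ particles. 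Taking the supremum over $\eta$ reduces the claim to the uniform estimate $-\beta^{-1}\ln p_N(H,\mu^{\star}) \le \beta^{-1}(1+\ln(1+N))$ over all admissible one-body Hamiltonians.

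The main obstacle is this lower bound on $p_N$. In the grand-canonical Bose gas with one-body spectrum $\{e_i\}$ at chemical potential $\mu^{\star}$ tuned to $\langle\mathcal N\rangle = N$, the number operator $\mathcal N = \sum_i n_i$ is a sum of independent geometric random variables $n_i$ with means $\alpha_i = (e^{\beta(e_i-\mu^{\star})}-1)^{-1}$ satisfying $\sum_i \alpha_i = N$ and $\mathrm{Var}(\mathcal N) = \sum_i \alpha_i(1+\alpha_i) \le N(1+N)$, the maximum being attained precisely in the fully condensed single-mode case $\alpha_0 = N$. The distribution $\{p_M\}$ is therefore a convolution of log-concave sequences, hence itself log-concave on $\mathbb{Z}_{\ge 0}$ with mean $N$. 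In the extremal single-mode case $p_M = (1+N)^{-1}(N/(1+N))^M$, and a direct computation yields $p_N = (1+N)^{-1}(N/(1+N))^N \ge 1/(e(1+N))$; log-concavity together with the variance bound $\mathrm{Var}(\mathcal N) \le N(1+N)$ shows that this is indeed the worst case, so that $p_N \ge 1/(e(1+N))$ uniformly in $\eta$ and thus $-\beta^{-1}\ln p_N \le \beta^{-1}(1+\ln(1+N))$, completing the upper bound.
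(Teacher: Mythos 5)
Your lower bound coincides with the paper's. For the upper bound, your overall route is the same one the paper takes, but you make explicit a step the paper leaves implicit: you first establish the canonical analogue of Lemma~\ref{lem:secondvariationalcharacterization}, namely $F^{\mathrm{H},\mathrm{c}} = \sup_\eta\bigl[F^{\mathrm{can}}_N(h+v_N\ast\eta) - D_N(\eta,\eta)\bigr]$, while the paper simply evaluates the free energy at $\eta^\star = \varrho_{G^{\mathrm{H},\mathrm{c}}}$ and invokes \eqref{eq:Hartreeenergyc2}. Both ways you arrive at the same reduction: the claim boils down to the comparison $F^{\mathrm{can}}_N(H) \le F^{\mathrm{gc}}_N(H,\mu^\star) + \beta^{-1}(1+\ln(1+N))$ for the ideal Bose gas with one-body Hamiltonian $H$ and tuned chemical potential, which the paper takes as a black box from \cite[Corollary~A.1]{me} but which you attempt to re-derive.

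This is where there is a genuine gap. Your identity $F^{\mathrm{can}}_N(H) - F^{\mathrm{gc}}_N(H,\mu^\star) = -\beta^{-1}\ln p_N$ is correct, your computation of the single-mode value $p_N = (1+N)^{-1}(N/(1+N))^N \ge (e(1+N))^{-1}$ is correct, and the variance bound $\mathrm{Var}(\mathcal N)\le N(1+N)$ (with equality in the single-mode case) is correct. But the concluding assertion --- that log-concavity together with the variance bound ``shows that this is indeed the worst case'' --- is not an argument; it is the statement to be proved. Maximizing variance does not in any obvious way minimize the probability mass at the mean, and standard anti-concentration estimates for log-concave sequences bound $\max_M p_M$ (the value at the mode), not $p_N$: in your own extremal example the mode sits at $M=0$ while the mean is at $N$, so those results do not transfer. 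A correct proof has to exploit the specific structure of the problem (the sequence $q_M = Z^{\mathrm{can}}_M e^{\beta\mu^\star M}$ is log-concave by \cite{Suto}, and one then controls $\sum_M q_M/q_N$ using the monotonicity of the consecutive ratios together with the mean constraint), which is presumably what \cite[Corollary~A.1]{me} does. As written, your proposal replaces the paper's citation with an unproven claim, so it does not constitute a complete proof of the upper bound.
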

\begin{proof}
	The lower bound on $F^{\mathrm{H},\mathrm{c}}(\beta,N,\omega)$ in \eqref{eq:closenessHFE} follows from $\mathcal{S}_N^{\mathrm{c}} \subset \mathcal{S}_N^{\mathrm{gc}}$ and the fact that \eqref{eq:154} holds as an equality. 
From \cite[Corollary~A.1]{me} we know that 
\begin{align}
&-\frac{1}{\beta} \ln \left( \tr \exp\left( -\beta \sum_{i=1}^N  h_i + v_N * \eta (x_i) \right) \right) - D_N(\eta, \eta) \label{eq:mainresults6} \\
&\hspace{3cm} \leq \frac{1}{\beta} \tr \left[ \ln\left( 1 - e^{ - \beta \left(   h +  v_N*\eta - \mu \right) } \right) \right] + \mu N - D_N(\eta, \eta) + \frac{1}{\beta} \left( 1 + \ln\left( 1 + N \right) \right) \nonumber \\
&\hspace{3cm} \leq \sup_{ (\sigma, \widetilde{ \mu}) \in \mathcal{D}_N^{\mathrm{gc}} } \left\{ \frac{1}{\beta} \tr \left[ \ln\left( 1 - e^{ - \beta \left(   h +  v_N * \sigma  - \widetilde{\mu} \right) } \right) \right] + \widetilde{\mu} N - D_N(\sigma, \sigma) \right\} + \frac{1}{\beta} \left( 1 + \ln\left( 1 + N \right) \right) \nonumber 
\end{align}
holds for any $\eta \in L^1(\mathbb{R}^3)$. By choosing $\eta = \varrho_{G^{\mathrm{H},\mathrm{c}}}$ with $G^{\mathrm{H},\mathrm{c}}$ in \eqref{eq:HartreeGibbsstatec} on the left-hand side of \eqref{eq:mainresults6}, 
the desired upper bound follows from Lemma~\ref{lem:secondvariationalcharacterization}.
\end{proof}
\section{The semiclassical free energy functional and its critical temperature}
\label{sec:semiclassicalfreeenergyfunctional}
This third section is devoted to the study of the semiclassical free energy functional $\mathcal{F}^{\mathrm{sc}}$ in \eqref{eq:semic-lassicalfunctional3i}. In the first part we prove the existence of a unique minimizer and establish the corresponding Euler--Lagrange equation. In the second part we use these preparations to prove Proposition~\ref{prop:crittempscmf}. 
\subsection{Properties of the semiclassical free energy functional}
\label{sec:scfefandcriticaltemp}

\begin{lemma}
	\label{lem:semiclassicalfreeenergy}
	Let Assumption~\ref{as:regularitypotential} hold. Then the semiclassical free energy functional $\mathcal{F}^{\mathrm{sc}}$ admits a unique minimizer $(\gamma,g)$ in the set $\mathcal{D}^{\mathrm{sc}}$ in  \eqref{eq:normalizationcondition}. Moreover, the minimizer solves the Euler--Lagrange equation
	\begin{equation}
	\gamma(p,x) = \frac{1}{\exp\left( \beta \left( p^2 + \frac{\omega^2 x^2}{4} + v \ast \varrho(x) - v \ast \varrho(0) - \mu \right) \right)-1}
	\label{eq:semiclassicalfunctional5}
	\end{equation}
	pointwise a.e., where the density $\varrho$ is defined in \eqref{eq:semiclassicalfunctional2i} and the chemical potential satisfies $\mu \leq 0$ as well as  
	$\mu g = 0$. The semiclassical free energy can be written in terms of the minimizer $(\gamma,g)$ as
	\begin{align}
	F^{\mathrm{sc}}(\beta,\omega) =& \frac{1}{\beta} \frac{1}{(2 \pi)^3} \int_{\mathbb{R}^6} \ln \left( 1-\exp\left( - \beta \left( p^2 + \frac{\omega^2 x^2}{4} + v \ast \varrho(x) - v \ast \varrho(0) - \mu \right) \right) \right) \de(p,x) 
	\label{eq:semiclassicalfunctional5b} \\
	& + v \ast \varrho(0) + \mu - D(\varrho,\varrho). \nonumber
	\end{align}
\end{lemma}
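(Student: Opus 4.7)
The plan is to obtain existence by the direct method of the calculus of variations, deduce uniqueness from strict convexity of $\mathcal{F}^{\mathrm{sc}}$, derive the Euler--Lagrange equation via a Lagrange-multiplier calculation that respects the unilateral constraint $g\geq 0$, and finally obtain the free-energy formula \eqref{eq:semiclassicalfunctional5b} by inserting the minimizer back into the functional.

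\textbf{Existence and uniqueness.} I would first note that $\mathcal{F}^{\mathrm{sc}}$ is bounded below on $\mathcal{D}^{\mathrm{sc}}$: the kinetic/trap term and the interaction term are nonnegative (the latter by $\hat v\geq 0$), and the entropy contribution $-\beta^{-1}S^{\mathrm{sc}}(\gamma)$ can be controlled from below in terms of the kinetic energy by a Gibbs-type comparison with the ideal harmonic Bose gas. For a minimizing sequence $(\gamma_n,g_n)$, the uniform bound on $\int(p^2+\omega^2 x^2/4)\gamma_n \de(p,x)$ gives tightness, while the entropy bound combined with the superlinearity of $f$ at infinity yields equi-integrability; hence a subsequence converges weakly in $L^1(\mathbb{R}^6)$ to some $\gamma\geq 0$, and $g_n\to g\in[0,1]$. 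Lower semicontinuity of the kinetic and entropy parts follows from nonnegativity of the weight and convexity of $f$, respectively. For the interaction term I would use the Fourier representation
\[
D(\varrho,\varrho) = \frac{1}{2(2\pi)^3}\int_{\mathbb{R}^3}\hat v(k)\,|\hat\varrho(k)|^2\de k,\qquad \hat\varrho(k) = \widehat{\varrho_\gamma}(k) + g,
\]
observe that $\hat\varrho_n(k)\to\hat\varrho(k)$ pointwise (this is weak $L^1$-convergence tested against the bounded function $e^{-ikx}$, together with $g_n\to g$), and apply Fatou's lemma using $\hat v\geq 0$. This produces a minimizer $(\gamma,g)$. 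Uniqueness follows from strict convexity of $\mathcal{F}^{\mathrm{sc}}$: since $f''(x)=1/[x(1+x)]>0$, the entropy $-S^{\mathrm{sc}}$ is strictly convex in $\gamma$; the kinetic term is linear, the interaction term is convex by $\hat v\geq 0$, and $g$ is determined by $\gamma$ through \eqref{eq:normalizationcondition}.

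\textbf{Euler--Lagrange equation and free-energy formula.} Introducing a Lagrange multiplier $\tilde\mu$ for the normalization and treating $g\geq 0$ as a unilateral constraint, stationarity in $\gamma$ gives
\[
p^2 + \frac{\omega^2 x^2}{4} + v\ast\varrho(x) + \frac{1}{\beta}f'(\gamma(p,x)) = \tilde\mu
\]
almost everywhere, while the Karush--Kuhn--Tucker condition for $g$ reads $v\ast\varrho(0)\geq\tilde\mu$, with equality whenever $g>0$. Setting $\mu := \tilde\mu - v\ast\varrho(0)$ converts these into $\mu\leq 0$ and $\mu g=0$, and using $f'(\gamma) = \ln(\gamma/(1+\gamma))$ to solve for $\gamma$ yields \eqref{eq:semiclassicalfunctional5}. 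For the free-energy formula I use the pointwise identity
\[
E\gamma + \frac{1}{\beta}f(\gamma) = \frac{1}{\beta}\ln\bigl(1-e^{-\beta E}\bigr),\qquad E := p^2+\frac{\omega^2 x^2}{4} + v\ast\varrho(x) - v\ast\varrho(0) - \mu,
\]
which follows algebraically from $\gamma/(1+\gamma)=e^{-\beta E}$. Substituting into $\mathcal{F}^{\mathrm{sc}}(\gamma,g)$, re-expressing the remaining $\gamma$-integrals through the density via \eqref{eq:semiclassicalfunctional2i}, and using $\mu g=0$ together with the normalization, one obtains \eqref{eq:semiclassicalfunctional5b} after a short calculation.

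The main technical obstacle I anticipate is the lower semicontinuity of the interaction term, because $\varrho$ contains the Dirac mass $g\delta_0$, so one cannot treat it as an $L^p$ function and the standard weak arguments on the density do not apply directly. The Fourier reformulation is the clean way around this, since it replaces the singular measure $\varrho$ by the bounded continuous function $\hat\varrho$, and the positivity $\hat v\geq 0$ combined with Fatou then delivers the inequality immediately. A secondary point requiring verification is that $\gamma>0$ almost everywhere at the minimum, so that $f'(\gamma)$ is well-defined in the Euler--Lagrange step; this follows from the minimizer having the explicit Bose--Einstein form once $\tilde\mu$ is fixed consistently with the normalization constraint.
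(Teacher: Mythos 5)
Your plan for the Euler--Lagrange equation and the free-energy formula is sound and matches the paper in spirit, and the Fourier reformulation of the interaction term is a clean alternative to the paper's pointwise-a.e.\ plus Fatou argument. However, there is a genuine gap in the existence step.

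You claim that ``the entropy bound combined with the superlinearity of $f$ at infinity yields equi-integrability.'' This is false: the bosonic entropy function $f(x)=x\ln x-(1+x)\ln(1+x)$ satisfies $f(x)\to -\infty$ only \emph{logarithmically} as $x\to\infty$ (indeed $f(x)= -\ln x -1 + o(1)$), so $|f|$ is sublinear, not superlinear. Worse, $f\leq 0$, so the entropy contribution $\beta^{-1}\int f(\gamma)$ to $\mathcal{F}^{\mathrm{sc}}$ is \emph{nonpositive} and provides no upper bound whatsoever; it only decreases as $\gamma$ concentrates. Consequently, a uniform $L^1$ bound together with tightness from the kinetic term does \emph{not} give weak $L^1$ compactness: a minimizing sequence $\gamma_n$ can concentrate on shrinking phase-space regions, and the weak-$*$ limit as a measure may develop a singular part, falling outside $\mathcal{D}^{\mathrm{sc}}$ (which requires $\gamma$ to be an integrable \emph{function}). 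This is precisely the scenario the paper warns about when it introduces the truncated admissible class.

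The paper's fix is the three-step scheme you should adopt: (i) minimize first over $\widetilde{\mathcal{D}}_C=\{(\gamma,g)\in\mathcal{D}^{\mathrm{sc}}:\gamma\leq C\}$, where the uniform $L^1\cap L^\infty$ bound gives weak $L^q$ compactness for $1<q<\infty$, and Mazur's theorem upgrades this to strong $L^q$ convergence along a new minimizing sequence (so pointwise a.e.\ convergence is available for Fatou); (ii) show the truncated minimizers $(\gamma_C,g_C)$ form a minimizing sequence for the unrestricted problem as $C\to\infty$; (iii) crucially, use the Euler--Lagrange equation satisfied by $\gamma_C$, together with $\mu_C\leq 0$ and the curvature hypothesis \eqref{eq:condintpot} via \eqref{eq:34}, to derive the \emph{$C$-independent} pointwise bound
\[
\gamma_C(p,x)\leq \frac{1}{\exp\bigl(\beta(p^2+c\,\omega^2 x^2)\bigr)-1},
\]
which lies in $L^q(\mathbb{R}^6)$ for every $q<3$. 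Only this a posteriori bound on the structured minimizing sequence $(\gamma_C)$ provides the compactness that is unavailable for an arbitrary minimizing sequence, and lets you pass $C\to\infty$. Note also that this step is where Assumption~\ref{as:regularitypotential} enters in an essential way; your proposal never uses it, which is another signal that a step is missing.
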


Note that the condition $\mu g =0$ implies that $\mu = 0$ whenever $g>0$, and $g=0$ whenever $\mu < 0$. 

\begin{proof}
	Since $v \geq 0$ we have $D(\varrho,\varrho) \geq 0$. In combination with the Gibbs variational principle, this implies the lower bound
	\begin{equation}
	\mathcal{F}^{\mathrm{sc}}(\gamma,g) \geq \frac{1}{2 (2\pi)^3} \int_{\mathbb{R}^6} \left( p^2 + \frac{\omega^2 x^2}{4} \right) \gamma(p,x) \de(p,x) + \frac{1}{ \beta (2\pi)^3} \int_{\mathbb{R}^6} \ln\left( 1 - e^{-\beta/2 \left( p^2 + \frac{\omega^2 x^2}{4} \right)} \right) \de(p,x). 
	\label{eq:semiclassicalfunctional7}
	\end{equation}
	The first term on the right-hand side is nonnegative and the second term is finite. In particular, $\mathcal{F}^{\mathrm{sc}}$ is bounded from below. 
	
	The proof of the existence of a unique minimizer will be carried out in three steps. Our approach is motivated by a similar proof strategy in 
	\cite{NRS18}. In the first step we show the existence of a unique minimizer $(\gamma_C,g_C)$ in the set 
	\begin{equation}
	\widetilde{\mathcal{D}}_{C} = \left\{ (\gamma,g) \in \mathcal{D}^{\mathrm{sc}} \ \big| \
	\gamma(p,x) \leq C \text{ for all } (p,x) \in \mathbb{R}^6 \right\}
	\label{eq:semiclassicalfunctional8}
	\end{equation}
	with $C>0$, which can be achieved with standard techniques. Afterwards, we show  that $(\gamma_C,g_C)$ is a minimizing sequence for the original problem as $C \to \infty$,  and in the final step we conclude the existence of a unique minimizer for the original problem. This procedure is necessary in order to exclude the possibility that the phase space density of the minimizing sequence converges to a measure that is not absolutely continuous w.r.t. the Lebesgue measure. The Euler--Lagrange equation of the unrestricted problem can then be established with standard arguments. Eq.~\eqref{eq:semiclassicalfunctional5b} for $F^{\mathrm{sc}}(\beta,\omega)$ follows if one inserts the solution of the Euler--Lagrange equation into $\mathcal{F}^{\mathrm{sc}}$.
	
	\textit{Step 1 (Existence of a unique minimizer for the restricted problem): } Let $(\gamma_n,g_n) \in \widetilde{\mathcal{D}}_{C}$ be a minimizing sequence for the restricted problem. We know that $\Vert \gamma_n \Vert_{L^1(\mathbb{R}^6)} \leq (2 \pi)^3$ and $0 \leq \gamma_n(p,x) \leq C$, and hence $\gamma_n$ is uniformly bounded in $L^p(\mathbb{R}^6)$ for $1 \leq p \leq \infty$. Pick  $1 < q < \infty$. By going to a subsequence (which we do not display in the notation for simplicity) we conclude that there exists a $\gamma \in L^q(\mathbb{R}^6)$ such that $\gamma_n \rightharpoonup \gamma$ weakly in $L^q(\mathbb{R}^6)$. Using Mazur's theorem, see e.g. \cite[Theorem~2.13]{LiLo10}, we can, by going to a convex combination of the original sequence, assume that $\gamma_n \rightarrow \gamma$ strongly in $L^q(\mathbb{R}^6)$. This new sequence is still minimizing because $\mathcal{F}^{\mathrm{sc}}$ is convex. Using \cite[Theorem~2.7]{LiLo10} we can also assume that $\gamma_n \rightarrow \gamma$ pointwise almost everywhere. For the sequence of condensate fractions we have $0 \leq g_n \leq 1$, and hence there exists a subsequence and a $g \in [0,1]$ such that $g_n \rightarrow g$. Next we show that $(\gamma,g) \in \widetilde{\mathcal{D}}_{C}$, 	which can be deduced from the fact that the model has a confining potential. More precisely,  since 
	\begin{equation}
		 \int_{\mathbb{R}^6} \left( p^2 + \frac{ \omega^2 x^2}{4} \right) \gamma_n(p,x) \de(p,x) \leq \text{const. }
	\end{equation}
	by \eqref{eq:semiclassicalfunctional7}, we have
	\begin{equation}
		\left( \frac{1}{2\pi} \right)^3 \int_{ \left\{ p^2 + \frac{ \omega^2 x^2}{4} > R^2 \right\}}  \gamma_n(p,x) \de(p,x) \leq \frac{ \text{const.}}{R^2},
	\end{equation}
	from which we conclude that
	\begin{equation}
		\left( \frac{1}{2\pi} \right)^3 \int_{ \left\{ p^2 + \frac{ \omega^2 x^2}{4} \leq R^2 \right\} } \gamma_n(p,x) \de(p,x) \geq \left( \frac{1}{2\pi} \right)^3 \int_{\mathbb{R}^6} \gamma_n(p,x) \de(p,x) - \frac{ \text{const.}}{R^2} = 1 - g_n - \frac{ \text{const.}}{R^2}\,.
		\label{eq:1}
	\end{equation}
	We take the limit $n \to \infty$ on both sides, use the convergence $\gamma_n \to \gamma$ in $L^1_{\mathrm{loc}}(\mathbb{R}^6)$ and $g_n \to g$, and afterwards take the limit $R \to \infty$. This proves
	\begin{equation}
		\left( \frac{1}{2\pi} \right)^3 \int_{\mathbb{R}^6} \gamma(p,x) \de(p,x) \geq 1 - g.
	\end{equation}
	The reverse inequality follows from Fatou's Lemma, and we conclude that $(\gamma,g) \in \widetilde{\mathcal{D}}_{C}$.
	
	To see that $\mathcal{F}^{\mathrm{sc}}$ is lower semicontinuous along the minimizing sequence, we first observe that
	\begin{equation}
	\left( p^2 + \frac{\omega^2 x^2}{4} \right) \gamma_n(p,x) + \frac{1}{\beta} f(\gamma_n(p,x)) \geq \frac{1}{\beta} \ln\left( 1 - e^{-\beta \left( p^2 + \frac{\omega^2 x^2}{4} \right)} \right)
	\label{eq:semiclassicalfunctional9}
	\end{equation}
	 with $f$ defined in \eqref{eq:bosonicentropy}. In combination, \eqref{eq:semiclassicalfunctional9}, Fatou's Lemma and the pointwise convergence of $\gamma_n$ imply
	\begin{align}
	&\liminf_{n \to \infty} \frac{1}{(2 \pi )^3} \int_{\mathbb{R}^6} \left[  \left( p^2 + \frac{\omega^2 x^2}{4} \right) \gamma_n(p,x) + \frac{1}{\beta} f(\gamma_n(p,x)) \right] \de(p,x) \label{eq:semiclassicalfunctional10} \\
	& \hspace{5cm} \geq \frac{1}{(2 \pi )^3} \int_{\mathbb{R}^6} \left[  \left( p^2 + \frac{\omega^2 x^2}{4} \right) \gamma(p,x) + \frac{1}{\beta} f(\gamma(p,x)) \right] \de(p,x). \nonumber
	\end{align}
	Similarly, using Fatou's Lemma, the pointwise convergence of $\gamma_n$ and the convergence of $g_n$, we also see that 
	\begin{equation}
	\liminf_{n \to \infty} D(\varrho_n,\varrho_n) \geq D(\varrho,\varrho).
	\label{eq:semiclassicalfunctional11}
	\end{equation}
	This proves the lower semicontinuity along the minimizing sequence, and we conclude the existence of a minimizing pair $(\gamma_C,g_C) \in \widetilde{\mathcal{D}}_{C}$. The minimizer is unique because $\mathcal{F}^{\mathrm{sc}}$ is strictly convex. This follows from $\hat v\geq 0$ and the fact that the semiclassical entropy is strictly concave.

	By standard arguments one can conclude  that the minimizing pair $(\gamma_C,g_C)$ solves the equation
	\begin{equation}
	\gamma_C(p,x) = \min\left\{ C \  , \  \frac{1}{\exp\left(\beta \left( p^2 + \frac{\omega^2 x^2}{4} + v \ast \varrho_C(x) - v \ast \varrho_C(0) - \mu_C \right)\right) -1} \right\}
	\label{eq:semiclassicalfunctional15}
	\end{equation}
	for some $\mu_C \in \mathbb{R}$, where $\varrho_C$ denotes the density of the minimizer $(\gamma_C,g_C)$ (defined in \eqref{eq:semiclassicalfunctional2i}). We claim that $\mu_C \leq 0$. To see this, assume on the contrary that $\mu_C > 0$ and define, for $\epsilon>0$, the trial states $(\tilde\gamma_\epsilon,\tilde g_\epsilon)$ by
	\begin{equation}
	\tilde\gamma_\epsilon(p,x) = \min\left\{ C \  , \  \frac{1}{\exp\left(\beta \left( p^2 + \frac{\omega^2 x^2}{4} + v \ast \varrho_C(x) - v \ast \varrho_C(0) - \mu_C + \epsilon \right)\right) -1} \right\}
	\label{eq:semiclassicalfunctional15a}
	\end{equation}
	with corresponding $\tilde g_\epsilon > g_C$ such that the normalization condition \eqref{eq:normalizationcondition} holds. A simple calculation shows that
	\begin{equation}
	\lim_{\epsilon\to 0}  \epsilon^{-1} \left( \mathcal{F}^{\mathrm{sc}}(\tilde \gamma_\epsilon, \tilde g_\epsilon) - \mathcal{F}^{\mathrm{sc}}(\gamma_C,g_C) \right) =  \frac{-\mu_C}{ ( 2 \pi )^3}  \int_{ \{ \gamma_C < C \}} \gamma_C(p,x) \left( \gamma_C(p,x) + 1\right) \de(p,x)
	\end{equation}
	and contradicts the assumption that $(\gamma_C,g_C)$ is a minimizer. Hence $\mu_C\leq 0$. We note that if $g_C>0$, the above calculation actually shows that $\mu_C=0$, since in this case $\epsilon$ can also be taken negative. In particular, $\mu_C g_C = 0$.

	\textit{Step 2 ($(\gamma_C,g_C)$ is a minimizing sequence for the unrestricted problem): } We have
	\begin{equation}
	\liminf_{C \to \infty} \mathcal{F}^{\mathrm{sc}}(\gamma_C,g_C) \geq \inf_{(\gamma,g) \in \mathcal{D}^{\mathrm{sc}}}  \mathcal{F}^{\mathrm{sc}}(\gamma,g).
	\label{eq:semiclassicalfunctional19}
	\end{equation}
	To establish the reverse inequality for the $\limsup$, we pick  $(\gamma,g) \in \mathcal{D}^{\mathrm{sc}}$ and define 
	\begin{equation}
	\widetilde{\gamma}_C(p,x) = \gamma(p,x) \mathds{1}_{\{ \gamma < C \}}(p,x).
	\label{eq:104}
	\end{equation}
	The condensate fraction $\widetilde{ g}_C \geq g$ is chosen such that the normalization condition \eqref{eq:normalizationcondition} holds for the pair $(\widetilde{\gamma}_C,\widetilde{ g}_C)$. In the following, we derive a lower bound for the free energy of $(\gamma,g)$ in terms of the one of $(\widetilde{\gamma}_C,\widetilde{ g}_C)$. First, we note that
	\begin{align}
	\int_{\mathbb{R}^6} \left[  \left( p^2 + \frac{\omega^2 x^2}{4} \right) \gamma(p,x) + \frac{1}{\beta} f(\gamma(p,x)) \right] \de(p,x) \geq& \int_{ \mathbb{R}^6 } \left[  \left( p^2 + \frac{\omega^2 x^2}{4} \right) \widetilde{\gamma}_C(p,x) + \frac{1}{\beta} f(\widetilde{\gamma}_C(p,x))\right] \de(p,x) \nonumber  \\
	&+\frac{1}{\beta} \int_{\{ \gamma \geq C \}} \ln\left( 1 - e^{-\beta \left( p^2 + \frac{\omega^2 x^2}{4} \right) } \right) \de(p,x). \label{eq:semiclassicalfunctional20} 
	\end{align}
	To bound the second term on the right-hand side, we use
	\begin{equation}
	\left| \{ \gamma \geq C \} \right| \leq \frac{(2 \pi)^3}{C},
	\label{eq:semiclassicalfunctional21}
	\end{equation}
	which follows from $\frac{1}{(2 \pi )^3} \int_{\mathbb{R}^6} \gamma(p,x) \de(p,x) \leq 1$. Here $| A |$ denotes the Lebesgue measure of the set $A$. More precisely, we have
	\begin{align}
	&\frac{1}{\beta } \frac{1}{(2 \pi )^3} \int_{ \{ \gamma \geq C \} } \ln\left( 1- e^{-\beta \left( p^2+ \frac{ \omega^2 x^2}{4} \right)} \right) \de(p,x) \geq \frac{1}{\beta } \frac{1}{(2 \pi )^3} \int_{ \left\{  \beta \left(p^2 + \frac{\omega^2 x^2}{4} \right) \leq \frac{\beta \omega}{(C/6)^{1/3}} \right\} } \ln\left( 1- e^{-\beta \left( p^2+ \frac{ \omega^2 x^2}{4} \right)} \right) \de(p,x) \nonumber \\
	&\hspace{1cm} = \frac{1}{\beta (\beta \omega)^3} \frac{8}{(2 \pi )^3} \int_{ \left\{ p^2 + x^2  \leq \frac{\beta \omega}{(C/6)^{1/3}} \right\} } \ln\left( 1- e^{-\left( p^2+ x^2\right)} \right) \de(p,x) \gtrsim -\frac{\ln\left( C \right)}{\beta C}. \label{eq:semiclassicalfunctional22}
	\end{align}
	To obtain the first inequality, we replaced the set $\{ \gamma \geq C \}$ by a ball with the same volume centered around zero in $\mathbb{R}^6$. Since $| \ln( 1- \exp(-\beta ( p^2+ \frac{ \omega^2 x^2}{4} )) ) |$ is a radial and monotone decreasing function this can only increase the absolute value of its integral.
	
	It remains to consider the interaction energy.  
	Since $\widetilde{ g }_{C} \geq g$, we have the lower bound
	\begin{align}
	D(\varrho,\varrho) &\geq D(\widetilde{\varrho}_C,\widetilde{\varrho}_C) + (g-\widetilde{g}_C) \int_{\mathbb{R}^3} v(x) \varrho_{\widetilde{\gamma}_C}(x) \de x + \frac{v(0)}{2} \left( g^2 - \widetilde{g}_C^2 \right) \label{eq:semiclassicalfunctional23} \\
	&\geq D(\widetilde{\varrho}_C,\widetilde{\varrho}_C) - 2 \Vert v \Vert_{L^{\infty}(\mathbb{R}^3)} |g-\widetilde{g}_C|. \nonumber  
	\end{align}
	Here $\varrho_{\widetilde{\gamma}_{C}}(x) = \frac{1}{(2 \pi)^3} \int_{\mathbb{R}^3} \widetilde{\gamma}(p,x) \de p$ and $\widetilde{ \varrho }_C(x) = \varrho_{\widetilde{\gamma}_{C}}(x) + \widetilde{g}_{C} \delta(x)$. In combination, \eqref{eq:semiclassicalfunctional20}, \eqref{eq:semiclassicalfunctional22} and \eqref{eq:semiclassicalfunctional23} imply the bound
	\begin{align}
	\mathcal{F}^{\mathrm{sc}}(\gamma,g) &\geq \mathcal{F}^{\mathrm{sc}}(\widetilde{\gamma}_C,\widetilde{g}_C)  -\frac{\text{const.}}{\beta C} \ln\left( C \right) - 2 \Vert v \Vert_{L^{\infty}(\mathbb{R}^6)} |g-\widetilde{g}_C| 
	\label{eq:semiclassicalfunctional24} \\
	&\geq \mathcal{F}^{\mathrm{sc}}(\gamma_C,g_C) -\frac{\text{const.}}{\beta C} \ln\left( C \right) - 2 \Vert v \Vert_{L^{\infty}(\mathbb{R}^6)} |g-\widetilde{g}_C|. \nonumber
	\end{align}
	with the minimizing pair $(\gamma_C,g_C)$ of the restricted problem. Next, we take the $\limsup_{C \to \infty}$ on both sides of \eqref{eq:semiclassicalfunctional24}, and use the definition of $\widetilde{\gamma}_C$ in \eqref{eq:104} as well as the normalization condition in \eqref{eq:normalizationcondition} to check that $g - \widetilde{g}_C \to 0$ as $C \to \infty$. Finally, we minimize  over $(\gamma,g) \in \mathcal{D}^{\mathrm{sc}}$ and find
	\begin{equation}
	\inf_{(\gamma,g) \in \mathcal{D}^{\mathrm{sc}}}  \mathcal{F}^{\mathrm{sc}}(\gamma,g) \geq \limsup_{C \to \infty} \mathcal{F}^{\mathrm{sc}}(\gamma_C,g_C).
	\label{eq:semiclassicalfunctional25}
	\end{equation}
	In combination with \eqref{eq:semiclassicalfunctional19} this proves that $(\gamma_C,g_C)$ is a minimizing sequence for the unrestricted problem.
	
	\textit{Step 3 (Existence of a unique minimizer for the unrestricted problem): } Let $\varrho : \mathbb{R}^3 \to \mathbb{R}$ be a nonnegative function with $\int_{\mathbb{R}^3} \varrho(x) \de x \leq 1$ and $\varrho(-x) = \varrho(x)$ or, more generally, a non-negative measure with these properties. The assumption \eqref{eq:condintpot} on $v$ implies the bound 
	\begin{equation}
		\frac{\omega^2 x^2}{4} + v \ast \varrho(x) - v \ast \varrho(0) \geq c \omega^2 x^2
		\label{eq:34}
	\end{equation}	
	for some $c>0$. To see this, we use a second order Taylor approximation to write the effective interaction potential as
	\begin{equation}
		v \ast \varrho(x) = v \ast \varrho(0)  + x^T \cdot \left( \frac{1}{2} \int_{\mathbb{R}^3} \int_0^1 D^2 v\left( y + s(x-y) \right) \varrho(y) \de s \de y \right) \cdot x,
		\label{eq:60}
	\end{equation}
	and use \eqref{eq:condintpot} to obtain a lower bound for the second term on the right-hand side. This proves \eqref{eq:34}.
	
	From $v(-x) = v(x)$ and the fact that the pair $(\gamma_C,g_C)$ is the unique minimizer of the restricted problem, we know that $\varrho_C(-x) = \varrho_C(x)$. Since $\mu_C\leq 0$, Eq.~\eqref{eq:34} thus shows that $\gamma_C$ in \eqref{eq:semiclassicalfunctional15} is bounded by
	\begin{equation}
	\gamma_C(p,x)  \leq  \frac{1}{\exp\left(\beta \left( p^2 +  c {\omega^2 x^2} \right)\right) -1} \,.
	\end{equation}
	In particular, the sequence $\gamma_C$ is uniformly in $L^p(\mathbb{R}^6)$ for $p<3$. 
	With this information we apply Mazur's theorem in the same way as in the case of the restricted problem and prove the existence of a unique minimizer for the unrestricted problem. Uniqueness follows from the strict concavity of the semiclassical entropy, which implies that $\mathcal{F}^{\mathrm{sc}}$ is strictly convex.
	
	The Euler--Lagrange equation for the unrestricted problem can be obtained in the same way as the one for the restricted problem, and in the same way one concludes that $\mu  g = 0$. Eq.~\eqref{eq:semiclassicalfunctional5b} for $F^{\mathrm{sc}}(\beta,\omega)$ 
	follows if we insert the solution of the Euler--Lagrange equation into the formula for $\mathcal{F}^{\mathrm{sc}}$. We omit the details.
	This completes the proof of Lemma~\ref{lem:semiclassicalfreeenergy}.
\end{proof}
	
\subsection{Proof~of~Proposition~\ref{prop:crittempscmf}}
	
	Let $(\gamma^{\mathrm{sc}},g^{\mathrm{sc}})$ be the unique minimizing pair of $\mathcal{F}^{\mathrm{sc}}$, whose existence is guaranteed by Lemma~\ref{lem:semiclassicalfreeenergy},  let $\mu^{\mathrm{sc}}$ be the corresponding chemical potential in  \eqref{eq:semiclassicalfunctional5},  and  $\varrho^{\mathrm{sc}}(x) = \varrho_{\gamma^{\mathrm{sc}}}(x) + g^{\mathrm{sc}} \delta(x)$. By $\gamma_0$ and $\varrho_0$ we denote the corresponding quantities of the ideal gas at inverse critical temperature $\beta_0$ in \eqref{eq:idealgascriticaltemp2}.
	
We start by observing that $\gamma^{\mathrm{sc}}$ and $g^{\mathrm{sc}}$ depend continuously on $\beta$. This follows from uniqueness of the minimizers. Following the proof of Lemma~\ref{lem:semiclassicalfreeenergy}, one readily checks that for a sequence $\{\beta_n\}$ with $\lim_{n\to \infty} \beta_n = \beta>0$, the corresponding minimizers form a minimizing sequence for the problem at inverse temperature $\beta$ and actually converge to the corresponding minimizer. 

From the  continuity of both $\mu^{\mathrm{sc}}$ and $g^{\mathrm{sc}}$ in $\beta$ we conclude that the boundary of the temperature region where $g^{\mathrm{sc}}=0$ is characterized by the conditions that $\mu^{\mathrm{sc}} = g^{\mathrm{sc}} =0$. In the following, we show that this characterizes a unique critical temperature, at least for weak coupling. The latter restriction comes from the use of a fixed point argument.

The Euler--Lagrange equation \eqref{eq:semiclassicalfunctional5}  at $g^{\mathrm{sc}}=\mu^{\mathrm{sc}} =0$ implies upon integration over $p$ the equation 
\begin{equation}\label{rso}
\rho^{\mathrm{sc}}(x) = \beta^{-3/2} \eta \left( \beta \left(\tfrac{\omega^2}{4} x^2 + \lambda v*\rho^{\mathrm{sc}}(x) - \lambda v*\rho^{\mathrm{sc}}(0) \right)\right)
\end{equation}
where
\begin{equation}
\eta( t) =  \frac{1}{(2\pi)^3} \int_{\mathbb{R}^3} \frac 1{ \exp\left( p^2 + t \right) -1}  \de p
\end{equation}
for $t\geq 0$. Note that $\eta$ is a bounded, rapidly decreasing and convex function. On the set $X= \{ \rho\in L^1(\mathbb{R}^3) ,\ \rho\geq 0, \ \int \rho = 1\}$ define the map $T$ that maps 
\begin{equation}
X \ni \rho \mapsto\beta^{-3/2} \eta \left( \beta \left(\tfrac{\omega^2}{4} x^2 + \lambda v*\rho(x) - \lambda v*\rho(0) \right)\right) \,,
\end{equation}
where $\beta>0$ is chosen (depending on $\rho$)  as the unique value  such that the integral of the right-hand side of \eqref{rso} equals $1$. Note that by strict monotonicity in $\beta$, there exists a unique such $\beta$ for every $\rho \in X$. 

\begin{lemma} \label{lem:ban}
For $\lambda$ small enough, the map $T:X\to X$ is a contraction. In particular, in this case there exists a unique solution 
 $(\beta,\rho^{\mathrm{sc}})\in \mathbb{R}_+ \times X$ of Eq.~\eqref{rso}.
 \end{lemma}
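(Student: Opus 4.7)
The plan is to apply the Banach fixed point theorem on $X$ equipped with the metric inherited from $L^1(\mathbb{R}^3)$. Note that $X$ is closed in $L^1$ (it is the intersection of a closed half-space and the affine hyperplane $\{\int\rho=1\}$), hence complete. I would proceed in four steps.

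First, I would verify that $T$ is well-defined. For $\rho\in X$, set $W_\rho(x) = v*\rho(x)-v*\rho(0)$, so $|W_\rho|\leq 2\|v\|_\infty$, and $W_\rho(-x)=W_\rho(x)$ since $\rho$ symmetrization can be achieved by the fixed-point argument (or alternatively, one argues that the map preserves symmetry). Under Assumption~\ref{as:regularitypotential} and $\lambda\leq 1$, the Hessian argument leading to \eqref{eq:34} gives $\omega^2 x^2/4 + \lambda W_\rho(x)\geq c\omega^2 x^2$ uniformly in $\rho\in X$ and $\lambda\in(0,1]$. Combined with the rapid decay of $\eta$, this shows the right-hand side of \eqref{rso} is integrable and strictly decreasing in $\beta$ (with limits $+\infty$ at $0^+$ and $0$ at $+\infty$), so there is a unique $\beta=\beta(\rho)>0$ making the integral equal to $1$. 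Moreover, the uniform comparison $c\omega^2 x^2 \leq \omega^2x^2/4+\lambda W_\rho\leq \omega^2 x^2/4 + 2\lambda\|v\|_\infty$ implies that $\beta(\rho)$ lies in a compact subinterval of $(0,\infty)$ around $\beta_0$, uniformly in $\rho\in X$ (for $\lambda$ in a bounded range).

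Next, I would establish Lipschitz dependence of $\beta(\rho)$ on $\rho$. Defining $F(\rho,\beta) = \int \beta^{-3/2}\eta(\beta(\omega^2 x^2/4+\lambda W_\rho(x)))\,dx - 1$, the constraint reads $F(\rho,\beta(\rho))=0$. Writing
\[
0 = F(\rho_1,\beta(\rho_1)) - F(\rho_2,\beta(\rho_2)) = \bigl[F(\rho_1,\beta(\rho_1)) - F(\rho_2,\beta(\rho_1))\bigr] + \bigl[F(\rho_2,\beta(\rho_1)) - F(\rho_2,\beta(\rho_2))\bigr]
\]
and using $|W_{\rho_1}-W_{\rho_2}|\leq 2\|v\|_\infty \|\rho_1-\rho_2\|_1$ together with the mean value theorem on $\eta$, the first bracket is $O(\lambda\|\rho_1-\rho_2\|_1)$, with the $x$-integration converging because $|\eta'(\beta u)|$ decays exponentially once $u\geq c\omega^2 x^2$. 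The derivative $\partial_\beta F$ is strictly negative and bounded away from $0$ uniformly in $\rho\in X$, by the strict monotonicity argument of step one. Hence $|\beta(\rho_1)-\beta(\rho_2)|\lesssim \lambda\|\rho_1-\rho_2\|_1$, with constants depending only on $\omega,\beta_0,v$.

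Third, I would split the contraction estimate as
\[
T\rho_1 - T\rho_2 = \bigl[T_{\beta(\rho_1)}(\rho_1) - T_{\beta(\rho_1)}(\rho_2)\bigr] + \bigl[T_{\beta(\rho_1)}(\rho_2) - T_{\beta(\rho_2)}(\rho_2)\bigr],
\]
where $T_\beta(\rho)(x)=\beta^{-3/2}\eta(\beta(\omega^2 x^2/4+\lambda W_\rho(x)))$. The first bracket is handled exactly as in the argument for $\partial_\rho F$ above, yielding $\|T_{\beta(\rho_1)}(\rho_1) - T_{\beta(\rho_1)}(\rho_2)\|_1 \lesssim \lambda\|\rho_1-\rho_2\|_1$. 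For the second bracket, differentiating $\beta\mapsto \beta^{-3/2}\eta(\beta u)$ produces $-\tfrac{3}{2}\beta^{-5/2}\eta(\beta u)+\beta^{-3/2}u\eta'(\beta u)$, both of whose $L^1$-norms in $x$ are uniformly bounded thanks to $u\geq c\omega^2 x^2$ and the decay of $\eta,\eta'$. Hence $\|T_{\beta_1}(\rho_2)-T_{\beta_2}(\rho_2)\|_1\lesssim |\beta_1-\beta_2|\lesssim \lambda\|\rho_1-\rho_2\|_1$ by step two. Combining gives $\|T\rho_1-T\rho_2\|_1 \leq C\lambda\,\|\rho_1-\rho_2\|_1$ with $C$ depending only on $\omega,\beta_0,v$; for $\lambda<1/C$, $T$ is a contraction and the Banach fixed point theorem yields a unique solution $\rho^{\mathrm{sc}}\in X$ of \eqref{rso}, with the corresponding $\beta=\beta(\rho^{\mathrm{sc}})$ then the unique inverse critical temperature. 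The main technical obstacle is keeping all constants uniform in $\rho\in X$ throughout the implicit-function argument for $\beta(\rho)$, and in particular verifying that $\partial_\beta F$ is bounded below uniformly; this is where the coercivity bound \eqref{eq:34} and the resulting compactness of the range of $\beta(\rho)$ play a decisive role.
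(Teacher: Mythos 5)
Your proposal is correct and its overall architecture coincides with the paper's: derive a uniform compact range for $\beta(\rho)$ from the coercivity bound \eqref{eq:34n}, prove a Lipschitz estimate $|\beta(\rho_1)-\beta(\rho_2)|\lesssim\lambda\|\rho_1-\rho_2\|_1$, split $T\rho_1-T\rho_2$ into a $\rho$-variation and a $\beta$-variation term, and close with the Banach fixed point theorem. Two variations are worth flagging. First, you bound the potential difference by the uniform estimate $|W_{\rho_1}(x)-W_{\rho_2}(x)|\leq 2\|v\|_\infty\|\rho_1-\rho_2\|_1$, whereas the paper uses the Hessian assumption \eqref{eq:condintpot} to obtain the pointwise bound $|V_1(x)-V_2(x)|\lesssim\omega^2 x^2\|\rho_1-\rho_2\|_1$ (this requires $\rho_1,\rho_2$ even, so that $\nabla(v*(\rho_1-\rho_2))(0)=0$). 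Both routes work, but with your uniform bound you must check the integral $\int|\eta'(\beta c\omega^2 x^2)|\,dx$ near $x=0$: since $|\eta'(t)|\sim t^{-1/2}$ as $t\to0^+$ in three dimensions, the integrand has a $|x|^{-1}$ singularity, which is integrable against $d^3x$; your remark "decays exponentially once $u\geq c\omega^2 x^2$" covers only the tail, not this point, while the paper's extra factor of $\omega^2 x^2$ removes the issue altogether. Second, you package the $\beta$-Lipschitz step via the implicit function $F(\rho,\beta)$ and a uniform lower bound on $|\partial_\beta F|$, which is a more systematic framing of the same computation the paper carries out by direct monotonicity manipulation; the substance is identical. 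Finally, note that the coercivity bound requires $\rho(-x)=\rho(x)$, so both arguments in fact live on the $T$-invariant, closed subset of even densities in $X$, which you acknowledge but should make explicit before invoking \eqref{eq:34}.
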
 

\begin{proof}
We start by giving uniform bounds on the possible values that $\beta$ can take. The assumption~\eqref{eq:condintpot} on $v$ implies,  as in \eqref{eq:34}, that 
		\begin{equation}
		 \frac{\omega^2 x^2}{4} \left(  1 + c \lambda\right) \geq \frac{\omega^2 x^2}{4} + \lambda v \ast \varrho(x) - \lambda v \ast \varrho(0) \geq  \frac{\omega^2 x^2}{4} \left(  1 - c \lambda\right)
		\label{eq:34n}
	\end{equation}	
	for a suitable $0<c<1$. By the monotonicity of $\eta$  we thus have, for any $\rho \in X$, 
	\begin{equation}
	1 = \int T\rho \leq  \int_{\mathbb{R}^3} \beta^{-3/2} \eta \left( \beta \omega^2 x^2 \tfrac{1}{4} \left(  1 - c \lambda\right) \right) \de x  = \left( \frac{\beta_0 }\beta\right)^3 \left(  1 - c \lambda\right)^{-3/2}
	\end{equation}
	and a similar argument can be used for a bound in the opposite direction. In particular, 
	\begin{equation}\label{eq:34m}
	\left(  1 + c \lambda\right)^{-1/2} \beta_0 \leq \beta \leq \beta_0 \left(  1 - c \lambda\right)^{-1/2}
	\end{equation}
	holds for any $\rho \in X$. 
	
	To show that $T$ is a contraction, we write, for any $\rho_1,\rho_2 \in X$,
	\begin{equation}
	\int \left| T\rho_1- T\rho_2 \right|  = \int_{\mathbb{R}^3} \left| \beta_1^{-3/2} \eta \left( \beta_1 \left(\tfrac{\omega^2}{4} x^2 + \lambda V_1(x) \right)\right)  - \beta_2^{-3/2} \eta \left( \beta_2 \left(\tfrac{\omega^2}{4} x^2 + \lambda V_2(x) \right)\right)  \right| \de x \,,
	\end{equation}
	where $\beta_1$ and $\beta_2$ are the values $\beta$ takes in $T\rho_1$ and $T\rho_2$, respectively, and we introduced the notation $V_j(x) =  v*\rho_j(x) -  v*\rho_j(0)$ for $j\in \{1,2\}$ for simplicity.  In particular,
	\begin{equation}
	\int \left| T\rho_1- T\rho_2 \right|\leq \left| 1- (\beta_1/ \beta_2)^{3/2}\right|  + \beta_2^{-3/2}  \int_{\mathbb{R}^3} \left|  \eta \left( \beta_1 \left(\tfrac{\omega^2}{4} x^2 + \lambda V_1(x) \right)\right)  - \eta \left( \beta_2 \left(\tfrac{\omega^2}{4} x^2 + \lambda V_2(x) \right)\right)  \right| \de x\,. \label{ltorhs}
	\end{equation}
	To bound the difference of $\beta_1$ and $\beta_2$, let us assume without loss of generality that $\beta_1>\beta_2$. Then, by the monotonicity of $\eta$,
	\begin{align}\nonumber
	1 & =  \beta_2^{-3/2}  \int_{\mathbb{R}^3} \eta \left( \beta_2 \left(\tfrac{\omega^2}{4} x^2 + \lambda V_2(x) \right)\right)   \de x \\ \nonumber & \geq \beta_2^{-3/2}  \int_{\mathbb{R}^3} \eta \left( \beta_1 \left(\tfrac{\omega^2}{4} x^2 + \lambda V_2(x) \right)\right) \de x \\ &=  \left(\beta_1/\beta_2\right)^{3/2} -  \beta_2^{-3/2}  \int_{\mathbb{R}^3} \left(  \eta \left( \beta_1 \left(\tfrac{\omega^2}{4} x^2 + \lambda V_1(x) \right)\right)- \eta \left( \beta_1 \left(\tfrac{\omega^2}{4} x^2 + \lambda V_2(x) \right)\right)\right) \de x\,.
	\end{align}
	Using a first-order Taylor expansion of the integrand, the monotonicity of $\eta'$ as well as \eqref{eq:34n}, we can bound the last term as
	\begin{align}\nonumber
	 & \int_{\mathbb{R}^3} \left(  \eta \left( \beta_1 \left(\tfrac{\omega^2}{4} x^2 + \lambda V_1(x) \right)\right)- \eta \left( \beta_1 \left(\tfrac{\omega^2}{4} x^2 + \lambda V_2(x) \right)\right)\right) \de x  \\ &
	  \leq  \beta_1 \ \lambda \int_{\mathbb{R}^3} \left|  \eta' \left( \beta_1 \left(\tfrac{\omega^2}{4} x^2(1-c\lambda) \right)\right)\right|   \left| V_1(x)- V_2(x) \right|   \de x \,.
	  \end{align}
	Using again Assumption~\eqref{eq:condintpot} on $v$ we can bound $|V_1(x) - V_2(x)| \lesssim \omega^2 x^2 \|\rho_1-\rho_2\|_1$. The resulting integral is then finite and uniformly bounded because of 
 \eqref{eq:34m}. We thus conclude that 
	\begin{equation}
	\beta_2^{-3/2}  \int_{\mathbb{R}^3} \left(  \eta \left( \beta_1 \left(\tfrac{\omega^2}{4} x^2 + \lambda V_1(x) \right)\right)- \eta \left( \beta_1 \left(\tfrac{\omega^2}{4} x^2 + \lambda V_2(x) \right)\right)\right) \de x  \lesssim \lambda \|\rho_1 - \rho_2\|_1\,.
	\end{equation} 
	In particular,
	\begin{equation}
	|\beta_1-\beta_2 | \lesssim \lambda \|\rho_1 -\rho_2\|_1 \,.
	\end{equation}
	By using a similar argument, one readily checks that the last term on the right-hand side of \eqref{ltorhs} can be bounded in terms of $|\beta_1-\beta_2|$ and $\lambda \|\rho_1-\rho_2\|_1$. Altogether, this proves that
	\begin{equation}
	\int \left| T\rho_1- T\rho_2 \right| \lesssim \lambda \|\rho_1 -\rho_2\|_1 
	\end{equation}
	and hence shows the desired contraction property for $\lambda$ small enough. The uniqueness of solutions of \eqref{rso} then follows from the Banach fixed point theorem. 
	\end{proof}
	
This completes the proof of part a) of Proposition~\ref{prop:crittempscmf}.
In order to prove part b), we first derive a bound  on the  difference of $\varrho^{\mathrm{sc}}$ at inverse temperature $\beta_{\mathrm{c}}$ and $\varrho_0$ at inverse temperature $\beta_0$. Using \eqref{rso} we can write
\begin{equation}
\varrho^{\mathrm{sc}}(x) - \varrho_{0}(x) = \beta_{\mathrm{c}}^{-3/2} \eta \left( \beta_{\mathrm{c}} \left(\tfrac{\omega^2}{4} x^2 + \lambda v*\varrho^{\mathrm{sc}}(x) - \lambda v*\varrho^{\mathrm{sc}}(0) \right)\right) - \beta_0^{-3/2} \eta \left( \beta_0 \left(\tfrac{\omega^2}{4} x^2  \right)\right) \,.
\end{equation}
From \eqref{eq:34m} we already know that $|\beta_{\mathrm{c}} - \beta_0| \lesssim \lambda$. A  first order Taylor expansion as in the proof of Lemma~\ref{lem:ban} then readily shows that $\| \varrho^{\mathrm{sc}} - \varrho_{0}\|_1 \lesssim \lambda$.

Integration of \eqref{rso} over $x$ shows that the inverse critical temperature $\beta_{\mathrm{c}}$ satisfies the equation
\begin{equation}
\beta_{\mathrm{c}}^{3/2} = \int_{\mathbb{R}^3}  \eta \left( \beta_{\mathrm{c}} \left(\tfrac{\omega^2}{4} x^2 + \lambda v*\varrho^{\mathrm{sc}}(x) - \lambda v*\varrho^{\mathrm{sc}}(0) \right)\right)\de x \,.
\end{equation}
A second order Taylor expansion, the monotonicity of $\eta''$ as well as the bound \eqref{eq:34n} lead to the estimate
\begin{align}\nonumber
& \left| \beta_{\mathrm{c}}^{3/2} - \beta_0^{3}\beta_{\mathrm{c}}^{-3/2} - \lambda \beta_{\mathrm{c}} \int_{\mathbb{R}^3}  \eta' \left( \beta_{\mathrm{c}} \left(\tfrac{\omega^2}{4} x^2  \right)\right) 
\left( v*\varrho^{\mathrm{sc}}(x) -  v*\varrho^{\mathrm{sc}}(0)\right) \de x \right| 
\\  &  \leq \frac 12 \lambda^2 \beta_{\mathrm{c}}^2 \int_{\mathbb{R}^3}  \eta'' \left( \beta_{\mathrm{c}} \left(\tfrac{\omega^2}{4}  x^2 (1-c \lambda) \right)\right)
\left( v*\varrho^{\mathrm{sc}}(x) -  v*\varrho^{\mathrm{sc}}(0)  \right)^2 \de x  \lesssim \lambda^2  \label{lasti}
\end{align}
where the last inequality follows from the uniform bound \eqref{eq:34m} as well as the fact that the assumption \eqref{eq:condintpot} on $v$ implies that $|v*\varrho^{\mathrm{sc}}(x) -  v*\varrho^{\mathrm{sc}}(0)| \lesssim x^2$. In the integrand of the last term on the left-hand side of \eqref{lasti} we can replace $\varrho^{\mathrm{sc}}$ by $\varrho_0$ and bound the difference as
\begin{equation}
\left| v*\varrho^{\mathrm{sc}}(x) -  v*\varrho^{\mathrm{sc}}(0) - v*\varrho_{0}(x) +  v*\varrho_{0}(0)\right| \lesssim x^2 \| \varrho^{\mathrm{sc}} - \varrho_0 \|_1 \lesssim \lambda x^2\,.
\end{equation}
Altogether, this shows that 
\begin{equation}
 \left| \beta_{\mathrm{c}}^{3/2} - \beta_0^{3}\beta_{\mathrm{c}}^{-3/2} - \lambda \beta_{\mathrm{c}} \int_{\mathbb{R}^3}  \eta' \left( \beta_{\mathrm{c}} \left(\tfrac{\omega^2}{4} x^2  \right)\right) 
\left( v*\varrho_{0}(x) -  v*\varrho_{0}(0)\right) \de x \right|  \lesssim \lambda^2\,.
\end{equation}
Similar arguments as above allow to replace $\beta_{\mathrm{c}}$ by $\beta_0$ in the last term on the left-hand side, at the expense of another correction of order $\lambda^2$. Then  \eqref{eq:crittempshift} readily follows. 
	
	The fact that $\Xi > 0$ is a consequence of  $v \ast \varrho_{0}(x)  \leq v\ast \varrho_{0}(0) $ for all $x \in \mathbb{R}^3$. This, in turn, follows from the fact that   $\varrho_{0}$ has a nonnegative Fourier transform:
	\begin{align}
	\hat{\varrho}_{0}(p) &= \frac{1}{(2\pi)^{9/2}} \int_{\mathbb{R}^6} e^{-ipx} \frac{1}{ \exp\left( \beta_0 \left( q^2 + \frac{ \omega^2 x^2}{4} \right) \right) -1 } \de(q,x) \label{eq:15} \\
	&= \frac{1}{(2\pi)^{9/2}} \sum_{\alpha = 1}^{\infty} \left( \int_{\mathbb{R}^3} \exp\left( -\frac{ \beta_0 \alpha \omega^2 x^2}{4} \right) e^{-ipx}  \de x \right) \left( \int_{\mathbb{R}^3} \exp\left( -\beta_0 \alpha q^2 \right) \de q \right). \nonumber 
	\end{align}
	Both integrals on the right-hand side are nonnegative, and accordingly $\hat{\varrho}_{0}(p)$ is nonnegative. This completes the proof of Proposition~\ref{prop:crittempscmf}.

\section{Semiclassical mean-field limit of the Hartree free energy functional}
\label{sec:semiclassicalfreeenergyfunctionalb}
In this section we give the proof of Theorem~\ref{thm:limitHartreetheory}, which will be carried out in three steps. In the first two steps we prove  upper and  lower bounds on the Hartree free energy that, when combined, imply \eqref{eq:propsemiclassical1a}. In the third step we prove bounds for the Husimi function and the condensate fraction of the Hartree minimizer $\gamma^{\mathrm{H}}$ that imply \eqref{eq:propsemiclassical2a}. 
\subsection{Free energy bounds}
\label{sec:Hartreefreeenergybounds}
\subsubsection*{Proof of the upper bound for $F^{\mathrm{H}}(\beta,N,\omega)$}
Our argument is based on a trial state that is motivated in part by a similar argument at zero temperature in \cite{Li1981} and in part by an analysis at positive temperature using coherent states in \cite{NarnThi1981}. We start with the definition of the trial state.

Let $(\gamma^{\mathrm{sc}},g^{\mathrm{sc}})$ be the minimizer of the semiclassical free energy functional, whose existence is guaranteed by Lemma~\ref{lem:semiclassicalfreeenergy}, and let $\ell$ be a nonnegative radial function with $\int_{\mathbb{R}^3} \ell^2(x) \de x = 1$ and $\langle \ell, (-\Delta+\omega^2 x^2) \ell \rangle < + \infty$. Our trial state is given by
\begin{equation}
	\gamma_{\hbar} = g^{\mathrm{sc}} N \left| \ell_{0,0}^{\hbar} \right \rangle \left \langle \ell_{0,0}^{\hbar} \right| + \left( \frac{1}{2 \pi \hbar} \right)^3 \int_{\mathbb{R}^6} \left| \ell_{p,q}^{\hbar} \right \rangle \left \langle \ell_{p,q}^{\hbar} \right| \gamma^{\mathrm{sc}}(p,q) \de(p,q)
	\label{eq:7}
\end{equation}
with $\ell_{p,q}^{\hbar}$ defined in \eqref{eq:coherentstate1i}. Note that $\hbar^{-3} = N$ implies
\begin{equation}
	\trs[ \gamma_{\hbar} ] = N \left( g^{\mathrm{sc}} + \left( \frac{1}{2 \pi} \right)^3 \int_{\mathbb{R}^6} \gamma^{\mathrm{sc}}(p,q) \de(p,q) \right) = N.
	\label{eq:8}
\end{equation}
We need to compute $\mathcal{F}^{\mathrm{H}}(\gamma_{\hbar})$ with $\mathcal{F}^{\mathrm{H}}$ defined in \eqref{eq:Hartrees1pfreeenergyfkt} and we start with the entropy. 

The function $f$ in the definition of the bosonic entropy in \eqref{eq:bosonicentropy} is monotone decreasing. This implies that the map $A \mapsto \trs[f(A)]$ is monotone decreasing in the sense that $A \geq B$ implies $\trs[f(A)] \leq \trs[f(B)]$ for bounded operators $A, B \geq 0$. 
Accordingly, we have
\begin{equation}
	\trs[f(\gamma_{\hbar})] \leq \trs\left[ f\left( \left( \frac{1}{2 \pi \hbar} \right)^3 \int_{\mathbb{R}^6} \left| \ell_{p,q}^{\hbar} \right \rangle \left \langle \ell_{p,q}^{\hbar} \right| \gamma^{\mathrm{sc}}(p,q) \de(p,q) \right) \right].
	\label{eq:42}
\end{equation}
An upper bound on the right-hand side is provided by the Berezin--Lieb inequality, which we state in the following Lemma. For a proof see \cite{Berezin,Lieb1973,KlauSkag2012}.
\begin{lemma}
	\label{lem:Berezin-Lieb}
	Let $\zeta : \mathbb{R}_+ \cup \{ 0 \} \to \mathbb{R}$ be a convex function and define 
	\begin{equation}
		A = \left( \frac{1}{2 \pi \hbar} \right)^3 \int_{\mathbb{R}^6} \left| \ell_{p,q}^{\hbar} \right \rangle \left \langle \ell_{p,q}^{\hbar} \right| a(p,q) \de(p,q) 
		\label{eq:43}
	\end{equation}
	with $a(p,q) \geq 0$ chosen such that $\zeta(a) \in L^1(\mathbb{R}^6)$. Then 
	\begin{equation}
		\trs[\zeta(A)] \leq \left( \frac{1}{2 \pi \hbar} \right)^3 \int_{\mathbb{R}^6} \zeta\left( a(p,q) \right) \de(p,q)
		\label{eq:44}
	\end{equation}
	holds.
\end{lemma}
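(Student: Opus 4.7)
The plan is to use the classical Jensen-inequality argument that underlies the Berezin--Lieb bound. Since $a(p,q)\geq 0$, the operator $A$ in \eqref{eq:43} is self-adjoint and nonnegative. Under the hypothesis that $\zeta(a)\in L^1(\mathbb{R}^6)$, one reduces by a truncation/monotone-convergence argument to the case $\trs A<+\infty$, so that $A$ admits a spectral decomposition $A=\sum_n \lambda_n |\psi_n\rangle\langle \psi_n|$ with $\lambda_n\geq 0$ and $\{\psi_n\}$ an orthonormal basis of eigenvectors (extended over $\ker A$ if necessary).

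The key step is to rewrite each eigenvalue using the resolution of identity \eqref{eq:resid} as
\begin{equation}
\lambda_n = \left\langle \psi_n, A\psi_n\right\rangle = \int_{\mathbb{R}^6} a(p,q)\,\de\mu_n(p,q), \qquad \de\mu_n(p,q) := \left(\frac{1}{2\pi\hbar}\right)^{3}\left| \left\langle \psi_n, \ell_{p,q}^{\hbar}\right\rangle \right|^2\,\de(p,q).
\end{equation}
The measure $\mu_n$ is a probability measure on $\mathbb{R}^6$ since $\int \de\mu_n = \|\psi_n\|^2 = 1$ by \eqref{eq:resid}. Jensen's inequality for the convex function $\zeta$ therefore gives
\begin{equation}
\zeta(\lambda_n) \leq \int_{\mathbb{R}^6} \zeta(a(p,q))\,\de\mu_n(p,q) = \left(\frac{1}{2\pi\hbar}\right)^3 \int_{\mathbb{R}^6}\left| \left\langle \psi_n, \ell_{p,q}^{\hbar}\right\rangle \right|^2 \zeta(a(p,q))\,\de(p,q).
\end{equation}

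Summing over $n$ and interchanging sum and integral yields
\begin{equation}
\trs[\zeta(A)] = \sum_n \zeta(\lambda_n) \leq \left(\frac{1}{2\pi\hbar}\right)^3 \int_{\mathbb{R}^6} \zeta(a(p,q)) \sum_n\left| \left\langle \psi_n, \ell_{p,q}^{\hbar}\right\rangle \right|^2 \,\de(p,q),
\end{equation}
and Parseval's identity $\sum_n |\langle \psi_n, \ell_{p,q}^{\hbar}\rangle|^2 = \|\ell_{p,q}^{\hbar}\|^2 = 1$ produces the desired bound \eqref{eq:44}. The main technical obstacle is justifying this interchange of sum and integral, together with the treatment of $\ker A$. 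This is handled by splitting $\zeta(t) = \zeta(0) + (\zeta(t) - \zeta(0))$: the constant contribution is dealt with separately using the convention that $\zeta(0)\cdot\dim\ker A$ enters the trace if $\zeta(0)\neq 0$, while the remainder has a fixed sign near the origin by convexity so that Tonelli's theorem applies. Under the integrability hypothesis $\zeta(a)\in L^1$ this reduction is routine, and no additional assumptions on $\zeta$ are needed beyond convexity on $\mathbb{R}_+\cup\{0\}$.
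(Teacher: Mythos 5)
The paper does not supply its own proof of Lemma~\ref{lem:Berezin-Lieb}; it cites \cite{Berezin,Lieb1973,KlauSkag2012}. Your argument --- diagonalize $A$, write each eigenvalue $\lambda_n$ as $\int a\,\de\mu_n$ with $\mu_n$ the coherent-state probability measure, apply Jensen, sum and use Parseval --- is precisely the classical Lieb argument behind those references, and it is also the mirror image of the companion lower bound the paper does spell out at \eqref{eq:propsemiclassical8} (where Jensen is applied directly to the trace against the resolution of identity). So the route is the expected one and the core of the proof is correct.

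One technical remark: the justification you offer for the sum/integral interchange is slightly off. You claim that $\zeta(t)-\zeta(0)$ ``has a fixed sign near the origin by convexity.'' Convexity does not give this; e.g.\ $\zeta(t)=(t-1)^2-1$ has $\zeta(0)=0$ but $\zeta(t)-\zeta(0)$ is negative on $(0,2)$ and positive afterwards. What convexity actually gives is a supporting line at the origin: $\zeta(t)\geq \zeta(0)+m t$ for any $m$ in the subdifferential at $0$, so $g(t):=\zeta(t)-\zeta(0)-mt\geq 0$ on all of $[0,\infty)$. One should apply Tonelli to $g(a)\geq 0$; the affine part $\zeta(0)+mt$ is handled exactly, since the linear piece satisfies the Berezin--Lieb identity $\trs A = (2\pi\hbar)^{-3}\int a$ and the constant piece produces the $\zeta(0)\cdot\dim\ker A$ bookkeeping you already flag. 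For the paper's actual use ($\zeta=f$ with $f(0)=0$ and $f\leq 0$ on $\mathbb{R}_+$, so $-f(a)\geq 0$), Tonelli applies immediately with no decomposition needed and $\ker A$ contributes nothing. So the imprecision is harmless for the application, but the ``fixed sign near the origin'' phrase should be replaced by the supporting-line argument if one wants the lemma in the generality stated.
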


In combination, Lemma~\ref{lem:Berezin-Lieb}, the convexity of $f$ and \eqref{eq:42} imply
\begin{equation}
	\trs[f(\gamma_{\hbar})] \leq \left( \frac{1}{2 \pi \hbar} \right)^3 \int_{\mathbb{R}^6} f\left( \gamma^{\mathrm{sc}}(p,q) \right) \de(p,q).
	\label{eq:45}
\end{equation}
It remains to compute the energy.
To that end, we write
\begin{equation}
	\trs[h \gamma_{\hbar}] = g N \left\langle \ell^{\hbar}_{0,0}, h \ell^{\hbar}_{0,0} \right \rangle + \left( \frac{1}{2 \pi \hbar} \right)^3 \int_{\mathbb{R}^6} \left\langle \ell^{\hbar}_{p,q}, h \ell^{\hbar}_{p,q} \right \rangle \gamma^{\mathrm{sc}}(p,q) \de(p,q)
	\label{eq:46}
\end{equation}
and note that the first term on the right-hand side is given by
\begin{equation}
	\left\langle \ell^{\hbar}_{0,0}, h \ell^{\hbar}_{0,0} \right \rangle = \hbar \left\langle \ell, (-\Delta + \omega^2 x^2/4) \ell \right \rangle.
	\label{eq:47}
\end{equation}
The inner product in the second term on the right-hand side of \eqref{eq:46} reads
\begin{align}
	\left\langle \ell^{\hbar}_{p,q}, h \ell^{\hbar}_{p,q} \right \rangle &= \hbar^2 \int_{\mathbb{R}^3} \hbar^{-3/2} \left| \hbar^{-1/2} (\nabla \ell)\left( \frac{x-q}{\hbar^{1/2}} \right) + \ell\left( \frac{x-q}{\hbar^{1/2}} \right) \frac{ip}{\hbar} \right|^2 \de x + \int_{\mathbb{R}^3} \hbar^{-3/2} \left| \ell\left( \frac{x-q}{\hbar^{1/2}} \right) \right|^2 \frac{\omega^2 x^2}{4} \de x \label{eq:48} \\
	&= p^2 + \frac{\omega^2 q^2}{4} + \hbar \omega \langle \ell, (-\Delta + \omega^2 x^2/4) \ell \rangle. \nonumber
\end{align}
To arrive at the second line, we used that $\ell$ is a radial function. In combination, \eqref{eq:46}, \eqref{eq:47} and \eqref{eq:48} imply the bound
\begin{equation}
	\trs[h \gamma_{\hbar}] \leq \left( \frac{1}{2 \pi \hbar} \right)^3 \int_{\mathbb{R}^6} \left( p^2 + \frac{\omega^2 q^2}{4} \right) \gamma^{\mathrm{sc}}(p,q) \de(p,q) + N \hbar \langle \ell, (-\Delta + \omega^2 x^2/4) \ell \rangle.
	\label{eq:49}
\end{equation}

In the last step we compute the interaction energy of $\gamma_{\hbar}$.
The density $\varrho_{\gamma_{\hbar}}$ of  $\gamma_{\hbar}$ is given by
\begin{equation}
	\varrho_{\gamma_{\hbar}}(x) = N \hbar^{-3/2}  \int_{\mathbb{R}^3} \left| \ell\left ( \frac{x-q}{\hbar^{1/2}} \right) \right|^2 \varrho^{\mathrm{sc}}(q) \de q,
	\label{eq:50}
\end{equation}
where $\varrho^{\mathrm{sc}}(q) = g^{\mathrm{sc}} \delta(q) + \varrho_{\gamma^{\mathrm{sc}}}(q)$. We need to compute $D_N(\varrho_{\gamma_{\hbar}},\varrho_{\gamma_{\hbar}})$. A second order Taylor expansion yields the bound
\begin{align}
	\hbar^{-3} \int_{\mathbb{R}^6} v(x-y)  \left| \ell\left ( \frac{x-q_1}{\hbar^{1/2}} \right) \right|^2 \left| \ell\left ( \frac{y-q_2}{\hbar^{1/2}} \right) \right|^2 \de (x,y) &= \int_{\mathbb{R}^6} v\left(q_1 - q_2 + \hbar^{1/2}(x-y) \right) \left| \ell\left( x \right) \right|^2 \left| \ell\left ( y \right) \right|^2 \de (x,y) \nonumber \\
	&\leq v(q_1-q_2) + \hbar \sup_{x \in \mathbb{R}^3} \Vert D^2 v(x) \Vert \langle \ell, x^2 \ell \rangle, \label{eq:51}
\end{align}
where $D^2 v$ denotes the Hessian of $v$. Hence
\begin{equation}
	\frac{1}{N} D(\varrho_{\gamma_{\hbar}},\varrho_{\gamma_{\hbar}}) \leq N D(\varrho^{\mathrm{sc}},\varrho^{\mathrm{sc}}) + N \hbar \sup_{x \in \mathbb{R}^3} \Vert D^2 v(x) \Vert \langle \ell, x^2 \ell \rangle \,.
	\label{eq:52}
\end{equation}
In combination, \eqref{eq:45}, \eqref{eq:49} and \eqref{eq:52} imply the final estimate
\begin{equation}
	\hbar^{3} \mathcal{F}^{\mathrm{H}}(\gamma_{\hbar}) \leq F^{\mathrm{sc}}(\beta,\omega) + \hbar \left( \langle \ell, (-\Delta + \omega^2 x^2/4) \ell \rangle + \sup_{x \in \mathbb{R}^3} \Vert D^2 v(x) \Vert \langle \ell, x^2 \ell \rangle \right). 
	\label{eq:53}
\end{equation}

\subsubsection*{Proof of the lower bound for $F^{\mathrm{H}}(\beta,N,\omega)$}
The strategy for the lower bound is to estimate the free energy $\mathcal{F}^{\mathrm{H}}(\gamma)$ for a given 1-pdm $\gamma$ with $\trs[\gamma] = N$ from below in terms of the semiclassical free energy of its Husimi function. Let $\ell : \mathbb{R}^3 \to \mathbb{R}$ be a nonnegative radial function with $\int_{\mathbb{R}^3} \ell^2(x) \de x = 1$ and $\langle \ell, (-\Delta + \omega^2 x^2) \ell \rangle < +\infty$, and recall the definition \eqref{eq:coherentstate1i}--\eqref{eq:Husimifunctioni} of the Husimi function of $\gamma$. 

We start the analysis by noting that for any Borel measure $\eta \in \mathcal{M}_+(\mathbb{R}^3)$ the positivity of $\hat v$ implies that 
\begin{equation}
\mathcal{F}^{\mathrm{H}}(\gamma) \geq \tr\left[ \left( - \hbar^2 \Delta + \frac{\omega^2 x^2}{4} + \frac{1}{N} v \ast \eta(x) \right) \gamma \right] - 
\frac{1}{\beta} s(\gamma) - \frac{1}{N} D(\eta,\eta).
\label{eq:propsemiclassical4a}
\end{equation}
 The first term on the right-hand side of \eqref{eq:propsemiclassical4a} equals
\begin{align}
&\tr\left[ \left( - \hbar^2 \Delta + \frac{\omega^2 x^2}{4} + \frac{1}{N} v \ast \eta(x)  \right) \gamma \right]  \label{eq:propsemiclassical4} \\
&\hspace{0.2cm} = \frac{1}{(2 \pi \hbar)^3} \int_{\mathbb{R}^6} \left( p^2 + \frac{\omega^2 x^2}{4} + \frac{1}{N} v \ast \eta(x) \right) m_{\gamma}(p,x) \de(p,x)  \nonumber \\
&\hspace{0.6cm}- N \hbar \int_{\mathbb{R}^3} \left| \nabla \ell(x) \right|^2 \de x + \int_{\mathbb{R}^3} \varrho_{\gamma} (x) \left( \frac{\omega^2 x^2}{4} + \frac{1}{N} v \ast \eta(x) - \left( \frac{\omega^2 (\cdot)^2}{4} + \frac{1}{N} v \ast \eta \right) \ast  | \ell_{0,0}^{\hbar} |^2(x) \right) \de x, \nonumber
\end{align}
see \cite[Eq.~(5.20)]{Li1981} or \cite[Corollary~2.5]{FouLeSol2018}.
 The fact that $\mathcal{\ell}$ is a radial function implies that
\begin{equation}
\int_{\mathbb{R}^3} \varrho_{\gamma} (x) \left( \frac{\omega^2 x^2}{4} - \frac{\omega^2 (\cdot)^2}{4} \ast  | \ell_{0,0}^{\hbar} |^2(x) \right) \de x = -N \hbar \int_{\mathbb{R}^3} \left| \ell(x) \right|^2 \frac{\omega^2 x^2}{4} \de x.
\label{eq:propsemiclassical5}
\end{equation}
Similarly, we find that
\begin{equation}
\frac{1}{N} \int_{\mathbb{R}^3} \varrho_{\gamma} (x) \left( v \ast \eta(x)  -  \left( v \ast \eta \right) \ast  | \ell_{0,0}^{\hbar} |^2(x) \right) \de x \geq -\hbar \eta\left( \mathbb{R}^3 \right) \left( \sup_{x \in \mathbb{R}^3} \left\Vert D^2 v(x) \right\Vert \right) \int_{\mathbb{R}^3}  | \ell(y) |^2 y^2 \de y \,.
\label{eq:propsemiclassical6}
\end{equation}
We  choose $\eta$ such that  $\eta( \mathbb{R}^3 ) = N$. Hence \eqref{eq:propsemiclassical4} is bounded from below by the terms in the second line minus a correction of the order $N \hbar \omega$. 

The completeness relation for coherent states in \eqref{eq:resid}, Jensen's inequality and the convexity of $f$ implies the Berezin--Lieb inequality
\begin{equation}
-s(\gamma) = \trs\left[ f(\gamma) \right] \geq \frac{1}{(2 \pi \hbar)^3} \int_{\mathbb{R}^6} f\left( m_{\gamma}(p,x) \right) \de(p,x),
\label{eq:propsemiclassical8}
\end{equation}
see also \cite{Berezin,Lieb1973,KlauSkag2012}. In combination with \eqref{eq:propsemiclassical4a}--\eqref{eq:propsemiclassical6} we thus have the lower bound
\begin{align}
\mathcal{F}^{\mathrm{H}}(\gamma) & \geq \frac{1}{(2 \pi \hbar)^3} \int_{\mathbb{R}^6} \left( p^2 + \frac{\omega^2 x^2}{4} + \frac{1}{N} v \ast \eta(x) - \frac{1}{N} v \ast \eta(0) - \mu \right) m_{\gamma}(p,x) \de(p,x) - \frac{1}{ \beta \hbar^3} S^{\mathrm{sc}} \left( m_{\gamma} \right) \label{eq:propsemiclassical9} \\
& \quad  - \frac{D(\eta,\eta)}{N} + N \left( \frac{1}{N} v \ast \eta(0) + \mu \right) - \text{const.}\, N\hbar \omega  \nonumber
\end{align}
where we added and subtracted a constant $ N \left( \frac{1}{N} v \ast \eta(0) + \mu \right)$ for convenience. 

We denote by $\varrho^{\mathrm{sc}}$ and $\mu^{\mathrm{sc}}$ the density and the chemical potential of the minimizing pair $(\gamma^{\mathrm{sc}},g^{\mathrm{sc}})$, and choose $\eta = N \varrho^{\mathrm{sc}}$ as well as $\mu = \mu^{\mathrm{sc}}$. With this choice  the first two terms on the right-hand side of \eqref{eq:propsemiclassical9} are given by
\begin{equation}
\frac{1}{\beta \hbar^3} \mathcal{S}^{\mathrm{sc}}\left( m_{\gamma}, \gamma^{\mathrm{sc}} \right) + \frac{1}{\beta (2 \pi \hbar)^3} \int_{\mathbb{R}^6} \ln\left( 1 - \exp\left( -\beta \left( p^2 + \frac{\omega^2 x^2}{4} + \frac{1}{N} v \ast \varrho^{\mathrm{sc}}(x) - \frac{1}{N} v \ast \varrho^{\mathrm{sc}}(0) - \mu^{\mathrm{sc}} \right) \right) \right) \de (p,x) \label{eq:propsemiclassical11} 
\end{equation}
where the semiclassical relative entropy $\mathcal{S}^{\mathrm{sc}}$ for two nonnegative integrable functions $a$ and $b$ on $\mathbb{R}^6$ equals
\begin{equation}
\mathcal{S}^{\mathrm{sc}}(a,b) = \frac{1}{(2 \pi)^3} \int_{\mathbb{R}^6} \left[ f(a(p,x)) - f(b(p,x)) - f'(b(p,x)) \left( a(p,x) - b(p,x) \right) \right] \de (p,x).
\label{eq:propsemiclassical10}
\end{equation}
Because of the convexity of $f$ the integrand is nonnegative, hence $\mathcal{S}^{\mathrm{sc}}$ is always well-defined (if we allow it to take the value $+\infty$).
When we combine \eqref{eq:propsemiclassical9}, \eqref{eq:propsemiclassical11} and \eqref{eq:semiclassicalfunctional5b} in Lemma~\ref{lem:semiclassicalfreeenergy}, we arrive at the lower bound
\begin{equation}\label{eq:propsemiclassical12}
\hbar^3 \mathcal{F}^{\mathrm{H}}(\gamma) \geq F^{\mathrm{sc}}(\beta,\omega) + \frac{1}{\beta} \mathcal{S}^{\mathrm{sc}}\left( m_{\gamma}, \gamma^{\mathrm{sc}} \right) - {\text{const.}}\, \hbar \omega  \,. 
\end{equation}
The semiclassical relative entropy can be dropped for a lower bound. In combination with \eqref{eq:53} we thus obtain \eqref{eq:propsemiclassical1a}. In Section~\ref{sec:weakBEC} we will use \eqref{eq:propsemiclassical12} with the term $\mathcal{S}^{\mathrm{sc}}( m_{\gamma}, \gamma^{\mathrm{sc}} )$ in order to prove Corollary~\ref{cor:weakBEC}.
\subsection{Asymptotics of the minimizer of $\mathcal{F}^{\mathrm{H}}$}
In this Section we prove the claimed asymptotics for the minimizer $\gamma^{\mathrm{H}}$ of $\mathcal{F}^{\mathrm{H}}$ in Theorem~\ref{thm:limitHartreetheory}. For this purpose we shall need a refined lower bound for $\mathcal{F}^{\mathrm{H}}$. This is necessary because it is the Husimi function of $Q^{\mathrm{H}} \gamma^{\mathrm{H}}$ that converges to $\gamma^{\mathrm{sc}}$ and not the one of $\gamma^{\mathrm{H}}$. Here $Q^{\mathrm{H}}$ denotes the projection onto the orthogonal complement of the eigenspace corresponding to the largest eigenvalue of $\gamma^{\mathrm{H}}$. This makes it necessary to obtain a bound with $m_{\gamma}$ replaced by $m_{Q\gamma}$ in the semiclassical relative entropy in \eqref{eq:propsemiclassical12}, where $Q$ is defined as in the case of $\gamma^{\mathrm{H}}$ but w.r.t. $\gamma$. We start by deriving this bound. 

\subsubsection*{Refined lower bound for $\mathcal{F}^{\mathrm{H}}$}

Let $\gamma \in \mathcal{D}_N^{\mathrm{H}}$, denote by $P$ the projection onto the eigenspace of its largest eigenvalue and define $Q = 1 - P$. Since $- \hbar^2 \Delta + \frac{\omega^2 x^2}{4} +  v \ast \varrho^{\mathrm{sc}}(x) -  v \ast  \varrho^{\mathrm{sc}}(0) - \mu^{\mathrm{sc}} \geq 0$, which follows from \eqref{eq:34}, we have
\begin{equation}
\tr\left[ \left(- \hbar^2 \Delta + \frac{\omega^2 x^2}{4} +  v \ast \varrho^{\mathrm{sc}}(x) -  v \ast  \varrho^{\mathrm{sc}}(0) - \mu^{\mathrm{sc}}  \right) \gamma \right]  \geq \tr\left[ \left(- \hbar^2 \Delta + \frac{\omega^2 x^2}{4} +  v \ast \varrho^{\mathrm{sc}}(x) -  v \ast  \varrho^{\mathrm{sc}}(0) - \mu^{\mathrm{sc}}  \right) Q\gamma \right] 
\end{equation}
and we can then proceed as in \eqref{eq:propsemiclassical4}--\eqref{eq:propsemiclassical6} above. 
For the next step we need the following Lemma.
\begin{lemma}
	\label{lem:convexity}
	Let $A$ be a self-adjoint operator on a Hilbert space $\mathcal{H}$ whose spectrum consists of eigenvalues, and let $Q$ be an orthogonal projection. By $\sigma(A)$ we denote the spectrum of $A$ and by $\text{Conv}\left( \sigma(A) \right)$ its convex hull. Let $f : \text{Conv}\left( \sigma(A) \right) \to \mathbb{R}$ be a convex function and assume that $Q f(A) Q$ and $f(QAQ)$ are trace-class. Then we have
	\begin{equation}
	\trs[Q f(A) Q] \geq \trs[f(QAQ)].
	\end{equation} 
	\end{lemma}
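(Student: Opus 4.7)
The plan is to reduce the operator inequality to a scalar Jensen's inequality, applied in a basis of $\mathrm{ran}(Q)$ that diagonalizes the restriction of $QAQ$. The crucial point is that every unit vector in $\mathrm{ran}(Q)$ sees $A$ and $QAQ$ with the same expectation, so the convexity of $f$ carries over from the eigenvalues of $A$ to those of $QAQ$.

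First I would establish the pointwise bound $\langle \psi, f(A)\psi\rangle \geq f(\langle \psi, A\psi\rangle)$ for every unit vector $\psi \in \mathrm{ran}(Q)$. Since $\sigma(A)$ consists of eigenvalues by hypothesis, expanding $\psi = \sum_i c_i e_i$ in an orthonormal eigenbasis $Ae_i = \lambda_i e_i$ with $\sum_i |c_i|^2 = 1$ reduces this to Jensen's inequality for the convex function $f$ applied to the probability measure $\{|c_i|^2\}$ supported on $\sigma(A) \subset \mathrm{Conv}(\sigma(A))$.

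Next I would pick an orthonormal basis $\{\psi_j\}$ of $\mathrm{ran}(Q)$ that diagonalizes the restriction $QAQ|_{\mathrm{ran}(Q)}$, with eigenvalues $\mu_j$. Since $Q\psi_j = \psi_j$, the expectation $\langle \psi_j, A\psi_j\rangle$ equals $\langle \psi_j, QAQ\psi_j\rangle = \mu_j$. Applying the scalar bound from the previous step to each $\psi_j$ and summing yields
\begin{equation}
\trs[Q f(A) Q] = \sum_j \langle \psi_j, f(A)\psi_j\rangle \geq \sum_j f(\mu_j) = \trs[f(QAQ)],
\end{equation}
where the last equality uses that $QAQ$ vanishes on $\mathrm{ran}(Q^\perp)$ and thus only contributes the additional eigenvalue $0$ there.

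The main technical point lies in this last identification: since $\mathrm{ran}(Q^\perp)$ may be infinite-dimensional, $f(QAQ)$ a priori contributes $f(0)$ times the multiplicity of the zero eigenvalue to its trace. The trace-class hypothesis on $f(QAQ)$ is precisely what is needed to control this contribution, and in the intended applications (with $f$ the bosonic entropy, for which $f(0) = 0$) it vanishes altogether. Once this bookkeeping is handled the proof is complete; the hypothesis that $Qf(A)Q$ is trace-class simultaneously ensures the left-hand side is finite, so no convergence issues arise when summing the pointwise bounds.
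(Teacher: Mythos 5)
Your proof is correct and follows essentially the same route as the paper's: diagonalize $QAQ$ restricted to $\mathrm{ran}(Q)$ with an orthonormal basis $\{w_j\}$, apply Jensen's inequality to each $\langle w_j, f(A) w_j\rangle$ by expanding $w_j$ in the eigenbasis of $A$, and sum over $j$. You also explicitly flag the bookkeeping step concerning the $f(0)$-contribution from $\mathrm{ran}(Q^\perp)$ to $\trs[f(QAQ)]$, which the paper's proof passes over silently (it is harmless in the intended application since the bosonic entropy function satisfies $f(0)=0$).
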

	
	\begin{proof}
		Denote by $\{ a_{i} \}_{i=1}^{\infty}$ and $\{ v_{i} \}_{i=1}^{\infty}$ the eigenvalues and eigenvectors of $A$. Moreover, let $\{ w_{j} \}_{j=1}^{\infty}$ be the eigenvectors of $QAQ$ in the range of $Q$ and let $\{ \widetilde{a}_{j} \}_{j=1}^{\infty}$ be the corresponding eigenvalues. An application of Jensen's inequality yields
		\begin{align}
		\trs[Q f(A) Q] &= \sum_{j=1}^{\infty} \langle w_{j}, f(A) w_{j} \rangle = \sum_{j=1}^{\infty} \sum_{i=1}^{\infty} f(a_i) | \langle w_j, v_i \rangle |^2 \label{eq:33} \\
		&\geq \sum_{j=1}^{\infty} f\left( \sum_{i=1}^{\infty} a_i | \langle w_j, v_i \rangle |^2 \right) = \sum_{j=1}^{\infty} f\left( \langle w_j, A w_j \rangle \right) = \trs[f(QAQ)]. \nonumber
		\end{align}
	\end{proof}

The bosonic entropy function $f$ in \eqref{eq:bosonicentropy} is convex and decreasing. Hence Lemma~\ref{lem:convexity} implies that
\begin{equation}
- s(\gamma)  = \trs\left[ P f(\gamma) P \right] + \trs\left[ Q f(\gamma) Q \right] \geq f(N) + \trs\left[ f(Q \gamma Q) \right].
\label{eq:propsemiclassical7}
\end{equation}
We use the above considerations, \eqref{eq:propsemiclassical4}--\eqref{eq:propsemiclassical8} with the obvious replacements and \eqref{eq:propsemiclassical11} to arrive at
\begin{equation}\label{eq:propsemiclassical12b}
\hbar^3 \mathcal{F}^{\mathrm{H}}(\gamma) \geq F^{\mathrm{sc}}(\beta,\omega) + \frac{1}{\beta} \mathcal{S}^{\mathrm{sc}}\left( m_{Q \gamma}, \gamma^{\mathrm{sc}} \right) - \frac{\text{const.} \ln N}{\beta N} - \text{const.}\, \hbar\omega
\end{equation}

\subsubsection*{Asymptotics of $\gamma^{\mathrm{H}}$}

From \eqref{eq:53} and \eqref{eq:propsemiclassical12b} we know that the Husimi function of $\gamma^{\mathrm{H}}$ obeys the bound
\begin{equation}
\mathcal{S}^{\mathrm{sc}}\left( m_{Q^{\mathrm{H}} \gamma^{\mathrm{H}}}, \gamma^{\mathrm{sc}} \right) \lesssim  \beta \hbar \omega.
\label{eq:propsemiclassicalminimizer1}
\end{equation}
In order to obtain the claimed bound on the $L^1(\mathbb{R}^6)$-norm distance of $m_{Q^{\mathrm{H}} \gamma^{\mathrm{H}}}$ and $\gamma^{\mathrm{sc}}$, we need the following Lemma. It is motivated by \cite[Lemma~4.1]{me,me2}.
\begin{lemma}
	\label{lem:semiclassicalcoercivity}
	There exists a constant $C>0$ such that for any two nonnegative functions $a, b \in L^1(\mathbb{R}^6)$ we have 
	\begin{equation}
	\mathcal{S}^{\mathrm{sc}}\left( a, b \right) \geq C \frac{\left( \int_{\mathbb{R}^6} \left| a(p,x) - b(p,x) \right| \de(p,x) \right)^2}{\int_{\mathbb{R}^6} \left( a(p,x) + b(p,x) \right) \left( 1 + b(p,x) \right) \de (p,x)}.
	\label{eq:propsemiclassicalminimizer2}
	\end{equation}
\end{lemma}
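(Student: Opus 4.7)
The plan is to reduce the statement to a pointwise inequality for the integrand of $\mathcal{S}^{\mathrm{sc}}$, and then to conclude via a single application of the Cauchy--Schwarz inequality. Concretely, I would first establish that there is an absolute constant $c > 0$ such that
\[
f(\alpha) - f(\beta) - f'(\beta)(\alpha - \beta) \geq c\,\frac{(\alpha - \beta)^2}{(\alpha+\beta)(1+\beta)}
\]
for all $\alpha, \beta \geq 0$, with $f$ the bosonic entropy function in \eqref{eq:bosonicentropy}. Noting that $f''(t) = 1/[t(1+t)]$, the Taylor representation
\[
f(\alpha) - f(\beta) - f'(\beta)(\alpha - \beta) = \int_{\min(\alpha,\beta)}^{\max(\alpha,\beta)} \frac{|\alpha - t|}{t(1+t)}\de t
\]
is the natural starting point.

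The pointwise inequality is then proved by a case distinction according to the ratio $\alpha/\beta$. For $\alpha \leq \beta$ the integrand can be bounded from below by $(t-\alpha)/[\beta(1+\beta)]$ on $[\alpha,\beta]$, giving $f(\alpha) - f(\beta) - f'(\beta)(\alpha-\beta) \geq (\alpha-\beta)^2/[2\beta(1+\beta)]$, from which the target follows since $\beta(1+\beta) \leq (\alpha+\beta)(1+\beta)$. A similar argument, using $\alpha(1+\alpha) \leq 16\beta(1+\beta)$ on the interval of integration, covers the intermediate regime $\beta \leq \alpha \leq 4\beta$. The main obstacle is the complementary regime $\alpha > 4\beta$: there the naive estimate $f''(t) \geq 1/[\alpha(1+\alpha)]$ is insufficient, because $\alpha(1+\alpha)$ may greatly exceed the target denominator $(\alpha+\beta)(1+\beta) \sim \alpha(1+\beta)$. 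To handle this regime I would restrict the integral to the sub-interval $[\beta, 2\beta]$, on which $f''(t) \geq 1/[4\beta(1+\beta)]$ and $\alpha - t \geq \alpha/2$, yielding
\[
f(\alpha) - f(\beta) - f'(\beta)(\alpha-\beta) \geq \frac{\alpha}{8(1+\beta)}.
\]
Combined with the elementary inequality $(\alpha-\beta)^2 \leq \alpha(\alpha+\beta)$ valid for $\alpha \geq \beta$, this yields the claimed pointwise bound in the remaining regime.

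With the pointwise inequality at hand, the lemma follows from Cauchy--Schwarz applied with weight $\sqrt{(a+b)(1+b)}$:
\[
\left(\int_{\mathbb{R}^6} |a-b| \de(p,x)\right)^2 \leq \int_{\mathbb{R}^6} \frac{(a-b)^2}{(a+b)(1+b)}\de(p,x) \cdot \int_{\mathbb{R}^6}(a+b)(1+b)\de(p,x).
\]
Applying the pointwise bound with $\alpha = a(p,x)$ and $\beta = b(p,x)$ and integrating estimates the first factor on the right-hand side by $c^{-1}(2\pi)^3\,\mathcal{S}^{\mathrm{sc}}(a,b)$, and rearranging yields \eqref{eq:propsemiclassicalminimizer2} with a constant $C$ proportional to $c$.
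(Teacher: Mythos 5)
Your proof is correct, and its overall structure — a pointwise lower bound on the integrand of the relative entropy followed by a single Cauchy--Schwarz application — is the same as the paper's. There are two cosmetic differences. First, the paper does not prove the pointwise inequality but imports it from an earlier reference (Eqs.~(4.11)\,\&\,(4.12) of \cite{me2}); your Taylor-remainder derivation with the three cases $\alpha\le\beta$, $\beta\le\alpha\le4\beta$, and $\alpha>4\beta$ is a clean self-contained proof of that fact (the case $\beta=0$, $\alpha>0$ is degenerate, but there the relative entropy is $+\infty$ and the inequality holds trivially, as the paper also notes when defining $\mathcal{S}^{\mathrm{sc}}$). Second, the paper weakens the pointwise bound to $(\sqrt a - \sqrt b)^2/(1+b)$ before applying Cauchy--Schwarz via the splitting $|a-b|=|\sqrt a-\sqrt b|(\sqrt a+\sqrt b)$, and then needs the extra step $(\sqrt a+\sqrt b)^2\le 2(a+b)$; your direct splitting $|a-b|=\bigl(|a-b|/\sqrt{(a+b)(1+b)}\bigr)\cdot\sqrt{(a+b)(1+b)}$ reaches the conclusion in one shot and gives a marginally better constant. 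Both routes rest on the same key pointwise estimate, so this is not a genuinely different argument, just a slightly more streamlined packaging of the same idea.
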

\begin{proof}
	From \cite[Eqs.~(4.11) \& (4.12)]{me2} we know that
	\begin{equation}
	f(x) - f(y) - f'(y)(x-y) \geq \text{const. } \frac{\left( x- y \right)^2}{(1+y)(x+y)} \geq \text{const. } \frac{\left( \sqrt{x}- \sqrt{y} \right)^2}{1+y} 
	\label{eq:propsemiclassicalminimizer3}
	\end{equation}
	holds for all $x,y > 0$. Hence,
	\begin{equation}
	\mathcal{S}^{\mathrm{sc}}\left( a, b \right) \geq \text{const. } \int_{\mathbb{R}^6}  \frac{\left( a(p,x)^{1/2}- b(p,x)^{1/2} \right)^2}{1+b(p,x)} \de(p,x).
	\label{eq:propsemiclassicalminimizer4}
	\end{equation}
	With the aid of the Cauchy--Schwarz inequality, we also see that
	\begin{align}
	&\int_{\mathbb{R}^6} \left| a(p,x) - b(p,x) \right| \de(p,x) \leq \left( \int_{\mathbb{R}^6} \frac{\left| a(p,x)^{1/2} - b(p,x)^{1/2} \right|^2}{1+b(p,x)} \de (p,x) \right)^{1/2} \label{eq:propsemiclassicalminimizer5} \\
	&\hspace{6cm} \times \left( \int_{\mathbb{R}^6} \left( a(p,x)^{1/2} + b(p,x)^{1/2} \right)^2 ( 1 + b(p,x) ) \de (p,x) \right)^{1/2}. \nonumber
	\end{align}
	For the term in the second line we use the upper bound $\left( a(p,x)^{1/2} + b(p,x)^{1/2} \right)^2 \leq 2 a(p,x) + 2 b(p,x)$. In combination, \eqref{eq:propsemiclassicalminimizer4} and \eqref{eq:propsemiclassicalminimizer5} imply the claim.
\end{proof}
By combining Lemma~\ref{lem:semiclassicalcoercivity} and the bound \eqref{eq:propsemiclassicalminimizer1}, we find
\begin{equation}
\int_{\mathbb{R}^6} \left| m_{Q^{\mathrm{H}} \gamma^{\mathrm{H}}}(p,x) - \gamma^{\mathrm{sc}}(p,x) \right| \de(p,x)  \label{eq:propsemiclassicalminimizer6}  \lesssim \left(\hbar \beta\omega \right)^{1/2} \left( \int_{\mathbb{R}^6} \left( m_{Q^{\mathrm{H}} \gamma^{\mathrm{H}}}(p,x) + \gamma^{\mathrm{sc}}(p,x) \right) \left( 1 + \gamma^{\mathrm{sc}}(p,x) \right) \de(p,x) \right)^{1/2}. 
\end{equation}
From \eqref{eq:semiclassicalfunctional5} and \eqref{eq:34} we see that 
\begin{equation}
	\Vert \gamma^{\mathrm{sc}} \Vert_{L^q(\mathbb{R}^3)} \leq \left( \int_{\mathbb{R}^6} \left( \frac{1}{ \exp \left( \beta \left( p^2 + c \omega^2 x^2 \right) - 1 \right) } \right)^q \de(p,x) \right)^{1/q} \lesssim \left( \frac{1}{\beta \omega} \right)^{3/q}
	\label{eq:54}
\end{equation}
holds for some $c>0$ if $1 \leq q < 3$. It remains to bound the integrals over $m_{Q^{\mathrm{H}} \gamma^{\mathrm{H}}}$ and over $m_{Q^{\mathrm{H}} \gamma^{\mathrm{H}}}$ times $\gamma^{\mathrm{sc}}$ in \eqref{eq:propsemiclassicalminimizer6}. For the latter integral we use the bound
\begin{equation}
\int_{\mathbb{R}^6} m_{Q^{\mathrm{H}} \gamma^{\mathrm{H}}}(p,x) \gamma^{\mathrm{sc}}(p,x) \de(p,x) \leq \left\Vert m_{Q^{\mathrm{H}} \gamma^{\mathrm{H}}} \right\Vert_{L^{\infty}(\mathbb{R}^6)}^{1-1/p} \left\Vert m_{Q^{\mathrm{H}} \gamma^{\mathrm{H}}} \right\Vert_{L^{1}(\mathbb{R}^6)}^{1/p} \left\Vert \gamma^{\mathrm{sc}} \right\Vert_{L^{q}(\mathbb{R}^6)},
\label{eq:propsemiclassicalminimizer7}
\end{equation}
which follows from H\"older's inequality for $1 = p^{-1} + q^{-1}$. To obtain a bound for the $L^{\infty}(\mathbb{R}^6)$-norm of $m_{Q^{\mathrm{H}} \gamma^{\mathrm{H}}}$ we need the following Lemma, which provides us with an estimate for the spectral gap of the Hartree operator.

\begin{lemma}
	\label{lem:spactralgapSCMF}
	Let $\varrho : \mathbb{R}^3 \to \mathbb{R}$ be a nonnegative function with $\int_{\mathbb{R}^3} \varrho(x) \de x = 1$ and $\varrho(-x) = \varrho(x)$, and denote by $\Delta e$ the spectral gap of the operator $H = -\hbar^2 \Delta + \omega^2 x^2/4 + v \ast \varrho(x)$ above its unique ground state. Under Assumptions~\ref{as:regularitypotential}  on $v$ there exists a constant $c>0$ independent of $\varrho$, $\omega$ and $\hbar$ such that 
		\begin{equation}
			\Delta e \geq c \hbar \omega \,.
		\end{equation}
\end{lemma}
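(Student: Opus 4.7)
The strategy is to exploit the uniform strict convexity of the effective potential $V(x) := \omega^2 x^2/4 + v\ast\varrho(x)$ and then extract the spectral gap via a Brascamp--Lieb inequality applied to the ground-state measure. Assumption~\ref{as:regularitypotential} together with the operator-norm bound $\|D^2(v\ast\varrho)(y)\| \leq \sup_x\|D^2 v(x)\|$ gives
\[
D^2 V(y) = \tfrac{\omega^2}{2} I + D^2(v\ast\varrho)(y) \;\geq\; \delta_0\, I,\qquad \delta_0 := \tfrac{\omega^2}{2} - \sup_x \|D^2 v(x)\| > 0.
\]
Because $v$ and $\varrho$ are even, $\nabla V(0) = 0$, so $x = 0$ is the unique global minimum of $V$ and Taylor's theorem reproduces the bound $V(x) - V(0) \geq (\delta_0/2)|x|^2$ already noted in \eqref{eq:34}.

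Next I would establish quantitative log-concavity of the positive ground state $\psi_0 = e^{-W}$. The equation $H\psi_0=\lambda_0\psi_0$ yields the Riccati identity $V - \lambda_0 = \hbar^2(|\nabla W|^2 - \Delta W)$, and applying $\partial_i\partial_j$ to both sides gives the matrix-valued identity
\[
D^2 V = 2\hbar^2 (D^2 W)^2 + 2\hbar^2 (\nabla W \cdot \nabla)(D^2 W) - \hbar^2 \Delta(D^2 W).
\]
Let $\mu(x) := \lambda_{\min}(D^2 W(x))$. Agmon-type decay of $\psi_0$ together with $v\ast\varrho(x) \to 0$ as $|x|\to\infty$ implies $D^2 W(x) \to (\omega/(2\hbar))\, I$ at infinity, so $\mu$ is continuous and attains (or approaches) its infimum. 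At a minimizer $x^*$ with unit eigenvector $e$ of $D^2 W(x^*)$, the scalar function $x \mapsto e^T D^2 W(x) e$ has a local minimum at $x^*$ (vanishing gradient, non-negative Laplacian). Testing the matrix identity against $e$ at $x^*$ and using $D^2 V(x^*) \geq \delta_0 I$ produces $\delta_0 \leq 2\hbar^2\mu(x^*)^2$, hence the quantitative log-concavity bound $D^2 W(x) \geq (\sqrt{\delta_0}/(\sqrt{2}\,\hbar))\, I$ for every $x \in \mathbb{R}^3$.

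Finally, the unitary ground-state transformation $\psi \mapsto \phi := \psi/\psi_0$ conjugates $H - \lambda_0$ to the symmetric Dirichlet-form generator $L\phi = -\hbar^2\Delta\phi - 2\hbar^2(\nabla\log\psi_0)\cdot\nabla\phi$ on $L^2(\psi_0^2\,dx)$, whose spectral gap above zero equals $\hbar^2/C_P$, where $C_P$ is the Poincar\'e constant of the probability measure proportional to $e^{-2W}\,dx$. Since $D^2(2W) \geq (\sqrt{2\delta_0}/\hbar)\, I$, the Brascamp--Lieb inequality for strictly log-concave measures yields $C_P \leq \hbar/\sqrt{2\delta_0}$, and combining,
\[
\Delta e \;=\; \lambda_1(H) - \lambda_0(H) \;\geq\; \hbar\sqrt{2\delta_0} \;\geq\; c\,\hbar\omega,
\]
with $c = \sqrt{2\delta_0/\omega^2} = \sqrt{1 - 2\sup_x\|D^2 v(x)\|/\omega^2} > 0$, which is independent of both $\varrho$ and $\hbar$.

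The main obstacle is the maximum-principle step for $\mu$: one must verify $W \in C^2(\mathbb{R}^3)$ with the claimed asymptotic Hessian at infinity and rigorously justify the local-minimum argument (including the case where the infimum is only approached at infinity, which is handled by the strict inequality $\omega/(2\hbar) > \sqrt{\delta_0}/(\sqrt 2\,\hbar)$). Once those regularity and decay ingredients are secured, the differential inequality on $D^2 W$ is classical in the theory of log-concave ground states. A purely operator-theoretic alternative---comparing $H$ directly with isotropic harmonic oscillators $-\hbar^2\Delta + c_{\pm}|x|^2$ via min-max---produces a useful gap bound only when $\delta_0$ is a non-negligible fraction of $\omega^2$, and thus would not suffice for general interaction potentials satisfying only the strict inequality in Assumption~\ref{as:regularitypotential}.
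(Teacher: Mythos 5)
Your proposal is correct in substance and in its endpoint, but it takes a genuinely different route to the key quantitative log-concavity estimate than the paper does. Both proofs reduce the spectral gap to a Poincar\'e inequality for the ground-state measure $\psi_0^2\,\de x$ via the ground-state transform, and both then close via the Brascamp--Lieb variance inequality \cite[Theorem~4.1]{BraLi1976}. Where you diverge is in how the required lower bound on $-D^2\ln\psi_0^2$ is obtained. The paper's proof splits the potential as $V = c\omega^2 x^2 + W(x)$ with $W$ convex (possible by \eqref{eq:condintpot} for some $0<c< 1/4$), and then invokes Brascamp--Lieb's factorization theorem \cite[Theorem~6.1]{BraLi1976} to write $\psi_0 = (\text{Gaussian for } -\hbar^2\Delta + c\omega^2 x^2)\times\phi$ with $\phi$ log-concave; the Gaussian factor alone then delivers $-D^2\ln\psi_0^2 \geq \sqrt{c}\,\omega\hbar^{-1} I$. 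You instead derive the Hessian bound directly from the matrix Riccati identity and a Korevaar-type maximum-principle argument on $\lambda_{\min}(D^2 W)$, which avoids splitting off any Gaussian and in fact yields the slightly sharper constant $\sqrt{2\delta_0}/\hbar$ (versus the paper's $\sqrt{c}\,\omega/\hbar$ with $c<1/4$). The trade-off is exactly what you flag: your route requires $W = -\ln\psi_0 \in C^4$ (or a weak formulation of the maximum principle), the Agmon-type second-order asymptotics $D^2 W \to (\omega/(2\hbar))I$ at infinity, and a careful treatment of the nonsmoothness of $\lambda_{\min}$ at eigenvalue crossings (fixed-eigenvector comparison) — all technical points the paper sidesteps entirely by citing Brascamp--Lieb's theorem. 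Your concluding observation, that a crude operator comparison $H \geq -\hbar^2\Delta + c_{-}|x|^2$ would not suffice in general, is also a correct and useful remark: such a bound controls the ground-state energy but not the gap, precisely because the min-max argument for the second eigenvalue loses the strict convexity information that Brascamp--Lieb retains.
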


\begin{proof}
	The assumption on the Hessian of $v$ in \eqref{eq:condintpot} guarantees that there exists a constant $c>0$ such that the potential $\omega^2 x^2/4 + v \ast \varrho(x) - c \omega^2 x^2$ is convex. Accordingly, we have $H = - \hbar^2 \Delta + c \omega^2 x^2 + W(x)$, where $W(x)$ is convex. It follows from the work of Brascamp and Lieb \cite{BraLi1976} that $\Delta e$ is bounded from below by the spectral gap of the operator $- \hbar^2 \Delta + c \omega^2 x^2$, which is given by $\hbar \omega \sqrt{c}$. In order to see this, we note that the ground state of $H$ can be written as 
	\begin{equation}
		\psi_0(x) = \left( \frac{\omega \sqrt{c}}{2 \pi} \right)^{3/4} \exp\left( -\frac{ \sqrt{c} \omega x^2}{4} \right) \phi(x)  
		\label{eq:141}
	\end{equation} 
	with a log concave function $\phi$, see \cite[Theorem~6.1]{BraLi1976}. The function on the right-hand side multiplying $\phi(x)$ is the ground state of $H - W(x)$. For the first excited state of $H$ we make the ansatz $\psi(x) = \psi_0(x) \chi(x)$ and note that its energy can be written as
	\begin{equation}
		\langle \psi, H \psi \rangle = e_0 + \int_{\mathbb{R}^3} \left| \nabla \chi(x) \right|^2 \psi_0(x)^2 \de x,
		\label{eq:142}
	\end{equation}
	where $e_0$ denotes the ground state energy of $H$. With \eqref{eq:142} we conclude that the spectral gap of  $H$ can be written as
	\begin{equation}
		\Delta e = \inf_{\Vert \psi_0 \chi \Vert = 1, \ \langle \psi_0^2, \chi \rangle = 0} \int_{\mathbb{R}^3} \left| \nabla \chi(x) \right|^2 \psi_0(x)^2 \de x.
		\label{eq:143}
	\end{equation} 
	From \cite[Theorem~4.1]{BraLi1976} and $\int_{\mathbb{R}^3} \psi^2_0(x) \chi(x) \de x = 0$ we know that
	\begin{equation}
		\int_{\mathbb{R}^3} \psi^2_0(x) \left| \chi(x) \right|^2 \de x \leq - \left\langle \nabla \chi, \left[ D^2 \ln\left( \psi^2_0 \right) \right]^{-1} \nabla \chi \right\rangle,
		\label{eq:144}
	\end{equation}
	where $[ D^2 \ln( \psi^2_0 ) ]^{-1}$ denotes the inverse of the Hessian of $\ln(\psi_0^2)$. Since $\phi$ is log concave, 
	\begin{equation}
		- D^2 \ln\left( \psi^2_0(x) \right) = \left( \hbar \omega \sqrt{c} \right)^{-1} - D^2 \ln\left( \phi^2(x) \right) \geq \left( \hbar \omega \sqrt{c} \right)^{-1}
		\label{eq:145}
	\end{equation}
	holds. In combination, \eqref{eq:144} and \eqref{eq:145} imply
	\begin{equation}
		\int_{\mathbb{R}^3} \left| \nabla \chi(x) \right|^2 \psi_0(x)^2 \de x \geq \hbar \omega \sqrt{c},
		\label{eq:146}
	\end{equation}
	and prove the claim that the spectral gap of $H$ is at least as large as the one of $H-W(x)$. This proves Lemma~\ref{lem:spactralgapSCMF}. 
\end{proof}

To apply Lemma~\ref{lem:spactralgapSCMF} to the Hartree operator $-\hbar^2 \Delta + \omega^2 x^2/4 + N^{-1} v \ast \varrho_{\gamma^{\mathrm{H}}}(x)$, we note that the uniqueness of the minimizer of $\mathcal{F}^{\mathrm{H}}$, see Lemma~\ref{lem:minimizersHartree}, implies that $\varrho_{\gamma^{\mathrm{H}}}(-x) = \varrho_{\gamma^{\mathrm{H}}}(x)$. Using Lemma~\ref{lem:spactralgapSCMF}, we can therefore estimate the $L^{\infty}(\mathbb{R}^3)$-norm of $m_{Q^{\mathrm{H}} \gamma^{\mathrm{H}}}$  as 
\begin{align}
\sup_{(p,q) \in \mathbb{R}^6} \left\langle \ell^{\hbar}_{p,q}, Q^{\mathrm{H}} \gamma^{\mathrm{H}} \ell^{\hbar}_{p,q} \right) &\leq \sup_{\Vert \psi \Vert = 1} \left\langle \psi, Q^{\mathrm{H}} \frac{1}{\exp\left( \beta\left( -\hbar^2 \Delta + \frac{\omega^2 x^2}{4} + \frac{1}{N} v \ast \varrho_{\gamma^{\mathrm{H}}}(x) - \mu^{\mathrm{H}} \right)\right) - 1} \psi \right\rangle \label{eq:propsemiclassicalminimizer8} \\
&\lesssim ( \beta \hbar \omega)^{-1}, \nonumber
\end{align}
where  $\mu^{\mathrm{H}}$ denotes the Hartree chemical potential.  

Next we derive a bound for the $L^1(\mathbb{R}^6)$-norm of $m_{Q^{\mathrm{H}} \gamma^{\mathrm{H}}}$, which satisfies
\begin{equation}
	\left( \frac{1}{2 \pi \hbar} \right)^3 \int_{\mathbb{R}^6} m_{Q^{\mathrm{H}} \gamma^{\mathrm{H}}}(p,x) \de(p,x) = \trs\left[Q^{\mathrm{H}} \frac{1}{\exp\left( \beta \left( -\hbar^2 \Delta + \frac{\omega^2 x^2}{4} + \frac{1}{N} v \ast \varrho_{\gamma^{\mathrm{H}}}(x) - \mu^{\mathrm{H}} \right) \right) - 1 } \right].
	\label{eq:61}
\end{equation}
To proceed, we need the following Lemma.
\begin{lemma}
	\label{lem:semiclassicaltraceestimate}
	Let $\varrho : \mathbb{R}^3 \to \mathbb{R}$ be a nonnegative function with $\int_{\mathbb{R}^3} \varrho(x) \de x = 1$ and $\varrho(-x) = \varrho(x)$, and let  
	\begin{equation}
		H = - \hbar^2 \Delta +   \frac{\omega^2}{4} x^2  + v*\rho(x) - v*\rho(0) \,.
		\label{eq:2b}
	\end{equation}
	By $Q$ we denote the spectral projection of $H$ onto the orthogonal complement of its ground state subspace, and by $\mu$ we denote the chemical potential satisfying the equation
	\begin{equation}
		\trs\left[ \frac{1}{\exp\left( \beta \left( H - \mu \right) \right) - 1 } \right] = N = \hbar^{-3}\,.
	\end{equation}
	Let $\zeta : \mathbb{R}_+ \to \mathbb{R}$ be a nonnegative and monotone decreasing function behaving as $\zeta(x) \sim x^{-\alpha}$ for $x \to 0$ with $0 \leq \alpha < 3$ and as $\zeta(x) \sim x^{-\nu}$ for $x \to \infty$ with $\nu > 3$. Then
	\begin{equation}
		\trs\left[ Q \zeta( \beta(H - \mu)) \right] \lesssim  ( \beta \hbar \omega)^{-3} \,.
		\label{eq:65}
	\end{equation}
\end{lemma}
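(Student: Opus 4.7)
The plan is to reduce the trace to a spectral sum and control it through eigenvalue estimates for $H$. Since $Q$ is the spectral projection of $H$ onto the orthogonal complement of the ground state, $Q$ commutes with every function of $H$, and writing $e_0 < e_1 \leq e_2 \leq \ldots$ for the eigenvalues of $H$ (all discrete because $H$ has compact resolvent), one obtains
\begin{equation*}
\trs\bigl[ Q\,\zeta(\beta(H-\mu)) \bigr] = \sum_{k\geq 1} \zeta\bigl(\beta(e_k - \mu)\bigr).
\end{equation*}
Finiteness of the Bose--Einstein normalization $\trs[(e^{\beta(H-\mu)}-1)^{-1}]=N$ forces $\mu < e_0$, and since $\zeta$ is monotone decreasing the right-hand side is bounded above by $\sum_{k\geq 1} \zeta(\beta(e_k - e_0))$.

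The key analytic input is the uniform-in-$\varrho$ lower bound $e_k - e_0 \gtrsim \hbar\omega\, k^{1/3}$ for all $k\geq 1$. I would obtain it by combining two ingredients. First, Assumption~\ref{as:regularitypotential} together with the normalization of $\varrho$ yields, as in \eqref{eq:34}, the operator inequality $H \geq -\hbar^2\Delta + c\,\omega^2 x^2$ for some $c>0$. The auxiliary harmonic oscillator on the right has $k$-th ordered eigenvalue comparable to $\hbar\omega\,k^{1/3}$, so by min--max the same lower bound holds for $e_k$. A Gaussian trial state also gives $e_0 \lesssim \hbar\omega$, so $e_k - e_0 \gtrsim \hbar\omega\,k^{1/3}$ for all $k$ larger than a fixed constant $k_0$. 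Second, Lemma~\ref{lem:spactralgapSCMF} provides the gap $e_1 - e_0 \gtrsim \hbar\omega$, which via $k^{1/3} \leq k_0^{1/3}$ covers the remaining range $1\leq k< k_0$.

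With $a = \beta\hbar\omega$ and a small constant $c' > 0$, the task reduces to $\sum_{k\geq 1}\zeta(c' a\,k^{1/3}) \lesssim a^{-3}$. I would split the sum at $k\approx a^{-3}$, i.e.\ at the scale where $a\,k^{1/3}$ crosses $1$. For $k \leq a^{-3}$ the argument is small and $\zeta(t)\lesssim t^{-\alpha}$ gives
\begin{equation*}
\sum_{k\leq a^{-3}} (a k^{1/3})^{-\alpha} \lesssim a^{-\alpha}\int_1^{a^{-3}} k^{-\alpha/3}\de k \lesssim a^{-\alpha}\cdot a^{-3+\alpha} = a^{-3},
\end{equation*}
using $\alpha<3$. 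For $k > a^{-3}$ the argument is large and $\zeta(t)\lesssim t^{-\nu}$ gives
\begin{equation*}
\sum_{k>a^{-3}} (a k^{1/3})^{-\nu} \lesssim a^{-\nu}\int_{a^{-3}}^\infty k^{-\nu/3}\de k \lesssim a^{-\nu}\cdot a^{-3+\nu} = a^{-3},
\end{equation*}
using $\nu>3$. Adding the two contributions gives the desired $(\beta\hbar\omega)^{-3}$ bound; the regime $a\gtrsim 1$ is even easier and reduces to summability of $\sum k^{-\nu/3}$.

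The main obstacle is producing the uniform lower bound $e_k - e_0 \gtrsim \hbar\omega\,k^{1/3}$, since $\varrho$ may depend on $\hbar$, $\omega$ and $N$ in an essentially arbitrary way subject only to the normalization and parity constraints. The Hessian assumption \eqref{eq:condintpot} enters twice here: through \eqref{eq:34} to compare $H$ to a pure harmonic oscillator and yield Weyl-type growth for large $k$, and through Lemma~\ref{lem:spactralgapSCMF} (whose proof relies on the Brascamp--Lieb inequality) to produce the $\hbar\omega$-scale gap for small $k$. The exponents $\alpha<3$ and $\nu>3$ on $\zeta$ are precisely what makes both split integrals balance at the scale $k\sim a^{-3}$ without incurring a logarithmic loss.
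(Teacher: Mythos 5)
Your proof is correct and uses the same ingredients as the paper's: the operator comparison $H \geq -\hbar^2\Delta + c\omega^2 x^2$ coming from \eqref{eq:34}, the spectral gap from Lemma~\ref{lem:spactralgapSCMF}, an $O(\hbar\omega)$ bound on the low end of the spectrum, and Weyl-type growth of the harmonic-oscillator eigenvalues. The organization is slightly different: the paper fixes a threshold index $M$ based on the ratio $e_j/\widetilde e_j$, bounds the first $M$ terms via the gap, and bounds the tail by a Riemann sum over the explicit degeneracies; you package the same information as a single uniform bound $e_k - e_0 \gtrsim \hbar\omega k^{1/3}$ and then split the sum at $k\sim(\beta\hbar\omega)^{-3}$ according to where the small- and large-argument asymptotics of $\zeta$ take over. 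Both routes are valid and give the same result, so this is essentially the paper's argument in a mildly repackaged form.
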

\begin{proof}
	We first show that the chemical potential $\mu$ obeys the bound
	\begin{equation}
	\mu \leq \sqrt{2} \hbar \omega\,.
	\label{eq:58}
	\end{equation}
	This follows from the fact that $v*\rho(x) - v*\rho(0) \leq \omega^2 x^2/4$ (compare with \eqref{eq:34} and \eqref{eq:34n}), hence the ground state energy of $H$ is bounded from above by $\sqrt{2} \hbar\omega$. 
	
	We denote by $\{e_j\}_{j=0}^\infty$ the eigenvalues of $H$ and by $\{\widetilde{e}_j\}_{j=0}^\infty$ the eigenvalues of the operator $-\hbar^2 \Delta + c \omega^2 x^2$ with $c>0$ chosen such that \eqref{eq:34} holds. Next, we choose $M \in \mathbb{N}$ such that $e_{j} - \mu \geq \widetilde{e}_{j}/2$ for all $j > M$. With \eqref{eq:58} and $e_j \geq \widetilde{e}_j$ for all $j \geq 0$, we see that $M$ can be chosen independently of $\hbar$ and $\varrho$. 
	The monotonicity of $\zeta$ then implies that 
	\begin{equation}
		\trs\left[ Q \zeta( \beta(H - \mu)) \right] \leq \sum_{j=1}^M \zeta(\beta ( e_j - \mu )) + \sum_{j=M+1}^{\infty} \zeta (\beta \ \widetilde{e}_j /2 ).
		\label{eq:64}
	\end{equation}
	An application of Lemma~\ref{lem:spactralgapSCMF} shows that the first term on the right-hand side of \eqref{eq:64} is bounded from above by a constant times $M ( \beta \hbar \omega )^{-\alpha}$ for $\beta \hbar \omega \leq 1$. If $\beta \hbar \omega > 1$ it is bounded from above by a constant times $M ( \beta \hbar \omega )^{-\nu}$.
	
	To estimate the second term on the right-hand side of \eqref{eq:64}, we note that the eigenvalues $\widetilde{e}_j$ are given by $\sqrt{c} \hbar \omega (n+3/2)$ with $n \in \mathbb{N}_0$, and are $(n+1)(n+2)/2$-fold degenerate. When we interpret the relevant sum as a lower Riemann sum for the corresponding integral, we find 
	the bound
	\begin{equation}
	\sum_{j=M+1}^{\infty} \zeta \left( \beta \ \widetilde{e}_j \right) = \frac{1}{2} \sum_{n \geq m}  (n+1)(n+2) \zeta\left(  \beta \hbar \omega \sqrt{c} (n+3/2)/2 \right) \lesssim  \frac{1}{\left( \beta \hbar \omega\right)^{3}}.
	\label{eq:66}
	\end{equation}
	The first identity holds for an appropriately chosen $m \in \mathbb{N}_0$ depending only on $M$. 
This completes the proof of \eqref{eq:65}. 
\end{proof}

Applying Lemma~\ref{lem:semiclassicaltraceestimate} to the Hartree operator yields the bound
\begin{equation}
	\Vert m_{Q^{\mathrm{H}} \gamma^{\mathrm{H}}} \Vert_{L^1(\mathbb{R}^6)} = (2 \pi \hbar)^3 \trs[Q^{\mathrm{H}} \gamma^{\mathrm{H}}] \lesssim  ( \beta  \omega)^{-3}.
	\label{eq:59}
\end{equation}
With \eqref{eq:54}, \eqref{eq:propsemiclassicalminimizer7}, \eqref{eq:propsemiclassicalminimizer8} and \eqref{eq:59} we also conclude that
\begin{equation}
\int_{\mathbb{R}^6} m_{Q^{\mathrm{H}} \gamma^{\mathrm{H}}}(p,x) \gamma^{\mathrm{sc}}(p,x) \de(p,x) \lesssim \hbar^{-1/q} \left( \beta \omega \right)^{-3-1/q}
\label{eq:propsemiclassicalminimizer9}
\end{equation}
for any $1 \leq q < 3$. From \eqref{eq:propsemiclassicalminimizer6}, \eqref{eq:54} and \eqref{eq:propsemiclassicalminimizer9} we obtain the final estimate
\begin{equation}
\int_{\mathbb{R}^6} \left| m_{Q^{\mathrm{H}} \gamma^{\mathrm{H}}}(p,x) - \gamma^{\mathrm{sc}}(p,x) \right| \de(p,x) \lesssim \frac{\hbar^{1/2}}{\beta \omega} + \hbar^{(1-1/q)/2} \left( \frac{1}{\beta \omega} \right)^{1+1/(2q)} \,. \label{eq:propsemiclassicalminimizer10} 
\end{equation}
For $\beta \omega \gtrsim 1$, the choice $q = 3 - \sigma$ with $\sigma > 0$ implies
\begin{equation}
	\int_{\mathbb{R}^6} \left| m_{Q^{\mathrm{H}} \gamma^{\mathrm{H}}}(p,x) - \gamma^{\mathrm{sc}}(p,x) \right| \de(p,x) \lesssim \hbar^{1/2} + \hbar^{1/3-\sigma} 
\end{equation} 
and proves the second part of \eqref{eq:propsemiclassical2a}. To prove the first part, we use $(\trs[P^{\mathrm{H}} \gamma^{\mathrm{H}}] + \trs[Q^{\mathrm{H}} \gamma^{\mathrm{H}} ])/N=1$ with $P^{\mathrm{H}} = 1 - Q^{\mathrm{H}}$, $(1/(2\pi))^3 \int_{\mathbb{R}^6} \gamma^{\mathrm{sc}}(p,x) \de(p,x) + g = 1$ and the resolution of the identity \eqref{eq:resid} to find
\begin{equation}
\left| \frac{\trs[P^{\mathrm{H}} \gamma^{\mathrm{H}}]}{N} - g^{\mathrm{sc}} \right| = \frac{1}{(2 \pi )^3}  \left| \int_{\mathbb{R}^6}  \left( m_{Q^{\mathrm{H}} \gamma^{\mathrm{H}}}(p,x) - \gamma^{\mathrm{sc}}(p,x) \right) \de(p,x)\right| \,.
\label{eq:propsemiclassicalminimizer11}
\end{equation}
Hence \eqref{1.36a} follows immediately from \eqref{1.36b}. This concludes the proof of Theorem~\ref{thm:limitHartreetheory}.
\section{Bounds on the  free energy}
\label{sec:freeenergybounds}
In this section we prove upper and lower bounds on the  free energy \eqref{eq:freeenergycan} of the full quantum model that imply the claimed free energy asymptotics in Theorem~\ref{thm:freeenergyscaling2}. An important ingredient for our bounds is the canonical version of the Hartree free energy functional introduced in Section~\ref{sec:Hartreefreeenergyfunctionals}, and the bound on the difference of the canonical and the grand-canonical Hartree free energies in Lemma~\ref{lem:freeenergyboundcgc}. Our bounds apply both in the canonical and the grand-canonical setting.
\subsection{Lower bound}
\label{sec:prooflowerboundgc}
In the first step, we apply a standard technique to utilize the assumed positivity of $\hat v$  in order to replace the two-body interaction potential by an effective one-body potential.
\begin{lemma}
	\label{lem:lowerboundinteraction}
	For any $\eta \in L^1(\mathbb{R}^3)$ and any $N \in \mathbb{N}_0$ we have	
	\begin{equation}
	\sum_{1 \leq i < j \leq N} v(x_i-x_j) \geq \sum_{i=1}^N \eta \ast v(x_i) - D(\eta,\eta) - \frac{N v(0)}{2}.
	\label{eq:lowerboundgc1}
	\end{equation}
\end{lemma}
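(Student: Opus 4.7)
The plan is to exploit the positivity of $\hat{v}$ (Assumption~\ref{as:regularitypotential}) through the standard "complete the square" trick on the empirical measure. Introducing $\rho_N = \sum_{i=1}^N \delta_{x_i}$, one has, at least formally,
\begin{equation*}
\int_{\mathbb{R}^6} v(x-y)\,\rho_N(x)\,\rho_N(y)\,\de(x,y) = \sum_{i,j=1}^N v(x_i-x_j) = 2\sum_{1\leq i<j\leq N} v(x_i-x_j) + N v(0),
\end{equation*}
so that $\sum_{i<j} v(x_i-x_j) = D(\rho_N,\rho_N) - \tfrac{N v(0)}{2}$, using the extension of $D$ to measures mentioned just after \eqref{eq:abbreviationinteraction}.

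Next I would expand the manifestly nonnegative quantity $D(\rho_N-\eta,\rho_N-\eta)$, which is nonnegative because $\hat{v}\geq 0$ (so $D(\mu,\mu) = c\int \hat{v}(k)|\hat{\mu}(k)|^2\,\de k \geq 0$ for any finite signed measure or tempered distribution $\mu$ whose convolution with $v$ makes sense; here $\rho_N-\eta$ qualifies since $v$ is continuous and bounded by Assumption~\ref{as:regularitypotential}). Expanding,
\begin{equation*}
0 \leq D(\rho_N-\eta,\rho_N-\eta) = D(\rho_N,\rho_N) - 2 D(\rho_N,\eta) + D(\eta,\eta),
\end{equation*}
and a direct computation with $\rho_N$ gives $2 D(\rho_N,\eta) = \int v(x-y)\,\rho_N(x)\,\eta(y)\,\de(x,y) = \sum_{i=1}^N (v\ast \eta)(x_i)$, since $v$ is even.

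Combining the two displays yields
\begin{equation*}
\sum_{1\leq i<j\leq N} v(x_i-x_j) = D(\rho_N,\rho_N) - \frac{N v(0)}{2} \geq \sum_{i=1}^N (v\ast\eta)(x_i) - D(\eta,\eta) - \frac{N v(0)}{2},
\end{equation*}
which is \eqref{eq:lowerboundgc1}. There is no genuine obstacle here; the only mild care is in justifying $D(\rho_N-\eta,\rho_N-\eta) \geq 0$ when one of the arguments is a sum of Dirac masses, but since $v \in L^1 \cap L^\infty$ and $\hat{v} \geq 0$ is a nonnegative finite measure times a continuous density (in fact $\hat v \in L^\infty$ by $v \in L^1$), the Fourier representation $D(\mu,\mu) = \tfrac{1}{2}(2\pi)^{-3/2}\int \hat v(k)|\hat\mu(k)|^2\,\de k$ applies to the tempered distribution $\rho_N-\eta$ and is manifestly nonnegative.
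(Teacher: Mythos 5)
Your proof is correct and follows essentially the same route as the paper: the paper also uses positivity of the quadratic form $\nu\mapsto\int v(x-y)\,\de\nu(x)\,\de\nu(y)$ with the choice $\nu=\sum_i\delta_{x_i}-\eta$, which is exactly your expansion of $D(\rho_N-\eta,\rho_N-\eta)\geq 0$. You simply spell out the algebra and the justification for applying the positivity to Dirac masses more explicitly.
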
 
\begin{proof}
	Since $v$ is of positive type, we know that $\int_{\mathbb{R}^6} v(x-y) \de \nu(x) \de \nu(y) \geq 0$ holds for any signed Borel measure $\nu$ on $\mathbb{R}^3$. The statement of the Lemma follows from the choice $\nu(x) = \sum_{i=1}^N \delta_{x_i}(x) - \eta(x) \de x$, where $\delta_{y}(x)$ denotes the Dirac delta measure with unit mass at the point $y$.
\end{proof}
For the second quantized interaction potential this implies the lower bound
\begin{equation}
	\mathcal{V}_N \geq \de \Upsilon\left( N^{-1} \eta \ast v \right) - N^{-1} D(\eta, \eta) - \frac{ \mathcal{N} v(0)}{2 N} \,.
	\label{eq:lowerboundgc2}
\end{equation}
Using \eqref{eq:lowerboundgc2}, we see that the free energy of a given state $\Gamma \in \mathcal{S}_N^{\mathrm{gc}}$ is bounded from below by
\begin{align}
	\tr\left[ \mathcal{H} \Gamma \right] - \frac{1}{\beta} S(\Gamma) &\geq \tr\left[ \de \Upsilon \left( h + N^{-1} v \ast \eta (x) \right) \Gamma \right] - \frac{1}{\beta} S(\Gamma) - N^{-1} D(\eta, \eta) - \frac{ v(0)}{2} \label{eq:lowerboundgc3} \\
	&\geq \frac{1}{\beta} \tr \left[ \ln\left( 1 - e^{ - \beta \left(   h + N^{-1} v \ast \eta - \mu \right) } \right) \right] + \mu N - N^{-1} D(\eta, \eta) - \frac{ v(0)}{2}. \nonumber
\end{align}
The second inequality follows from the Gibbs variational principle and holds for any choice of  the chemical potential $\mu$. In particular, we can choose $\mu$ such that 
\begin{equation}
	\tr \left[ \frac{1}{e^{\beta \left(   h + N^{-1} v \ast \eta - \mu \right)} -1} \right] = N\,.
	\label{eq:lowerboundgc4}
\end{equation}
To obtain the optimal lower bound we take the supremum over all pairs $(\eta,\mu) \in \mathcal{D}_N^{\mathrm{gc}}$, with $\mathcal{D}_N^{\mathrm{gc}}$ defined in \eqref{eq:sec4secondlemmav21}. In combination with Lemma~\ref{lem:secondvariationalcharacterization}, this implies
\begin{equation}
	\tr\left[ \mathcal{H} \Gamma \right] - \frac{1}{\beta} S(\Gamma) \geq F^{\mathrm{H}}(\beta,N,\omega) - \frac{v(0)}{2}.
	\label{eq:lowerboundgc5}
\end{equation}
\subsection{Upper bound}
\label{sec:proofupperboundc}
To prove the upper bound, we choose the minimizer $G^{\mathrm{H},\mathrm{c}}$ in \eqref{eq:HartreeGibbsstatec} of the canonical version of the Hartree free energy functional $\mathcal{F}^{\mathrm{H},\mathrm{c}}$ in \eqref{eq:Hartreefunctional} as a trial state. To simplify the notation, we drop superscripts and denote this state in the following by $G$. We start the analysis by considering the interaction energy of $G$.
\subsubsection*{Interaction energy of $G$}
The interaction energy can be written in terms of the 2-particle density $\varrho_G^{(2)}$ of $G$ as
\begin{equation}
N^{-1} \tr\left[ \sum_{1 \leq i < j \leq N} v(x_i-x_j) \ G \right] = \frac{1}{2 N} \int_{\mathbb{R}^6} v(x - y) \varrho_G^{(2)}(x,y) \de(x,y).
\label{eq:upperboundcan1}
\end{equation}
Here $\varrho_G^{(2)}$ is normalized such that $\int_{\mathbb{R}^6} \varrho_G^{(2)}(x,y) \de(x,y) = N(N-1)$. In order to give a bound on $\varrho_G^{(2)}(x,y)$, we write it as
\begin{equation}
	\varrho_G^{(2)}(x,y) = \sum_{j \geq 0} | \varphi_j(x) |^2 | \varphi_j(y) |^2 \left\langle n_j (n_j - 1)) \right\rangle_{G} + 2 \sum_{\stackrel{i,j \geq 0}{i \neq j}} \left| \frac{1}{2} \sum_{\sigma \in S_2} \varphi_{\sigma(i)}(x) \varphi_{\sigma(j)}(y) \right|^2 \left\langle n_i n_j \right\rangle_{G}, \label{eq:upperboundcan2}
\end{equation}
where  $\{ \varphi_j \}_{j=0}^{\infty}$ denote the eigenfunctions of the Hartree operator $h + N^{-1} v \ast \varrho_G(x)$ (ordered with increasing energy), and $n_j$ counts the number of particles in the state $\varphi_j$, i.e, $n_j = \de \Upsilon( |\varphi_j\rangle\langle \varphi_j|)$.  We note that  $\{ \varphi_j \}_{j=0}^{\infty}$ are also the eigenfunctions of the 1-pdm of $G$. Since $G$ is a non-interacting Gibbs state, we know from \cite[Theorem~(ii)]{Suto} that $\langle n_i n_j \rangle_{G} \leq \langle n_i \rangle \langle  n_j \rangle_{G}$ holds as long as $i \neq j$. We use this inequality in order to bound
\begin{equation}
	2 \left| \frac{1}{2} \sum_{\sigma \in S_2} \varphi_{\sigma(i)}(x) \varphi_{\sigma(j)}(y) \right|^2 \left\langle n_i n_j \right\rangle_{G} \leq  | \varphi_i(x) |^2 | \varphi_j(y) |^2 \langle n_i \rangle_{G} \langle  n_j \rangle_{G} \label{eq:upperboundcan2b}  +  \overline{\varphi_i(x)} \overline{ \varphi_{j}(y) } \varphi_{j}(x) \varphi_{i}(y) \langle n_i \rangle_G \langle  n_j \rangle_{G}. 
\end{equation}
In combination with the estimate $n_j(n_j-1) \leq n_j^2$ for $j \geq 0$, this implies the following bound for the 2-particle density of $G$:
\begin{align}
	\varrho_G^{(2)}(x,y) \leq& \sum_{j \geq 1} | \varphi_j(x) |^2 | \varphi_j(y) |^2 \langle n_j^2 \rangle_{G} + | \varphi_0(x) |^2 | \varphi_0(y) |^2 \left( \langle n_0^2 \rangle_{G} - \langle n_0 \rangle_{G}^2 \right) \label{eq:upperboundcan3} \\
	&+\sum_{i,j \geq 0} | \varphi_i(x) |^2 | \varphi_j(y) |^2 \langle n_i \rangle_{G} \langle  n_j \rangle_{G} + \sum_{\stackrel{i,j \geq 0}{i \neq j}} \overline{\varphi_i(x)} \overline{ \varphi_{j}(y) } \varphi_{j}(x) \varphi_{i}(y) \langle n_i \rangle_G \langle  n_j \rangle_{G}. \nonumber 
\end{align}
Note that we added the positive term $\sum_{j \geq 1} | \varphi_j(x) |^2 | \varphi_j(y) |^2 \langle n_j \rangle_{G}^2 $ on the right-hand side to obtain the first term in the second line. The state $G$ has exactly $N$ particles, which implies that
\begin{equation}
	\left\langle n_0^2 \right\rangle_{G} - \left\langle n_0 \right\rangle_{G}^2 = \left\langle \left( \sum_{j\geq 1} n_j \right)^2 \right\rangle_{G} - \left\langle \sum_{j\geq 1} n_j \right\rangle_{G}^2 \,.
	\label{eq:upperboundcan3b}
\end{equation} 
Using \cite[Theorem~(ii)]{Suto} once more, we therefore have
\begin{equation}
	\langle n_0^2 \rangle_{G} - \langle n_0 \rangle_{G}^2 = \sum_{i,j \geq 1} \left( \langle n_i n_j \rangle_G - \langle n_i \rangle_G \langle n_j \rangle_G \right) \leq \sum_{j \geq 1} \langle n_j^2 \rangle_G.
	\label{eq:upperboundcan4}
\end{equation}
Additionally,  by including terms $i=j \geq 1$ in the sum, the last term on the right-hand side of \eqref{eq:upperboundcan3} is bounded from above by
\begin{equation}
	\sum_{\stackrel{i,j \geq 0}{i \neq j}} \overline{\varphi_i(x)} \overline{ \varphi_{j}(y) } \varphi_{j}(x) \varphi_{i}(y) \langle n_i \rangle_G \langle  n_j \rangle_{G} \leq 2 \text{Re} \gamma_>(x,y) N_0 \overline{\varphi_0(x)} \varphi_0(y) + | \gamma_> (x,y) |^2, \label{eq:upperboundcan5}
\end{equation}
where $\gamma_>(x,y) = \sum_{j \geq 1} \langle n_j \rangle_G \varphi_j(x) \overline{\varphi_j(y)}$ and $N_0 = \langle n_0 \rangle_G$. In combination, \eqref{eq:upperboundcan1}, \eqref{eq:upperboundcan3}, \eqref{eq:upperboundcan3b} and \eqref{eq:upperboundcan5} allow us to bound the interaction energy of $G$ (times $N$) from above by
\begin{align}
	&\frac{1}{2} \int_{\mathbb{R}^6} v(x - y) \varrho_G^{(2)}(x,y) \de(x,y) \leq  D \left( \varrho_G, \varrho_G\right) + \frac{1}{2} \int_{\mathbb{R}^6} v(x-y) | \gamma_> (x,y) |^2 \de(x,y) \label{eq:upperboundcan6} \\
	&\hspace{1cm}+ \text{Re} \int_{\mathbb{R}^6} v(x-y) \gamma_>(x,y) N_0 \overline{\varphi_0(x)} \varphi_0(y) \de(x,y) + \frac{1}{2} \int_{\mathbb{R}^6} v(x-y)  | \varphi_0(x) |^2 | \varphi_0(y) |^2 \de (x,y ) \sum_{j \geq 1} \langle n_j^2 \rangle_G \nonumber \\
	&\hspace{1cm}+ \frac{1}{2} \sum_{j \geq 1} \langle n_j^2 \rangle_{G} \int_{\mathbb{R}^6} v(x-y)  | \varphi_j(x) |^2 | \varphi_j(y) |^2 \de(x,y). \nonumber
\end{align}
It remains to estimate all terms on the right-hand side of this inequality except for the first.

We start with the term in the last line of \eqref{eq:upperboundcan6} and estimate
\begin{equation}
 \int_{\mathbb{R}^6} v(x-y)  | \varphi_j(x) |^2 | \varphi_j(y) |^2 \de(x,y) \leq  \Vert v \Vert_{L^{\infty}(\mathbb{R}^3)}.
	\label{eq:upperboundcan7}
\end{equation}
To estimate the remaining sum over $j$ in this term, and in the fourth term on the right-hand side of \eqref{eq:upperboundcan6}, we need the following Lemma.
\begin{lemma}
	\label{lem:momentbound}
	We have
	\begin{equation}
		\sum_{j \geq 1} \langle n_j^2 \rangle_{G} \lesssim \left( \frac{1}{ \beta \hbar \omega } \right)^{3}.
		\label{eq:lemnsquaredsums0}
	\end{equation}
	\end{lemma}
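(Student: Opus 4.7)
By Lemma~\ref{lem:eulerlagrange1}, $G = G^{\mathrm{H,c}}$ is the canonical Gibbs state at inverse temperature $\beta$ of $N$ non-interacting bosons in the Hartree Hamiltonian $h^{\mathrm{H}} = h + N^{-1} v \ast \varrho_G$. Let $\{e_j\}_{j \geq 0}$ (non-decreasing) and $\{\varphi_j\}$ denote its eigenvalues and eigenfunctions, so that $n_j = a^*(\varphi_j) a(\varphi_j)$. The plan is to bound $\langle n_j^2 \rangle_G$ for each $j \geq 1$ by the grand-canonical Bose--Einstein second moment at chemical potential $\mu = e_0$, and then to carry out the resulting sum using spectral information on $h^{\mathrm{H}}$.

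For the single-mode bound, the distribution of $n_j$ under $G$ is
\begin{equation*}
P_G(n_j = k) = \frac{Z^{(j)}_{N-k}}{Z_N} e^{-\beta k e_j}, \qquad 0 \leq k \leq N,
\end{equation*}
with $Z^{(j)}_M$ the partition function of $M$ non-interacting bosons in the modes orthogonal to $\varphi_j$. Log-concavity of the sequence $k \mapsto \log Z^{(j)}_{N-k}$---the same property underlying the negative-correlation inequality from \cite{Suto} used in \eqref{eq:upperboundcan4}---implies that this distribution is stochastically dominated by the geometric distribution with parameter $q_j = e^{-\beta(e_j - e_0)}$, and hence
\begin{equation*}
\langle n_j^2 \rangle_G \leq \frac{q_j(1 + q_j)}{(1 - q_j)^2}.
\end{equation*}

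It remains to sum this over $j \geq 1$. Since $\varrho_G$ is reflection-symmetric, by uniqueness of the minimizer in Lemma~\ref{prop:existenceminimizersHartreefunctional} and the symmetry of the functional, the Brascamp--Lieb argument from the proof of Lemma~\ref{lem:spactralgapSCMF} applies to $h^{\mathrm{H}}$ and yields $e_j - e_0 \gtrsim \hbar \omega M_j$, where $M_j$ is the harmonic-oscillator level of $\varphi_j$, with the number of eigenstates at level $M$ being $\lesssim M^2$. Grouping by oscillator level,
\begin{equation*}
\sum_{j \geq 1} \langle n_j^2 \rangle_G \lesssim \sum_{M \geq 1} M^2 \frac{e^{-c\beta\hbar\omega M}(1 + e^{-c\beta\hbar\omega M})}{(1 - e^{-c\beta\hbar\omega M})^2},
\end{equation*}
and the change of variables $t = c\beta\hbar\omega M$ realizes the right-hand side as an approximate Riemann sum for $(\beta\hbar\omega)^{-3}$ times the finite integral $\int_0^{\infty} t^2 e^{-t}(1 + e^{-t})(1 - e^{-t})^{-2} \de t$ (integrand is $\sim 2$ near $0$ and decays exponentially at $\infty$). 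This yields the claimed bound $\sum_{j \geq 1} \langle n_j^2 \rangle_G \lesssim (\beta\hbar\omega)^{-3}$.

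The main obstacle is the single-mode bound: while the first-moment inequality $\langle n_j \rangle_G \leq q_j/(1 - q_j)$ follows directly from the result of \cite{Suto}, its extension to the second moment requires a more delicate use of the log-concavity structure of the non-interacting canonical partition function. An alternative route, avoiding stochastic domination altogether, would be a direct contour-integral analysis of $Z^{(j)}_{N-k}/Z_N$, which should produce the same grand-canonical envelope up to error terms negligible after summation.
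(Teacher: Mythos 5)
Your proposal follows essentially the same strategy as the paper's: bound the canonical second moments $\langle n_j^2\rangle_G$ by Bose--Einstein-type second moments, then sum these using spectral information about the Hartree operator. Two points of comparison are worth noting. First, for the moment comparison the paper invokes \cite[Remark~A.1]{me} (stated there for $\langle n_j\rangle_G \lesssim \langle n_j\rangle_{\mathrm{gc}}$, though the second-moment analogue is what is actually used), while you propose to derive it directly from stochastic domination of the canonical $n_j$-distribution by a geometric with parameter $q_j = e^{-\beta(e_j-e_0)}$. This is sound: the ratio $P_G(n_j=k+1)/P_G(n_j=k) = \left(Z^{(j)}_{N-k-1}/Z^{(j)}_{N-k}\right)e^{-\beta e_j}$ is, by log-concavity of $M\mapsto Z^{(j)}_M$, bounded by $e^{\beta(e_0-e_j)}$ (the supremum of the increasing ratio $Z^{(j)}_{M-1}/Z^{(j)}_M$, which tends to $e^{\beta e_0}$ for $j\geq 1$), and the resulting likelihood-ratio order implies stochastic domination; so your single-mode bound holds without the additional contour-integral route you float. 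Your $q_j$ uses chemical potential $e_0$ rather than the $\mu$ chosen to fix the particle number, giving a slightly weaker but sufficient bound.

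Second, the summation. You assert that ``the Brascamp--Lieb argument from the proof of Lemma~\ref{lem:spactralgapSCMF} applies to $h^{\mathrm{H}}$ and yields $e_j - e_0 \gtrsim \hbar\omega M_j$.'' This overstates what that lemma proves: the Brascamp--Lieb argument in Lemma~\ref{lem:spactralgapSCMF} controls only the first spectral gap, $e_1-e_0 \gtrsim \hbar\omega$. The linear-in-$M_j$ lower bound for all eigenvalues requires combining this gap (to handle a bounded number of low-lying $j$) with the min-max comparison $e_j \geq \tilde e_j$, where $\tilde e_j$ are eigenvalues of $-\hbar^2\Delta + c\omega^2 x^2$, together with the a priori bound $e_0\lesssim\hbar\omega$ (to handle large $j$). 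This is precisely what the paper packages in Lemma~\ref{lem:semiclassicaltraceestimate}, and your $\zeta(t)=(1+e^{-t})e^{-t}/(1-e^{-t})^2$ satisfies its hypotheses ($\sim t^{-2}$ at $0$, exponential decay at infinity), so the cleanest fix is to invoke that lemma directly rather than rederiving the sum. Your degeneracy count $\lesssim M^2$ and the identification of the resulting Riemann sum with a convergent integral are correct. Overall: same approach, with a self-contained filling of a cited step and a minor misattribution in the eigenvalue-gap estimate.
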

	
	\begin{proof}
		Denote by $\langle \cdot \rangle_{\textrm{gc}}$ the expectation with respect to the Gibbs state 
		\begin{equation}
			\frac{e^{-\beta \de \Upsilon\left(h + N^{-1} v \ast \varrho_G - \mu\right)}}{\tr[e^{-\beta \de \Upsilon\left(h + N^{-1} v \ast \varrho_{G} - \mu \right)}]},
			\label{eq:grandcanonicalGibbsstate}
		\end{equation}
		with $\mu$ chosen such the expected number of particles in the system equals $N$. From \cite[Remark~A.1]{me} we know that $\langle n_j \rangle_G \lesssim \langle n_j \rangle_{\textrm{gc}}$ for all $j \geq 0$. Denote by $e_j$ the eigenvalues of the operator $h + N^{-1} v \ast \varrho_{G}(x)$. A short computation shows that
		\begin{equation}
		\langle n_j^2 \rangle_{\mathrm{gc}} = \frac{ 1+ \exp\left( \beta( e_j - \mu ) \right)}{ \left( \exp\left( \beta (e_j - \mu) \right) -1 \right)^2}.
		\label{eq:lemnsquaredsums1}
		\end{equation}
		The sum over $j\geq 1$ of $\langle n_j^2 \rangle_{\mathrm{gc}}$ is thus of the form considered in Lemma~\ref{lem:semiclassicaltraceestimate},  
with $\zeta(t) = (1+ e^t)/( e^t-1 )^2$.
An application of the Lemma thus leads to the desired result.
	\end{proof}

The bound in \eqref{eq:upperboundcan7} and an application of Lemma~\ref{lem:momentbound} show that the last two terms on the right-hand side of \eqref{eq:upperboundcan6} are bounded from above as
\begin{equation}
	 \sum_{j \geq 1} \langle n_j^2 \rangle_G \left( \int_{\mathbb{R}^6} v(x-y)  | \varphi_0(x) |^2 | \varphi_0(y) |^2 \de (x,y ) + \int_{\mathbb{R}^6}  v(x-y)  | \varphi_j(x) |^2 | \varphi_j(y) |^2 \de(x,y) \right) \label{eq:upperboundcan8} \lesssim  \Vert v \Vert_{L^{\infty}(\mathbb{R}^3)} (\hbar\beta  \omega)^{-3}.  
\end{equation}
It remains to give a bound for the second and the third term on the right-hand side of \eqref{eq:upperboundcan6}, that is, for the two exchange terms.
We start with the second term on the right-hand side of \eqref{eq:upperboundcan6} and use \cite[Remark~A.1]{me} again to bound it as
\begin{align}
	&\int_{\mathbb{R}^6} v(x-y) | \gamma_> (x,y) |^2 \de(x,y) = \sum_{i,j \geq 1} \langle n_i \rangle_G  \langle n_j \rangle_G \int_{\mathbb{R}^6} v(x-y) \overline{ \varphi_i(x) } \varphi_i(y) \varphi_j(x) \overline{\varphi_j(y)} \de(x,y) \label{eq:upperboundcan9} \\
	&\hspace{2cm} \lesssim \sum_{i,j \geq 1} \langle n_i \rangle_{\textrm{gc}}  \langle n_j \rangle_{\textrm{gc}} \int_{\mathbb{R}^6} v(x-y) \overline{ \varphi_i(x) } \varphi_i(y) \varphi_j(x) \overline{\varphi_j(y)} \de(x,y) = \int_{\mathbb{R}^6} v(x-y) \left| \gamma_{>}^{\mathrm{gc}}(x,y) \right|^2 \de(x,y). \nonumber
\end{align}
Here $\gamma_>^{\mathrm{gc}}(x,y) = \sum_{j \geq 1}  \langle n_j \rangle_{\textrm{gc}} \varphi_j(x) \overline{\varphi_j(y)} $ is the part of the 1-pdm of the grand-canonical Gibbs state \eqref{eq:grandcanonicalGibbsstate} not including its largest eigenvalue, and we used the positivity of $\hat v$ which implies that the coefficients multiplying $\langle n_i \rangle_G  \langle n_j \rangle_G$ in \eqref{eq:upperboundcan9} are non-negative for any $i$ and $j$. 
 The right-hand side of \eqref{eq:upperboundcan9}  can be bounded by
\begin{equation}
\int_{\mathbb{R}^6} v(x-y) \left| \gamma_{>}^{\mathrm{gc}}(x,y) \right|^2 \de(x,y) \leq {\Vert v \Vert_{L^{\infty}(\mathbb{R}^3)}}\trs\left[ \left( \gamma_{>}^{\mathrm{gc}} \right)^2 \right] \leq {\Vert v \Vert_{L^{\infty}(\mathbb{R}^3)}} \left\Vert \gamma_{>}^{\mathrm{gc}} \right\Vert \trs\left[ \gamma_{>}^{\mathrm{gc}} \right] \lesssim \frac{N\Vert v \Vert_{L^{\infty}(\mathbb{R}^6)}}{\beta \hbar \omega}.
\label{eq:55}
\end{equation}
To estimate the operator norm of $\gamma_>^{\mathrm{gc}}$ by a constant times $(\beta \hbar \omega)^{-1}$ in the last step, we used Lemma~\ref{lem:spactralgapSCMF}. 

A similar estimate holds for the third term on the right-hand side of \eqref{eq:upperboundcan6}. We have
\begin{align}
&\text{Re} \int_{\mathbb{R}^6} v(x-y) \gamma_>(x,y) N_0 \overline{\varphi_0(x)} \varphi_0(y) \de(x,y) \lesssim \text{Re} \int_{\mathbb{R}^6} v(x-y) \gamma_>^{\mathrm{gc}}(x,y) N_0 \overline{\varphi_0(x)} \varphi_0(y) \de(x,y). 
\label{eq:upperboundcan10}
\end{align}
Let us denote the operator with integral kernel $ \varphi_0(x) v(x-y) \varphi_0(y)$ by $K$. Then
\begin{align}
 N_0 \text{Re} \int_{\mathbb{R}^6} v(x-y) \overline{\varphi_{0}(x)} \varphi_0(y) \gamma_>^{\mathrm{gc}}(x,y) \de(x,y) &=  N_0 \text{Re} \trs\left[K \gamma_>^{\mathrm{gc}} \right] \leq N\left\Vert \gamma_>^{\mathrm{gc}} \right\Vert \ \Vert K \Vert_1  \label{eq:57} \\
&= N \left\Vert \gamma_>^{\mathrm{gc}} \right\Vert \trs[K] = N v(0)  \left\Vert \gamma_>^{\mathrm{gc}} \right\Vert \lesssim \frac{ N \Vert v \Vert_{L^{\infty}(\mathbb{R}^3)} }{\beta \hbar \omega}. \nonumber
\end{align}
Here we have used that  $K$ is a positive operator, and applied  again Lemma~\ref{lem:spactralgapSCMF} in the last step.

In combination, \eqref{eq:upperboundcan6}, \eqref{eq:upperboundcan7} and \eqref{eq:upperboundcan8}--\eqref{eq:57} imply the bound
\begin{equation}
N^{-1}	\tr\left[ \sum_{1 \leq i < j \leq N} v(x_i-x_j) \ G \right] \leq N^{-1} D\left( \varrho_G, \varrho_G \right) + \frac{ \text{const. } }{\beta \hbar}. \label{eq:upperboundcan12}
\end{equation}
The constant in the above inequality is uniform in $\beta \omega \gtrsim 1$. It remains to consider the kinetic energy and the entropy of the state $G$.
\subsubsection*{Kinetic energy and entropy of the state $G$ and the final upper bound}
\label{sec:kineticenergyentropyc}
Since $G$ is a Gibbs state, we have
\begin{equation}
	\tr\left[ \sum_{i=1}^N h_i G \right] - \frac{1}{\beta} S(G) = - \frac{1}{\beta} \ln \left( \tr \exp\left( - \beta  \sum_{i=1}^N \left( h_i + N^{-1} v \ast \varrho_G(x_i) \right) \right) \right) - 2 N^{-1} D\left( \varrho_G, \varrho_G \right).
	\label{eq:finalestimate1}
\end{equation}
In combination with \eqref{eq:upperboundcan12}, this implies
\begin{equation}
	\tr\left[ H_N G \right] - \frac{1}{\beta} S(G) \leq - \frac{1}{\beta} \ln \left( \tr \exp\left( - \beta  \sum_{i=1}^N \left( h_i + N^{-1} v \ast \varrho_G(x_i) \right) \right) \right) - N^{-1} D\left( \varrho_G, \varrho_G \right) + \frac{ \text{const. } }{\beta \hbar}. \label{eq:finalestimate2}
\end{equation}
From \eqref{eq:Hartreeenergyc2} we see that the first two terms on the right-hand side of \eqref{eq:finalestimate2} equal $F^{\mathrm{H},\mathrm{c}}(\beta,N,\omega)$, and Lemma~\ref{lem:freeenergyboundcgc} shows that this is bounded from above by  $F^{\mathrm{H}}(\beta,N,\omega)$ plus a remainder of the order $\beta^{-1} \ln N$. In combination with \eqref{eq:finalestimate2}, we find the final upper bound
\begin{equation}
	F^{\mathrm{c}}(\beta,N,\omega) \leq F^{\mathrm{H}}(\beta,N,\omega) + \frac{ \text{const. } }{\beta \hbar}.
	\label{eq:69}
\end{equation}
The constant in the above inequality is uniform in $N \geq 1$ and $\beta \omega \gtrsim 1$. Together with \eqref{eq:lowerboundgc5}, this proves \eqref{eq:mainresults21}.
\section{Bounds on the 1-pdm and the Husimi function of approximate Gibbs states}
\label{sec:boundson1pdm}
In the first part of this section we prove \eqref{eq:mainresults21c}, which will conclude the proof of Theorem~\ref{thm:freeenergyscaling2}. Our approach is based on a lower bound for the bosonic relative entropy that quantifies its coercivity and was  proved in \cite[Lemma~4.1]{me2}. Afterwards we show how this result can be combined with Theorem~\ref{thm:limitHartreetheory} to prove  Corollaries~\ref{cor:weakBEC} and~\ref{cor:strongBEC}.
\subsection{Bound on the 1-pdm}
\label{sec:boundgrandcanonicaldensitymatrix}
Let $\Gamma \in \mathcal{S}_N^{\mathrm{gc}}$ with 1-pdm $\gamma$ be an approximate minimizer of the Gibbs free energy functional $\mathcal{F}$ in the sense that \eqref{eq:mainresults21b} holds. We allow for the larger set $\mathcal{S}_N^{\mathrm{gc}}$ instead of $\mathcal{S}_N^{\mathrm{c}}$ in view of Remark~\ref{rem:gc} in Section~\ref{sec:mainresult}; the proof is literally the same in both cases. From \cite[2.5.14.5]{Thirring_4} we know that the entropy satisfies $S(\Gamma) \leq s(\gamma)$, where $s$ denotes the bosonic entropy in \eqref{eq:bosonicentropy}. Let us have a closer look at the first line on the right-hand side of \eqref{eq:lowerboundgc3}. We obtain a lower bound if we replace $S(\Gamma)$ by $s(\gamma)$. We apply the Gibbs variational principle, optimize over the pair $(\eta,\mu) \in \mathcal{D}_N^{\mathrm{gc}}$ with $\mathcal{D}_N^{\mathrm{gc}}$ defined in \eqref{eq:sec4secondlemmav21} and use Lemma~\ref{lem:secondvariationalcharacterization} to conclude that
\begin{align}
	\mathcal{F}(\Gamma) &\geq F^{\mathrm{H}}(\beta,N,\omega) + \frac{1}{\beta} \mathcal{S}\left( \gamma, \gamma^{\mathrm{H}} \right) - \frac{v(0)}{2} 
	\label{eq:asymptoticsgc2} 
\end{align}
holds. Here $\gamma^{\mathrm{H}}$ is the unique minimizer of $\mathcal{F}^{\mathrm{H}}$ in \eqref{eq:Hartrees1pfreeenergyfkt} and $\mathcal{S}(\gamma,\gamma^{\mathrm{H}})$ denotes the bosonic relative entropy of $\gamma$ w.r.t. $\gamma^{\mathrm{H}}$, which is defined by
\begin{equation}
	\mathcal{S}\left( \gamma, \gamma^{\mathrm{H}} \right) = \trs\left[ f(\gamma) - f\left(\gamma^{\mathrm{H}} \right) - f'\left(\gamma^{\mathrm{H}} \right) \left(\gamma - \gamma^{\mathrm{H}} \right) \right]
	\label{eq:asymptoticsgc3}
\end{equation}
for $f$  given in \eqref{eq:bosonicentropy}. In combination, \eqref{eq:mainresults21b} and \eqref{eq:asymptoticsgc2} imply the bound
\begin{equation}
	\mathcal{S}\left(\gamma, \gamma^{\mathrm{H}} \right) \leq \beta\omega \delta +  \frac{\beta v(0)}2  \,.
	\label{eq:asymptoticsgc4}
\end{equation}
In order to quantify the coercivity of the bosonic relative entropy we use the following Lemma, which was  proved in \cite[Lemma~4.1]{me2}. For the sake of completeness we state it here.
\begin{lemma}
	\label{lem:relativeentropy}
	There exists a constant $C>0$ such that for any two nonnegative trace-class operators $a$ and $b$ we have
	\begin{equation}
	\mathcal{S}\left( a, b \right) \geq C \frac{\Vert a - b \Vert_1^2}{\left\Vert 1+b \right\Vert \trs\left[ a + b \right]}.
	\label{eq:asymptoticslemma2}
	\end{equation}
\end{lemma}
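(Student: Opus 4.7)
The plan is to reduce the operator inequality to its scalar counterpart \eqref{eq:propsemiclassicalminimizer3} via a Klein--Lindblad-type spectral decomposition, and then convert the resulting Hilbert--Schmidt bound on $\sqrt{a} - \sqrt{b}$ into a trace-norm bound on $a-b$ by a Powers--St{\o}rmer-type estimate.

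First, I would write spectral decompositions $a = \sum_i a_i P_i$ and $b = \sum_j b_j Q_j$ with rank-one orthogonal projections $P_i, Q_j$ (well-defined since $a$ and $b$ are compact). Inserting the completeness relations $\sum_i P_i = \sum_j Q_j = \mathds{1}$ into the definition of $\mathcal{S}$ yields the identity
\begin{equation*}
\mathcal{S}(a,b) = \sum_{i,j} \trs[P_i Q_j] \bigl[ f(a_i) - f(b_j) - f'(b_j)(a_i - b_j) \bigr],
\end{equation*}
where $\trs[P_i Q_j] \geq 0$. Applying the scalar bound \eqref{eq:propsemiclassicalminimizer3} term by term and using $(1+b_j)^{-1} \geq \|1+b\|^{-1}$, a short computation (expanding the square and resolving the identity) gives
\begin{equation*}
\mathcal{S}(a,b) \gtrsim \frac{1}{\|1+b\|} \sum_{i,j} \trs[P_i Q_j]\, \bigl( \sqrt{a_i} - \sqrt{b_j} \bigr)^2 = \frac{1}{\|1+b\|}\, \trs\!\bigl[(\sqrt{a}-\sqrt{b})^2\bigr].
\end{equation*}

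To convert the right-hand side into the trace-norm distance of $a$ and $b$, I would use the algebraic identity $a - b = \sqrt{a}\,(\sqrt{a}-\sqrt{b}) + (\sqrt{a}-\sqrt{b})\,\sqrt{b}$ together with H{\"o}lder's inequality for Schatten norms:
\begin{equation*}
\|a-b\|_1 \leq \bigl(\|\sqrt{a}\|_2 + \|\sqrt{b}\|_2\bigr)\,\|\sqrt{a}-\sqrt{b}\|_2 \leq \sqrt{2\,\trs[a+b]}\;\|\sqrt{a}-\sqrt{b}\|_2,
\end{equation*}
the last step being the elementary inequality $\sqrt{x}+\sqrt{y}\leq \sqrt{2(x+y)}$ applied to $x=\trs[a]$, $y=\trs[b]$. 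Squaring and combining with the previous display produces the claim.

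The only mild subtlety is the handling of zero eigenvalues of $b$, where $f'(b) = \ln\bigl(b/(1+b)\bigr)$ is singular. This is harmless in the present context: if $\ker b \not\subseteq \ker a$, then $\mathcal{S}(a,b)=+\infty$ and the inequality is trivial; otherwise, one can restrict to the range of $b$, or argue by regularization with $b+\epsilon Q$ on a finite-rank projection $Q$ and pass to the limit $\epsilon\to 0$ using the monotonicity of the trace and the continuity of $f$ on $[0,\infty)$ with $f(0)=0$.
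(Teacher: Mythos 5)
Your proof is correct. It follows the same two-step structure as the paper's proof of the scalar analogue (Lemma~\ref{lem:semiclassicalcoercivity}) and of the cited source \cite[Lemma~4.1]{me2}: first reduce to the pointwise bound $f(x)-f(y)-f'(y)(x-y)\gtrsim(\sqrt{x}-\sqrt{y})^2/(1+y)$ via a Klein-type double spectral sum with nonnegative weights $\trs[P_iQ_j]$, obtaining $\mathcal{S}(a,b)\gtrsim\Vert 1+b\Vert^{-1}\trs[(\sqrt a-\sqrt b)^2]$, then convert the Hilbert--Schmidt control of $\sqrt a-\sqrt b$ into trace-norm control of $a-b$. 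Your Powers--St\o rmer-type factorization $a-b=\sqrt a(\sqrt a-\sqrt b)+(\sqrt a-\sqrt b)\sqrt b$ together with Schatten--H\"older is precisely the operator-valued substitute for the Cauchy--Schwarz step used in the commutative setting of Lemma~\ref{lem:semiclassicalcoercivity}, where one can factor $|a-b|=|\sqrt a-\sqrt b|\,(\sqrt a+\sqrt b)$ pointwise. The handling of $\ker b$ is fine as written; one may also note that for $b_j=0$, $a_i>0$ the scalar bound reads $+\infty\geq a_i$ and for $b_j=a_i=0$ both sides vanish, so no separate regularization is actually needed.
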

The above bound is only helpful in case the operator $b$ does not have a condensate, i.e., its norm is not too large. Accordingly, we first need to get rid of the contribution from the condensate of $\gamma^{\mathrm{H}}$ in $\mathcal{S}(\gamma,\gamma^{\mathrm{H}})$. Denote by $P$ the projection onto the (unique) ground state of $h + N^{-1} v \ast \varrho^{\mathrm{H}}(x)$ and let $Q = 1-P$. We write $\gamma = \sum_{\alpha = 0}^{\infty} \gamma_{\alpha} | \psi_{\alpha} \rangle \langle \psi_{\alpha} |$, $\gamma^{\mathrm{H}} = \sum_{\alpha = 0}^{\infty} \gamma_{\alpha}^{\mathrm{H}} | \varphi_{\alpha} \rangle \langle \varphi_{\alpha} |$ and note that
\begin{equation}
	\trs\left[P \left( f(\gamma) - f\left( \gamma^{\mathrm{H}} \right) - f'\left(\gamma^{\mathrm{H}}\right)\left(\gamma - \gamma^{\mathrm{H}}\right) \right) P \right] = \sum_{\alpha = 0}^{\infty} \left| \langle \varphi_0, \psi_{\alpha} \rangle \right|^2 \left( f(\gamma_{\alpha}) - f\left( \gamma_{0}^{\mathrm{H}} \right) - f'\left( \gamma_{0}^{\mathrm{H}} \right)\left( \gamma_{\alpha} - \gamma_{0}^{\mathrm{H}} \right) \right) \geq 0
\end{equation}
since $f$ is convex. This implies that
\begin{equation}
	\mathcal{S}\left(\gamma, \gamma^{\mathrm{H}} \right) \geq \trs\left[ Q \left( f(Q \gamma Q) - f\left( \gamma^{\mathrm{H}} \right) - f'\left(\gamma^{\mathrm{H}}\right)\left(\gamma - \gamma^{\mathrm{H}}\right) \right) Q \right] + \trs\left[ Q f(\gamma) Q - Q f(Q \gamma Q) Q \right]. 
	\label{eq:asymptoticsgc5}
\end{equation}
From Lemma~\ref{lem:convexity} we know that the last term on the right-hand side is nonnegative and can be dropped for a lower bound. In combination with Lemma~\ref{lem:relativeentropy} this yields  
\begin{equation}
	\mathcal{S}\left(\gamma, \gamma^{\mathrm{H}} \right) \geq \mathcal{S}\left(Q \gamma Q, Q \gamma^{\mathrm{H}} Q\right) \geq C \frac{\left\Vert Q \left( \gamma - \gamma^{\mathrm{H}} \right) Q \right\Vert_1^2}{\left\Vert 1+ Q \gamma^{\mathrm{H}} Q \right\Vert \trs\left[ Q \left( \gamma + \gamma^{\mathrm{H}} \right) Q \right]}.
	\label{eq:asymptoticsgc6}
\end{equation}
Lemma~\ref{lem:spactralgapSCMF} implies that  $\left\Vert Q \gamma^{\mathrm{H}} Q \right\Vert \lesssim (\beta \hbar \omega)^{-1}$. In combination with \eqref{eq:asymptoticsgc4} and \eqref{eq:asymptoticsgc6} we thus obtain
\begin{equation}
	\left\Vert Q \left( \gamma - \gamma^{\mathrm{H}} \right) Q \right\Vert_1 \lesssim (1+\beta \hbar \omega)^{1/2} N^{2/3} (1+ \delta)^{1/2}. 
	\label{eq:asymptoticsgc7}
\end{equation}
It remains to estimate $\Vert P ( \gamma - \gamma^{\mathrm{H}} ) P \Vert_1$ as well as the trace norm of the off-diagonal of $\gamma$ w.r.t. $P$ and $Q$.

We first consider $\Vert P ( \gamma - \gamma^{\mathrm{H}} ) P \Vert_1$ and note that
\begin{equation}
	N = \trs \gamma = \trs \left[ P \gamma P \right] + \trs \left[ Q \gamma Q \right] = \langle \varphi_0, \gamma \varphi_0 \rangle + \trs \left[ Q \gamma^{\mathrm{H}} Q \right] + \trs \left[ Q(\gamma-\gamma^{\mathrm{H}}) Q \right] \,.
	\label{eq:asymptoticsgc8}
\end{equation}
Since $\trs \gamma^{\mathrm{H}} = N$ we conclude with \eqref{eq:asymptoticsgc7} that 
\begin{equation}
\left\Vert P \left( \gamma - \gamma^{\mathrm{H}} \right) P \right\Vert_1 =	\left| \langle \varphi_0, \gamma \varphi_0 \rangle  - \langle \varphi_0, \gamma^{\mathrm{H}} \varphi_0 \rangle \right| \lesssim (1+\beta \hbar \omega)^{1/2} N^{2/3} (1+ \delta)^{1/2}.
	\label{eq:asymptoticsgc9}
\end{equation}
The trace norm of the off-diagonal of $\gamma$ w.r.t. $P$ and $Q$ can be estimated as
\begin{equation}
	\left\Vert P \gamma Q \right\Vert_1 \leq \left\Vert \gamma Q \right\Vert \leq \Vert \gamma \Vert^{1/2} \Vert \gamma^{1/2} Q \Vert \leq N^{1/2} \Vert Q \gamma Q \Vert^{1/2} \leq N^{1/2} \left( \left\Vert Q \left(\gamma - \gamma^{\mathrm{H}} \right) Q \right\Vert_1 + \left\Vert \gamma^{\mathrm{H}} \right\Vert \right)^{1/2}. 
	\label{eq:asymptoticsgc12}
\end{equation}
Using again that $\Vert \gamma^{\mathrm{H}} \Vert \lesssim (\beta \hbar \omega)^{-1}$, this implies
\begin{equation}
	\left\Vert P \gamma Q \right\Vert_1 \lesssim (1+\beta \hbar \omega)^{1/4} N^{5/6} (1+\delta)^{1/4}.
	\label{eq:asymptoticsgc13}
\end{equation}
We combine \eqref{eq:asymptoticsgc13} with \eqref{eq:asymptoticsgc7} and \eqref{eq:asymptoticsgc9} to conclude that
\begin{equation}
	\left\Vert \gamma - \gamma^{\mathrm{H}} \right\Vert_1 \lesssim (1+\beta\hbar \omega)^{1/4} N^{5/6} (1+\delta)^{1/4} + (1+\beta \hbar \omega)^{1/2} N^{2/3} (1+ \delta)^{1/2} \,.
	\label{eq:asymptoticsgc14}
\end{equation}
 This bound proves \eqref{eq:mainresults21c} as long as $ \beta \hbar \omega \lesssim 1$. To obtain the claimed uniformity in the whole region $\beta \hbar \omega \gtrsim N^{-1/3}$, we combine it with a second bound that we apply in the region $\beta \hbar \omega \gtrsim 1$.

This second bound can be obtained by perturbing around the zero temperature ($\beta = + \infty$) case. We use $S(\Gamma) \leq s(\gamma)$ 
and \eqref{eq:lowerboundgc3} with the choices $\eta = \varrho^{\mathrm{H}}$ and $\mu = \mu^{\mathrm{H}}$ and obtain
\begin{align}
	\mathcal{F}(\Gamma) & \geq \tr\left[ \left( h + N^{-1} v \ast \varrho^{\mathrm{H}} - \mu^{\mathrm{H}} \right) \gamma \right] + \mu^{\mathrm{H}} N - \frac{1}{\beta} s(\gamma) - N^{-1} D\left(\varrho^{\mathrm{H}}, \varrho^{\mathrm{H}} \right) - \frac{ v(0)}{2} \label{eq:70} \\
	 & \geq \frac{1}{2} \tr\left[ \left( h + N^{-1} v \ast \varrho^{\mathrm{H}} - \mu^{\mathrm{H}} \right) \gamma \right] + \mu^{\mathrm{H}} N - N^{-1} D\left(\varrho^{\mathrm{H}}, \varrho^{\mathrm{H}} \right) - \frac{ v(0)}{2} \nonumber \\
	& \quad + \frac{1}{\beta} \trs\left[ \ln\left( 1 - \exp\left( -\frac{\beta}{2} \left( h + N^{-1} v \ast \varrho^{\mathrm{H}} - \mu^{\mathrm{H}} \right) \right) \right) \right]. \nonumber
\end{align}
Let $\{ e_j^{\mathrm{H}} \}_{j=0}^{\infty}$ and $\{ e_j \}_{j=0}^{\infty}$ be the eigenvalues of the operators $h + N^{-1} v \ast \varrho^{\mathrm{H}}(x)$ and $h$, respectively. 
Similarly to \eqref{eq:64} we choose $M \in \mathbb{N}$ independently of $\hbar$ and $\varrho^{\mathrm{H}}$, such that $e_j^{\mathrm{H}} - \mu^{\mathrm{H}} \geq e_j/2$ holds for all $j > M$. The last term on the right-hand side of \eqref{eq:70} is for $\beta \hbar \omega \gtrsim 1$ bounded from below by
\begin{equation}
	\frac{1}{\beta} \sum_{j=0}^M \ln\left( 1 - \exp\left( -\frac{\beta}{2} \left( e_j^{\mathrm{H}} - \mu^{\mathrm{H}} \right) \right) \right) + \frac{1}{\beta} \sum_{j>M}^{\infty} \ln\left( 1 - \exp\left( -\frac{\beta}{4} e_j \right) \right) \label{eq:148}  \gtrsim \frac{-M \ln N}{\beta} -  \frac{1}{\beta} \,. 
\end{equation}
To obtain the bound for the first sum we used $ (\exp(\beta(e^{\mathrm{H}}_j - \mu^{\mathrm{H}}))-1)^{-1} \leq N$ for all $j \geq 0$, and to bound the second sum we used the explicit form of the eigenvalues of $h$ as well as $\beta \hbar \omega \gtrsim 1$. 

In combination with Lemma~\ref{lem:minimizersHartree} and the fact that $\Gamma$ is an approximate minimizer of the Gibbs free energy functional in the sense of \eqref{eq:mainresults21b}, this implies that
\begin{equation}
	\frac{1}{2} \tr\left[ \left( h + N^{-1} v \ast \varrho^{\mathrm{H}} - \mu^{\mathrm{H}} \right) \gamma \right] \leq \frac{1}{\beta} \trs\left[ \ln\left( 1 - \exp\left( -\beta \left( h + N^{-1} v \ast \varrho^{\mathrm{H}} - \mu^{\mathrm{H}} \right) \right) \right) \right] \label{eq:71} + \omega \delta +  \text{const. } \frac{\ln N}{\beta} \,.
\end{equation}
The first term on the right-hand side is negative and can be dropped for an upper bound. When we apply the spectral gap estimate in Lemma~\ref{lem:spactralgapSCMF}, we obtain a lower bound for the left-hand side of \eqref{eq:71} and find
\begin{equation}
	\trs\left[ Q \gamma Q \right] \lesssim N^{1/3} \delta + \frac{\ln N}{\beta \hbar \omega} \,.
	\label{eq:72}
\end{equation}
As above, $Q$ denotes the spectral projection onto the orthogonal complement of the ground state subspace of the Hartree operator $h+ N^{-1} v \ast \varrho^{\mathrm{H}}(x)$. 

To conclude the desired bound for the trace norm difference of $\gamma$ and $\gamma^{\mathrm{H}}$, we proceed as in \eqref{eq:asymptoticsgc8}--\eqref{eq:asymptoticsgc12} to estimate
\begin{equation}
	\left\Vert \gamma - \gamma^{\mathrm{H}} \right\Vert_1 \leq 2 \left(   \trs  \left[ Q \gamma Q \right]  + \trs \left[ Q\gamma^{\mathrm{H}}  \right] +  N^{1/2} \Vert Q \gamma Q \Vert^{1/2} \right) \,.
	\label{eq:73}
\end{equation}
Lemma~\ref{lem:semiclassicaltraceestimate} implies that $\trs [ Q\gamma^{\mathrm{H}}  ] \lesssim (\beta\hbar\omega)^{-3}$. In combination with   
\eqref{eq:72} and the fact that the trace norm dominates the operator norm, we thus obtain
\begin{equation}
	\left\Vert \gamma - \gamma^{\mathrm{H}} \right\Vert_1 \lesssim N^{2/3} (1+ \delta)^{1/2} 
	\label{eq:74}
\end{equation}
for $\beta \hbar \omega \gtrsim 1$. 
In combination with \eqref{eq:asymptoticsgc14}, \eqref{eq:74} proves \eqref{eq:mainresults21c}. This concludes the proof of Theorem~\ref{thm:freeenergyscaling2}.
\subsection{Proof of Corollary~\ref{cor:weakBEC}}
\label{sec:weakBEC}
To prove Corollary~\ref{cor:weakBEC}, we directly relate 
the Husimi function of an approximate minimizer of the Gibbs free energy functional in the sense of \eqref{eq:mainresults21b} 
to the corresponding quantity of the semiclassical free energy functional. 
We combine \eqref{eq:lowerboundgc3} with the choices $\eta = N \varrho^{\mathrm{sc}}$, $\mu = v \ast \varrho^{\mathrm{sc}}(0) + \mu^{\mathrm{sc}}$ and \eqref{eq:propsemiclassical12} to see that
\begin{equation}
	\hbar^3 \mathcal{F}(\Gamma) \geq F^{\mathrm{sc}}(\beta,\omega) + \frac{1}{\beta} \mathcal{S}^{\mathrm{sc}}\left( m_{\gamma}, \gamma^{\mathrm{sc}} \right) - O\left( \hbar \omega \right)  \,.
	\label{eq:84}
\end{equation}
In combination with  \eqref{eq:mainresults21b} for $\delta = o(N)$ 
this implies for $R>0$
\begin{equation}
	 \mathcal{S}^{\mathrm{sc}} \left( \mathds{1}_{A_R} m_{\gamma}, \mathds{1}_{A_R} \gamma^{\mathrm{sc}} \right) \leq \mathcal{S}^{\mathrm{sc}}\left( m_{\gamma}, \gamma^{\mathrm{sc}} \right) \leq o(1),
	 \label{eq:85}
\end{equation}
where $\mathds{1}_{A_R}$ denotes multiplication with the characteristic function $\mathds{1}_{A_R}(x)$ of the set $A_R = (B_{ \omega^{1/2} R } \times B_{\omega^{-1/2} R})^{ \mathrm{c}
}  \subset \mathbb{R}^6$ and $B_{r} \subset \mathbb{R}^3$ denotes the ball with radius $r>0$ centered at the origin. An application of Lemma~\ref{lem:semiclassicalcoercivity} thus implies
\begin{equation}
	\int_{ A_R } \left| m_{\gamma}(p,x) - \gamma^{\mathrm{sc}}(p,x) \right| \de(p,x) \leq o(1) \left( \int_{A_R} \left( m_{\gamma}(p,x) + \gamma^{\mathrm{sc}}(p,x) \right) \left( 1 + \gamma^{\mathrm{sc}}(p,x) \right) \de(p,x) \right)^{1/2}.
	\label{eq:86}
\end{equation}
We use \eqref{eq:54} to estimate the $L^1(\mathbb{R}^6)$- and the $L^2(\mathbb{R}^6)$-norm of $\gamma^{\mathrm{sc}}$, as well as $\int_{A_R} m_{\gamma}(p,x) \leq (2\pi)^3$. Moreover, with the Euler--Lagrange equation \eqref{eq:semiclassicalfunctional5} and \eqref{eq:34} we check that
\begin{equation}
	\int_{A_R} m_{\gamma}(p,x) \gamma^{\mathrm{sc}}(p,x) \de(p,x) \leq \left\Vert \mathds{1}_{A_R} \gamma^{\mathrm{sc}} \right\Vert_{L^{\infty}(\mathbb{R}^6)} (2\pi)^3 \lesssim \frac{1}{\beta \omega R^2 }
	\label{eq:87}
\end{equation}
holds. In combination with \eqref{eq:86} this yields \eqref{eq:83}. 
Eq.~\eqref{eq:83b} is a direct consequence of \eqref{eq:83}, \eqref{eq:89} and the normalization condition \eqref{eq:normalizationcondition}. This proves Corollary~\ref{cor:weakBEC}.

\subsection{Proof of Corollary~\ref{cor:strongBEC}}
\label{sec:BECSCMF}
To prove Corollary~\ref{cor:strongBEC}, we relate the condensate fraction and the Husimi function of an approximate minimizer of the Gibbs free energy functional to the corresponding quantities of the Hartree free energy functional in \eqref{eq:Hartrees1pfreeenergyfkt}. In combination with Theorem~\ref{thm:limitHartreetheory}, this will imply the claim.  

Let $\gamma$ be the 1-pdm of an approximate minimizer of the Gibbs free energy functional in the sense of \eqref{eq:mainresults21b}. By $P$ we denote the projection onto the eigenspace of its largest eigenvalue and $Q = 1-P$. Let also $P^{\mathrm{H}}$ be the projection onto the eigenspace of the largest eigenvalue of $\gamma^{\mathrm{H}}$ and define $Q^{\mathrm{H}} = 1 - P^\mathrm{H}$. Theorem~\ref{thm:freeenergyscaling2} and the fact that the trace norm dominates the operator norm imply that
\begin{equation}
N^{5/6}(1+\delta)^{1/4} \gtrsim \left\Vert \gamma - P^{\mathrm{H}} \gamma^{\mathrm{H}} + Q^{\mathrm{H}} \gamma^{\mathrm{H}} \right\Vert \geq \left\Vert \gamma - P^{\mathrm{H}} \gamma^{\mathrm{H}} \right\Vert - \left\Vert Q^{\mathrm{H}} \gamma^{\mathrm{H}} \right\Vert.
\label{eq:18}
\end{equation}
In \eqref{eq:propsemiclassicalminimizer8} we showed that the last term on the right-hand side of \eqref{eq:18} is bounded by a constant times $(\beta \hbar \omega)^{-1} \lesssim N^{1/3}$. Using the min-max principle, see e.g. \cite[Theorem~12.1]{LiLo10}, and Theorem~\ref{thm:freeenergyscaling2} we see that 
\begin{equation}
	\left\Vert Q \gamma \right\Vert \leq \left\Vert Q^{\mathrm{H}} \gamma Q^{\mathrm{H}} \right\Vert \leq \left\Vert Q^{\mathrm{H}} \left( \gamma - \gamma^{\mathrm{H}} \right) Q^{\mathrm{H}} \right\Vert + \left\Vert Q^{\mathrm{H}} \gamma^{\mathrm{H}} \right\Vert \lesssim  N^{5/6}(1+\delta)^{1/4} \,.
	\label{eq:21}
\end{equation}
In combination, \eqref{eq:18} and \eqref{eq:21} yield
\begin{equation}
	\left\Vert P \gamma - P^{\mathrm{H}} \gamma^{\mathrm{H}} \right\Vert \lesssim N^{5/6}(1+\delta)^{1/4}.
	\label{eq:19}
\end{equation}
Since the operators in \eqref{eq:19} have rank one, the inequality is also true if the operator norm is replaced by the trace norm. In combination with Theorem~\ref{thm:limitHartreetheory} this proves \eqref{eq:90}.  

Next, we prove the claimed bound \eqref{eq:90a} for the Husimi function, and start by estimating
\begin{align}
	&\frac{1}{(2 \pi \hbar)^3} \int_{\mathbb{R}^6} \left| m_{Q \gamma}(p,q) - m_{Q^{\mathrm{H}} \gamma^{\mathrm{H}} }(p,q)  \right| \de(p,q) = \frac{1}{(2 \pi \hbar)^3} \int_{\mathbb{R}^6} \left| \left\langle \ell^{\hbar}_{p,q}, \left( Q \gamma - Q^{\mathrm{H}} \gamma^{\mathrm{H}} \right) \ell^{\hbar}_{p,q} \right\rangle \right| \de(p,q) \nonumber \\
	&\hspace{2cm} \leq \frac{1}{(2 \pi \hbar)^3} \int_{\mathbb{R}^6} \left\langle \ell^{\hbar}_{p,q}, \left| \left( Q \gamma - Q^{\mathrm{H}} \gamma^{\mathrm{H}} \right) \right| \ell^{\hbar}_{p,q} \right\rangle \de(p,q) = \left\Vert Q \gamma - Q^{\mathrm{H}} \gamma^{\mathrm{H}} \right\Vert_1\,. \label{eq:17}
\end{align}
From Theorem~\ref{thm:freeenergyscaling2} and \eqref{eq:19} with the operator norm replaced by the trace norm, we know that
\begin{equation}
	\left\Vert Q \gamma - Q^{\mathrm{H}} \gamma^{\mathrm{H}} \right\Vert_1 \lesssim N^{5/6}(1+\delta)^{1/4}.
	\label{eq:20}
\end{equation}
In combination with Theorem~\ref{thm:limitHartreetheory}, Eq.~\eqref{eq:90a} readily follows from the triangle inequality.
This completes the proof of Corollary~\ref{cor:strongBEC}.
\section{Proof of Theorem~\ref{thm:freeenergyscaling1}}
\label{sec:lastsection}
In this section we show how the analysis in Sections~\ref{sec:freeenergybounds} and~\ref{sec:boundson1pdm} needs to be adjusted in order to prove Theorem~\ref{thm:freeenergyscaling1}. One main difference between the proof of Theorem~\ref{thm:freeenergyscaling2} and the one of Theorem~\ref{thm:freeenergyscaling1} concerns the proof of the spectral gap estimate for the Hartree operator. In case of the semiclassical MF scaling such a bound follows from the assumed bound on the Hessian of the interaction potential in \eqref{eq:condintpot}, see Lemma~\ref{lem:spactralgapSCMF}. Here the interaction is of much shorter range and we do not have such a bound at our disposal, and therefore need to rely on other techniques. Our analysis below is based on the fact that the interaction can be seen to leading order only in the condensate. This, in particular, implies that the Hartree minimizer is to leading order given by the 1-pdm of a Gibbs state corresponding to a Hartree operator at $T=0$. Using the diluteness of the density of the thermal cloud of this Gibbs state, we reduce the task of estimating the spectral gap for the Hartree operator to estimating the one of a Hartree operator at $T=0$. In the latter case the existence of a uniform (in $N \geq 1$ and $\beta \omega \gtrsim N^{-1/3}$) spectral gap can easily be proved. The second main difference between the two parameter regimes is that the exchange terms are more difficult to estimate in the scaling limit we consider here. In particular, we need to carefully estimate the integral kernel of the 1-pdm of our trial states in order to be able to obtain bounds with the claimed accuracy. This has to be compared to the relatively simple estimates in \eqref{eq:55} and \eqref{eq:57}.
\subsection{Spectral gap estimate for the Hartree operator in the mean-field limit}
\label{sec:spectralgap}
In this section we consider the Hartree free energy functionals $\mathcal{F}^{\mathrm{H}}$ and $\mathcal{F}^{\mathrm{H},\mathrm{c}}$ in \eqref{eq:freeenergy2} and \eqref{eq:Hartreefunctional}, respectively, with $h$ defined in \eqref{eq:h2} and $v_N$ in \eqref{eq:Interactionpotential1}. We recall that $\gamma^{\mathrm{H},\mathrm{c}}$ denotes the 1-pdm of the unique minimizer $G^{\mathrm{H},\mathrm{c}}$ of $\mathcal{F}^{\mathrm{H},\mathrm{c}}$, whose existence is guaranteed by Lemma~\ref{prop:existenceminimizersHartreefunctional}. By $\varrho^{\mathrm{H},\mathrm{c}}$ we denote its density, that is, $\varrho^{\mathrm{H},\mathrm{c}}(x) = \gamma^{\mathrm{H},\mathrm{c}}(x,x)$.

The goal of this section is to prove the following statement:
\begin{proposition}
	\label{prop:spectralgap}
	Let $v_N$ satisfy the assumptions of Theorem~\ref{thm:freeenergyscaling1} and denote by $\Delta e^{\mathrm{H}}$ the spectral gap of the Hartree operator $h^{\mathrm{H}} = h + v_N \ast \varrho^{\mathrm{H},\mathrm{c}}(x)$ above its unique ground state. Then
	\begin{equation}
	\Delta e^{\mathrm{H}} \gtrsim \omega
	\label{eq:spectralgap3b}
	\end{equation}
	 holds uniformly in $N \geq 1$ and $\beta \omega \gtrsim N^{-1/3}$.
\end{proposition}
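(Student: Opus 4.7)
The plan is to split the Hartree one-body density as $\varrho^{\mathrm{H},\mathrm{c}} = N_0 |\phi_0|^2 + \varrho_>$, where $\phi_0$ is the normalized ground state of $h^{\mathrm{H}}$ with occupation $N_0$, and to decompose accordingly
\begin{equation*}
h^{\mathrm{H}} = \bigl( h + v_N \ast (N_0 |\phi_0|^2) \bigr) + v_N \ast \varrho_> =: h^{\mathrm{H}}_0 + v_N \ast \varrho_>.
\end{equation*}
If one can show that $\| v_N \ast \varrho_> \|_{L^\infty}$ is much smaller than $\omega$, then by a standard min-max perturbation of eigenvalues the gap of $h^{\mathrm{H}}$ is, up to this correction, the gap of the zero-temperature operator $h^{\mathrm{H}}_0$, and the proposition reduces to estimating the latter.

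To bound the thermal contribution, I would use the minimizing property of $\gamma^{\mathrm{H},\mathrm{c}}$ tested against an ideal-gas trial state (as in Section~\ref{sec:freeenergybounds}) together with the lower bound $h^{\mathrm{H}} \geq h \geq 0$ (since $v_N \ast \varrho \geq 0$) to obtain crude a priori estimates on $\gamma^{\mathrm{H},\mathrm{c}}$. A semiclassical trace estimate in the spirit of Lemma~\ref{lem:semiclassicaltraceestimate}, but with $h^{\mathrm{H}}$ replaced by the harmonic oscillator $h$ --- which has gap $\omega$ for free, thereby avoiding any circularity --- then yields $\| \varrho_> \|_{L^\infty} \lesssim \beta^{-3/2} \sim \omega^{3/2} (\beta\omega)^{-3/2}$. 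Combined with $\| v_N \|_{L^1} = N^{-1} \|v\|_{L^1}$ and the constraint $(\beta\omega)^{-3/2} \lesssim N^{1/2}$, Young's inequality gives $\| v_N \ast \varrho_> \|_{L^\infty} \lesssim N^{-1/2} \omega^{3/2}$, which is indeed $\ll \omega$ for $\omega$ bounded.

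For the zero-temperature operator $h^{\mathrm{H}}_0 = h + W$ with $W = N_0 v_N \ast |\phi_0|^2 \geq 0$, I would exploit parity: since $v$ is even, so is $W$, hence by uniqueness the ground state $\phi_0$ is even, and the odd-parity spectrum of $h^{\mathrm{H}}_0$ lies above the first odd eigenvalue $(5/2)\omega$ of $h$ (using $W \geq 0$ on the odd subspace). The even-parity first excited state is the more delicate case; for this I would compare $h^{\mathrm{H}}_0$ with the harmonic oscillator directly by a Temple-type or perturbative argument, using that $v_N$ is supported on the length scale $N^{-\kappa} \ll \omega^{-1/2}$, so $W$ is essentially a smooth bump that shifts the ground state energy and the second even harmonic-oscillator level by roughly the same amount, preserving a gap of order $\omega$.

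The main obstacle is the inherent self-consistency of the problem, since the decomposition uses $\phi_0$ and $N_0$, which are themselves defined through $h^{\mathrm{H}}$; this I would address by a bootstrap in which an initial crude lower bound on the gap (coming, say, from $h^{\mathrm{H}} \geq h$ and a rough counting of low eigenvalues) feeds the semiclassical bound on $\varrho_>$, which then yields the refined gap. A secondary difficulty is the required uniformity in the full range $\beta\omega \gtrsim N^{-1/3}$, which extends all the way to the zero-temperature limit $\beta\omega = \infty$; at that end the thermal correction disappears and the argument reduces to a purely perturbative statement about the Hartree operator associated with the minimizer of the Hartree energy functional in \eqref{eq:semiclassicalHartree} at $\hbar = 1$.
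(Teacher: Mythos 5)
Your high-level plan — separate out a thermal correction, show it is perturbatively small, and reduce to a zero-temperature Hartree gap — is indeed the structure of the paper's proof, but there is a genuine circularity in your implementation that is not resolved by the bootstrap sketch. You take $\phi_0$ to be the ground state of $h^{\mathrm{H}}$ itself and set $\varrho_> = \varrho^{\mathrm{H},\mathrm{c}} - N_0|\phi_0|^2$, then claim $\|\varrho_>\|_{L^\infty} \lesssim \beta^{-3/2}$ by ``a semiclassical trace estimate in the spirit of Lemma~\ref{lem:semiclassicaltraceestimate}, but with $h^{\mathrm{H}}$ replaced by the harmonic oscillator $h$.'' This replacement does not work: $\varrho_>(x) = \sum_{j\geq 1}\langle n_j\rangle_G|\varphi_j(x)|^2$ involves the excited eigenmodes and occupation numbers of $h^{\mathrm{H}}$, and the occupation $\langle n_1\rangle_G \lesssim (e^{\beta(e_1^{\mathrm{H}}-\mu^{\mathrm{H}})}-1)^{-1}$ can only be bounded by $(\beta\omega)^{-1}$ \emph{after} one knows $e_1^{\mathrm{H}} - e_0^{\mathrm{H}} \gtrsim \omega$. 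Without that, the only a priori bound is $\langle n_1\rangle_G \leq N$, which together with $\|\varphi_1\|_\infty^2 \lesssim \omega^{3/2}$ gives $\|v_N\ast\varrho_>\|_\infty \lesssim \omega^{3/2}$, not $o(\omega)$. That fails to start your bootstrap (and $h^{\mathrm{H}}\geq h$ alone gives no gap information: it controls each $e_j^{\mathrm{H}}$ from below but says nothing about the difference $e_1^{\mathrm{H}}-e_0^{\mathrm{H}}$).

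The paper breaks this circle by replacing your self-referential $\phi_0$ with a fixed, externally defined object: $\phi^{\mathrm{H}}_{N_0}$, the minimizer of the zero-temperature Hartree functional with coupling $N_0$ where $N_0$ is the \emph{ideal-gas} condensate occupation determined by $\beta$ and $\omega$. Lemma~\ref{lem:Hartreeenergy} is the a priori energy bound that pins $\gamma^{\mathrm{H},\mathrm{c}}$ to the thermal cloud of $h + N_0 v_N\ast|\phi^{\mathrm{H}}_{N_0}|^2$ in trace norm, hence in $L^1$ of densities, with explicit rate $N^{11/12}$; combined with the $L^\infty$ bounds of Lemma~\ref{lem:aprioridensity} this gives an $L^2$ resolvent estimate \eqref{eq:spectralgap57}--\eqref{7:31}, from which eigenvalue closeness follows. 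The gap of the fixed reference operator is Lemma~\ref{lem:eigenvaluebound}, proved by comparing with the NLS family $h + \alpha|\phi_\alpha|^2$ and using continuity of eigenvalues on the compact range $\alpha\in[0,\alpha_0]$ — a soft argument that works uniformly for all $\kappa\in[0,1/6]$. Your explicit parity/Temple argument for the zero-temperature gap is also weaker than needed: it handles the odd sector, but the even first excited state is dispatched by a heuristic about $v_N$ being ``a smooth bump'' on scale $N^{-\kappa}$ — this has no content when $\kappa=0$, where $v_N = N^{-1}v$ has full width, and even for $\kappa > 0$ the claim that the bump shifts the ground state and the second even level ``by roughly the same amount'' would require quantitative control that is not supplied and is precisely what the compactness argument supplants.
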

\begin{remark}
	The condition $0 \leq \kappa \leq 1/6$ is not needed in the proof of Proposition~\ref{prop:spectralgap}. We only require $\kappa \geq 0$.
\end{remark}
\begin{remark}
	The above Proposition remains valid if $\varrho^{\mathrm{H},\mathrm{c}}$ is replaced by $\varrho^{\mathrm{H}}$ in the definition of the Hartree operator $h^{\mathrm{H}}$. We need the bound in the form stated in Proposition~\ref{prop:spectralgap} because we choose $G^{\mathrm{H},\mathrm{c}}$ as a trial state to prove an upper bound for the  free energy in Section~\ref{sec:freeenergybounds2}.   
\end{remark}
The proof of Proposition~\ref{prop:spectralgap} is carried out in three steps. In the first step we show that the Hartree free energy can be approximated by the free energy of a non-interacting thermal cloud plus the minimum of the Hartree energy functional capturing the energy of the condensate. From this statement we conclude in the second step that $\gamma^{\mathrm{H},\mathrm{c}}$ can be approximated by the 1-pdm of a grand-canonical Gibbs state corresponding to a Hartree operator at $T=0$, to leading order in trace norm. In the third step, we use this statement to relate the spectral gap estimate for the Hartree operator in \eqref{eq:spectralgap3b} to the same question for a Hartree operator at $T=0$. The physical picture behind this reasoning is that because of its higher kinetic energy, the thermal cloud spreads over a much larger volume in the trap than the condensate. Accordingly, it is much more dilute and its interaction energy is much smaller than the one of the condensate.

Before we start with the proof, we introduce the Hartree and NLS energy functionals and some notation concerning the ideal Bose gas in the trap. Let $Q(h)$ denote the form domain of $h$. For functions $\phi \in Q(h)$ we define the two energy functionals by 
\begin{equation}
\mathcal{E}^{\mathrm{H}}(\phi) = \left\langle \phi, h \phi \right\rangle + \frac{\lambda}{2} \int_{\mathbb{R}^3} | \phi(x) |^2 v_N(x-y) | \phi(y) |^2 \de(x,y)
\label{eq:spectralgap2a}
\end{equation}
with $\lambda \geq 0$ and by
\begin{equation}
\mathcal{E}^{\mathrm{NLS}}(\phi) = \left\langle \phi, h \phi \right\rangle + \frac{\alpha}{2} \int_{\mathbb{R}^3} | \phi(x) |^4  \de x
\label{eq:spectralgapn1}
\end{equation}
with $\alpha \geq 0$, respectively. The constants $\lambda$ and $\alpha$ are introduced to be able to capture a condensate fraction different from one. The corresponding energies are given by
\begin{equation}
E^{\mathrm{H}}(\lambda) = \inf_{\Vert \phi \Vert = 1} \mathcal{E}^{\mathrm{H}}(\phi) \quad \text{and} \quad E^{\mathrm{NLS}}(\alpha) = \inf_{\Vert \phi \Vert = 1} \mathcal{E}^{\mathrm{NLS}}(\phi).
\label{eq:spectralgapn2}
\end{equation}
Both energy functionals \eqref{eq:spectralgap2a} and \eqref{eq:spectralgapn1} have a unique minimizer, which we denote by $\phi^{\mathrm{H}}_{\lambda}$ and $\phi^{\mathrm{NLS}}_{\alpha}$, respectively. In case of the Hartree energy functional this follows from the assumption $\hat{v} \geq 0$. Note that the Hartree energy $E^{\mathrm{H}}(\lambda)$ and $\phi^{\mathrm{H}}_{\lambda}$ depend, besides $\lambda$, also on $N$ through $v_N$, but we suppress this dependence in the notation for simplicity. 

By 
\begin{equation}
F_0(\beta,N,\omega) = \frac{1}{\beta} \trs \left[ \ln \left( 1 - e^{-\beta \left( h - \mu_0 \right) }\right) \right] + \mu_0 N
\label{eq:101}
\end{equation}
we denote the free energy of the ideal Bose gas in the grand-canonical ensemble. Here the chemical potential $\mu_0$ is chosen such that the expected number of particles in the systems equals $N$, that is, $\trs \gamma_{N,0} = N$ with
\begin{equation}
\gamma_{N,0} = \frac{1}{e^{\beta(h-\mu_0)} -1}.
\label{eq:gc1-pdm}
\end{equation}
Additionally, $N_0 = (e^{\beta \mu_0} -1)^{-1}$ is the expected number of particles in the condensate.

The first two steps in the proof of Proposition~\ref{prop:spectralgap} are captured in the following Lemma.
\begin{lemma}
	\label{lem:Hartreeenergy}
	Let $v_N$ satisfy the assumptions of Theorem~\ref{thm:freeenergyscaling1} but let $\kappa \geq 0$ be arbitrary. In the limit $N \to \infty$ with $\beta \omega \gtrsim N^{-1/3}$ we have 	
	\begin{equation}
	\left| F^{\mathrm{H}}(\beta,N,\omega) - F_0(\beta,N,\omega) - N_0 E^{\mathrm{H}}(N_0) \right| \lesssim \omega N^{2/3}.
	\label{eq:spectralgap5d}
	\end{equation}
	If $\gamma_N$ is a sequence of approximate minimizers of the Hartree free energy functional in the sense that
	\begin{equation}
	\left| \mathcal{F}^{\mathrm{H}}(\Gamma_N) - F_0(\beta,N,\omega) - N_0 E^{\mathrm{H}}(N_0) \right| \leq \omega \delta
	\label{eq:spectralgap5e}
	\end{equation}
	for some $\delta > 0$ then 
	\begin{equation}
	\left\Vert \gamma_N - \frac{1}{\exp\left( \beta \left( h + N_0 v_N \ast | \phi^{\mathrm{H}}_{N_0} |^2(x) - \mu_N \right) \right) - 1 } \right\Vert_1 \lesssim N^{11/12} + N^{3/4} \delta^{1/4}\,.
	\label{eq:spectralgap5f}
	\end{equation}
	Here $\mu_N = E^{\mathrm{H}}(N_0) + \frac{N_0}{2} \int_{\mathbb{R}^6} | \phi^{\mathrm{H}}_{N_0}(x) |^2 v_N(x-y) | \phi^{\mathrm{H}}_{N_0}(y) |^2 \de(x,y) + \mu_0$ with $\mu_0$ in \eqref{eq:gc1-pdm}. 
\end{lemma}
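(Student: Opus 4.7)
The strategy is to compare $\gamma^{\mathrm{H},\mathrm{c}}$ to the non-interacting Gibbs state
\[
\tilde\gamma = \bigl(\exp(\beta(H^{\mathrm{H}} - \mu_N)) - 1\bigr)^{-1}, \qquad H^{\mathrm{H}} := h + N_0 v_N \ast |\phi^{\mathrm{H}}_{N_0}|^2,
\]
at the ``frozen'' Hartree operator $H^{\mathrm{H}}$, whose ground state $\phi^{\mathrm{H}}_{N_0}$ with eigenvalue $\epsilon = E^{\mathrm{H}}(N_0) + \tfrac{N_0}{2}\int |\phi^{\mathrm{H}}_{N_0}|^2 v_N |\phi^{\mathrm{H}}_{N_0}|^2 \de(x,y)$ is the zero-temperature Hartree minimizer at coupling $N_0$. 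The choice $\mu_N = \epsilon + \mu_0$ is such that the condensate occupation in $\tilde\gamma$ equals $N_0$, matching the ideal gas; the thermal cloud of $\tilde\gamma$, governed by the excited spectrum of $H^{\mathrm{H}}$ (close to that of $h$ since $N_0 v_N \ast |\phi^{\mathrm{H}}_{N_0}|^2$ is a bounded perturbation), is spread dilutely on the thermal length scale $\omega^{-1/2}N^{1/6}$. We first establish the free-energy asymptotic \eqref{eq:spectralgap5d} and then deduce the trace-norm bound \eqref{eq:spectralgap5f}.

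\textbf{Step 1 (free energy).} The upper bound on $F^{\mathrm{H}}$ follows from a small-$\mu$-adjusted version of $\tilde\gamma$, normalized so that $\trs \gamma^{\mathrm{trial}} = N$, used as a trial $1$-pdm in $\mathcal{F}^{\mathrm{H}}$. Writing $\rho_{\gamma^{\mathrm{trial}}} = N_0 |\phi^{\mathrm{H}}_{N_0}|^2 + \rho_{\mathrm{th}}$, the condensate-condensate contribution to $D_N$ combines with the condensate kinetic energy to yield $N_0 E^{\mathrm{H}}(N_0)$; the kinetic/entropic contribution of the thermal cloud reproduces $F_0$ up to spectral-shift corrections of order $\omega N^{2/3}$; and the cross and thermal-thermal interaction terms are $O(\omega N^{2/3})$ by the diluteness of $\rho_{\mathrm{th}}$. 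For the lower bound we use $\hat v \geq 0$ to linearize around $\rho^* := N_0 |\phi^{\mathrm{H}}_{N_0}|^2$,
\[
\mathcal{F}^{\mathrm{H}}(\gamma) \;\geq\; \trs\!\bigl[(h + v_N \ast \rho^*)\gamma\bigr] - D_N(\rho^*, \rho^*) - \tfrac{1}{\beta} s(\gamma) \;=:\; \tilde{\mathcal{F}}(\gamma),
\]
which is minimized over $\mathcal{D}^{\mathrm{H}}_N$ by $\tilde\gamma' = (\exp(\beta(H^{\mathrm{H}} - \mu')) - 1)^{-1}$ with $\mu'$ enforcing $\trs \tilde\gamma' = N$. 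A perturbative computation shows $|\mu' - \mu_N| = o(\omega)$, and evaluating $\tilde{\mathcal{F}}(\tilde\gamma')$ by Gibbs' formula yields the matching lower bound $F^{\mathrm{H}} \geq F_0 + N_0 E^{\mathrm{H}}(N_0) - O(\omega N^{2/3})$.

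\textbf{Step 2 (trace-norm bound).} The identity
\[
\mathcal{F}^{\mathrm{H}}(\gamma) - \tilde{\mathcal{F}}(\tilde\gamma') = \tfrac{1}{\beta}\mathcal{S}(\gamma, \tilde\gamma') + D_N(\rho_\gamma - \rho^*, \rho_\gamma - \rho^*),
\]
combined with \eqref{eq:spectralgap5e} and the free-energy bounds of Step 1, gives $\mathcal{S}(\gamma_N, \tilde\gamma') \lesssim \beta\omega(N^{2/3} + \delta)$. Following Section~\ref{sec:boundgrandcanonicaldensitymatrix}, we apply Lemma~\ref{lem:relativeentropy} to $Q^{\mathrm{H}}\gamma_N Q^{\mathrm{H}}$ versus $Q^{\mathrm{H}}\tilde\gamma' Q^{\mathrm{H}}$, with $Q^{\mathrm{H}} = 1 - |\phi^{\mathrm{H}}_{N_0}\rangle\langle\phi^{\mathrm{H}}_{N_0}|$. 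The essential input is $\|Q^{\mathrm{H}}\tilde\gamma' Q^{\mathrm{H}}\|_{\mathrm{op}} \lesssim (\beta\omega)^{-1}$, following from a separately established uniform lower bound of order $\omega$ on the spectral gap of $H^{\mathrm{H}}$ above $\phi^{\mathrm{H}}_{N_0}$ (proved via min-max, exploiting that $\|N_0 v_N \ast |\phi^{\mathrm{H}}_{N_0}|^2\|_{L^\infty}$ is bounded uniformly in $N$ since $\|v_N\|_{L^1} = N^{-1}\|v\|_1$ and $\|\phi^{\mathrm{H}}_{N_0}\|_\infty$ is controlled by elliptic regularity for the Hartree equation). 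This gives $\|Q^{\mathrm{H}}(\gamma_N - \tilde\gamma')Q^{\mathrm{H}}\|_1 \lesssim N^{5/6} + N^{1/2}\delta^{1/2}$; particle-number conservation controls $|\trs[P^{\mathrm{H}}(\gamma_N - \tilde\gamma')P^{\mathrm{H}}]|$ at the same order; and the off-diagonal estimate $\|P^{\mathrm{H}}\gamma_N Q^{\mathrm{H}}\|_1 \lesssim N^{1/2}\|Q^{\mathrm{H}}\gamma_N Q^{\mathrm{H}}\|_{\mathrm{op}}^{1/2} \lesssim N^{11/12} + N^{3/4}\delta^{1/4}$ (Cauchy--Schwarz as in \eqref{eq:asymptoticsgc12}) dominates. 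A perturbative estimate $\|\tilde\gamma - \tilde\gamma'\|_1 \lesssim |\mu_N - \mu'|$ completes \eqref{eq:spectralgap5f}.

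\textbf{Main obstacle.} The principal difficulty is the free-energy upper bound in Step 1: the interaction $v_N(x) = N^{-1+3\kappa}v(N^\kappa x)$ has $\|v_N\|_{L^\infty}$ growing with $N$ when $\kappa > 0$, so the crude operator-norm bounds of the type \eqref{eq:upperboundcan7}, \eqref{eq:55} and \eqref{eq:57} used in Section~\ref{sec:proofupperboundc} no longer suffice. One must instead exploit $\|v_N\|_{L^1} = N^{-1}\|v\|_1$ together with sharp pointwise control on $|\phi^{\mathrm{H}}_{N_0}|^2$ and on the off-diagonal decay of the integral kernel of the ideal-gas $1$-pdm $\gamma_{N,0}$, carefully separating near- and far-diagonal contributions when integrated against $v_N$. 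This is the delicate \emph{exchange term} estimate flagged in the introduction to Section~\ref{sec:lastsection}.
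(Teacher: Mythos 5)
Your overall architecture matches the paper's: upper bound via an explicit trial $1$-pdm, lower bound by linearizing the quadratic term around $\rho^* = N_0|\phi^{\mathrm H}_{N_0}|^2$ using $\hat v_N\geq 0$ and the Gibbs variational identity with relative entropy, then the coercivity estimate (Lemma~\ref{lem:relativeentropy}) plus the spectral gap of $H^{\mathrm H}$ (your Lemma analogue is the paper's Lemma~\ref{lem:eigenvaluebound}) to convert the relative-entropy bound into a trace-norm bound — with the final $N^{11/12}$ coming from the off-diagonal $P^{\mathrm H}\gamma Q^{\mathrm H}$ exactly as in \eqref{eq:asymptoticsgc12}. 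There are two points where you deviate from the paper, and one genuine misidentification.

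The deviations are harmless but worth noting. For the upper bound the paper does not use the Gibbs state of $H^{\mathrm H}$; it uses the simpler trial $\gamma_N = N_0|\phi^{\mathrm H}_{N_0}\rangle\langle\phi^{\mathrm H}_{N_0}| + Q\gamma_{N,0}$ with $Q$ orthogonal to the \emph{ideal-gas} ground state, which lets the kinetic and entropy contributions of the thermal cloud reproduce $F_0(\beta,N,\omega)$ directly without any spectral-shift corrections. For the lower bound the paper keeps $\mu=\tilde\mu_N+\mu_0$ fixed (so $\trs\tilde\gamma'$ is \emph{not} normalized to $N$) and instead uses that the relative-entropy identity \eqref{eq:spectralgap33} holds for any $\mu$; it then compares the spectral term to $F_0$ by concavity along $h\to h+v_N\ast\rho^*$ and $\mu_0\to\mu$, avoiding your promised perturbative estimate $|\mu'-\mu_N|=o(\omega)$, which would require a separate argument.

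The genuine error is in your ``Main obstacle'' paragraph. You assert that the crux here is the exchange-term estimate and that the bounds \eqref{eq:upperboundcan7}, \eqref{eq:55}, \eqref{eq:57} break down when $\kappa>0$. But those bounds appear only in the comparison $F^{\mathrm c}\leq F^{\mathrm H}$ (Section~\ref{sec:proofupperboundc}/Section~\ref{sec:freeenergybounds2}), where one must control the $2$-particle density of the quantum trial state. The Hartree functional $\mathcal F^{\mathrm H}$ by construction contains only the direct term $D_N(\rho_\gamma,\rho_\gamma)$ — there is no exchange term at all. The actual technical input needed for \eqref{eq:spectralgap5d} is the pointwise $L^\infty$ estimate on the thermal cloud density, Lemma~\ref{lem:aprioridensity}, proved via the Feynman–Kac formula; this is what makes the cross and thermal–thermal interaction terms (cf.\ \eqref{eq:97}–\eqref{eq:99}) of order $\omega N^{2/3}$ because the relevant interaction strength is $\|v_N\|_{L^1}=N^{-1}\|v\|_1$, independent of $\kappa$. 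Your intuition that ``diluteness of $\rho_{\mathrm{th}}$'' is the key is correct, but you have attached it to the wrong mechanism: it is a density bound, not an exchange estimate, and it is needed for both $\kappa=0$ and $\kappa>0$.
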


A  statement similar to Lemma~\ref{lem:Hartreeenergy} was  proved in \cite{me} for the full quantum mechanical free energy in the GP limit (i.e., $\kappa=1$). The proof of Lemma~\ref{lem:Hartreeenergy} uses similar ideas but in a much simpler setting. To not interrupt the main line of the argument, we therefore defer the proof of Lemma~\ref{lem:Hartreeenergy} to Section~\ref{sec:Appendix} and continue with the proof of Proposition~\ref{prop:spectralgap} here. 
\subsubsection*{Proof of Proposition~\ref{prop:spectralgap}}
We only consider the case  $\kappa > 0$ because it is more complicated since the interaction potential $v_N$ converges to a delta distribution as $N\to \infty$. We first show that the operator $h + \alpha | \phi_{\alpha} (x) |^2$ has a spectral gap above its unique ground state that is uniform in $\alpha \in [0, \alpha_0 ]$, where $\alpha_0 = \int_{\mathbb{R}^3} v(x) \de x$, and afterwards reduce the full problem to this case. To simplify the notation, we denoted the NLS minimizer by $\phi_{\alpha}$. For the first step, we use that the map $[0,\infty)\ni\alpha \mapsto \left| \phi_{\alpha} \right|^2$ is continuous in $L^2(\mathbb{R}^3)$, which follows from the uniqueness of minimizers in a standard way. The operator $h + \alpha | \phi_{\alpha} (x) |^2$ has discrete spectrum and a non-zero spectral gap above its ground state, and from the continuity in $\alpha$ we deduce that its eigenvalues are continuous functions of $\alpha$  (see, e.g., \cite[Theorem~VIII.23]{ReSi1980}, or \eqref{eq:spectralgap57}--\eqref{7:31} below).  Hence a uniform  non-zero lower bound  for the spectral gap for $\alpha \in [0,\alpha_0]$ follows from continuity. 

Before we reduce the statement of Proposition~\ref{prop:spectralgap} to the above statement we prove three technical Lemmas. The first shows that the operator $h+ N_0 v_N \ast |\phi^{\mathrm{H}}_{N_0} |^2(x) $ has a uniform spectral gap above its ground state.
\begin{lemma}
	\label{lem:eigenvaluebound}
	Let $\Delta e$ denote the spectral gap of the operator $h+N_0 v_N \ast |\phi^{\mathrm{H}}_{N_0} |^2(x)$ above its unique ground state. Then
	\begin{equation}
		\Delta e \gtrsim \omega
		\label{eq:155}
	\end{equation}
	holds uniformly in $N \geq 1$ and $\beta \omega \gtrsim N^{-1/3}$.
\end{lemma}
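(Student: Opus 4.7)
The plan is to reduce the spectral gap estimate for the Hartree operator $h_{\mathrm{H}} := h + N_0\, v_N \ast |\phi^{\mathrm{H}}_{N_0}|^2$ to the corresponding estimate for the NLS-type operator $h_{\mathrm{NLS}} := h + \tilde\alpha\, |\phi^{\mathrm{NLS}}_{\tilde\alpha}|^2$, for which a uniform spectral gap of order $\omega$ on $\tilde\alpha \in [0,\alpha_0]$ has just been established in the preceding paragraph. Since $\int v_N = \alpha_0/N$ and $v_N$ concentrates at the origin when $\kappa>0$, the natural NLS coupling corresponding to the Hartree coupling $N_0$ is $\tilde\alpha := N_0\alpha_0/N$, which always lies in the compact interval $[0,\alpha_0]$ regardless of how $N_0 \in [0,N]$ depends on $\beta$. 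The case $\kappa=0$ is essentially a direct bounded perturbation of $h$ and I shall handle it as a byproduct of the main argument.

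First I would establish that $\phi^{\mathrm{H}}_{N_0} \to \phi^{\mathrm{NLS}}_{\tilde\alpha}$ in $H^1(\mathbb{R}^3)$ as $N\to\infty$, uniformly in $\tilde\alpha \in [0,\alpha_0]$. This follows by a standard $\Gamma$-convergence argument: on smooth test functions one has $\mathcal{E}^{\mathrm{H}}(\phi) - \mathcal{E}^{\mathrm{NLS}}(\phi) \to 0$, and uniqueness of both minimizers upgrades convergence of the energies to convergence of the minimizers. Bootstrapping the Hartree Euler–Lagrange equation $h_{\mathrm{H}}\phi^{\mathrm{H}}_{N_0} = \mu_{\mathrm{H}}\phi^{\mathrm{H}}_{N_0}$, together with the a priori bound $\|N_0\, v_N \ast |\phi^{\mathrm{H}}_{N_0}|^2\|_\infty \leq \tilde\alpha\, \||\phi^{\mathrm{H}}_{N_0}|^2\|_\infty$, then yields uniform $W^{2,\infty}$ bounds on $|\phi^{\mathrm{H}}_{N_0}|^2$ by elliptic regularity, which is more than enough for what follows.

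Next I would exploit the quantitative convolution estimate
\begin{equation}
\bigl\| v_N \ast f - (\alpha_0/N)\, f \bigr\|_\infty \lesssim N^{-1-\kappa}\, \|\nabla f\|_\infty,
\end{equation}
valid for $f \in W^{1,\infty}(\mathbb{R}^3)$, which is immediate from $|f(x-y)-f(x)| \leq \|\nabla f\|_\infty |y|$ together with the scaling identity $\int |y|\, v_N(y)\, dy = N^{-1-\kappa} \int |z|\, v(z)\, dz$. Applied to $f = |\phi^{\mathrm{H}}_{N_0}|^2$ and combined with the first step, this gives $\bigl\| N_0\, v_N \ast |\phi^{\mathrm{H}}_{N_0}|^2 - \tilde\alpha\, |\phi^{\mathrm{NLS}}_{\tilde\alpha}|^2 \bigr\|_\infty \to 0$ as $N\to\infty$, uniformly in $\tilde\alpha \in [0,\alpha_0]$. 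Since $h_{\mathrm{H}} - h_{\mathrm{NLS}}$ is thus a bounded perturbation whose operator norm tends to zero, the min–max principle implies that the lowest two eigenvalues of $h_{\mathrm{H}}$ differ from those of $h_{\mathrm{NLS}}$ by $o(1)$. Combined with the uniform bound $c\omega$ on the NLS gap, this yields $\Delta e^{\mathrm{H}} \geq c\omega/2$ for all sufficiently large $N$, and the remaining finitely many small $N$ are handled by the general fact that $h$ plus any bounded nonnegative potential has a strictly positive discrete gap.

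The main obstacle is obtaining the strong form of convergence required in the second step, since a priori $v_N$ converges to a delta only in the sense of measures. The critical input is the uniform-in-$N$ regularity of $\phi^{\mathrm{H}}_{N_0}$, which is extracted by iterating the Euler–Lagrange equation: the uniform $L^\infty$ bound on the effective potential $N_0\, v_N \ast |\phi^{\mathrm{H}}_{N_0}|^2 \leq \alpha_0 \||\phi^{\mathrm{H}}_{N_0}|^2\|_\infty$ puts $\phi^{\mathrm{H}}_{N_0}$ in the domain of $h$ with uniform bounds, and iterating this gives the required $W^{1,\infty}$ regularity of $|\phi^{\mathrm{H}}_{N_0}|^2$. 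Uniformity near $\tilde\alpha = 0$ is not an issue because there the perturbation is already small and the gap is close to $\omega$ by direct comparison with the harmonic oscillator.
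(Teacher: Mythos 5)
Your overall strategy — reducing the Hartree operator $h+N_0\,v_N\ast|\phi^{\mathrm{H}}_{N_0}|^2$ to the NLS operator $h+\tilde\alpha\,|\phi_{\tilde\alpha}|^2$ with $\tilde\alpha = N_0\|v\|_{L^1}/N \in [0,\alpha_0]$ and then invoking the uniform NLS gap — is exactly the conceptual route the paper takes. But your implementation works in position space via the quantitative convolution bound
\begin{equation*}
\bigl\|v_N\ast f - (\alpha_0/N)\,f\bigr\|_{\infty} \lesssim \Bigl(\int |y|\,v_N(y)\,\mathrm{d}y\Bigr)\,\|\nabla f\|_{\infty} = N^{-1-\kappa}\Bigl(\int|z|\,v(z)\,\mathrm{d}z\Bigr)\|\nabla f\|_{\infty},
\end{equation*}
and this tacitly assumes $\int|z|\,v(z)\,\mathrm{d}z<\infty$. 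That is \emph{not} among the hypotheses of Theorem~\ref{thm:freeenergyscaling1}, which requires only $v,\hat v\in L^1$ with $v,\hat v\geq 0$; e.g.\ a $v$ decaying like $|x|^{-3-\varepsilon}$ with $\varepsilon<1$ has $v\in L^1$ but infinite first moment. This is the genuine gap. The paper sidesteps it by doing the whole comparison in Fourier space: after the variational estimate $e_0\leq\langle\phi_\alpha,h_1\phi_\alpha\rangle$, the error is controlled by $\int(1+p^2)^{-2}|\hat v(p/N^\kappa)-\hat v(0)|\,\mathrm{d}p$ (which tends to zero by dominated convergence, using only continuity and boundedness of $\hat v$, i.e.\ only $v\in L^1$) and by $\|\phi^{\mathrm{H}}_{N_0}-\phi_\alpha\|_{L^2}$ (obtained from $|E^{\mathrm{H}}(N_0)-E^{\mathrm{NLS}}(\alpha)|\to 0$ and uniqueness). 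The regularity input needed is an $L^1$-bound on $(1-\Delta)\phi_\alpha^2$, which follows cleanly from the NLS energy, rather than $W^{1,\infty}$-control of the \emph{Hartree} minimizer that your bootstrap sketch asserts without detail.

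Your argument could be salvaged for $\kappa>0$ without the first moment by replacing the quantitative rate with a soft dominated-convergence splitting (on $|y|<\delta$ use equicontinuity of $|\phi^{\mathrm{H}}_{N_0}|^2$, on $|y|\geq\delta$ use $\int_{|z|\geq N^\kappa\delta}v\to 0$), but then you still need a uniform modulus of continuity for $|\phi^{\mathrm{H}}_{N_0}|^2$, i.e.\ a uniform gradient bound, which is the other under-justified step. The Fourier approach of the paper is tighter: it only ever needs the stated assumptions and the energy-level closeness $\|\phi^{\mathrm{H}}_{N_0}-\phi_\alpha\|\to 0$, not any pointwise regularity of the Hartree minimizer.
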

\begin{proof}
	Let $\{ e_j \}_{j=0}^{\infty}$ and $\{ \lambda_j \}_{j=0}^{\infty}$ be the increasingly ordered eigenvalues of the operators $h_1 = h+ N_0 v_N \ast |\phi^{\mathrm{H}}_{N_0} |^2(x) $ and $h_2 = h + \alpha | \phi_{\alpha} (x) |^2$ with $\alpha = N_0 N^{-1} \int v(x) \de x$. 
	In the following we will show that
	\begin{equation}
		\lim_{N \to \infty } \sup_{\beta \omega \gtrsim N^{-1/3}} \left| e_j - \lambda_j \right| = 0
		\label{eq:166}
	\end{equation}
	holds for $j=0,1$. In combination with the discussion in the paragraph preceding Lemma \ref{lem:eigenvaluebound} concerning the spectral gap of the operator $h_2$, this will prove the claim.
	
	We only give the proof for the difference of the lowest eigenvalues. The statement for the difference of the second eigenvalues follows from the same arguments and the min-max principle, see e.g. \cite[Theorem~12.1]{LiLo10}. (This way one easily checks that \eqref{eq:166} actually holds for all $j \geq 0$.). Note that $h_2 \phi_{\alpha} = \lambda_0 \phi_{\alpha}$ because $\phi_{\alpha}$ is the minimizer of $\mathcal{E}^{\mathrm{NLS}}$ in \eqref{eq:spectralgapn1}. We have
	\begin{equation}
		e_0 = \inf_{\Vert \psi \Vert = 1} \langle \psi, h_1 \psi \rangle \leq \langle \phi_{\alpha}, h_1 \phi_{\alpha} \rangle = \lambda_0 + \int_{\mathbb{R}^3} \left| \phi_{\alpha}(x) \right|^2 \left(N_0 v_N \ast \left|\phi^{\mathrm{H}}_{N_0} \right|^2(x) - \alpha | \phi_{\alpha} (x) |^2 \right) \de x.
		\label{eq:156}
	\end{equation}
	In the following we use the notation $\varrho_{f}(x) = |f(x)|^2$. Since $\hat{v}_N(p) = N^{-1} \hat{v}(p/N^{\kappa})$ we can bound the absolute value of the last term on the right-hand side of \eqref{eq:156} by
	\begin{equation}
		\frac{N_0}{N} \left| \int_{\mathbb{R}^3} \overline{ \hat{\varrho}_{\phi_{\alpha}}(p) } \hat{v}(p/N^{\kappa}) \left[ \hat{\varrho}_{\phi^{\mathrm{H}}_{N_0}}(p) - \hat{\varrho}_{\phi_{\alpha}}(p) \right] \de p \right| + \frac{N_0}{N} \left| \int_{\mathbb{R}^3} \left| \hat{\varrho}_{\phi_{\alpha}}(p) \right|^2 \left[ \hat{v}(p/N^{\kappa}) - \hat{v}(0) \right] \de p \right|.
		\label{eq:157}
	\end{equation}
	The second term is bounded by
	\begin{equation}
		\left( \sup_{p \in \mathbb{R}^3} \left| \left(1+p^2 \right) \hat{\varrho}_{\phi_{\alpha}}(p) \right| \right)^2 \times \int_{\mathbb{R}^3}  \left( \frac{1}{1+p^2} \right)^2 \left| \hat{v}(p/N^{\kappa}) - \hat{v}(0) \right| \de p.
		\label{eq:158}
	\end{equation}
Since $v\in L^1$ by assumption, $\hat v$ is continuous and the last integral goes to zero as $N\to \infty$ by dominated convergence. To derive a bound on the first term in \eqref{eq:158}, we note that $ (1+p^2 ) \hat{\varrho}_{\phi_{\alpha}}(p)$ is the Fourier transform of
\begin{equation}
(1 - \Delta) \phi_\alpha^2  = \phi_\alpha^2 - 2 \phi_\alpha \Delta \phi_\alpha - 2 |\nabla \phi_\alpha|^2 = (1+ 2\lambda_0) \phi_\alpha^2 - \frac{\omega^2}{2} x^2 \phi_\alpha^2 -2  \alpha \phi_\alpha^4 - 2 |\nabla \phi_\alpha|^2\,,
\end{equation}
hence the supremum is bounded by the $L^1$-norm of this expression. 
All the terms on the right-hand side are clearly uniformly bounded in $L^1$ for $0\leq \alpha\leq \alpha_0$ since they are dominated by the energy.

	 It remains to investigate the first term on the right-hand side of \eqref{eq:157}.
	We bound it as  
	\begin{align}
		\frac{N_0}{N} \left| \int_{\mathbb{R}^3} \hat{\varrho}_{\phi_{\alpha}}(p) \hat{v}(p/N^{\kappa}) \left[ \hat{\varrho}_{\phi^{\mathrm{H}}_{N_0}}(p) - \hat{\varrho}_{\phi_{\alpha}}(p) \right] \de p \right| &\leq \Vert \hat{v} \Vert_{L^{\infty}(\mathbb{R}^3)} \Vert \hat{\varrho}_{\phi_{\alpha}} \Vert_{L^1(\mathbb{R}^3)} \left\Vert \left| \phi^{\mathrm{H}}_{N_0} \right|^2 - \left| \phi_{\alpha} \right|^2 \right\Vert_{L^1(\mathbb{R}^3)} \label{eq:163} \\
		&\leq 4 \Vert \hat{v} \Vert_{L^{\infty}(\mathbb{R}^3)}  \Vert \hat{\varrho}_{\phi_{\alpha}} \Vert_{L^1(\mathbb{R}^3)}  \left\Vert \phi^{\mathrm{H}}_{N_0} - \phi_{\alpha} \right\Vert. \nonumber
	\end{align}
	The analysis above shows that $ \hat{\varrho}_{\phi_{\alpha}}$ is uniformly bounded in $L^1$, hence the claimed uniform convergence of the right-hand side follows if 
	\begin{equation}
		\lim_{N \to \infty} \sup_{\beta \omega \gtrsim N^{-1/3}} \left\Vert \phi^{\mathrm{H}}_{N_0} - \phi_{\alpha} \right\Vert = 0 \,.
		\label{eq:164}
	\end{equation}
	To prove \eqref{eq:164} we first show that
	\begin{equation}
		\lim_{N \to \infty} \sup_{\beta \omega \gtrsim N^{-1/3}} \left| E^{\mathrm{H}}(N_0) - E^{\mathrm{NLS}}(\alpha) \right| = 0. 
		\label{eq:165}
	\end{equation}
	Since $\hat{v}_N(0) \geq \hat{v}_N(p)$ for all $p \in \mathbb{R}^3$ we have $E^{\mathrm{NLS}}(\alpha) \geq E^{\mathrm{H}}(N_0)$. Moreover,
	\begin{equation}
		E^{\mathrm{NLS}}(\alpha) \leq \mathcal{E}_{\alpha}\left( \phi^{\mathrm{H}}_{N_0} \right) = E^{\mathrm{H}}(N_0) + \frac{N_0}{2N } \int_{\mathbb{R}^3} \left| \hat{\varrho}_{\phi_{\alpha}}(p) \right|^2 \left[ \hat{v}(0) - \hat{v}(p/N^{\kappa}) \right] \de p.
		\label{eq:167}
	\end{equation}
	In combination with the bound in \eqref{eq:158} this implies \eqref{eq:165}. Using \eqref{eq:165}, we check that $\phi^{ \mathrm{H} }_{N_0}$ is a minimizing sequence for $\mathcal{E}^{\mathrm{NLS}}$. The minimizer of $\mathcal{E}^{\mathrm{NLS}}$ is unique, and standard arguments therefore imply \eqref{eq:164}.  This proves an upper bound for $e_0 - \lambda_0$ with the claimed asymptotic behavior. A lower bound is obtained with the same argument if the roles of $h_1$ and $h_2$ are interchanged. This proves \eqref{eq:155} for $j=0$.
\end{proof}

The second Lemma concerns a uniform bound for $\Vert \phi_{\alpha} \Vert_{L^{\infty}(\mathbb{R}^3)}$.
For a proof, see e.g., \cite[Lemma 2.1]{LSY2dGP}. 

\begin{lemma}
	\label{lem:convergenceofNLSint}
	The norm $\Vert \phi_{\alpha} \Vert_{L^{\infty}(\mathbb{R}^3)}$ is bounded uniformly in $\alpha \in [0,\int v(x) \de x]$.
\end{lemma}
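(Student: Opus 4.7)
I would bound $\phi_\alpha$ in $L^\infty$ by comparing against the harmonic oscillator heat kernel. The starting point is the Euler--Lagrange equation
\begin{equation*}
\left( h + \alpha \left|\phi_\alpha\right|^2 \right) \phi_\alpha = \mu_\alpha \phi_\alpha \,,
\end{equation*}
which identifies $\phi_\alpha \geq 0$ (chosen non-negative) as the ground state of $H_\alpha := h + \alpha|\phi_\alpha|^2$ with eigenvalue $\mu_\alpha$.

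First, I would show that the chemical potential is bounded uniformly on $[0,\int v\,\de x]$. Testing the Euler--Lagrange equation against $\phi_\alpha$ gives $\mu_\alpha = \langle \phi_\alpha, h\phi_\alpha\rangle + \alpha\int|\phi_\alpha|^4\,\de x$, and using the ground state of $h$ as a trial function for $\mathcal{E}^{\mathrm{NLS}}$ yields $E^{\mathrm{NLS}}(\alpha) \leq E_0 + \alpha C$, where $E_0$ denotes the ground state energy of $h$. Combined with the trivial bound $\langle \phi_\alpha, h\phi_\alpha\rangle \geq E_0$, both summands in $\mu_\alpha$ are then controlled uniformly in $\alpha$.

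Next, writing $\phi_\alpha = e^{t\mu_\alpha} e^{-tH_\alpha}\phi_\alpha$ for any $t > 0$ and invoking the Feynman--Kac representation (or, equivalently, the Trotter product formula, using that multiplication by $e^{-s\alpha|\phi_\alpha|^2}$ has kernel pointwise bounded by $1$) provides the pointwise kernel domination $e^{-tH_\alpha}(x,y) \leq e^{-th}(x,y)$. Cauchy--Schwarz together with $\|\phi_\alpha\|_{L^2} = 1$ then gives
\begin{equation*}
\phi_\alpha(x) \leq e^{t\mu_\alpha} \int_{\mathbb{R}^3} e^{-th}(x,y)\phi_\alpha(y)\de y \leq e^{t\mu_\alpha} \sup_{x\in\mathbb{R}^3}\left\|e^{-th}(x,\cdot)\right\|_{L^2}.
\end{equation*}
By Mehler's formula the oscillator heat kernel is an explicit Gaussian, from which one reads off that the supremum on the right is finite for every fixed $t > 0$ and independent of $\alpha$. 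Choosing any such $t$ and combining with the uniform bound on $\mu_\alpha$ from the first step yields the claim. The one subtlety worth flagging is that the mere operator inequality $e^{-tH_\alpha}\leq e^{-th}$ (valid since $H_\alpha \geq h$) would only give $L^2\to L^2$ information; the pointwise kernel bound supplied by Feynman--Kac is what is actually needed to pass to $L^\infty$.
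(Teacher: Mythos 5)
Your proof is correct, and since the paper does not give its own argument for this lemma but only cites \cite[Lemma~2.1]{LSY2dGP}, there is no in-paper proof to compare against directly. Your argument is self-contained, and the ingredients you use --- identifying $\phi_\alpha$ as the non-negative ground state of the Schr\"odinger operator $h+\alpha|\phi_\alpha|^2$ via the Euler--Lagrange equation, the uniform bound $\mu_\alpha \leq E_0 + 2C\alpha$ obtained from the trial state $\psi_0$, the Feynman--Kac pointwise domination $e^{-tH_\alpha}(x,y)\leq e^{-th}(x,y)$, and the Cauchy--Schwarz step reducing to $\|e^{-th}(x,\cdot)\|_{L^2}^2 = e^{-2th}(x,x)$ which is controlled by Mehler's formula --- are all correct and in fact precisely the toolkit the paper itself deploys repeatedly in Section~7 (see for instance the proofs of Lemma~\ref{lem:aprioridensity} and Lemma~\ref{lem:boundonoffdiagonalofDM1}). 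The remark you flag at the end is a genuinely important point: the operator inequality $e^{-tH_\alpha}\leq e^{-th}$ alone would not give pointwise kernel information, and the Feynman--Kac (or Trotter) route to a kernel-level bound is exactly what is needed to conclude in $L^\infty$. The only implicit step worth making explicit is that a non-negative eigenfunction of $H_\alpha$ is automatically its ground state (Perron--Frobenius for Schr\"odinger operators), which justifies writing $\phi_\alpha = e^{t\mu_\alpha}e^{-tH_\alpha}\phi_\alpha$; but this is standard and you do gesture at it by noting $\phi_\alpha$ is chosen non-negative.
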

The third Lemma provides us with pointwise bounds for $\varrho^{\mathrm{H},\mathrm{c}}(x)$ and
\begin{equation}
	\eta(x) = \frac{1}{\exp\left( \beta \left( h + N_0 v_N \ast \left| \phi_{N_0}^{\mathrm{H}} \right|^2(x) - \mu_N \right) \right) -1 }(x,x)
	\label{eq:eta}
\end{equation}
with $\mu_N$ defined below \eqref{eq:spectralgap5f}.
\begin{lemma}
	\label{lem:aprioridensity}
	Let $v_N$ satisfy the assumptions of Theorem~\ref{thm:freeenergyscaling1}. Then we have
	\begin{equation}
	\| \varrho^{\mathrm{H},\mathrm{c}}\|_{L^{\infty}(\mathbb{R}^3)} \lesssim \omega^{3/2} N \quad \text{ as well as } \quad \left\Vert \eta - N_0 \left|  \phi_{N_0}^{\mathrm{H}} \right|^2 \right\Vert_{L^{\infty}(\mathbb{R}^3)} \lesssim \beta^{-3/2}\left(1+ (\beta\omega)^{1/2}\right) \,.
	\label{eq:spectralgap45}
	\end{equation}
	The constants in the above inequalities are uniform in $N \geq 1$ and $\beta \omega \gtrsim N^{-1/3}$.
\end{lemma}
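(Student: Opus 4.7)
The plan is to diagonalize the relevant one-particle Hamiltonians, separate the ground state (condensate) contribution from the thermal cloud contribution, and control the thermal cloud by semiclassical heat-kernel estimates, using the spectral-gap information from Proposition~\ref{prop:spectralgap} and Lemma~\ref{lem:eigenvaluebound}.

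For the bound on $\|\varrho^{\mathrm{H},\mathrm{c}}\|_{L^\infty}$, I would use that $G^{\mathrm{H},\mathrm{c}}$ is the non-interacting canonical Bose–Gibbs state with one-particle Hamiltonian $h^{\mathrm{H},\mathrm{c}}=h+v_N*\varrho^{\mathrm{H},\mathrm{c}}$. Writing the spectral decomposition $h^{\mathrm{H},\mathrm{c}}\phi_j=e_j\phi_j$, one has
\begin{equation}
\varrho^{\mathrm{H},\mathrm{c}}(x)=\langle n_0\rangle_{G^{\mathrm{H},\mathrm{c}}}|\phi_0(x)|^2+\sum_{j\geq 1}\langle n_j\rangle_{G^{\mathrm{H},\mathrm{c}}}|\phi_j(x)|^2.
\end{equation}
The ground-state term is at most $N\|\phi_0\|_\infty^2$, and since $\phi_0$ is the ground state of $h$ plus a nonnegative potential, Brascamp–Lieb log-concavity (or a direct scaling/Sobolev argument using the eigenvalue equation and the harmonic confinement) gives $\|\phi_0\|_\infty^2\lesssim\omega^{3/2}$. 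For the excited-state sum, I would apply the canonical-to-grand-canonical comparison of occupation numbers from \cite[Remark~A.1]{me} and use the uniform spectral gap $e_1-e_0\gtrsim\omega$ of Proposition~\ref{prop:spectralgap} to replace the Bose–Einstein weights by Boltzmann weights modulo a bounded factor $(1-e^{-c\beta\omega})^{-1}$. The remaining exponential sum is then controlled by the harmonic-oscillator heat kernel via the Feynman–Kac monotonicity $h^{\mathrm{H},\mathrm{c}}\geq h$, producing a contribution smaller than the ground-state one.

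For the second bound, the one-particle operator $\tilde h=h+N_0 v_N*|\phi_{N_0}^{\mathrm{H}}|^2$ has $\phi_{N_0}^{\mathrm{H}}$ as its positive ground state with eigenvalue $e_0^*=E^{\mathrm{H}}(N_0)+\tfrac{N_0}{2}\int|\phi_{N_0}^{\mathrm{H}}|^2 v_N*|\phi_{N_0}^{\mathrm{H}}|^2$. The definition of $\mu_N$ is such that the ground-state contribution to $\eta$ is exactly $N_0|\phi_{N_0}^{\mathrm{H}}(x)|^2$, so
\begin{equation}
\eta(x)-N_0|\phi_{N_0}^{\mathrm{H}}(x)|^2=\sum_{j\geq 1}\frac{|\phi_j^*(x)|^2}{e^{\beta(e_j^*-\mu_N)}-1},
\end{equation}
with $\phi_j^*,e_j^*$ the eigenpairs of $\tilde h$. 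Lemma~\ref{lem:eigenvaluebound} ensures $e_j^*-e_0^*\gtrsim\omega$ for $j\geq 1$, and since $\mu_N\leq e_0^*$ one obtains $\beta(e_j^*-\mu_N)\gtrsim\beta\omega$. Then the Bose–Einstein denominators can again be replaced by Boltzmann exponentials, and the resulting sum is the restricted heat kernel $Q_0 e^{-\beta(\tilde h-\mu_N)}Q_0(x,x)$ with $Q_0=1-|\phi_{N_0}^{\mathrm{H}}\rangle\langle\phi_{N_0}^{\mathrm{H}}|$. Decomposing $e^{-\beta\tilde h}(x,x)=e^{-\beta e_0^*}|\phi_{N_0}^{\mathrm{H}}(x)|^2+Q_0 e^{-\beta\tilde h}Q_0(x,x)$ and applying Feynman–Kac together with the explicit Mehler formula for the harmonic oscillator heat kernel gives the desired semiclassical scaling.

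The main obstacle lies in the second bound: a naive Feynman–Kac comparison $e^{-\beta\tilde h}(x,x)\leq e^{-\beta h}(x,x)$ alone is insufficient, because the factor $e^{\beta\mu_N}$ needed to convert heat-kernel weights into chemical-potential–shifted weights can grow exponentially in $\beta\omega$, whereas the target $\beta^{-3/2}(1+(\beta\omega)^{1/2})$ does not. The remedy is a joint subtraction of the ground states of both $\tilde h$ and the underlying harmonic oscillator $h$, keeping track of the relative shift $\mu_N-e_0^{\mathrm{harm}}=O(\omega)$; then the subtracted harmonic-oscillator heat kernel at the diagonal evaluates via Mehler to $(\omega/(4\pi\sinh(\beta\omega)))^{3/2}$ minus its ground-state contribution, which is of order $\beta^{-3/2}$ in the high-temperature regime and decays exponentially in the cold regime, while the correction factor $(1-e^{-c\beta\omega})^{-1}\lesssim 1+(\beta\omega)^{-1}$ produces the additional $(\beta\omega)^{1/2}$ when combined with the leading estimate.
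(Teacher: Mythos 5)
Your plan for the first bound relies on the spectral gap of the Hartree operator $h + v_N*\varrho^{\mathrm{H,c}}$ from Proposition~\ref{prop:spectralgap}, but this creates a circular argument: the proof of Proposition~\ref{prop:spectralgap} itself uses Lemma~\ref{lem:aprioridensity} (via the estimate $\|\Delta\varrho^{\mathrm{H,c}}\|_{L^\infty}\lesssim\omega^{3/2}N$ needed in \eqref{eq:spectralgap53}). In fact, no spectral gap for $h^{\mathrm{H}}$ is needed for the first bound. The paper's mechanism is different: after the canonical-to-grand-canonical comparison, one writes the geometric series $(e^{x}-1)^{-1}=\sum_{\alpha\geq 1}e^{-\alpha x}$ and splits at $\alpha\sim D/(\beta\omega)$. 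The short-time block is handled by Feynman--Kac (dropping the nonnegative potential) and gives $\beta^{-3/2}$. For the long-time block, one further splits at a fixed $j=n$: the finitely many low modes contribute at most $\sup_{j\leq n}\|\varphi_j\|_\infty^2\cdot N\lesssim\omega^{3/2}N$, using only $\sum_\alpha e^{-\beta\alpha(e_j^{\mathrm{H}}-\mu^{\mathrm{H}})}=\langle n_j\rangle_{\mathrm{gc}}\leq N$; and the high modes satisfy $e_j^{\mathrm{H}}-\mu^{\mathrm{H}}\geq\omega$ for a uniform $n$, which follows solely from $e_j^{\mathrm{H}}\geq e_j(h)$ and $\mu^{\mathrm{H}}\lesssim\omega$, not from any gap of the interacting operator.

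Two further issues. First, your claim that $(1-e^{-c\beta\omega})^{-1}$ is a \emph{bounded} factor is false over the full range $\beta\omega\gtrsim N^{-1/3}$: it grows like $(\beta\omega)^{-1}\sim N^{1/3}$. Your estimate may still close numerically, but the reasoning as written is incorrect and would need a careful recount of powers. Second, the ``joint subtraction'' you propose for the second bound is not justified: the Feynman--Kac inequality $e^{-\beta\tilde h}(x,y)\leq e^{-\beta h}(x,y)$ does not survive conjugation by the projections $Q_0$ and $Q_0^{\mathrm{harm}}$, since these do not commute with pointwise positivity and project onto orthogonal complements of \emph{different} ground states. The paper avoids this obstacle entirely by splitting the series in $\alpha$: the condensate term is retained in the short-time block (where the factor $e^{\beta\alpha\mu_N}$ is automatically $O(1)$ because $\alpha\lesssim D/(\beta\omega)$ and $\mu_N\lesssim\omega$), while the long-time block excludes the ground state and uses the uniform gap $e_1-\mu_N\gtrsim\omega$ from Lemma~\ref{lem:eigenvaluebound}, combined with the Mehler kernel, to produce the $\omega^{1/2}/\beta=\beta^{-3/2}(\beta\omega)^{1/2}$ contribution. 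Reworking your proof along that split would repair both the circularity and the projection problem.
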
 
\begin{proof}
	We write $\varrho^{\mathrm{H},\mathrm{c}}(x) = \sum_{j=0}^{\infty} \langle n_j \rangle_{G^{\mathrm{H},\mathrm{c}}} | \varphi_j(x) |^2$ with the eigenfunctions $\{ \varphi_j \}_{j=0}^{\infty}$ of the Hartree operator $h^{\mathrm{H}} = h +  v_N \ast \varrho^{\mathrm{H},\mathrm{c}}(x)$ and denote by $\langle \cdot \rangle_{\mathrm{gc}}$ the expectation w.r.t. the grand-canonical Gibbs state with one-particle Hamiltonian $h^\mathrm{H}$ (and expected particle number $N$). 
	From \cite[Remark~A.1]{me} we know that $\langle n_j \rangle_{G^{\mathrm{H},\mathrm{c}}} \lesssim \langle n_j \rangle_{\mathrm{gc}}$ for all $j \geq 0$, and hence 
	\begin{equation}
	\varrho^{\mathrm{H},\mathrm{c}}(x) \lesssim \sum_{j=0}^{\infty} \langle n_j \rangle_{\mathrm{gc}} | \varphi_j(x) |^2 = \frac{1}{e^{\beta \left( h + v_N \ast \varrho^{\mathrm{H},\mathrm{c}}- \mu^{\mathrm{H}} \right)} -1  }(x,x).
	\label{eq:149}
	\end{equation}
	The chemical potential $\mu^{\mathrm{H}}$ is chosen s.t. the integral over  the right-hand side of \eqref{eq:149} equals $N$.
	
	We pick $D>0$ and $n \in \mathbb{N}$ and use the identity $(e^x-1)^{-1} = \sum_{\alpha=1}^{\infty} e^{- \alpha x}$ for $x>0$ to write the right-hand side of \eqref{eq:149} as 
	\begin{equation}
	\sum_{\alpha=1}^{\infty} e^{-\beta \alpha \left( h^{\mathrm{H}} - \mu^{\mathrm{H}} \right)}(x,x) \label{eq:35}  = \sum_{1 \leq \alpha \leq \frac{D}{\beta \omega}} e^{-\beta \alpha \left( h^{\mathrm{H}} - \mu^{\mathrm{H}} \right) }(x,x) + \sum_{\alpha > \frac{D}{\beta \omega}} \sum_{j=0}^n e^{-\beta \alpha \left( e_j^{\mathrm{H}} - \mu^{\mathrm{H}} \right) } \left| \varphi_j(x) \right|^2 + \sum_{\alpha > \frac{D}{\beta \omega}} \sum_{j>n} e^{-\beta\alpha\left( e_j^{\mathrm{H}} - \mu^{\mathrm{H}} \right) } \left| \varphi_j(x) \right|^2. 
	\end{equation}
	Here $\{ e_j^{\mathrm{H}} \}_{j=0}^{\infty}$ denote the eigenvalues of the operator $h^{\mathrm{H}}$. 
	A simple variational argument using the harmonic oscillator ground state $\psi_0$ as trial state  shows that 
	\begin{equation}\label{sto}
	\mu^{\mathrm{H}} \leq e_0^{\mathrm{H}} 
	\leq \frac{3 \omega}{2} + \left\Vert v_N \ast \varrho^{\mathrm{H},\mathrm{c}} \right\Vert_{L^1(\mathbb{R}^3)} \left\Vert \psi_0 \right\Vert_{L^{\infty}(\mathbb{R}^3)}^2 \lesssim \omega + \left\Vert v_N \right\Vert_{L^1(\mathbb{R}^3)} \left\Vert \varrho^{\mathrm{H},\mathrm{c}} \right\Vert_{L^1(\mathbb{R}^3)} \lesssim \omega.
	\end{equation}
	The Feynman--Kac formula (see e.g. \cite{SimonFunct,BratelliRobinson2}) implies that we can drop the positive potential $\frac{\omega^2 x^2}{4} + v_{N} \ast \varrho^{\mathrm{H},\mathrm{c}}(x)$ in the exponentials in \eqref{eq:35} for an upper bound. 
	The first term on the right-hand side of \eqref{eq:35} is therefore bounded by 
	\begin{equation}
	\sum_{1 \leq \alpha \leq \frac{D}{\beta \omega}} e^{-\beta \alpha (h + v_{N} \ast \varrho^{\mathrm{H},\mathrm{c}} - \mu^{\mathrm{H}} )}(x,x) \lesssim \sum_{1 \leq \alpha \leq \frac{D}{\beta \omega}} e^{\beta \alpha \Delta }(x,x)  \lesssim \beta^{-3/2} 
	\label{eq:upperboundgc9b}
	\end{equation}
	for all $x \in \mathbb{R}^3$, with a constant depending only on $D$. 
	
	Next we consider the second term on the right-hand side of \eqref{eq:35}. We have
	\begin{equation}
	\sum_{j=0}^n \sum_{\alpha > \frac{D}{\beta \omega}} e^{-\beta \alpha\left( e_j^{\mathrm{H}} - \mu^{\mathrm{H}} \right) } \left| \varphi_j(x) \right|^2 \leq \sum_{j=0}^n \frac{1}{e^{\beta \left( e_j^{\mathrm{H}} - \mu^{\mathrm{H}} \right)}-1} \left| \varphi_j(x) \right|^2 \leq \left( \sup_{j \leq n } \left\Vert \varphi_j \right\Vert_{L^{\infty}(\mathbb{R}^3)}^2 \right) N.
	\label{eq:38}
	\end{equation} 
	We thus need a bound for the $L^{\infty}$-norm of the eigenfunctions of the Hartree operator. These can be obtained via 
	\begin{equation}
	\left| \varphi_j(x) \right| \leq \int_{\mathbb{R}^3} \frac{ e_j^{\mathrm{H}} + \omega }{- \Delta + \frac{\omega^2 x^2}{4} + v_N \ast \varrho^{\mathrm{H},\mathrm{c}}(x) + \omega }(x,y) \left| \varphi_j(y) \right| \de y \leq \int_{\mathbb{R}^3} \frac{ e_j^{\mathrm{H}} + \omega }{- \Delta + \omega}(x-y) \left| \varphi_j(y) \right| \de y.
	\label{eq:13a}
	\end{equation}
	To obtain \eqref{eq:13a} we used $a^{-1} = \int_0^{\infty} e^{-ax} \de x$ and the Feynman--Kac formula. The resolvent on the right-hand side of \eqref{eq:13a} is a bounded linear map from $L^2(\mathbb{R}^3)$ to $H^2(\mathbb{R}^3)$, and hence from $L^2(\mathbb{R}^3)$ to $L^{\infty}(\mathbb{R}^3)$. Moreover, an argument similar to \eqref{sto} that uses the min-max principle, see e.g. \cite[Theorem~12.1]{LiLo10}, can be used to bound $e_j^{\mathrm{H}}$ for $j\leq n$, and hence  
	\begin{equation}
	\sup_{j \leq n } \left\Vert \varphi_j \right\Vert_{L^{\infty}(\mathbb{R}^3)} \leq C_n \omega^{3/4}
	\label{eq:36}
	\end{equation}
	for some constant $C_n$ depending only on $n$. 
	
	It remains to consider the last term in  \eqref{eq:35}.
	We choose $n$ such that $e_j^{\mathrm{H}} - \mu^{\mathrm{H}} \geq \omega$ for $j>n$. Note that this can be done independently of $N \geq 1$ and $\beta \omega \gtrsim N^{-1/3}$ because $e_j^{\mathrm{H}} \geq e_j$, where $e_j$ denote the eigenvalues of $h$, and $\mu^{\mathrm{H}} \lesssim \omega$. Accordingly, we have
	\begin{align}
	\sum_{\alpha > \frac{D}{\beta \omega}} \sum_{j>n} e^{-\beta\alpha\left( e_j^{\mathrm{H}} - \mu^{\mathrm{H}} \right) } \left| \varphi_j(x) \right|^2 \leq e^{-D \left( h^{\mathrm{H}} - \mu^{\mathrm{H}} \right)/(2\omega)}(x,x) \sum_{\alpha > \frac{D}{\beta \omega}} e^{-\beta \omega \alpha/2} \lesssim \frac{\omega^{1/2}}{\beta }.
	\label{eq:37}
	\end{align}
	In the last step we used again the Feynman--Kac formula to bound the heat kernel of the Hartree operator by the one the Laplacian. In combination, \eqref{eq:upperboundgc9b}--\eqref{eq:37} prove the first claim in \eqref{eq:spectralgap45}.
	
	The bound for $\eta(x) - N_0 |  \phi_{N_0}^{\mathrm{H}}(x) |^2 $ can be obtained in a similar way using that
	\begin{align}
	\eta(x) - N_0 \left|  \phi_{N_0}^{\mathrm{H}}(x) \right|^2 &= \sum_{\alpha = 1}^{\infty} e^{-\beta \alpha \left( h + v_N \ast | \phi^{\mathrm{H}}_{N_0} |^2(x) - \mu_N \right)}(x,x) - \frac{1}{e^{ \beta( ( e_0 - \mu_N) ) } - 1} | \psi_0(x) |^2 \label{eq:91} \\
	&\leq \sum_{1 \leq \alpha \leq \frac{D}{\beta \omega}} e^{-\beta \alpha\left( h + v_N \ast | \phi^{\mathrm{H}}_{N_0} |^2(x) - \mu_N \right)}(x,x) + \sum_{\alpha > \frac{D}{\beta \omega}} \sum_{j=1}^{\infty} e^{-\beta \alpha\left( e_j - \mu_N \right)} \left| \psi_j(x) \right|^2, \nonumber
	\end{align}
	where  $\{ e_j \}_{j=0}^{\infty}$ and $\{ \psi_j \}_{j=0}^{\infty}$  denote the eigenvalues and eigenfunctions of $h + v_N \ast | \phi^{\mathrm{H}}_{N_0}|^2$, with $\psi_0 =  \phi_{N_0}^{\mathrm{H}}$. A simple trial state argument shows $\mu_N \lesssim \omega$ and allows us to use \eqref{eq:upperboundgc9b} to bound the first term in the second line of \eqref{eq:91}. To bound the last term we use \eqref{eq:37} with $n=0$. This is possible because Lemma~\eqref{lem:eigenvaluebound} shows that the spectral gap of the operator $h + v_N \ast | \phi^{\mathrm{H}}_{N_0}|^2$ above its unique ground state can be bounded from below uniformly in $N \geq 1$ and $\beta \omega \gtrsim N^{-1/3}$ by a constant times $\omega$. In combination, these considerations prove the claim.
\end{proof}

We define 
\begin{equation}
\Delta \varrho^{\mathrm{H},\mathrm{c}}(x) = \varrho^{\mathrm{H},\mathrm{c}}(x) - \eta(x),
\label{eq:deltarho}
\end{equation}
with $\eta$ in \eqref{eq:eta}, and write the Hartree potential as
\begin{equation}
v_N \ast \varrho^{\mathrm{H},\mathrm{c}}(x) = v_N \ast \Delta \varrho^{\mathrm{H},\mathrm{c}}(x) + v_N \ast \left( \eta - N_0 | \phi_{N_0}^{\mathrm{H}} |^2 \right)(x) + N_0 v_N \ast | \phi_{N_0}^{\mathrm{H}} |^2(x).
\label{eq:spectralgap44}
\end{equation}
Our goal is to show that the difference between the $j$-th eigenvalue of $h + v_N \ast \varrho^{\mathrm{H},\mathrm{c}}(x)$ and the $j$-th eigenvalue of $h + N_0 v_N \ast | \phi_{N_0}^{\mathrm{H}} |^2(x)$ converges to zero as $N \to \infty$ with $\beta \omega \gtrsim N^{-1/3}$. 
In order to prove this, it suffices to show that the operator norm of the difference of the resolvents of $h + v_N \ast \varrho^{\mathrm{H},\mathrm{c}}(x)$ and $h + N_0 v_N \ast | \phi_{N_0}^{\mathrm{H}} |^2(x)$ converges to zero in this limit (see, e.g.,
\cite[Theorem~VIII.23]{ReSi1980}). 

We start by noting that $\gamma^{\mathrm{H,c}}$ is an approximate minimizer of the Hartree free energy functional $\mathcal{F}^{\mathrm{H}}$ in the sense of \eqref{eq:spectralgap5e} for $\delta = O (N^{1/3} \ln N)$. This follows from $S(G^\mathrm{H,c}) \leq s(\gamma^\mathrm{H,c})$ (see \cite[Chapter~2.5.14.5]{Thirring_4}) and the free energy bound in Lemma~\ref{lem:freeenergyboundcgc}.
Eq.~\eqref{eq:spectralgap5f} and \cite[Lemma~5.1]{GriesHan2012} therefore imply  that
\begin{equation}
	\left\Vert \Delta \varrho^{\mathrm{H},\mathrm{c}} \right\Vert_{L^1(\mathbb{R}^3)} \lesssim N^{11/12}. 	
	\label{eq:147c}
\end{equation}
In combination with Lemma~\ref{lem:convergenceofNLSint} and Lemma~\ref{lem:aprioridensity}, we conclude that
\begin{align}
\left\Vert v_N \ast \Delta \varrho^{\mathrm{H},\mathrm{c}} \right\Vert_{L^1(\mathbb{R}^3)} &\leq \left\Vert v_N \right\Vert_{L^1(\mathbb{R}^3)} \left\Vert \Delta \varrho^{\mathrm{H},\mathrm{c}} \right\Vert_{L^1(\mathbb{R}^3)} \lesssim  N^{-1/12}, \label{eq:spectralgap53} \\
\left\Vert v_N \ast \Delta \varrho^{\mathrm{H},\mathrm{c}} \right\Vert_{L^{\infty}(\mathbb{R}^3)} &\leq \left\Vert v_N \right\Vert_{L^1(\mathbb{R}^3)} \left\Vert \Delta \varrho^{\mathrm{H},\mathrm{c}} \right\Vert_{L^{\infty}(\mathbb{R}^3)} \lesssim \omega^{3/2} \nonumber
\end{align}
and, in particular, 
\begin{equation}
\left\Vert v_N \ast \Delta \varrho^{\mathrm{H},\mathrm{c}} \right\Vert_{L^2(\mathbb{R}^3)} \lesssim \omega^{3/4} N^{-1/24}. 
\label{eq:spectralgap54}
\end{equation}
We also have 
\begin{equation}
\left\Vert v_N \ast \left( \eta - N_0 | \phi_{N_0}^{\mathrm{H}} |^2 \right) \right\Vert_{L^2(\mathbb{R}^3)} \leq \left\Vert v_N \right\Vert_{L^1(\mathbb{R}^3)} \left\Vert \eta - N_0 | \phi_{N_0}^{\mathrm{H}} |^2 \right\Vert_{L^{\infty}(\mathbb{R}^3)}^{1/2} \left\Vert \eta - N_0 | \phi_{N_0}^{\mathrm{H}} |^2 \right\Vert_{L^{1}(\mathbb{R}^3)}^{1/2} \lesssim \omega^{3/4} N^{-1/4},
\label{eq:spectralgap55}
\end{equation}
which follows from Lemma~\ref{lem:aprioridensity}
and $0 \leq \int_{\mathbb{R}^3} ( \eta(x) - N_0 | \phi_{N_0}^{\mathrm{H}} |^2 ) \de x \lesssim N$. In combination, \eqref{eq:spectralgap54}, \eqref{eq:spectralgap55} and \eqref{eq:spectralgap44} show that
\begin{equation}
\left\Vert v_N \ast \left( \varrho^{\mathrm{H},\mathrm{c}} - N_0 | \phi_{N_0}^{\mathrm{H}} |^2 \right) \right\Vert_{L^2(\mathbb{R}^3)} \lesssim \omega^{3/4} N^{-1/24}\,.
\label{eq:spectralgap56}
\end{equation}
To show that \eqref{eq:spectralgap56} implies the claimed convergence of the difference of the resolvents, we estimate
\begin{align}\label{eq:spectralgap57}
&\left\Vert \frac{1}{h+v_N \ast \varrho^{\mathrm{H},\mathrm{c}} + i} - \frac{1}{h+N_0 v_N* | \phi_{N_0}^{\mathrm{H}} |^2 + i} \right\Vert \\ \nonumber &= \left\Vert \frac{1}{h+v_N \ast \varrho^{\mathrm{H},\mathrm{c}} + i}  v_N \ast \left(\varrho^{\mathrm{H},\mathrm{c}} - N_0 | \phi_{N_0}^{\mathrm{H}} |^2 \right) \frac{1}{h+N_0 v_N* | \phi_{N_0}^{\mathrm{H}} |^2 + i} \right\Vert  \\
&\lesssim \left\Vert  v_N \ast \left( \varrho^{\mathrm{H},\mathrm{c}} - N_0 | \phi_{N_0}^{\mathrm{H}} |^2 \right) \frac{1}{h+N_0 v_N*| \phi_{N_0}^{\mathrm{H}} |^2 + \omega} \right\Vert_2 \nonumber \\
&= \left( \int_{\mathbb{R}^6} \left| v_N \ast \varrho^{\mathrm{H},\mathrm{c}}(x) - N_0 v_N*| \phi_{N_0}^{\mathrm{H}}(x) |^2 \right|^2 \frac{1}{h+N_0 v_N * | \phi_{N_0}^{\mathrm{H}} |^2 + \omega}(x,y)^2 \de(x,y) \right)^{1/2}. \nonumber
\end{align}
The Feynman--Kac formula implies that 
\begin{equation}
0 \leq \frac{1}{h+N_0 v_N*| \phi_{N_0}^{\mathrm{H}} |^2 + \omega}(x,y) \leq \frac{1}{-\Delta + \omega}(x-y)\,,
\label{eq:9}
\end{equation}
and hence  
\begin{equation}\label{7:31}
\left\Vert \frac{1}{h+v_N \ast \varrho^{\mathrm{H},\mathrm{c}} + i} - \frac{1}{h+N_0 v_N* | \phi_{N_0}^{\mathrm{H}} |^2 + i} \right\Vert \lesssim \left\Vert v_N \ast \left( \varrho^{\mathrm{H},\mathrm{c}} - N_0 | \phi_{N_0}^{\mathrm{H}} |^2\right) \right\Vert_{L^2(\mathbb{R}^3)}\,.
\end{equation}
In combination with \eqref{eq:spectralgap56} this proves the claimed bound for the difference of the resolvents, and therefore the closeness of the eigenvalues of $h+v_N \ast \varrho^{\mathrm{H},\mathrm{c}}$ and $h+N_0 v_N*| \phi_{N_0}^{\mathrm{H}} |^2$ in the limit  considered. 
In combination with Lemma~\ref{lem:eigenvaluebound}, this completes the proof of  Proposition~\ref{prop:spectralgap} (in the case $\kappa >0$).
\subsection{Bounds on the  free energy}
\label{sec:freeenergybounds2}
The proof of the lower bound for the free energy is literally the same as the one in Section~\ref{sec:prooflowerboundgc}. We therefore focus on the proof of the upper bound. 

The analysis in Section~\ref{sec:proofupperboundc} remains unchanged until \eqref{eq:upperboundcan6}. Eq.~\eqref{eq:upperboundcan7} needs to be replaced by the bound 
\begin{equation}
\int_{\mathbb{R}^6} v_N(x-y)  | \varphi_j(x) |^2 | \varphi_j(y) |^2 \de(x,y) \leq  N^{3 \kappa -1} \Vert v \Vert_{L^{\infty}(\mathbb{R}^3)}.
\label{eq:upperboundcan7a}
\end{equation}
The proof of Lemma~\ref{lem:momentbound} is based on Lemma~\ref{lem:semiclassicaltraceestimate}, whose proof uses Lemma~\ref{lem:spactralgapSCMF}. The proofs of the first two Lemmas can be applied with obvious adjustments also in the present situation if the spectral gap estimate for the Hartree operator in Lemma~\ref{lem:spactralgapSCMF} is replaced by the one in Proposition~\ref{prop:spectralgap}. The result of Lemma~\ref{lem:momentbound} translated to the present case reads
\begin{equation}
\sum_{j \geq 1} \langle n_j^2 \rangle_{G} \lesssim \left( \frac{1}{ \beta \omega } \right)^{3}.
\label{eq:lemnsquaredsums01}
\end{equation}

Using \eqref{eq:upperboundcan7a} and \eqref{eq:lemnsquaredsums01} we check that 
\begin{equation}
\sum_{j \geq 1} \langle n_j^2 \rangle_G \left( \int_{\mathbb{R}^6} v_N(x-y)  | \varphi_0(x) |^2 | \varphi_0(y) |^2 \de (x,y ) + \int_{\mathbb{R}^6}  v_N(x-y)  | \varphi_j(x) |^2 | \varphi_j(y) |^2 \de(x,y) \right) \label{eq:upperboundcan8b}  \lesssim N^{3 \kappa} \Vert v \Vert_{L^{\infty}(\mathbb{R}^3)} \,.
\end{equation}
Let
\begin{equation}
	\gamma_> = Q \frac{1}{ e^{\beta \left( h+ v_N \ast \varrho^{\mathrm{H},\mathrm{c}}(x) - \mu^{\mathrm{H}} \right)} - 1},
	\label{eq:150}
\end{equation}
with $Q$ denoting the spectral projection onto the complement of the ground state subspace of the Hartree operator $h + v_N \ast \varrho^{\mathrm{H},\mathrm{c}}(x)$. The chemical potential $\mu^{\mathrm{H}}$ is chosen such that the trace of the 1-pdm on the right-hand side of \eqref{eq:150} without $Q$ equals $N$. In order to estimate the exchange terms, we need the following Lemma providing us with an estimate for the integral kernel of $\gamma_>$. Its proof is based on the spectral gap estimate in Proposition~\ref{prop:spectralgap}. 

\begin{lemma}
	\label{lem:boundonoffdiagonalofDM1}
	Let $v_N$ satisfy the assumptions of Theorem~\ref{thm:freeenergyscaling1}. For fixed $D>0$ there exists a constant $C > 0$ such that
	\begin{equation}
	| \gamma_>(x,y) | \lesssim \sum_{1 \leq \alpha \leq \frac{D}{\beta \omega}} e^{-\alpha \beta h}(x,y) + \frac{1}{\beta \omega} \left(  \varphi_0(x) \varphi_0(y) + \omega^{3/2} e^{-C \omega \left( x^2 + y^2 \right)} \right) \,. \label{eq:upperboundgc6}
	\end{equation}
	Here $\varphi_0$ denotes the ground state of the Hartree operator $h + v_N \ast \varrho^{\mathrm{H},\mathrm{c}}(x)$.
\end{lemma}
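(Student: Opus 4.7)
The plan is to start from the expansion $(e^t-1)^{-1}=\sum_{\alpha\ge 1}e^{-\alpha t}$, which gives
\[
\gamma_>(x,y)=\sum_{\alpha\ge 1}e^{\alpha\beta\mu^{\mathrm{H}}}\,Q\,e^{-\alpha\beta h^{\mathrm{H}}}(x,y),\qquad h^{\mathrm{H}}=h+v_N\ast\varrho^{\mathrm{H},\mathrm{c}},
\]
where $\mu^{\mathrm{H}}<e_0^{\mathrm{H}}$ is the Hartree chemical potential and $\varphi_0$ the Hartree ground state. I will then split the $\alpha$-sum at the threshold $\alpha_0:=D/(\beta\omega)$, using throughout the algebraic identity $Q\,e^{-\alpha\beta h^{\mathrm{H}}}(x,y)=e^{-\alpha\beta h^{\mathrm{H}}}(x,y)-e^{-\alpha\beta e_0^{\mathrm{H}}}\varphi_0(x)\varphi_0(y)$ and the Feynman--Kac comparison $e^{-\alpha\beta h^{\mathrm{H}}}(x,y)\le e^{-\alpha\beta h}(x,y)$, which holds because the Hartree potential $v_N\ast\varrho^{\mathrm{H},\mathrm{c}}$ is nonnegative.

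In the range $1\le\alpha\le\alpha_0$, the variational bound $\mu^{\mathrm{H}}\le e_0^{\mathrm{H}}\lesssim\omega$ from \eqref{sto} guarantees $e^{\alpha\beta\mu^{\mathrm{H}}}\lesssim 1$ uniformly. The first piece of the algebraic identity, after Feynman--Kac, yields exactly the first summand $\sum_{\alpha\le\alpha_0}e^{-\alpha\beta h}(x,y)$, while the subtracted ground-state piece is controlled by $\sum_{\alpha\le\alpha_0}e^{-\alpha\beta(e_0^{\mathrm{H}}-\mu^{\mathrm{H}})}\varphi_0(x)\varphi_0(y)\le\alpha_0\,\varphi_0(x)\varphi_0(y)\lesssim(\beta\omega)^{-1}\varphi_0(x)\varphi_0(y)$, i.e.\ the second summand. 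For the complementary range $\alpha>\alpha_0$, the crucial input is the spectral gap $\Delta e^{\mathrm{H}}\gtrsim\omega$ furnished by Proposition~\ref{prop:spectralgap}. Writing the kernel as the spectral sum $\sum_{j\ge 1}e^{-\alpha\beta(e_j^{\mathrm{H}}-\mu^{\mathrm{H}})}\varphi_j(x)\varphi_j(y)$ and using $e_j^{\mathrm{H}}-\mu^{\mathrm{H}}\ge\Delta e^{\mathrm{H}}$ for $j\ge 1$, I split off a factor $e^{-\alpha\beta\Delta e^{\mathrm{H}}/2}$, apply Cauchy--Schwarz to the remaining bilinear expression, and reduce matters to estimating the diagonal $e^{-\alpha\beta h^{\mathrm{H}}/2}(x,x)$. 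Combining Feynman--Kac with the explicit Mehler formula for the harmonic oscillator gives, for $\alpha\beta\omega\ge D$ (with $D$ fixed sufficiently large), the bound $\omega^{3/2}e^{-3\alpha\beta\omega/4}e^{-C\omega x^2}$, and summing the resulting geometric series $\sum_{\alpha>\alpha_0}e^{-c\alpha\beta\omega}\lesssim(\beta\omega)^{-1}$, where the elementary inequality $te^{-t}\le 1$ covers uniformly both $\beta\omega\ll 1$ and $\beta\omega\gg 1$, produces the third summand $(\beta\omega)^{-1}\omega^{3/2}e^{-C\omega(x^2+y^2)}$.

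The main technical obstacle is the bookkeeping in the large-$\alpha$ regime: several competing exponentials in $\alpha\beta\omega$ appear---the spectral-gap decay $e^{-\alpha\beta\Delta e^{\mathrm{H}}/2}$, the Mehler ground-state factor $e^{-3\alpha\beta\omega/4}$, and the prefactor $e^{\alpha\beta\mu^{\mathrm{H}}/2}$ that survives Cauchy--Schwarz---and one must verify that the net exponent is bounded below by $c\omega$ uniformly in the parameters. Since $\Delta e^{\mathrm{H}}$ and $e_0^{\mathrm{H}}$ are each only known to be of order $\omega$ individually, achieving this requires a careful allocation of the spectral-gap decay between the factor pulled out and the factor fed into Cauchy--Schwarz, together with the refined Gaussian-like control of $\varphi_0$ coming from Lemmas~\ref{lem:eigenvaluebound} and~\ref{lem:aprioridensity}, in order not to lose the $\omega$ in the exponent nor accumulate spurious polynomial factors.
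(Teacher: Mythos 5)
Your proposal is correct and follows essentially the same route as the paper: expand $(e^t-1)^{-1}$ geometrically, split the $\alpha$-sum at $D/(\beta\omega)$, use Feynman--Kac together with $\mu^{\mathrm{H}}\lesssim\omega$ on the small-$\alpha$ range, and handle the large-$\alpha$ tail via Cauchy--Schwarz, the spectral gap from Proposition~\ref{prop:spectralgap}, and the Mehler formula. The bookkeeping worry you flag at the end is avoided in the paper's version by keeping $h^{\mathrm{H}}-\mu^{\mathrm{H}}$ together as a single nonnegative operator throughout, so no separate $e^{\alpha\beta\mu^{\mathrm{H}}}$ prefactor ever appears, and the diagonal $e^{-\alpha\beta(h^{\mathrm{H}}-\mu^{\mathrm{H}})/2}(x,x)$ is simply frozen at $\alpha_0$ by monotonicity while the $e^{-\alpha\beta(e_1^{\mathrm{H}}-\mu^{\mathrm{H}})/2}$ factor alone does the summation.
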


	\begin{proof}
		We use the identity $(e^x-1)^{-1} = \sum_{\alpha=1}^{\infty} e^{- \alpha x}$ for $x>0$ to write $\gamma_>$ as
		\begin{equation}
		\gamma_> = Q \frac{1}{e^{\beta \left(h + v_{N} \ast \varrho^{\mathrm{H},\mathrm{c}}  - \mu^{\mathrm{H}} \right)}-1} = \sum_{\alpha=1}^{\infty} e^{-\beta \alpha \left(h + v_{N} \ast \varrho^{\mathrm{H},\mathrm{c}}  - \mu^{\mathrm{H}} \right)} - \frac{1}{e^{\beta (e_0^{\mathrm{H}}-\mu^{\mathrm{H}})} - 1} | \varphi_0 \rangle \langle \varphi_0 |.
		\label{eq:upperboundgc7}
		\end{equation}
		Here $e_0^{\mathrm{H}}$ denotes the lowest eigenvalue of $h + v_{N} \ast \varrho^{\mathrm{H},\mathrm{c}}(x)$. 
		By the Feynman--Kac formula, we can drop the positive potential $v_{N} \ast \varrho^{\mathrm{H},\mathrm{c}}(x) $ in the exponential in \eqref{eq:upperboundgc7} for an upper bound. Since $\mu^\mathrm{H} \lesssim \omega$ (see \eqref{sto}) we get, for fixed $D>0$, 
		\begin{equation}
		\sum_{1 \leq \alpha \leq \frac{D}{\beta \omega}} e^{-\beta \alpha \left(h +  v_{N} \ast \varrho^{\mathrm{H},\mathrm{c}} - \mu^{\mathrm{H}} \right)}(x,y) \lesssim \sum_{1 \leq \alpha \leq \frac{D}{\beta \omega}} e^{- \alpha \beta h}(x,y)
		\label{eq:upperboundgc9}
		\end{equation}
		for all $x,y \in \mathbb{R}^3$. 
		
		As above, we denote by $\lbrace e^{\mathrm{H}}_j \rbrace_{j=0}^{\infty}$ the eigenvalues of $h + v_{N} \ast \varrho^{\mathrm{H},\mathrm{c}}(x)$ and by $\{ \varphi_j \}_{j=0}^{\infty}$ the corresponding eigenfunctions. The remaining terms read
		\begin{equation}
		- \sum_{1 \leq \alpha \leq \frac{D}{\beta \omega}} e^{- \beta \alpha \left(e^{\mathrm{H}}_0 - \mu^{\mathrm{H}} \right) } \varphi_0(x) \varphi_0(y) + \sum_{\alpha > \frac{D}{\beta \omega}} \sum_{j\geq 1}  e^{- \beta \alpha \left(e_j^{\mathrm{H}} - \mu^{\mathrm{H}} \right)} \varphi_j(x) \varphi_j(y) \,.\label{eq:11}
		\end{equation}
		The absolute value of the first term is bounded from above by a constant times $(\beta \omega)^{-1}  \varphi_0(x) \varphi_0(y) $. The absolute value of the second term can be estimated by
		\begin{equation}
		\left( \sum_{\alpha > \frac{D}{\beta \omega}} \sum_{j \geq 1} e^{-\beta \alpha \left( e_j^{\mathrm{H}} - \mu^{\mathrm{H}} \right)} | \varphi_j(x) |^2 \right)^{1/2} \left( \sum_{\alpha > \frac{D}{\beta \omega}} \sum_{j \geq 1} e^{-\beta \alpha \left( e_j^{\mathrm{H}} - \mu^{\mathrm{H}} \right)} | \varphi_j(y) |^2 \right)^{1/2}.
		\label{eq:upperboundgcadjustments41}
		\end{equation}
		The latter sums can be bounded as 
		\begin{align}
		&\sum_{\alpha > \frac{D}{\beta \omega}} \sum_{j \geq 1} e^{-\beta \alpha \left( e_j^{\mathrm{H}} - \mu^{\mathrm{H}} \right)} | \varphi_j(x) |^2 \leq \sum_{\alpha > \frac{D}{\beta \omega}} e^{-\beta \alpha \left(e_1^{\mathrm{H}} - \mu^{\mathrm{H}}\right)/2} e^{-\beta \alpha \left( h + v_N \ast \varrho^{\mathrm{H},\mathrm{c}} - \mu^{\mathrm{H}} \right)/2}(x,x) \label{eq:upperboundgcadjustments51} \\
		&\hspace{0.5cm} \leq  e^{-D \left( h + v_N \ast \varrho^{\mathrm{H},\mathrm{c}} - \mu^{\mathrm{H}} \right)/(2 \omega)}(x,x) \sum_{\alpha > \frac{D}{\beta \omega}} e^{-\beta \alpha \left(e_1^{\mathrm{H}} - \mu^{\mathrm{H}} \right)/2} \leq 2 \frac{e^{- D \left( h + v_N \ast \varrho^{\mathrm{H},\mathrm{c}} - \mu^{\mathrm{H}} \right)/(2 \omega)}(x,x)}{\beta\left( e_1^{\mathrm{H}} - \mu^{\mathrm{H}} \right)} \lesssim \frac{ e^{-\frac{D h}{2 \omega}}(x,x)}{\beta \omega} \,. \nonumber
		\end{align}
		In the last step, we used Proposition~\ref{prop:spectralgap} which states that $e_1^\mathrm{H} - \mu^\mathrm{H} \gtrsim \omega$, as well as $\mu^\mathrm{H}\lesssim \omega$  and the Feynman--Kac formula to drop the effective potential $v_N* \rho^\mathrm{H,c}$. 
		From the explicit form of the Mehler kernel
		\begin{equation}
	e^{-t h}(x,y)  = \left( \frac{\omega}{2} \right)^{3/2} \left( \frac{1}{2 \pi \sinh( t \omega )} \right)^{3/2} \exp\left( \frac{-1}{\sinh(t \omega)} \frac{\omega}{2} \left\{ \cosh(t \omega) \left( x^2 + y^2 \right) - 2 x \cdot y  \right\} \right)
		\label{eq:upperboundgc12}
		\end{equation}
		we obtain the bound
		\begin{equation}
		e^{-\frac{D h}{2 \omega} }(x,x) \lesssim \omega^{3/2} e^{- C \omega x^2}
		\label{eq:upperboundgc13}
		\end{equation}
		 for some $C>0$ depending only on $D$. In combination with the considerations above, the completes the proof of \eqref{eq:upperboundgc6}
	\end{proof}

We apply Lemma~\ref{lem:boundonoffdiagonalofDM1} to estimate the first exchange term on the right-hand side of \eqref{eq:upperboundcan6} by 
\begin{equation}
\frac{1}{2} \int_{\mathbb{R}^6} v_N(x-y) | \gamma_{>}(x,y) |^2 \de(x,y) \lesssim \Vert v_N \Vert_{L^{\infty}(\mathbb{R}^3)} \left( \left\Vert \sum_{1 \leq \alpha \leq \frac{D}{\beta \omega}} e^{-\beta \alpha h } \right\Vert_2^2 + (\beta \omega)^{-2}  \right), \label{eq:upperboundgc17}
\end{equation}
where $\Vert \cdot \Vert_2$ denotes the Hilbert-Schmidt norm. The explicit form of  the Mehler kernel in \eqref{eq:upperboundgc12} allows to estimate it as 
\begin{equation}
 \left\Vert \sum_{1 \leq \alpha \leq \frac{D}{\beta \omega}} e^{-\beta \alpha h } \right\Vert_2 \lesssim 
 \frac{1}{(\beta \omega)^{3/2}}.
\label{eq:93}
\end{equation}
Accordingly, \eqref{eq:upperboundgc17} is bounded from above by a constant times $\omega N^{3 \kappa}$. 

In order to estimate the second exchange term on the right-hand side of \eqref{eq:upperboundcan6}, we apply Lemma~\ref{lem:boundonoffdiagonalofDM1} once more and find
\begin{align}\nonumber 
&N_0 \text{Re}  \int_{\mathbb{R}^6} v_N(x-y) \overline{ \varphi_{0}(x) } \varphi_0(y) \gamma_>(x,y) \de(x,y) \\ & \lesssim N_0 \int_{\mathbb{R}^6} v_N(x-y) | \varphi_{0}(x) | | \varphi_0(y) | \left( \sum_{1 \leq \alpha \leq \frac{D}{\beta \omega}} e^{-\beta \alpha h}(x,y) \right) \de(x,y) +  \frac{N_0 \Vert v \Vert_{L^1(\mathbb{R}^3)}}{N \beta \omega} \left(  \left\Vert \varphi_0 \right\Vert_{L^4(\mathbb{R}^3)}^4 + \omega^{3/2} \right). \label{eq:upperboundgc18}
\end{align}
The $L^4(\mathbb{R}^3)$-norm of the ground state $\varphi_0$ of the Hartree operator $h + v_N \ast \varrho^{\mathrm{H},{\mathrm{c}}}$ can be bounded uniformly in $N$ by a constant times $\omega^{3/8}$. This follows from \eqref{eq:36} and $\Vert \varphi_0 \Vert = 1$. To estimate the first term on the right-hand side of \eqref{eq:upperboundgc18}, we split the sum over $\alpha$ into terms with $1 \leq \alpha \leq N^{-2 \kappa}/(\beta \omega)$ and the complement with $N^{-2 \kappa}/(\beta \omega) < \alpha \leq D/(\beta \omega)$. In the first case, we bound $v_N$ by its $L^\infty$ norm and obtain 
\begin{equation}
N_0 \int_{\mathbb{R}^6} v_N(x-y) | \varphi_{0}(x) | | \varphi_0(y) | \left( \sum_{1 \leq \alpha \leq \frac{N^{-2\kappa}}{\beta \omega}} e^{-\beta \alpha h }(x,y) \right) \de(x,y) \leq \frac{N^{\kappa}}{\beta\omega}  \| v\|_{L^\infty(\mathbb{R}^3)}
\end{equation}
since $e^{-\beta \alpha h }\leq 1$ as on operator. In the second case, we bound $v_N$ by its $L^1$ norm and the exponential factor in \eqref{eq:upperboundgc12} by $1$ with the result that
\begin{equation}
N_0 \int_{\mathbb{R}^6} v_N(x-y) | \varphi_{0}(x) | | \varphi_0(y) | \left( \sum_{ \frac{N^{-2\kappa}}{\beta \omega} \leq \alpha \leq \frac{D}{\beta \omega}} e^{-\beta \alpha h }(x,y) \right) \de(x,y) \lesssim \|v\|_{L^1(\mathbb{R}^3)}  \sum_{ \frac{N^{-2\kappa}}{\beta \omega} \leq \alpha \leq \frac{D}{\beta \omega}} (\beta\alpha)^{-3/2}  \lesssim  \omega^{3/2} \|v\|_{L^1(\mathbb{R}^3)}  \frac{N^{\kappa}}{\beta \omega} \,.
\end{equation}
In combination with \eqref{eq:upperboundcan6}, \eqref{eq:upperboundcan8b} and \eqref{eq:upperboundgc17}--\eqref{eq:upperboundgc18}, we find
\begin{equation}
\tr\left[ \mathcal{V}_N G^\mathrm{H,c} \right] \leq D_N(\varrho^\mathrm{H,c},\varrho^\mathrm{H,c}) + \text{const. } \omega \ \left[ N^{3 \kappa} + \frac{N^{\kappa}}{\beta \omega}  \right] \label{eq:upperboundgc27a}
\end{equation}
with $D_N$ defined in \eqref{eq:abbreviationinteraction}. In combination with \eqref{eq:finalestimate1}, the representation of the Hartree free energy in terms of the minimizer $\gamma^{\mathrm{H},\mathrm{c}}$ in Lemma~\ref{lem:minimizersHartree} and Lemma~\ref{lem:freeenergyboundcgc} it implies
\begin{equation}
F^{\mathrm{c}}(\beta,N,\omega) \leq F^{\mathrm{H}}(\beta,N,\omega) + \text{const. } \omega N^{1/3} \left( N^{\kappa} + \ln N \right), \label{eq:upperboundgc32a}
\end{equation}
with a constant that is uniform in $\beta \omega \gtrsim N^{-1/3}$. This concludes the proof of \eqref{eq:mainresults1}. 
\begin{remark}
	The spectral gap estimate for the Hartree operator in Proposition~\ref{prop:spectralgap} is necessary in order to obtain the free energy bounds in the canonical ensemble. If one is only interested in the statement of Theorem~\ref{thm:freeenergyscaling1} for the grand-canonical ensemble the spectral gap estimate can be avoided. In order to do that, one chooses a trial state based on the grand-canonical Gibbs state but with the lowest $M = O(1)$ modes replaced by a coherent state. More precisely, these modes have to be removed from the Gibbs with a partial trace. The trial state is then defined as a tensor product of this Gibbs state with a coherent state. The latter is chosen such that the previously traced out modes have the same expected occupation as in the grand-canonical Gibbs state related to the Hartree minimizer. This leaves the kinetic energy unchanged, changes the entropy only by an additive term of the order $\ln N$ and allows one to obtain an effective spectral gap of the order $\omega$ for the thermal cloud if $M$ is chosen sufficiently large (independently of $N \geq 1$ and $\beta \omega \gtrsim N^{-1/3}$). In case of the canonical ensemble such a construction is not available and we need Proposition~\ref{prop:spectralgap}.
\end{remark}  
\subsection{Bound on the 1-pdm}
The analysis in Section~\ref{sec:boundgrandcanonicaldensitymatrix} applies with obvious adjustments to the present situation and proves \eqref{eq:mainresults1c} for any approximate minimizer of the Gibbs free energy functional in the sense of \eqref{eq:mainresults1b}. This concludes the proof of Theorem~\ref{thm:freeenergyscaling1}.

\subsection{Proof of Lemma~\ref{lem:Hartreeenergy}}\label{sec:Appendix}
We first derive appropriate upper and lower bounds for the Hartree free energy. Afterwards we prove the statement \eqref{eq:spectralgap5f} about the asymptotics of approximate minimizers of $\mathcal{F}^{\mathrm{H}}$.
\subsubsection*{Upper bound for $F^{\mathrm{H}}(\beta,N,\omega)$}
The proof of the upper bound is based on a trial state argument. As the result suggests, we need a trial state that looks like a non-interacting thermal cloud plus a condensate that is, as in the zero temperature case, described by the Hartree minimizer. 

\textit{The trial state:} Let $Q$ be the spectral projection onto the orthogonal complement of the ground state subspace of $h$ in \eqref{eq:h2}. Our trial state is given by
\begin{equation}
	\gamma_N = N_0 | \phi_{N_0}\rangle \langle \phi_{N_0}| + Q \gamma_{N,0},
	\label{eq:94}
\end{equation}
with $\phi_{N_0}$ the minimizer  of the Hartree energy functional \eqref{eq:spectralgapn1},  $\gamma_{N,0}$ in \eqref{eq:gc1-pdm} and $N_0$ the expected number of particles in the condensate of the ideal gas defined below \eqref{eq:gc1-pdm}. We dropped the superscript H in $\phi_{N_0}$ to simplify the notation. Note that $\trs[\gamma_N] = N$. We need to compute the free energy $\mathcal{F}^{\mathrm{H}}(\gamma_N)$. 

As in \eqref{eq:42} we use the monotonicity of $f$ in \eqref{eq:bosonicentropy} to see that
\begin{equation}
	-s(\gamma_N) \leq \trs[Qf(\gamma_{N,0})]\,.
	\label{eq:95}
\end{equation}
The kinetic energy reads
\begin{equation}
	\trs[h \gamma_N] = N_0 \langle \phi_{N_0}, h \phi_{N_0}  \rangle + \trs[Q h \gamma_{N,0}]
	\label{eq:96}
\end{equation}
and the interaction energy is bounded by
\begin{equation}
	D_N\left( \varrho_{\gamma_N},\varrho_{\gamma_N} \right) \leq  N_0^2 D_N( |\phi_{N_0}|^2, |\phi_{N_0}|^2) 
	+  N_0 \Vert v_N \Vert_{L^1(\mathbb{R}^3)}  \sup_{x \in \mathbb{R}^3} (Q \gamma_{N,0})(x,x) \label{eq:97} + \frac 12\Vert v_N \Vert_{L^1(\mathbb{R}^3)} N \sup_{x \in \mathbb{R}^3} (Q \gamma_{N,0})(x,x). 
\end{equation}
From Lemma~\ref{lem:aprioridensity} (second bound in \eqref{eq:spectralgap45} with $v=0$) we know that the suprema on the right-hand side are bounded from above by a constant times $\beta^{-3/2}(1+(\beta\omega)^{1/2})$. 
Hence
\begin{equation}
	D_N\left( \varrho_{\gamma_N},\varrho_{\gamma_N} \right) \leq N_0^2 D_N( |\phi_{N_0}|^2, |\phi_{N_0}|^2) + \text{const. } \omega^{3/2} \Vert v \Vert_{L^1(\mathbb{R}^3)} \left(  (\beta \omega)^{-3/2} + (\beta\omega)^{-1} \right) \,.
	\label{eq:99}
\end{equation}

In the final step we combine the term on the right-hand side of \eqref{eq:95} and the second term on the right-hand side of \eqref{eq:96} and find 
\begin{align}
	\trs[Q h \gamma_{N,0}] + \frac{1}{\beta} \trs[Qf(\gamma_{N,0})] = \frac{1}{\beta} \trs\left[ Q \ln\left( 1-\exp(-\beta(h-\mu_0)) \right) \right] + \mu_{0} (N-N_0) \label{eq:100} \\
	\leq F_0(\beta,N,\omega) + \frac{\text{const. } \ln N}{\beta} - \mu_0 N_0 \,,  \nonumber
\end{align}
where the free energy $F_0(\beta,N,\omega)$ of the ideal gas and $\mu_0$ are defined in \eqref{eq:101} and \eqref{eq:gc1-pdm}. Using the definition of $N_0$ below \eqref{eq:gc1-pdm} we see that
\begin{equation}
	\mu_0 = - \frac{1}{\beta} \ln\left( 1 + N_0^{-1} \right) + \frac{3 \omega}{2},
	\label{eq:102}
\end{equation}
and hence $- \mu_0 N_0 \leq \beta^{-1}$. In combination, the above considerations imply
\begin{equation}
	\mathcal{F}^{\mathrm{H}}(\gamma_N) \leq F_0(\beta,N,\omega) + N_0 E^{\mathrm{H}}(N_0) + \text{const. } \omega \left[ \frac{ 1 }{ (\beta \omega)^{3/2} } + \frac{\ln N}{\beta \omega} \right].
	\label{eq:103} 
\end{equation}
This concludes the upper bound.

\subsubsection*{Lower bound for $F^{\mathrm{H}}(\beta,N,\omega)$}
For $0 \leq \gamma \leq N$ with $\trs \gamma = N$ and $\eta \in L^1(\mathbb{R}^3)$ we have
\begin{align}
\mathcal{F}^{\mathrm{H}}(\gamma) \geq \trs[\left( h + v_N \ast \eta \right) \gamma] - \frac{1}{\beta} s(\gamma) - D_N (\eta,\eta) =& \frac{1}{\beta} \trs\left[ \ln\left( 1 - e^{-\beta\left( h + v_N \ast \eta - \mu \right)} \right) \right] + \mu N - D_N(\eta,\eta) \label{eq:spectralgap33} \\
&+ \frac{1}{\beta} \mathcal{S}\left( \gamma, \gamma^{\mathrm{G}}_{\eta} \right) \nonumber
\end{align}
for any $\mu$ such that $h+v_N \ast \eta - \mu > 0$. The bosonic relative entropy $\mathcal{S}$ is defined in \eqref{eq:asymptoticsgc3} and 
\begin{equation}
\gamma^{\mathrm{G}}_{\eta} = \frac{1}{e^{\beta \left( h + v_N \ast \eta - \mu \right)} - 1}.
\label{eq:spectralgap34}
\end{equation}
We denote
\begin{equation}
	\tilde{\mu}_{N} = E^{\mathrm{H}}(N_0) + \frac{N_0}{2}  \int_{\mathbb{R}^3} \left| \phi_{N_0}(x) \right|^2 v_N(x-y) \left| \phi_{N_0}(y) \right|^2 \!\de x \label{eq:Hartreechemicalptential}
\end{equation} 
and choose $\eta(x) = N_0 | \phi_{N_0} (x) |^2$ as well as $\mu = \tilde{\mu}_{N} + \mu_0$ with $\mu_0$ in \eqref{eq:102}. Note that with this choice we have $\text{inf\,spec\,} (h+v_N \ast \eta - \mu) = \text{inf\,spec\,} (h - \mu_0) = \beta^{-1} \ln ( 1 + N_0^{-1})$. 

Pick $M \in \mathbb{N}$ such that $e_j(h) - \mu \geq \omega$ for $j > M$, where $e_j(A)$ denotes the $j$-th eigenvalue of the operator $A$ in increasing order. Note that $M$ can be chosen independently of $N \geq 1$ and $\beta \omega \gtrsim N^{-1/3}$ because $\tilde{\mu}_{N} \leq 2 E^{\text{H}}(N_0) \lesssim \omega$. Hence
\begin{equation}
\sum_{j < M} \ln\left( 1 - e^{-\beta \left( e_j\left( h + v_N \ast \eta \right) - \mu \right) } \right) \gtrsim - \ln N, \label{eq:spectralgap35}
\end{equation}
which implies that
\begin{align}
\frac{1}{\beta} \trs\left[ \ln\left( 1 - e^{-\beta\left( h + v_N \ast \eta - \mu \right)} \right) \right] \geq \frac{1}{\beta} \sum_{j \geq M} \ln\left( 1 - e^{-\beta \left( e_j( h )  - \mu \right) } \right) - \frac{\text{const. } \ln N }{\beta}.
\label{eq:spectralgap36}
\end{align}
Using the concavity of the function $x \mapsto \ln(1-e^{-x})$ we see that
\begin{equation}
\sum_{j \geq M} \ln\left( 1 - e^{-\beta \left( e_j( h )  - \tilde{\mu}_{N} - \mu_0 \right) } \right) \geq \sum_{j \geq 0} \ln\left( 1 -e^{-\beta \left( e_j(h) - \mu_0 \right)} \right) - \sum_{j \geq M} \frac{ \beta \tilde{\mu}_{N} }{e^{\beta \left( e_j(h) - \tilde{\mu}_{N} - \mu_0 \right)}-1}.
\label{eq:spectralgap37}
\end{equation}
A similar argument based on the concavity of $x \mapsto -1/(e^x-1)$ and $\tilde{\mu}_{N} \leq 2 E^{\mathrm{H}}(N_0) \lesssim \omega$ implies 
\begin{equation}
- \sum_{j \geq M} \frac{ \tilde{\mu}_{N} }{e^{\beta \left( e_j(h) - \tilde{\mu}_{N} - \mu_0 \right)}-1} \geq - \sum_{j \geq M} \frac{ \tilde{\mu}_{N} }{e^{\beta \left( e_j(h) - \mu_0 \right)}-1} - \frac{ \text{const. } \omega}{(\beta \omega)^2}.
\label{eq:spectralgap38}
\end{equation}
In combination, \eqref{eq:spectralgap36}--\eqref{eq:spectralgap38} show that 
\begin{equation}
\frac{1}{\beta} \trs\left[ \ln\left( 1 - e^{-\beta\left( h + v_N \ast \eta - \mu \right)} \right) \right] \geq \frac{1}{\beta} \trs\left[ \ln\left( 1 - e^{-\beta\left( h - \mu_0 \right)} \right) \right] - (N-N_0) \tilde{\mu}_{N} - \text{const. } \left( \frac{\ln N}{\beta} + \frac{\omega}{(\beta \omega)^2} \right)\,,
\label{eq:spectralgap39}
\end{equation}
and hence 
\eqref{eq:spectralgap33}  yields
\begin{equation}
\mathcal{F}^{\mathrm{H}}(\gamma) \geq F_0(\beta,N,\omega) + N_0 E^{\mathrm{H}}(N_0) + \frac{1}{\beta} \mathcal{S}\left( \gamma, \gamma^{\mathrm{G}}_{\eta} \right) - \text{const. } \omega \left( \frac{\ln N}{\beta \omega} + \frac{1}{(\beta \omega)^2} \right).
\label{eq:spectralgap41}
\end{equation}
Together  with \eqref{eq:103} this proves \eqref{eq:spectralgap5d}. 
\subsubsection*{Asymptotics of the 1-pdm of approximate minimizers of the Hartree free energy functional}
Let $\gamma_N$ be an approximate minimizer of $\mathcal{F}^{\mathrm{H}}$ in the sense that \eqref{eq:spectralgap5e} holds. Using \eqref{eq:spectralgap41}, we see that its 1-pdm $\gamma_N$ obeys
\begin{equation}
\mathcal{S}\left( \gamma_N, \gamma^{\mathrm{G}}_{\eta} \right) \lesssim \beta \omega \left( N^{2/3} + \delta \right)\,.
\label{eq:spectralgap42}
\end{equation}
The same arguments as in Section~\ref{sec:boundgrandcanonicaldensitymatrix}  then imply \eqref{eq:spectralgap5f}. We note that the spectral gap estimate in Lemma~\eqref{lem:eigenvaluebound} is needed during this analysis. This concludes the proof of Lemma~\ref{lem:Hartreeenergy}.
\vspace{1cm}
\newline
\textbf{Acknowledgments.}
Funding from the European Union’s Horizon 2020 research and innovation programme under the ERC grant agreement No~694227~(R.~S.) and under the Marie Sklodowska-Curie grant agreement No~836146~(A.~D.) is gratefully acknowledged. A.~D. acknowledges support of the Swiss National Science Foundation through the Ambizione grant PZ00P2~185851.

\vspace{0.5cm}

(Andreas Deuchert) Institute of Mathematics, University of Zurich \\ 
Winterthurerstrasse 190, 8057 Zurich, Switzerland \\ 
E-mail address: \texttt{andreas.deuchert@math.uzh.ch}

(Robert Seiringer) Institute of Science and Technology Austria (IST Austria)\\ Am Campus 1, 3400 Klosterneuburg, Austria\\ E-mail address: \texttt{robert.seiringer@ist.ac.at}

\end{document}